\PassOptionsToPackage{dvipsnames}{xcolor}

\newif\iftechreport
\techreporttrue

\iftechreport
\documentclass[sigconf,screen,nonacm]{acmart}
\else
\documentclass[sigconf,screen]{acmart}
\fi

\usepackage{natbib}

\usepackage{tikz}
\usepackage[nointegrals]{wasysym}
\usepackage{amsfonts}
\usepackage{siunitx}

\usepackage{tabularx}
\usepackage{xcolor}
\usepackage[inline]{enumitem}   \setlist[enumerate,1]{label = (\arabic*), ref = (\arabic*)}
\usepackage{xspace}
\usepackage{xargs,xstring}
\usepackage{ifthen}
\usepackage{makecell}
\usepackage{multirow}
\usepackage[utf8]{inputenc}

\usepackage{amsmath}
\usepackage{mathtools} \usepackage{mathpartir}
\usepackage{stmaryrd}  \usepackage{halloweenmath}

\usepackage{amsthm}    \usepackage{thmtools}  \declaretheoremstyle[bodyfont=\normalfont]{normalstyle}

\declaretheorem[style=normalstyle]{definition}

\newtheorem{lemma}{Lemma}
\newtheorem*{lemma-non}{Lemma}

\newtheorem{corollary}{Corollary}

\usepackage{hyperref}
\usepackage[capitalize,nameinlink]{cleveref}
\crefname{hypothesis}{Hyp.}{Hyp.}
\Crefname{hypothesis}{Hypothesis}{Hypotheses}
\crefname{contract}{Contract}{Contracts}
\crefname{proposition}{Prop.}{Prop.}
\Crefname{proposition}{Proposition}{Propositions}
\crefname{lstlisting}{Listing}{Listings}
\Crefname{lstlisting}{Listing}{Listings}
\crefname{equation}{}{}
\Crefname{equation}{Eq.}{Equations}
\crefname{requirement}{Req.}{Req.}
\Crefname{requirement}{Requirement}{Requirements}
\crefformat{section}{#2§#1#3}
\Crefformat{section}{#2§#1#3}

\usepackage{tikz-cd}
\usetikzlibrary{decorations.pathmorphing}

\usepackage[font=footnotesize]{caption}
\usepackage{subcaption}
\usepackage{newfloat}
\DeclareFloatingEnvironment[
fileext = los,
placement={!ht},
name = Listing
]{listing}
\crefname{sublisting}{Listing}{Listings}
\Crefname{sublisting}{Listing}{Listings}
\DeclareCaptionSubType[alph]{listing}
\makeatletter
\renewcommand\p@subfigure{}
\makeatother

\usepackage{fancyvrb}
\usepackage{listings}
\usepackage{color}
\lstdefinestyle{nonumbers}{
  numbers=none,
  xleftmargin=0em,
}

\usepackage[most]{tcolorbox}

\tcbset {
  base/.style={
    arc=0mm,
    boxrule=0mm,
    colbacktitle=black!10!white,
    coltitle=black,
    fonttitle=\bfseries,
    top=0.5mm,
    bottom=0.5mm,
    left=1.5mm,
    right=1.5mm,
    leftrule=1mm,
    title={#1},
    toptitle=0.5mm,
    bottomtitle=0.5mm,
  }
}

\newtcolorbox{mainbox}[1]{
  colframe=black!30!white,
  base={#1}
}

\hypersetup{
    colorlinks,
    linkcolor={red!70!black},
    citecolor={green!50!black},
    urlcolor={blue!70!black}
}
 \RequirePackage{thmtools}
\RequirePackage{thm-restate}

\newif\ifnotes
\notesfalse

\newcommand{\zenodourl}{\url{https://doi.org/10.5281/zenodo.20770661}}
\newcommand{\proteusurl}{\url{https://github.com/proteus-core}}

\ifnotes
\newcommand\lesly[1]{\textcolor{red}{Lesly: #1}}
\newcommand\DD[1]{\textcolor{orange}{Davide: #1}}
\newcommand\marton[1]{\textcolor{magenta}{Marton: #1}}
\newcommand\tamara[1]{\textcolor{green!70!black}{Tamara: #1}}
\newcommand\benj[1]{\textcolor{blue}{Benj: #1}}
\else
\newcommand\lesly[1]{}
\newcommand\DD[1]{}
\newcommand\marton[1]{}
\newcommand\tamara[1]{}
\newcommand\benj[1]{}
\fi

\newcommand{\ie}{i.e.,\xspace}
\newcommand{\eg}{e.g.,\xspace}
\newcommand{\cf}{cf.\xspace}
\newcommand{\wrt}{w.r.t.\xspace}

\newenvironment{diff}
{\begingroup}{\endgroup}

\usepackage{tikz,pifont}
\newcommand\shield{
\resizebox{.8em}{.8em}{\tikz [baseline] \draw (3ex,2.5ex) -- (1.5ex,3ex) -- (0ex,2.5ex)  to [out=-90, in=165] (1.5ex,0ex) to [out=15, in=-90] cycle (1.5ex,1.5ex) node {\ding{51}};}
}

\newcommand{\riscv}{RISC\nobreakdashes-V\xspace}

\colorlet{directive_color}{black}
\colorlet{leak_color}{RedOrange}
\definecolor{instruction_color}{RGB}{5,59,128}
\definecolor{codefn}{RGB}{14,113,89}
\definecolor{codekwd}{RGB}{22, 35, 115}

\newcommand{\mydef}{\triangleq}

\newcommand{\ctrue}{\mathtt{true}}
\newcommand{\cfalse}{\mathtt{false}}

\newcommand{\Ar}{\mathsf{Arr}}
\newcommand{\ar}{\sym{a}}
\newcommand{\artwo}{\sym{b}}

\newcommand{\Arr}{{\mathsf{Arr}}}

\newcommand\Vals{\mathbb{V}}
\newcommand\val[1]{\def\param{#1}\normalfont{\mathtt{\ifx\param\empty{v}\else{#1}\fi}}}

\newcommand{\Nat}{\mathbb N}
\newcommand{\nat}{n}

\newcommand{\bool}{b}

\newcommand{\bms}{\beta}

\newcommand\Loc{\ensuremath \mathbb{L}}
\newcommand\loc{\ell}
\newcommand\Regs{\mathsf{Reg}}

\newcommand\register[1]{\texttt{#1}}

\newcommand\inststyle[1]{\kwd{#1}}
\newcommand\fence{\inststyle{fence}\xspace}
\newcommand\cprotect{\inststyle{protect}\xspace}
\newcommand\lfence{\inststyle{lfence}\xspace}
\newcommand\dsb{\inststyle{dsb}\xspace}
\newcommand\dfence{\texorpdfstring{\inststyle{dfence}}{dfence}\xspace}
\newcommand\dfencex{\dfence \vx\xspace}

\newcommand\load[2]{\def\parama{#1}\def\paramb{#2}\ifx\parama\empty{}\else{\register{#1}\ensuremath{\leftarrow}}\fi \inststyle{load}\ifx\paramb\empty{}\else{\ #2}\fi}
\newcommand\store[2]{\def\parama{#1}\def\paramb{#2}\inststyle{store}\ifx\parama\empty{}\else{\ #1}\fi \ifx\paramb\empty{}\else{\ #2}\fi} \newcommand\jmp[1]{\def\parama{#1}\inststyle{jmp}\ifx\parama\empty{}\else{\ #1}\fi} \newcommand\beqz[2]{\def\parama{#1}\def\paramb{#2}\inststyle{beqz}\ifx\parama\empty{}\else{\ #1}\fi \ifx\paramb\empty{}\else{\ #2}\fi} \newcommand\prog[1]{\def\param{#1}\ifx\param\empty{P}\else{{P[#1]}}\fi}

\newcommand\mem{{\ensuremath \mathnormal{m}}}
\newcommand\reg{\ensuremath \mathnormal{r}}

\def\ws{\ }\def\wsind{\ }\def\sep{\kwd{;\,}}

\newcommand{\calign}[2][t]{\begin{array}[#1]{@{}l@{}}#2\end{array}}

\newcommand{\expression}[1]{\mathtt{#1}}
\newcommand{\sym}[1]{{\color{codefn}{\mathtt{#1}}}}
\newcommand{\command}[1]{\mathtt{#1}}
\DeclareRobustCommand{\kwd}[1]{{\text{\color{codekwd}\texttt{\textbf{#1}}}}}

\newcommand{\Expr}{\mathsf{Expr}}
\newcommand{\expr}{\expression{E}}
\newcommand{\exprtwo}{\expression{F}}
\newcommand{\vx}{\register{x}}
\newcommand{\vy}{\register{y}}

\newcommand{\Op}{\mathsf{Ops}}
\newcommand{\op}{\sym{op}}

\newcommand{\size}[1]{|#1|}

\newcommand{\Instr}{\mathsf{Instr}}
\newcommand{\stat}{\command{I}}
\newcommand{\stattwo}{\command{J}}
\newcommand{\asgn}{\mathrel{\kwd{{:}{=}}}}
\newcommand{\cmemread}[3][]{#2 \asgn_{#1} {#3}}
\newcommand{\cmemasgn}[3][]{{#2} \asgn_{#1} {#3}}

\newcommand{\cwhile}[2]{\kwd{while}\ #1\ \kwd{do} \wsind \calign{#2} \ws \kwd{od}}
\newcommand{\cif}[3]{\kwd{if}\ #1\ \kwd{then}\wsind \calign{#2}\ws \kwd{else}\wsind \calign{#3}  \ws \kwd{fi}}
\newcommand{\cnil}{\epsilon}

\newcommand{\Cmd}{\mathsf{Prg}}
\newcommand{\cmd}{\command{P}}
\newcommand{\cmdtwo}{\command{Q}}

\newcommand{\States}{\mathsf{States}}
\newcommand{\st}{s}
\newcommand{\sttwo}{t}

\newcommand{\bufreadi}[2]{\mathsf{sread}^{#2}(#1)}
\newcommand{\buflookup}[2]{{\mathsf{read}(#1, #2)}}

\newcommand{\confone}{C}

\newcommand{\sframe}[3]{#1, #2, #3}

\newcommand{\cons}{\mathrel{:}}
\newcommand{\nil}{\varepsilon}

\newcommand{\upd}{\leftarrow}
\newcommand{\supdate}[3]{{#1[{#2} \upd {#3}]}}

\newcommand{\mbuf}{{\mu}}
\newcommand{\bm}[2]{{(#1,#2)}}

\newcommand{\Dir}{\mathsf{Dir}}
\newcommand{\dir}{d}
\newcommand{\dstep}{\mathsf{step}}
\newcommand{\doob}[1]{\mathsf{oob}\ {#1}}
\newcommand{\dload}[2][]{\mathsf{ld}_{#1}\,#2}

\newcommand{\dbranch}[2][]{\mathsf{br}_{#1}\,#2}

\newcommand{\Ds}{D}

\newcommand{\Obs}{\mathsf{Obs}}
\newcommand{\Os}{O}
\newcommand{\obs}{o}

\newcommand{\onone}{\bullet}
\newcommand{\ojmp}[1]{\mathsf{jmp}\ {#1}}

\newcommand{\omem}[1]{{\mathsf{mem}\,#1}}

\newcommand{\obranch}[1]{{\mathsf{br}\,#1}}

\newcommand{\bitem}[2]{{[#1 \mapsto #2]}}

\newcommand{\sem}[1]{\llbracket #1 \rrbracket}
\newcommand{\sto}[2]{\xrightarrow[#1]{#2}}
\newcommand{\jmpcont}{\phi}
\newcommand{\step}[3][]{\jmpcont\vdash #2 \to^{#1}\, #3}
\newcommand{\sstep}[5][]{\jmpcont\vdash #2 \sto{#4}{#5}\!\!{}^{#1}\, #3}

\newcommand{\nsstep}[5]{#2 \xrightarrow[{#4}]{#5}{}^{#1}\, #3}

\newcommand{\pt}{\pi}

\newcommand\deriv{\mathrel\triangleright}

\newcommand{\env}{\Gamma}
\newcommand{\envtwo}{\Sigma}

\newcommand{\tylow}{\mathsf L}
\newcommand{\tyhigh}{\mathsf H}

\newcommand{\ty}{\tau}

\newcommand{\typair}[1]{ (#1, #1)}

\newcommand{\tty}{\sigma}

\newcommand{\steq}[2][]{\mathrel{\simeq_{#2}^{#1}}}
\newcommand{\confeq}[1]{\approxeq_{#1}}
\newcommand{\boteq}{\mathrel{\sim_\bot}}

\RequirePackage{tabularray}
\UseTblrLibrary{booktabs}

\newcommand{\bnfdef}{\mathrel{::=}}
\newcommand{\bnfmid}{\mid}
\newcommand{\bnfcmt}[1]{#1}
\def\bnfsep{\bnfdef}

\newcommand{\bnfmksep}[1]{\phantom{{}\bnfdef{}}\llap{\text{\ensuremath{{}#1{}}}}{}}
\newcommand{\checkemptydecl}[1]{\ifx\\#1\\\else #1\global\def\bnfsep{{}\bnfdef{}}\fi}\newcommand{\checkemptycase}[1]{{\bnfsep} \ifx\\#1\\ \global\def\bnfsep{\bnfmksep{}}\else{#1}\global\def\bnfsep{\bnfmksep{\bnfmid}}\fi}

\newcommand{\bnfcmtcmd}[1]{\ifx\\#1\\\else\bnfcmt{#1}\fi}
\NewColumnType{H}{X[-1,r,mode=math,cmd=\checkemptydecl]}
\NewColumnType{C}{X[l,mode=math,cmd=\checkemptycase]}
\NewColumnType{D}{X[-1,r,mode=text,cmd=\bnfcmtcmd]}
\newenvironment{bnf}
{\par\medskip
\begin{tblr}{
      colspec = {HCD},
      rowsep={1pt},
      column{1} = {rightsep=0pt},
      column{2} = {leftsep=0pt}
    }}
{\end{tblr}\par\medskip}
\newenvironment{cbnf}{\begin{center}\begin{minipage}{.98\linewidth}\begin{bnf}}{\end{bnf}\end{minipage}\end{center}}

\newcommand*{\unlessempty}[2]{\ifthenelse{\equal{#1}{}}{}{#2}}

\makeatletter
\newcommand{\customlabel}[2]{\protected@write \@auxout {}{\string \newlabel{#1}{{#2}{\thepage}{#2}{#1}{}} }}
\makeatother

\newenvironment{proofcases}
{\begin{list}{\labelitemi}{
 \setlength{\itemsep}{0pt}
 \setlength{\topsep}{0pt}
 \setlength{\parsep}{0pt}
 \setlength{\partopsep}{0pt}
 \setlength{\leftmargin}{10pt}
 \setlength{\rightmargin}{0pt}
 \setlength{\itemindent}{0pt}
 \setlength{\labelsep}{5pt}
 \setlength{\labelwidth}{10pt}}}
{\end{list}}
\newcommand{\proofcase}[1]{\item[-]\textsc{Case} #1.}

\newcommandx{\InferDef}[5][1=,2=,3=]{
  \ifx\\#3\\\def\lab{\textsc{#2}}\def\name{\textsc{#2}}\else \def\lab{#3}\expandarg \def\pattern{\\}
    \def\replace{}
    \expandarg
    \StrSubstitute{#3}{\pattern}{\replace}[\name]
  \fi \infer[\unlessempty{#1#2}{\hypertarget{#1:#2}{\ensuremath{\lab}}}]{#5}{#4}\unlessempty{#1#2}{\customlabel{#1:#2}{[\ensuremath{\name}]}}
}
\newcommandx{\InferStar}[4][1=,2=]{\inferrev[\unlessempty{#2}{\textsc{\ref{#1:#2}}}]{#3}{#4}}
\NewDocumentCommand\Infer{s}{\IfBooleanTF#1\InferStar\InferDef}

\newcommand{\inferrev}[3][]{\infer[#1]{#3}{#2}}

\everymath=\expandafter{\the\everymath\displaystyle}

\newcommand{\repeatable}[1]{
  \newenvironment{#1*}[2][]
  {\expandafter\restatable[##1]{#1}{##2}\label{##2}}
  {\endrestatable}
}

\newcommand{\again}[1]{\csname#1\endcsname*}

\lstdefinestyle{C}{
  language=C,
  basicstyle=\linespread{0.9}\ttfamily,
  keywordstyle=\color{codekwd},
  emphstyle=\color{codefn},
}

\lstnewenvironment{code}[1][]{\lstset{style=C,#1}}{}

\newcommand*\cc[1][]{\lstinline[style=C,emph={#1}]}
 
\unless\iftechreport
\copyrightyear{2026}
\acmYear{2026}
\setcopyright{cc}
\setcctype{by}
\acmConference[CCS '26]{Proceedings of the 2026 ACM SIGSAC Conference on Computer and Communications Security}{November 15--19, 2026}{The Hague, Netherlands}
\acmBooktitle{Proceedings of the 2026 ACM SIGSAC Conference on Computer and Communications Security (CCS '26), November 15--19, 2026, The Hague, Netherlands}
\acmDOI{10.1145/3830454.3832748}
\acmISBN{979-8-4007-2871-6/2026/11}
\fi

\repeatable{theorem}
\repeatable{lemma}
\repeatable{proposition}
\repeatable{corollary}

\begin{document}

\iftechreport
\title{\dfence: Fine-Grained Speculation Barriers for Efficient and Effective Hardware-Software Protection in the Spectre Era (Extended Version)}
\else
\title[\dfence: Fine-Grained Speculation Barriers for Efficient and Effective Hardware-Software Protection]{\dfence: Fine-Grained Speculation Barriers for Efficient and Effective Hardware-Software Protection in the Spectre Era}
\fi

\author{Davide Davoli}
\email{davide.davoli@mpi-sp.org}
\orcid{0009-0009-2981-2962}
\affiliation{\institution{MPI-SP}
  \city{Bochum}
  \country{Germany}
}
\author{Marton Bognar}
\email{marton.bognar@kuleuven.be}
\orcid{0000-0002-8641-7549}
\affiliation{\institution{DistriNet, KU Leuven}
  \city{Leuven}
  \country{Belgium}
}
\author{Lesly-Ann Daniel}
\email{lesly-ann.daniel@eurecom.fr}
\orcid{0000-0002-2772-3722}
\affiliation{\institution{EURECOM}
  \city{Sophia Antipolis}
  \country{France}
}
\author{Benjamin Grégoire}
\email{benjamin.gregoire@inria.fr}
\orcid{0000-0001-6650-9924}
\affiliation{\institution{Inria}
  \city{Sophia Antipolis}
  \country{France}
}
\author{Frank Piessens}
\email{frank.piessens@kuleuven.be}
\orcid{0000-0001-5438-153X}
\affiliation{\institution{DistriNet, KU Leuven}
  \city{Leuven}
  \country{Belgium}
}
\author{Tamara Rezk}
\email{tamara.rezk@inria.fr}
\orcid{0000-0003-3744-0248}
\affiliation{\institution{Inria}
  \city{Sophia Antipolis}
  \country{France}
}

\begin{abstract}
Speculative execution attacks such as Spectre-PHT and Spectre-STL remain a critical security concern in modern processors.
  While software-based mitigations like Speculative Load Hardening (SLH) offer effective protection against Spectre-PHT, they are limited in scope and require software-managed speculative masks, which can be error-prone and costly.
  Defenses against Spectre-STL, such as the Speculative Store Bypass Disable bit (SSBD), incur additional performance overhead and lack fine-grained control.
  In this work, we introduce \dfence, a new CPU instruction that generalizes SLH to mitigate both Spectre-PHT and Spectre-STL with minimal hardware support.
  \dfence enables developers to annotate sensitive registers, with the hardware ensuring that these values do not leak transiently.
  We implement \dfence in the Proteus CPU and evaluate its security and performance, demonstrating less than 1\% average performance overhead for our benchmarks.
  In addition, to support easy and secure adoption, we design a type system that statically verifies the correct placement of \dfence instructions in code.
\end{abstract}

\unless\iftechreport
\begin{CCSXML}
<ccs2012>
<concept>
<concept_id>10002978.10002986</concept_id>
<concept_desc>Security and privacy~Formal methods and theory of security</concept_desc>
<concept_significance>500</concept_significance>
</concept>
<concept>
<concept_id>10002978.10003001</concept_id>
<concept_desc>Security and privacy~Security in hardware</concept_desc>
<concept_significance>500</concept_significance>
</concept>
</ccs2012>
\end{CCSXML}

\ccsdesc[500]{Security and privacy~Formal methods and theory of security}
\ccsdesc[500]{Security and privacy~Security in hardware}

\keywords{Spectre defense, speculation barriers, RISC-V, type system}

\fi

\maketitle

\marton{TODO: remove obsolete files and comments (definitely before putting on arxiv)}

\section{Introduction}

After the discovery of Spectre attacks~\cite{spectre}, vendors scrambled to develop rapid software patches to mitigate these hardware-based vulnerabilities.
A common mitigation strategy has been to repurpose synchronization primitives, such as fence instructions, as speculation barriers to block the leakage of secrets during speculative execution.
On x86 systems, this is typically implemented using the \lfence instruction~\cite{x86-manual}, while Arm processors use the \dsb instruction~\cite{armv8} for similar purposes.
However, these full-fence approaches, while effective, impose significant performance overheads when applied broadly across speculative execution paths.

To address this performance penalty, hardware vendors have advocated for more fine-grained solutions.
For example, Arm recommends techniques such as software-based index masking to mitigate Spectre-PHT (Spectre-v1)~\cite{arm-spectrev1}, which selectively restricts speculative execution rather than halting it entirely.
Similarly, Intel and AMD recommend hybrid approaches that combine fences with other software mitigations such as retpolines~\cite{amd-speculation,intel-bcb,intel-lvi,intel-compilers}.
These techniques aim to balance security with performance by limiting the scope of speculation barriers to the necessary minimum.

One of these fine-grained mitigations, Speculative Load Hardening (SLH), was introduced in LLVM~\cite{SpeculativeLoadHardening} to address Spectre-PHT attacks.
SLH is implemented at the compiler level and is based on conditional masking.
SLH maintains a global misspeculation flag in a dedicated register, updated for each branch condition that could be mispredicted.
This misspeculation flag is used to mask operands of unsafe instructions, preventing data leakage during transient execution.
Initially, SLH was used to automatically mask load addresses or values, but this approach was found to be incomplete and was later extended to mask all potential sources of leakage~\cite{ultimate-slh,exorcising-spectres}.
This extended approach incurs an average performance overhead of approximately $150\%$ in the SPEC2017 benchmark when protecting all branches---still a noticeable improvement over the $300\%$ overhead of full fence insertion~\cite{ultimate-slh}.
An alternative approach is to let the user manually insert SLH instrumentations, guided by a type system ensuring that their placement prevents confidential information leakage~\cite{typing-high}.
This strategy reduces the number of inserted protections and the resulting overhead, but it demands manual management of the misspeculation flag and often requires extensive code modifications.

More importantly, SLH can only be applied to Spectre-PHT due to the inherent limitation of detecting mispredictions from software.
For instance, in the case of Spectre-STL (v4), it is impossible to detect from software when a load uses transiently forwarded sensitive data from a prior store.
The vendor-recommended defense for Spectre-STL is Speculative Store Bypass Disable (SSBD), a hardware-based mitigation that fully disables STL speculation, incurring additional performance overhead.
To address the lack of generalization and performance penalties of current mitigations, hardware modifications like taint-tracking solutions~\cite{prospect, spt, stt, nda} have been proposed, but their adoption is hindered by often requiring extensive hardware changes, increasing implementation complexity and cost.

In this work, we investigate the following research question: \emph{Can we apply the principles of SLH to prevent Spectre-STL in addition to Spectre-PHT with minimal hardware changes?}

We answer this question affirmatively by introducing a new assembly instruction called \dfence.\footnote{A play on 'defense', with the `d' loosely suggesting its selective or discerning behavior.}
This instruction generalizes the core principles of SLH beyond Spectre-PHT and makes it more lightweight to use for programmers by alleviating the need to track speculation in software.
The software---following the principle of SLH---can use \dfencex to protect the contents of register $\vx$ from transient leakage.
The hardware tracks the sources of speculative execution and ensures that $\vx$ does not leak speculatively.
Unlike SLH, \dfence eliminates the need to track a speculative mask in software.
This reduces the likelihood of errors, as it removes the need to continuously update the misspeculation flag and provides greater flexibility compared to SLH, where the register holding the misspeculation cannot be spilled.
Additionally, tracking speculation in hardware enables detecting additional types of speculation that are invisible to software (such as Spectre-STL).
To protect programs, we propose and implement a low-level type system for the Jasmin~\cite{jasmin} language, which automatically verifies that \dfence protections (or standard serializing fences) are correctly inserted.
We formally prove the effectiveness of our defense against Spectre-PHT, and variants of Spectre-STL---namely Speculative Store Bypass (SSB) and Predictive Store Forwarding (PSF).
We also discuss extensions to other Spectre variants, including the recent SLAP attack~\cite{slap}, which \dfence can mitigate with very small changes (\cref{sec:discussion}).

To validate our solution, we implement \dfence in a prototype \riscv CPU, Proteus~\cite{bognar2023proteus}, showcasing its low hardware cost and effectiveness in real-world scenarios.
We evaluate its security by using non-interference testing, and its performance on different cryptographic schemes such as ML-DSA~\cite{Dilithium} and ML-KEM~\cite{Kyber}.
We demonstrate that our generalized approach incurs lower performance overhead than the combined costs of SLH and SSBD or serializing fences, offering a practical and scalable alternative.
By achieving robust protection against speculative execution vulnerabilities without prohibitive costs or complexity, our work paves the way for high-assurance software-hardware Spectre defenses.

In summary, our contributions are the following:
\begin{itemize}
  \item Introducing \dfence, generalizing SLH to mitigate Spectre-PHT and -STL. \dfence requires minimal hardware support (\cref{sec:overview}) and simplifies software instrumentation, resulting in fewer constraints and lower complexity (\cref{sec:evaluation});
  \item Developing a formally proven type system that enables verification of correct \dfence insertion (\cref{sec:ts});
  \item Implementing \dfence in the Proteus \riscv core and demonstrating its minimal hardware overhead (\cref{sec:implementation});
  \item Evaluating \dfence's security and performance, providing evidence for protection against Spectre\nobreakdashes-PHT, \nobreakdashes-SSB and \nobreakdashes-PSF, and an average overhead of less than 1\%, smaller than standard protection mechanisms (\cref{sec:evaluation}).
\end{itemize}

\iftechreport
This version of the paper is extended from the conference version~\cite{davoli2026dfence}, containing the operational semantics of \dfence (\Cref{sec:source-language}), additional proofs (\Cref{sec:appsoundness,sec:jta}), and benchmark results (\Cref{app:benchs}). 
\else
An extended version of this paper with the operational semantics of \dfence, additional proofs, and benchmark results is available~\cite{dfence-arxiv}.
\fi

\paragraph{Open science.}
To facilitate reproducing and building on our research, we archive all implementation and evaluation materials at \zenodourl, and will incorporate most components in the upstream Proteus ecosystem at \proteusurl.
 \section{Background}
\label{sec:background}

In this section, we provide some background information on Spectre attacks and their mitigations.

\subsection{Spectre Attacks}
\emph{Microarchitectural attacks} exploit microarchitectural side effects of a program's execution to extract sensitive information.
Any computation affecting the microarchitectural state in a way that can be measured by an attacker reveals information about data processed by a victim (via e.g., timing~\cite{DBLP:conf/crypto/Kocher96}, cache~\cite{bernstein2005cache}, predictor state~\cite{chowdhuryy2021leaking}, micro-op cache~\cite{renSeeDeadMu2021}, and other microarchitectural resource contention~\cite{DBLP:conf/sp/AldayaBHGT19,DBLP:conf/ccs/BhattacharyyaSN19,DBLP:conf/ccs/FustosBY20,DBLP:conf/esorics/0001SLMG19}).
For instance, a memory load $\cmemread \vx {\ar[\expr]}$ affects the cache state based on the accessed address \(\ar[\expr]\).
An attacker can exploit this change through cache side channels to infer the address.
We refer to instruction operands that expose information about their values through microarchitectural side-channels as \emph{unsafe operands}.
For simplicity, in this paper, we assume the set of unsafe operands, a.k.a. the \emph{leakage model}, to be restricted to memory operands and guards of control-flow instructions.
However, our results can trivially be extended to other types of unsafe operands, such as variable-time divisions or multiplications~\cite{DBLP:conf/icisc/GrossschadlOPT09}.

Processors can execute instructions out-of-order and employ various \emph{speculation} mechanisms to predict, for instance, the next instruction to execute.
If the prediction is wrong, the processor simply reverts the incorrectly executed instructions---called \emph{transient} instructions---and continues along the correct path.
Spectre attacks~\cite{spectre} exploit these speculation mechanisms to force a victim to leak secrets during transient execution, as this still produces visible microarchitectural side effects.
By mistraining predictors, a victim can be forced into transiently executing a sequence of instructions chosen to encode secrets in the microarchitectural state, which can later be extracted.
Many Spectre attack variants exist~\cite{DBLP:conf/ccs/MaisuradzeR18,
  DBLP:conf/woot/KoruyehKSA18,spectrev4,SecurityAnalysisAMD2021,
  spectre,
  DBLP:conf/uss/RagabBBG21,DBLP:conf/sp/OleksenkoGKS23}, classified in various
surveys~\cite{DBLP:journals/corr/abs-2309-03376,
  DBLP:conf/uss/CanellaB0LBOPEG19, DBLP:conf/uss/RagabBBG21} according to
which speculation mechanisms they exploit.
In this paper, we focus on \emph{Spectre-PHT} (also known as Spectre-v1)~\cite{spectre} and \emph{Spectre-STL}.
We use Spectre-STL as an umbrella term for \emph{Store-to-Load dependency speculation}, which encompasses two distinct variants: \emph{Spectre-SSB}~\cite{spectrev4,SpeculativeStoreBypass} (also known as Spectre-v4) and \emph{Spectre-PSF}~\cite{SecurityAnalysisAMD2021,FastStoreForwarding}.
We illustrate all these variants in \cref{lst:transient_execution_examples}.

\begin{listing}[t]
\centering
\begin{minipage}{.49\linewidth}
  \begin{lstlisting}[style=nonumbers]
    0 - 15: <@$\ar\expression{[16]}$@>
        16: <@$\sym{s}\expression{[1]}$@>
  17 - 272: <@$\sym{b}\expression{[256]}$@>
\end{lstlisting}
\subcaption[listing]{Memory layout.}\label{lst:memory}
   \begin{lstlisting}[firstnumber=4]
if (i < $\size{\ar}$) $\mathghost$ <@\label{line:pht:transient_cause}@>
  $\cmemread \vx {\ar\expression{[i]}}$ <@\label{line:pht:load}@>
  $\cmemread \vy {\sym{b}\expression{[\vx{}]}}$ $\skull$ <@\label{line:pht:leak}@>
\end{lstlisting}
\subcaption[listing]{Spectre-PHT (Spectre-v1).}\label{lst:spectre-pht}
 \end{minipage} \hfill
\begin{minipage}{.40\linewidth}
  \begin{lstlisting}[firstnumber=7]
$\cmemasgn{\sym{s}\expression{[0]}}{\expression{0}}$ <@\label{line:ssb:sanitize}@>
$\cmemread \vx{} {\sym{s}\expression{[0]}}$ $\mathghost$ <@\label{line:ssb:transient_cause}@>
$\cmemread \vy{} {\sym{b}\expression{[\vx{}]}}$ $\skull$ <@\label{line:ssb:leak}@>
\end{lstlisting}
\subcaption[listing]{Spectre-SSB (Spectre-v4).}\label{lst:spectre-ssb}
   \begin{lstlisting}[firstnumber=10]
$\cmemasgn{\sym{s}\expression{[0]}}{\sym{s}\expression{[0] + 1}}$ <@\label{line:psf:store}@>
$\cmemread \vx{} {\sym{a}\expression{[0]}}$ $\mathghost$ <@\label{line:psf:transient_cause}@>
$\cmemread \vy{} {\sym{b}\expression{[\vx{}]}}$ $\skull$ <@\label{line:psf:leak}@>
\end{lstlisting}
\subcaption[listing]{Spectre-PSF.}\label{lst:spectre-psf}
 \end{minipage} \hfill
\addtocounter{listing}{-1} \addtocounter{lstlisting}{1} \captionof{listing}{Examples of code snippets vulnerable to transient execution attacks. All programs share the memory layout in \cref{lst:memory} where \(\sym{s}\) is secret.
  The symbol \(\protect\mathghost\) indicates an instruction triggering transient execution and $\protect\skull$ indicates leakage.
}\label{lst:transient_execution_examples}
\end{listing}

\textbf{Spectre-PHT} (Pattern History Table) exploits the conditional branch outcome predictor.
In \cref{lst:spectre-pht}, an attacker can train the branch predictor to predict the guard to be true (line~\ref{line:pht:transient_cause}), execute the code with an out-of-bounds index $\expression{i} = 16$ that transiently accesses \(\sym{s}[0]\) (line~\ref{line:pht:load}), and leak it to the microarchitectural state (line~\ref{line:pht:leak}).

\textbf{Spectre-SSB} (Speculative Store Bypass) exploits the ability of a load instruction to speculatively bypass preceding stores with unresolved addresses.
In \cref{lst:spectre-ssb}, the load of \(\sym{s}[0]\) (line~\ref{line:ssb:transient_cause}) may speculatively bypass the prior store to \(\sym{s}[0]\) (line~\ref{line:ssb:sanitize}).
This enables transient access to the secret value before it is zeroed out in memory, leading to its leakage (line~\ref{line:ssb:leak}).

\textbf{Spectre-PSF} (Predictive Store Forwarding) exploits the fact that a store can speculatively forward its value to a subsequent load.
In \cref{lst:spectre-psf}, the store (line~\ref{line:psf:store}) can transiently forward the secret \(\sym{s}\expression{[0] + 1}\) to the next load (line~\ref{line:psf:transient_cause}), which leaks at line~\ref{line:psf:leak}.

\subsection{Countermeasures}
Spectre attacks are possible when secrets speculatively flow to unsafe operands.
Two main policies have been proposed to protect against Spectre~\cite{DBLP:conf/sp/CauligiDMBS22, DBLP:conf/sp/GuarnieriKRV21}: \emph{speculative constant-time}~\cite{DBLP:conf/pldi/CauligiDGTSRB20} and \emph{speculative sandboxing}~\cite{DBLP:conf/sp/GuarnieriKRV21}.

\emph{Speculative constant-time} is an extension of \emph{constant-time programming}, the traditional defense against (non-speculative) microarchitectural attacks, already largely followed by cryptographic libraries~\cite{DBLP:conf/latincrypt/BernsteinLS12, DBLP:conf/uss/AlmeidaBBDE16}.
While constant-time requires that program secrets do not flow to unsafe operands (architecturally), speculative constant-time extends this requirement to speculative paths as well.

\emph{Speculative sandboxing} requires that programs do not speculatively leak data that is not architecturally accessed (in this model, we can consider transiently accessed data to be secret).
Thus, if a program reads out-of-bounds data during speculation, speculative sandboxing forbids this data from speculatively flowing to an unsafe instruction.
Speculative sandboxing is weaker than speculative constant-time as it does not protect architecturally accessed secrets; however, contrary to speculative constant-time, it does not require identifying explicit program secrets and is therefore better suited for protecting general-purpose programs.

Both policies disallow secret data to speculatively flow to unsafe instructions.
To enforce this, a first solution is to insert speculation barriers (\eg{} \lfence) to cut speculative paths before they can leak secrets.
For instance, all programs in \cref{lst:transient_execution_examples} can be protected by inserting a barrier right before the last load.
However, existing fences needlessly restrict speculative execution of \emph{all} instructions after the barrier, increasing the performance overhead.
To alleviate this performance cost, a more efficient defense, \emph{Speculative Load Hardening} (SLH), is usually employed~\cite{SpeculativeLoadHardening,ultimate-slh,typing-high}.
As illustrated in \cref{lst:spectre-slh}, this defense consists of keeping track of possibly misspeculated branch conditions as a mask in a dedicated register (line~\ref{line:slh:maskupdate}) and using this to mask the secrets before they can reach an unsafe operand (line~\ref{line:slh:defense}).

\begin{lstlisting}[firstnumber=1, caption={\Cref{lst:spectre-pht} secured with SLH.}, label={lst:spectre-slh}, float]
$\register{m} \asgn 0\sep \register{c} \asgn \expression{\register{i} < \size{\ar}}$
if ($\expression{\register{c}}$) $\mathghost$ <@\label{line:slh:transient_cause}@>
  $\register{m} \asgn \expression{\register{m} \& (- \register{c})}\sep$<@\label{line:slh:maskupdate}@>
  $\cmemread{\vx}{\ar\expression{[\register{i}]}}$ <@\label{line:slh:load}@> 
  $\vx \asgn \expression{\vx \& \register{m}}\sep$ <@\label{line:slh:defense}@>
  $\cmemread{\vy}{\sym{b}\expression{[\vx{}]}}$ <@\label{line:slh:leak}@> 
\end{lstlisting}

\textbf{Problem:}
SLH faces two significant limitations.
First, its \emph{generality} is restricted: it is effective only against Spectre-PHT; protecting against Spectre-STL requires either inserting costly \fence instructions or disabling SSB and PSF predictions (e.g., using SSBD~\cite{SpeculativeStoreBypass} on Intel processors).
In our benchmarks, SSBD alone introduces up to 56\% overhead.
Second, SLH requires a dedicated register to maintain the misspeculation flag and extra instructions, increasing register pressure and degrading performance.
This effect is reflected in our results, where SLH alone introduces up to 9\% overhead.

\section{Design of \dfence}\label{sec:overview}
We propose a new instruction, \dfencex, which, in the spirit of SLH, can be used to protect a register $\vx$ before it speculatively reaches an unsafe operand.

\begin{mainbox}{Specification of \dfence}
\dfencex acts as a \emph{selective register fence}: it guarantees that its operand $\vx$ is not forwarded to subsequent instructions until this value becomes \emph{non-speculative}.
\end{mainbox}

\cref{lst:dfence_examples} shows how to protect the programs from \cref{lst:transient_execution_examples} with \dfence.
The idea is to protect the register $\vx$, which can transiently contain secret data, before it reaches the unsafe load operand.
The final load instruction can therefore only be executed when $\vx$ becomes non-speculative---\ie{} in \cref{lst:spectre-phtd}, the branch is resolved, and in \cref{lst:spectre-psfd}, dependencies between the load (line~\ref{line:psfd:transient_cause}) and store (line~\ref{line:psfd:store}) are resolved---at which point $\vx$ only contains public data.

\begin{listing}[t]
\refstepcounter{listing}
\centering
\begin{minipage}{.45\linewidth}
  \begin{lstlisting}[firstnumber=1]
if (i < $\size{\ar}$) $\mathghost$ <@\label{line:phtd:transient_cause}@>
  $\cmemread \vx {\ar\expression{[i]}}$ <@\label{line:phtd:load}@>
  $\text{\dfencex} \ \shield{}$ <@\label{line:phtd:dfence}@>
  $\cmemread \vy {\sym{b}\expression{[\vx{}]}}$  <@\label{line:phtd:leak}@>
\end{lstlisting}
\subcaption[listing]{\cref{lst:spectre-pht} protected with \dfence.}\label{lst:spectre-phtd}
 \end{minipage} \hfill
\begin{minipage}{.48\linewidth}
  \begin{lstlisting}[firstnumber=5]
$\cmemasgn{\sym{s}\expression{[0]}}{\sym{s}\expression{[0] + 1}}$ <@\label{line:psfd:store}@>
$\cmemread \vx{} {\sym{a}\expression{[0]}}$ $\mathghost$ <@\label{line:psfd:transient_cause}@>
$\text{\dfencex} \ \shield{}$ <@\label{line:psfd:dfence}@>
$\cmemread \vy{} {\sym{b}\expression{[\vx{}]}}$  <@\label{line:psfd:leak}@>
\end{lstlisting}
\subcaption[listing]{\cref{lst:spectre-psf} protected with \dfence.}\label{lst:spectre-psfd}
 \end{minipage} \hfill
\addtocounter{listing}{-1} \addtocounter{lstlisting}{1} \captionof{listing}{Spectre-PHT and Spectre-PSF gadgets protected with \dfence. Spectre-SSB can be handled similarly to Spectre-PSF.}\label{lst:dfence_examples}
\end{listing}

\subsection{Design Choices}
The \dfence instruction relies on a hardware-software co-design: to achieve end-to-end security, the responsibilities are split between hardware and software, following a security contract~\cite{DBLP:conf/sp/GuarnieriKRV21}.
On the hardware side, in addition to implementing \dfence, hardware vendors must provide a \emph{leakage model} determining unsafe operands.
This leakage model is similar to the list of data-independent-timing instructions already provided by hardware vendors such as AMD and Intel in their DIT/DOIT modes~\cite{armDITDataIndependent, intelDataOperandIndependent}.
On the software side, developers are required to (R1) handle architectural leakage, and (R2) correctly insert \dfence to make sure that secrets do not flow speculatively to unsafe operands.
How to achieve these requirements depends on the policy.
In the case of speculative constant-time, requirement R1 is achieved by following the standard constant-time programming discipline, and requirement R2 is achieved by additionally protecting secrets on speculative paths.
In the case of speculative sandboxing, requirement R1 is achieved by ensuring (architectural) memory safety, and requirement R2 is achieved by protecting the result of potentially speculative loads from leaking.

\paragraph{Type system}
To address the contract requirements for software and assist developers in correctly inserting \dfence protections, we propose a type system that accepts only programs adhering to the constant-time discipline (based on secrecy annotations) with properly placed \dfence instructions.
Our type system considers the results of speculative loads to be secret and ensures that they are protected with \dfence.
This approach guarantees compliance with the speculative constant-time policy (\cref{sec:ts}).
Moreover, our type system still guarantees compliance with the speculative sandboxing policy for general-purpose programs without secrecy annotations.

Our type system tracks the confidentiality level, public ($\tylow$) or secret ($\tyhigh$), of registers and arrays.
For instance, in \cref{lst:dfence_examples}, array $\ar$ initially has type \((\tylow)\), while \(\sym{s}\) has type $(\tyhigh)$.
Registers have two types to account for both \emph{non-speculative} and \emph{speculative} execution.

We assume that programs are memory-safe in normal execution.
Importantly, to account for the fact that load instructions can be memory-unsafe in transient execution, or speculatively get forwarded secret values, the speculative type of their destination register is always set to $\tyhigh$.
For instance, in \cref{lst:dfence_examples}, after the load from array $\ar$ (lines~\ref{line:phtd:load} and~\ref{line:psfd:transient_cause}), $\vx$ is typed \((\tylow,\tyhigh)\): non-speculatively public but speculatively secret.

After a \dfencex instruction, our type system sets the speculative type of $\vx$ to its non-speculative type.
This captures the intuition that, after the fence, the value of \(\vx\) is non-speculative and cannot speculatively contain a secret if \(\vx\) is public.
Therefore, in \cref{lst:dfence_examples}, the \dfencex instructions set the type of \(\vx\) to \((\tylow, \tylow)\) (lines~\ref{line:phtd:dfence} and~\ref{line:psfd:dfence}).

The type system accepts instructions with unsafe operands (\eg loads) only if the unsafe operand is typed $(\tylow, \tylow)$.
Therefore, the insecure programs in \cref{lst:transient_execution_examples}, without \dfence, do not typecheck as \(\vx\) has type $(\tylow, \tyhigh)$.
However, the secured versions in \cref{lst:dfence_examples} do typecheck as \(\vx\) has type $(\tylow, \tylow)$.

We implemented the type system in Jasmin~\cite{jasmin}, a low-level language designed for writing high-speed cryptography.

\paragraph{Hardware}
We propose an implementation of \dfence in hardware (\cref{sec:implementation}). The processor tracks all sources of speculative execution, and upon encountering a \dfencex instruction, delays forwarding $\vx$ until older sources of speculation have been resolved. We also implement and discuss other strategies in \cref{sec:fenceimpl} and \cref{sec:improvements}.

\begin{diff}
  \subsection{Evaluation Setting and Limitations}
  \label{sec:evallim}
  We evaluate \dfence on cryptographic code written in the Jasmin low-level programming language.
  In this setting, our type system makes assumptions about memory safety and predictability of jump targets, which significantly improve the precision of static analysis for the protected program.
  These assumptions are standard and practically relevant in this context, as they typically hold for assembly generated from high-level languages and cryptographic software.
  This choice was motivated by the goal of demonstrating that \dfence can enforce strong security guarantees (SCT) while incurring modest performance costs.

  However, these limitations should not be regarded as inherent to the
  approach: \dfence is not restricted to languages or settings where
  (i) security annotations are available, (ii) jump targets are
  predictable, and (iii) programs are sequentially safe.  When such
  assumptions do not hold, \dfence may be combined with more
  conservative static analyses or target weaker security properties,
  such as relative SCT, as also considered by Blade~\cite{blade}
  (cf. \Cref{sec:blade}) and SLH~\cite{SpeculativeLoadHardening}.  In
  addition, compiler-based hardening passes such as SESES (LLVM's
  \texttt{x86-seses*} flags), USLH~\cite{ultimate-slh}, and
  Eclipse~\cite{Eclipse} use a static analysis to identify where an
  SLH-style protection mechanism should be inserted, providing
  protection guarantees that are often more suitable to
  non-cryptographic code, not requiring security
  annotations. The \dfence instruction can serve as a drop-in
  replacement for the protection primitive in these compiler-based hardening passes,
  likely improving performance (cf. \cref{sec:performance-evaluation}).
  Evaluating \dfence beyond SCT is an interesting area
  for future work.

  Finally, our implementation is currently limited to the
  RISC\nobreakdash-V architecture, specifically the Proteus processor.
  Consequently, the results of our evaluation are representative of
  this platform.  However, \dfence is not inherently tied to
  RISC\nobreakdash-V.  The proposed mechanism is ISA-agnostic and can
  be integrated into other architectures and (commercial)
  processors. \end{diff}

\subsection{Threat Model}
We assume sequential safety---\ie{} memory safety during non-specu\-lative execution---but safety during speculative execution is not required.
For speculative constant-time (but not for speculative sandboxing), we assume a programming model where developers annotate secret data in their program (typical for cryptographic implementations).

Our threat model includes microarchitectural timing attacks.
In particular, an attacker can measure execution time with clock cycle precision and observe changes to the shared microarchitectural state, restricted by the leakage contract.
For simplicity, we assume the standard constant-time leakage model (leaking memory operands and the program counter), but our results can be extended to other leakage models.
This paper focuses on PHT and STL speculation.
Although we model indirect branches, we assume RSB and BTB to be mitigated with complementary defenses (\cf{} \cref{sec:other-spectre-variants}).
We discuss extensions to other forms of speculation in \cref{sec:other-spectre-variants}.

We exclude physical side channels like power consumption or electromagnetic emissions.
These attacks assume stronger attacker models and are subject to orthogonal defenses like masking~\cite{DBLP:conf/mark2/Blakley79,DBLP:journals/cacm/Shamir79}.
Additionally, dynamic frequency throttling attacks (\eg{} Hertzbleed)~\cite{DBLP:conf/uss/WangPHSFK22,DBLP:conf/ccs/LiuCCR22} are left out of scope; this feature should be disabled when computing on secret data.
Data memory-dependent prefetch attacks~\cite{DBLP:conf/uss/ChenWSFKPG24,DBLP:conf/sp/VicarteFPG0FK22} are also out of scope.

\begin{diff}
 \paragraph{Integration with non-cryptographic code.}
The \dfence primitive guarantees that values explicitly marked as secret cannot be transiently propagated to unsafe operands, even in the presence of speculative execution.
These guarantees, however, apply only to the protected and annotated portion of the code.
When cryptographic components are linked with generic non-protected code, secret data may still reside in shared memory and therefore become accessible to unprotected components.
If the program is linked to another program, we assume that the linking process robustly preserves speculative noninterference~\cite{exorcising-spectres}.

This guarantee is largely orthogonal to those provided by \dfence.
The role of \dfence is to prevent secret leakage within the protected cryptographic component itself, whereas the surrounding non-cryptographic code must be secured through complementary mechanisms that prevent the introduction of exploitable speculative gadgets.
This approach is consistent with existing practice, where different protection techniques are combined to secure different layers of the system.
For instance, cryptographic routines can be isolated using compartmentalization mechanisms akin to Intel MPK so that unprotected components cannot access secure compartments either architecturally or speculatively without going through specific API calls~\cite{rct}.

\end{diff}
 
\section{Type System for \dfence}
\label{sec:ts}
\newcommand{\jmpctx}{\Phi}
\newcommand{\sjudg}[4][\jmpctx]{#1\unlessempty{#1}{\mid} #2\vdash #3: #4}

In this section, we propose a type system that ensures that software
is correctly protected using \dfence.  For simplicity, our
formalization focuses on the stronger speculative constant-time
policy.  We discuss sandboxing in \cref{sec:generalization}.

\paragraph{Overview of the type system} Our type system is designed to
integrate with Jasmin~\cite{jasmin}, drawing inspiration
from its existing type system for speculative
constant-time~\cite{typing-high}.  While the Jasmin type system is
tailored for SLH protection and focuses solely on Spectre-PHT, our
type system is designed for \dfence and provides protection against
Spectre-PHT, Spectre-SSB, and Spectre-PSF.
The main idea of the type system is to associate two security levels ($\tylow$ or $\tyhigh$)
with each program variable. The first level corresponds to non-speculative execution, the second to speculative execution.
A register with type $(\tylow,\tylow)$ indicates that its contents are public during both non-speculative and speculative execution,
meaning the value is not dependent on any secret data and can safely leak.
The type $(\tylow,\tyhigh)$ signals that the contents of a register are public during non-speculative execution but may depend
on secret data during speculative execution. In this case, the variable must be protected with a
\dfence before its value can be used as an unsafe operand.
If a variable has type $(\tyhigh,\tyhigh)$, its value is considered secret-dependent in both
execution contexts and must not leak.

\begin{figure}
  \centering
      \begin{cbnf}
        \Expr \ni \expr ,\exprtwo
        & \val{} & values \\
        & \vx & register \\
        & \op (\expr_1,\dots,\expr_n) & operation\\[1mm]
        \Instr \ni \stat,\stattwo
        & \vx \asgn \expr & assignment \\
        & \cmemread \vx {\ar[\expr]} & memory load \\
        & \cmemasgn {\ar[\expr]} \exprtwo & memory store\\
        & \fence  & regular fence\\
        & \text{\dfencex} & selective fence\\
        & \cif{\expr}{\cmd}{\cmdtwo} & conditional \\
        & \cwhile{\expr}{\cmd} & while loop \\[1mm]
        & \kwd{jmp}\ \expr & indirect jump \\[1mm]
        \Cmd \ni \cmd, \cmdtwo & \cnil \bnfmid \stat\sep\cmd & programs
      \end{cbnf}
  \vspace{-3mm}
  \caption{Syntax of the language. Here $\mathtt{v} \in \Vals$ is a value, $\vx \in \Regs$ is a register variable and $\ar \in \Arr$ is an array name.}
  \label{fig:syntax}
\end{figure}

\paragraph{Program language} We formalize our type system over a
simple imperative \texttt{while} language (\Cref{fig:syntax}).  Memory
operations are modeled as accesses to non-overlapping fixed-size
arrays. More precisely, we model memory as a collection of arrays
$\ar\in \Ar$, each one associated with a length, $\size\ar$.  This
approach supports the design of a more precise type system, by
associating each memory operation with the security level of the
corresponding region.

Expressions $\expr \in \Expr$ can either be a
ground value $v\in \Vals$, a register $\vx \in \Regs$, or the
application of an $\nat$-ary operation $\op \in \Op$ on expressions
$\expr_1,\dots,\expr_n$.

Programs $\cmd\in \Cmd$ are sequences of instructions. The assignment
$\vx \asgn \expr$ updates the register $\vx$ with the value of
$\expr$.  The load instruction $\cmemread \vx {\ar[\expr]}$
evaluates $\expr$ to an offset $i$ and loads the $i$-th entry of the
array $\ar$ in the register $\vx$.  The store instruction
$\cmemasgn {\ar[\expr]} \exprtwo$ evaluates $\exprtwo$ and writes its
value to the array location $\ar[\expr]$.  We assume that our programs
enjoy basic memory safety properties: the offsets of arrays always
evaluate to natural numbers $\nat \in \Nat$, and
$0\le \nat \le \size \ar$ always holds in non-speculative execution.  Finally,
our language features structured control flow instructions, the full
speculation barrier \fence, and the fine-grained barrier \dfence.

\iftechreport\else For brevity's sake, we omit the operational semantics of the language,
which can be found in our extended paper~\cite{dfence-arxiv}.\fi

\paragraph*{Types and environments}
Now that we introduced the program language targeted by our type system, we introduce its types and typing environments.
Our \emph{ground types} describe the confidentiality levels of registers and arrays.
A variable is assigned type $\tylow$ if it contains public data, or
$\tyhigh$ if it may contain secret data.
These ground types form a security lattice with the ordering $\tylow \leq \tyhigh$,
indicating that public data can always be treated as secret if necessary.
A type environment $\env$ maps registers $\vx \in \Regs$ to pairs $\tty=(\ty_1 ,\ty_2)$, where
$\ty_1$ and $\ty_2$ represent the security level of $\vx$ during non-speculative and speculative execution, respectively.
Additionally, $\env$ maps arrays $\ar \in \Arr$ to a single security level $\ty$,
corresponding to the security level of the array's contents (i.e., its cells) during non-speculative execution.
Speculative types for arrays are not tracked because the type system assumes that values loaded
speculatively from memory are always secret ($\tyhigh$).
For two given pairs $\tty = (\ty_1, \ty_2)$ and
$\tty' = (\ty_1', \ty_2')$ we say that $\tty \le \tty'$
whenever $\ty_1\le \ty_1'$ and $\ty_2\le \ty_2'$.
We stipulate that two environments satisfy $\env \le \envtwo$,
if and only if for every array or register variable $z$, we have
$\env(z) \le \envtwo(z)$.

\paragraph*{Judgments}
The type system establishes judgments of the form
\( \sjudg[]\env\cmd\envtwo \), which expresses that executing the
program~$\cmd$ in a state of type~$\env$ transitions to a state of
type~$\envtwo$. Moreover, the judgment guarantees that all values
leaked during the execution of~$\cmd$ are public.  To deal with
indirect jumps, our type system internally uses \emph{auxiliary
  judgments} such as \( \sjudg \env\cmd\envtwo \), where the
\emph{jump context} $\jmpctx$ is a set of judgments of the form
\( \sjudg[] \env\cmd\envtwo \).  Intuitively, the context records
previously established typings for programs that may be reached
through indirect jumps.
Ordinary judgments are then defined as auxiliary judgments with
empty jump context $\emptyset$. That is, we write
\( \sjudg[]\env\cmd\envtwo \) as shorthand for
\( \sjudg[\emptyset]\env\cmd\envtwo \).
Jump contexts support compositional reasoning about indirect jumps:
when typing an indirect jump, the type system can look up the typing
of the target continuation in~$\jmpctx$ instead of recomputing it.
This mechanism also allows the type system to handle programs with
cyclic patterns of indirect jumps.

\begin{figure*}[t]
    \columnwidth=\linewidth
  \small
  \par\raggedright\emph{\textbf{\bf \em Typing rules for expressions:}}\\
  \[
    \Infer[TY][TVar][\textsc{TVar}]
    {\sjudg[]{\env}{\vx}{\env(x)}}
    {~}
    \qquad
    \Infer[TY][TVal][\textsc{TVal}]
    {\sjudg[]{ \env}{\val{}}{\typair \tylow}}
    {~}
    \qquad
    \Infer[TY][TOp][\textsc{TOp}]
    {\sjudg[] \env {\op(\expr_1, \ldots, \expr_k)} {\tty}  }
    {
     \sjudg[]{\env}{\expr_i}{\tty}
    }
    \qquad
    \Infer[TY][TSub][\textsc{TSub}]
    {\sjudg[]{\env}{E}{\tty'}}
    {\sjudg[]{\env}{E}{\tty} \quad \tty \le \tty'}
  \]\\[1.5mm]
    {\bf \em Typing rules for instructions:}\\
    \[
      \Infer[TY][TAsgn][\textsc{TAsgn}]
      {\sjudg \env {\vx \asgn \expr} {\supdate \env \vx \tty} }
      {\sjudg[] \env \expr {\tty} }
      \qquad
      \Infer[TY][TLoad][\textsc{TLoad}]
      {
        \sjudg {\env} {\cmemread \vx  {\ar[\expr]}} {\supdate \env \vx {\tty}}
      }
      {
        \sjudg[] {\env} {\expr} {\typair \tylow}
        \quad
        \tty = (\env(\ar), \tyhigh)
      }
      \qquad
      \Infer[TY][TStore][\textsc{TStore}]
      {
        \sjudg {\env} {\cmemasgn  {\ar[\expr]} \exprtwo} {\supdate \env \ar {\max(\pi_1(\tty_\exprtwo),~\env(\ar))}}
      }
      {
        \sjudg[] {\env} {\expr} { \typair \tylow}
        \quad
        \sjudg[] {\env} {\exprtwo} {\tty_\exprtwo}
      }
    \]
    \[
      \Infer[TY][TIf]
      {
        \sjudg {\env} {\cif \expr {\cmd_1} {\cmd_2}} {\env'}
      }
      {
        \sjudg[] {\env} {\expr} { \typair \tylow } \quad
        \sjudg {\env} {\cmd_1} {\env'}\quad
        \sjudg {\env} {\cmd_2} {\env'}\quad
      }
    \qquad
      \Infer[TY][TWhile]
      {
        \sjudg {\env} {\cwhile\expr {\cmd}} {\env}
      }
      {
        \sjudg[] {\env} {\expr} {\typair \tylow}\quad
        \sjudg {\env} {\cmd} {\env}
      }
      \qquad
      \Infer[TY][TJmp]
      {\sjudg \env {\kwd{jmp}\ \expr_{V}} {\textstyle\bigsqcup_{\loc\in V}\envtwo_\loc}}
      {
        \sjudg[] \env {\expr} {\typair \tylow} \quad
        \forall \loc \in V. \left(\sjudg[] \env {\phi(\loc)} {\envtwo_\loc}\right) \in \jmpctx }
    \]
\[
      \Infer[TY][TFence]
      {\sjudg \env {\fence} {\envtwo}}
      {\forall \vx \in \Regs.~\envtwo(\vx) = \typair{\pi_1(\env(\vx))}
        \quad
        \forall \ar \in \Arr.~\envtwo(\ar) = \env(\ar)}
\qquad
      \Infer[TY][TDfence]
      {\sjudg \env {\dfencex} {\supdate \env \vx \tty}}
      {\tty = \typair{\pi_1(\env(\vx))}}
    \]\\[1.5mm]
    {\bf \em Typing rules for programs:}
    \[
      \Infer[TY][TEmpty][\textsc{TEmpty}]
      {\sjudg \env {\cnil} \env}
      {~}
      \ \
      \Infer[TY][TSeq][\textsc{TSeq}]
      {\sjudg {\env_1} {\stat\sep \cmd_2} {\env_3} }
      {
        \sjudg {\env_1} {\stat} {\env_2}
        \quad
        \sjudg {\env_2} {\cmd_2} {\env_3}
      }
      \ \
      \Infer[TY][TWeak][\textsc{TWeak}]
      {
        \sjudg {\env} {\cmd} {\envtwo}
      }
      {
        \sjudg {\env'} {\cmd} {\envtwo'}\quad
        \env \le \env'\quad
        \envtwo' \le \envtwo
      }
      \ \
      \Infer[TY][TCont]
      {\sjudg[\emptyset] {\env} {\cmd} \envtwo}
      {\forall \left( \sjudg[] {\env'} {\cmdtwo} {\envtwo'} \right) \in \jmpctx.\
        \sjudg {\env'} {\cmdtwo} {\envtwo'}
        \quad
        \sjudg {\env} {\cmd} \envtwo
      }
    \]
    \caption{Type system.}
    \label{fig:ts}
  \end{figure*}

\paragraph{Typing rules}
The typing rules are detailed in \Cref{fig:ts}.
The rules for expressions propagate the security levels of the registers
to the resulting value of the expression, producing judgments of the form
$\sjudg[] {\env} {\expr} {\tty}$.
This judgment indicates that evaluating the expression
$\expr$ in a state of type $\env$ will produce a value of type $\tty$.
Rule \ref{TY:TSub} reflects the principle that public data can always be treated as secret.
\begin{diff}
  For simplicity's sake, we assume that applying an operator $\op$ to
  a sequence of sub-expressions takes the same amount of time
  regardless of the values involved, so that no information can leak
  through the timing of expression evaluation itself. This assumption
  can be easily relaxed by imposing that arguments of expressions with
  data-dependent timing have type $\typair\tylow$.
\end{diff}

The remaining rules handle the typing of programs, ensuring that the types of
all variables are consistently tracked at each program point.
Rule \ref{TY:TAsgn} updates the type of the assigned variable
$\vx$ to match the type of the expression $\expr$.
For control flow instructions, the rules follow standard patterns (\ref{TY:TIf}, \ref{TY:TWhile}, \ref{TY:TEmpty}, \ref{TY:TSeq}).
The condition $\sjudg {\env} {\expr} {\typair \tylow}$ in rules \ref{TY:TIf} and \ref{TY:TWhile}
ensures that the value leaked by the conditional instruction is public.
\ref{TY:TWhile} imposes that the environment is a fixed point: the output environment
must be identical to the input environment. This can be achieved using the weakening rule \ref{TY:TWeak}.

The operational semantics
\iftechreport in \Cref{sec:source-language} \else~\cite{dfence-arxiv} \fi
handles indirect
jumps by means of a \emph{continuation map}~$\jmpcont$, associating each program location~$\loc$ with the corresponding continuation~$\jmpcont(\loc)$.
Rule~\ref{TY:TJmp} types indirect jump instructions by assuming
that these are annotated with a set $V$ of
possible jump targets under normal and speculative execution.  The
rule requires all the jump continuations $\jmpcont(\loc)$ for
$\loc \in V$ to be typable, described by the triple
$\left(\sjudg[] \env {\phi(\loc)} {\envtwo}\right) \in \jmpctx$.
This rule is only significant in combination with
Rule~\ref{TY:TCont}. When typing a program $\cmd$, this rule allows
populating a jump context with a collection $\jmpctx$ of
auxiliary judgments  $\sjudg[] {\env'} \cmdtwo {\envtwo'}$. In the premise on
the left, the validity of each judgment in $\jmpctx$ is established
under the assumption that the type of jump targets can be retrieved from
$\jmpctx$. Once the validity of judgments in $\jmpctx$ is established,
these judgments can be used to type $\cmd$ in the right premise.

The most significant rules correspond to loads, stores,
\fence, and \dfence.
Rule~\ref{TY:TLoad} ensures that the array's offset has type $(\tylow,\tylow)$ to prevent leakage due to using secrets as memory addresses.
It also enforces that the new type of register $\vx$ is $(\env(a),\tyhigh)$.
In other words, the type system uses the non-speculative type of
$\ar$ for the register $\vx$ in non-speculative execution and $\tyhigh$ during speculation.
\iftechreport
This approach is justified by several factors;
because of Spectre-PHT, the array access can overflow (rule \ref{SIE:SLoad-OOB}), which means the loaded value might be secret. Even if the access is within bounds, PSF speculation (by rule \ref{SIE:SLoad-PSF})
\else
This approach is justified by several factors;
because of Spectre-PHT, the array access can overflow,  which means the loaded value might be secret. Even if the access is within bounds, PSF speculation
\fi
allows for loading any value from the store buffer.
Consequently, the type system conservatively approximates the speculative type to $\tyhigh$.
Rule~\ref{TY:TStore} similarly requires that the offset has type $(\tylow,\tylow)$.
\begin{diff}
  Since only a single array entry is modified, the new type of the array is the least upper bound of the existing element types and the updated value, i.e.,  $\max(\pi_1(\tty_\exprtwo),~\env(\ar))$,
\end{diff}
where $\pi_1$ is the first projection of the pair of types (i.e., the non-speculative type of $\exprtwo$).

The last two rules account for instructions that (partially) stop speculation.
Rule \ref{TY:TFence} sets the speculative type of all register variables to be equal to their non-speculative type.
This rule is sound because the \fence instruction and any subsequent instructions
can only be executed when the state is not in misspeculation.
\begin{diff}
  Array types are unchanged because our type system does not keep track of their speculative type.
\end{diff}
Rule \ref{TY:TDfence} is similar to Rule \ref{TY:TFence}, but it only updates the protected variable's
speculative type.

\Cref{thm:soundness} states the main property of our type system,
i.e., that well-typed programs do not leak confidential data. This is
captured by the notion of $\confeq\env$-SCT, stating that executing
$\cmd$ on states that store the same values in public memory
regions---i.e., $\confeq\env$-related---yields identical leakage. In
the relation $\confeq\env$, the environment $\env$ specifies
which parts of the state (register variables or arrays) are considered
secret and which of them are public. For brevity's sake, we omit the
complete definitions of $\confeq\env$ and SCT, which are given
in\iftechreport~\Cref{sec:appsoundness}\else~\cite{dfence-arxiv}\fi.

\begin{theorem*}[Correctness of the type system]{thm:soundness}
If \( \sjudg[]\env\cmd\envtwo \) then $\cmd$ is $\confeq\env$-SCT.
\end{theorem*}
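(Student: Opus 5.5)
The proof goes by a product-style (relational) invariant on pairs of speculative configurations. The speculative semantics is not shown, so I assume configurations carry a program, registers, memory, a store buffer, and a misspeculation flag. I will define a relation $\confeq\env$ on configurations, strengthening the one on initial states, with four clauses. First, the two configurations share the same remaining program and the same misspeculation status. Second, when not misspeculating, every register $\vx$ with $\pi_1(\env(\vx))=\tylow$ agrees. Third, when misspeculating, every register with $\pi_2(\env(\vx))=\tylow$ agrees. Fourth, every array $\ar$ with $\env(\ar)=\tylow$ agrees, and the store-buffer entries agree on addresses, and on values whenever the stored value was typed public. The theorem then follows from a one-step (or per-instruction) preservation lemma. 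If $\sjudg\env\cmd\envtwo$ and two $\confeq\env$-related configurations take a step under the same directive, then both steps emit the same observation and the resulting configurations are again related at the environment typing the residual program. Iterating this along the directive sequence gives equal observation traces, which is SCT.

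The steps are as follows.

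\begin{enumerate}
\item Prove expression lemmas. If $\sjudg[]\env\expr\tty$ and the states are related, then the values of $\expr$ agree in whichever component of $\tty$ is $\tylow$ for the current speculation mode. This is an induction on the expression derivation, with \ref{TY:TSub} handled by monotonicity.
\item Invert the instruction derivation through \ref{TY:TWeak} and \ref{TY:TSeq}. Because relatedness is antitone in $\env$, I can always reduce to the syntax-directed rule.
\item Do a case analysis on the semantic rule for the head instruction.
  \begin{itemize}
  \item Loads, stores, branches and jumps leak an operand typed $\typair\tylow$. By step 1 that operand agrees in both modes, so the observations are equal.
  \item For loads, \ref{TY:TLoad} sets the speculative type of the destination to $\tyhigh$. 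This makes the out-of-bounds, SSB (stale value) and PSF (forwarded value) cases trivial in speculative mode. In sequential in-bounds mode the loaded value comes from $\ar$, or is correctly forwarded from a store buffer entry, and therefore agrees when $\env(\ar)=\tylow$.
  \item A branch misprediction moves both runs into speculative mode together. Here I need that $\pi_1\le\pi_2$ is effectively maintained, or else that speculative agreement was established in advance.
  \item \ref{TY:TDfence} only fires once $\vx$ is non-speculative, so its value equals the sequential one and agreement at $\pi_1$ transfers to the new speculative type. \ref{TY:TFence} is handled similarly for all registers.
  \end{itemize}
\item For indirect jumps, prove the lemma for auxiliary judgments under a context $\jmpctx$, by coinduction or by induction on execution length. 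Use \ref{TY:TCont} to discharge the judgments stored in $\jmpctx$, so each jump target $\jmpcont(\loc)$ continues in a related configuration at the environment the context supplies.
\end{enumerate}

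I expect the main obstacle to be formulating the invariant so that it survives speculation. The difficulty has three parts. Registers typed $(\tylow,\tyhigh)$ must agree sequentially but may diverge once a misprediction starts. Rollback must restore relatedness at the earlier environment. And \dfence must recover speculative agreement from sequential agreement. My first attempt will make the configuration carry a stack of checkpoints, each paired with its typing environment, and require relatedness of the current state only in the currently active mode, with each checkpoint related at its own environment. Rollback then just pops to a checkpoint that is still related. Since \dfence waits for resolution, it reads the committed value, which is covered by sequential relatedness.
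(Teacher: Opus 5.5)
Your invariant is not preserved as defined, and it fails exactly where you hesitate. With $\bms=\bot$ you only require agreement on registers with $\pi_1(\env(\vx))=\tylow$. Yet every switch into misspeculation immediately demands agreement on all registers (other than the destination of a load) with $\pi_2(\env(\vx))=\tylow$. Such switches include a mispredicted \ref{SIE:SIf} or \ref{SIE:SWhile}, and also \ref{SIE:SLoad-PSF} and \ref{SIE:SLoad-SSB}, which set the flag to $\top$ from $\bot$.

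The type system does not maintain $\pi_1\le\pi_2$. The initial $\env$ is arbitrary, and \ref{TY:TVal}, \ref{TY:TSub} and \ref{TY:TAsgn} already type $\vx\asgn 0$ with $\vx:(\tyhigh,\tylow)$. Such a register may differ under your sequential clause and must agree one step later, so your one-step preservation lemma is false as stated.

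The paper adopts your second option and builds it into the relation from the start. When $\bms=\bot$, $\reg_1\steq[\bot]{\env}\reg_2$ requires $\reg_1(\vx)=\reg_2(\vx)$ whenever $\pi_1(\env(\vx))=\tylow$ \emph{or} $\pi_2(\env(\vx))=\tylow$. The expression lemma is strengthened accordingly: in sequential mode $\sem{\expr}$ agrees as soon as \emph{either} component of $\tty$ is $\tylow$. Your $\pi_1$-only version does not cover \ref{TY:TAsgn} at a type $(\tyhigh,\tylow)$. The strengthened clause is preserved by every rule and implies the speculative one outright, so the $\bot\to\top$ switch costs nothing. Your first option can also be rescued, by normalising the whole derivation with $(\ty_1,\ty_2)\mapsto(\ty_1,\ty_1\sqcup\ty_2)$, which every rule respects, but that needs its own argument.

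Your treatment of \dfence, and the checkpoint machinery, rest on a semantics the paper does not use. Here there is no rollback: $\bms$ is monotone, and \ref{SIE:SFence} is enabled only when $\bms=\bot$. \ref{SIE:SDfence} always executes: it leaves $\vx$ unchanged if $\bms=\bot$ and overwrites it with $\bot$ if $\bms=\top$, after which the transient run continues.

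So the misspeculating case of \dfencex is not vacuous, and it cannot be justified by ``the committed value''. When $\bms=\top$, your invariant says nothing about a register typed $(\tylow,\tyhigh)$ (precisely the registers one protects), and no sequential run reaches that program point. The correct argument is that both runs now hold $\bot$ in $\vx$, which gives agreement at the new speculative type $\pi_1(\env(\vx))$. The case $\bms=\bot$ uses the $\pi_1$-agreement you already have.

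Because the only mode change is $\bot\to\top$, no checkpoint stack is needed. Once $\bms=\top$, the invariant simply stops requiring sequential facts. That includes agreement of store-buffer values for public arrays, which in-bounds speculative stores of $(\tylow,\tyhigh)$ values would otherwise break.

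With these two repairs your lockstep route is sound, and it differs from the paper's. The paper does not prove one-step preservation over two given runs. Instead it interprets judgments as a simulation: every run from one related state is matched by a run from the other with the same directives and observations, ending related when the residual program is $\cnil$. That is why its relation additionally requires the $\bot$-patterns to coincide, $\reg_1\boteq\reg_2$, so that both runs get stuck together; SCT then follows by determinism. For \ref{TY:TCont} the paper indexes executions by the number of remaining jumps, validates $\jmpctx$ by induction on that index, and removes the index via Scott continuity.

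Since SCT already quantifies over two runs with the same directives, your version needs no $\bot$-pattern agreement. Your induction on execution length is a workable replacement for the jump index: each jump consumes a step, so context judgments need only hold for strictly shorter runs. Finally, your repaired relation is implied by the paper's $\confeq\env$, so SCT for it yields the theorem as stated.
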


\paragraph{Compilation}
Note that \cref{thm:soundness} guarantees that typable programs are speculative constant-time only \textit{at the source level}.
This guarantee will translate to binaries only if the compiler preserves the notion of speculative constant-time.
Proving that secure compilers like Jasmin preserve speculative constant-time is still an active research topic~\cite{DBLP:journals/iacr/OlmosBBGL24}.
In particular, to preserve speculative constant-time, it is crucial that the compiler does not introduce new memory operations (as often happens when register spilling is required).
If spilling is necessary, the programmer (or a policy-aware compiler) should handle it by storing or loading registers into an array, as done in the Jasmin language.
Alternatively, the type system should be applied after the memory operations are introduced.

\paragraph{Example of the type system on a Spectre v1.1 gadget}

\begin{listing}[t]

\centering
\begin{minipage}{\linewidth}
\begin{lstlisting}[style=nonumbers]
    0 - 15: <@$\ar\expression{[16]}$@>
  16 - 271: <@$\sym{sp}\expression{[256]}$@>
\end{lstlisting}
\subcaption[listing]{Memory}\label{lst:memory-spectre-v1.1}
\end{minipage} \hfill

\vspace{3mm}

\begin{minipage}{\linewidth}
\begin{lstlisting}[firstnumber=1]
$\vx \asgn 0$;           // x : (L, L)
$\cmemread \vy {\ar[\vx]}$;           // vulnerable instruction<@\label{line:v1.1vuln}@>
$\vx \asgn \val{secret}$;           // x : (H, H)
if ($\expression{in\_bounds(\ar, i)}$){
  $\cmemasgn {\ar\expression{[i]}} {\expression{addr\_of\_line\_2}}$;<@\label{line:v1oob2}@>
}
${\expression{ra}} \asgn  \expression{sp[0]}$;           // load return address
$\kwd{jmp}\ \expression{ra}$;           // return <@\label{line:v1return}@>
\end{lstlisting}
\subcaption[listing]{Spectre-v1.1}\label{lst:spectre-v1.1}
\end{minipage} \hfill
\addtocounter{listing}{-1} \addtocounter{lstlisting}{1} \captionof{listing}{Snippet vulnerable to Spectre v1.1. In \riscv, returns are encoded as unconditional jumps to the address held in the \texttt{ra} register.}\label{fig:v1.1}
\end{listing}

Spectre v1.1 is a class of transient
execution attacks which relies on transient out-of-bounds store
operations and store-to-load forwarding to leak confidential data in
a victim program~\cite{speculative-overflows}, as demonstrated in \Cref{fig:v1.1}. In
the example, \cc{addr_of_line_2} is a constant that corresponds to the
address of the unsafe instruction at line~\ref{line:v1.1vuln}. An
attacker can induce transient execution of the conditional
instruction with \cc{i} = 16, overlapping \cc{a[i]} with the return
address stored in \cc{sp[0]}.  As a result, the out-of-bounds store at
line~\ref{line:v1oob2} speculatively forwards \cc{addr_of_line_2} to
the return instruction at line~\ref{line:v1return}, redirecting the control flow to line~\ref{line:v1.1vuln}, where the
value of \cc{x} is leaked.

In our type system, programs vulnerable to Spectre v1.1 are not
typable.  Indeed, for typing the program in \Cref{fig:v1.1}, the rule
\ref{TY:TJmp} for the jump instruction in line 9 requires each
target of the jump instruction to typable.  Notice that the program starting in line 2 is reachable, as
the out-of-bounds write instruction may transiently overwrite the
stack pointer. Consequently, by rule~\ref{TY:TJmp}, the instruction at
line~\ref{line:v1.1vuln} must be typable under the assumption that
$\vx$ has type $\typair \tyhigh$, which is not the case.

It is important to note that our type system can only protect against Spectre v1.1 when jump targets can be statically over-approximated. In Jasmin, this requirement is not problematic, since the only form of indirect jumps it supports is return instructions. During normal execution, the set of return targets corresponds to the instructions immediately following each call site. Thus, if all return instructions in a program are preceded by a \dfence instruction protecting the register that holds the return address, then in the absence of return address speculation the set of possible jump targets $V$ remains the same under both speculative and normal execution. \iftechreport This fact is shown in \Cref{sec:jta}\else See~\cite{dfence-arxiv} for a detailed proof of this fact\fi.
This also means that Jasmin programs for \riscv can be protected against
Spectre v1.1 by simply replacing every return instruction with
\cc{dfence ra; jmp ra}.

\paragraph{Software implementation}
We implemented our type system in the Jasmin framework. To this end, we
integrated the \dfence instruction into its language and extended
Jasmin's \riscv compiler to emit the corresponding assembly
instruction. To compare \dfence with Jasmin's manual SLH mechanism, we
extended the compiler with support for SLH and \fence
instructions in \riscv, which were only supported for the x86
architecture.

The primary benefit of our approach is that it enables generating multiple versions of the
same cryptographic primitives, each protected by a different mitigation mechanism
(e.g., SLH, \dfence, or \fence), as demonstrated in \Cref{sec:evaluation}.

\section{Hardware Implementation}
\label{sec:implementation}
To assess the feasibility of \dfence in real hardware and evaluate its performance impact, we implement
it as an extension of the Proteus~\cite{bognar2023proteus} \riscv CPU, which has already been used for evaluating Spectre countermeasures~\cite{prospect}.
We extend version 25.08-O of Proteus featuring out-of-order execution and speculation primitives.
This CPU features both SSB- and PSF-type memory disambiguation speculation, and a branch target buffer (BTB) that can induce both Spectre-PHT and -BTB type speculation.

\subsection{Speculative Execution on Proteus}
Proteus features a reorder buffer (ROB) with a configurable number of entries and register renaming. Execution units are managed by reservation stations that wait for unavailable operands, enforcing data dependency resolution. Load operations are issued to dedicated load units after their target addresses have been computed. Instructions in the ROB are retired in program order when ready, and stores are committed to memory during this retirement phase.

\paragraph{PHT speculation}
Proteus includes a BTB for control-flow speculation.
The BTB is a table that maps partial instruction addresses---independent of instruction type---to predicted target addresses. Additionally, Proteus supports straight-line speculation. These mechanisms are employed in the instruction fetch stage to predict the next instructions to fetch and execute.

Note that the control-flow speculation implemented in Proteus is more permissive than what is assumed in our threat model and formal semantics (\cref{sec:ts}), which only consider PHT-based speculation on conditional branches. As a result, achieving full security on Proteus requires either complementing our defense with an orthogonal BTB-specific mitigation that restricts speculation to PHT-style behavior, or extending our defense as discussed in \cref{sec:other-spectre-variants}.
Nevertheless, this broader speculation model does not undermine our security claims.
Our evaluation (\cref{sec:security-evaluation}) focuses on PHT-style leakage and demonstrates that our defense effectively mitigates these.

\paragraph{SSB and PSF speculation}
Proteus supports both SSB and PSF speculation.
In the case of SSB, a load may execute speculatively even in the presence of an older store with an unresolved address. Once the store instruction retires, a check is performed to detect any address aliasing. If the addresses alias, the store was incorrectly bypassed, and the pipeline is flushed.
The PSF implementation speculatively forwards the value from the most recent store to subsequent loads with unresolved addresses. Once the load address is computed, the actual load address is compared with the address of the store.
If a mismatch is detected, the pipeline is flushed.
These predictors are also equipped with a small history table to avoid repeated mispredictions on the same addresses, similar to how similar predictors can be dynamically disabled in Intel processors~\cite{doweck2006intel}.

\subsection{RISC-V Encoding}
The standard \riscv \fence instructions are encoded as register-immediate (I-type) instructions with separate \texttt{fm} identifiers for \texttt{fence} and \texttt{fence.tso}, and two---currently unused---register identifiers (\texttt{rd} and \texttt{rs1}).
For \dfence, we choose a new \texttt{fm} identifier and refine the semantics to use two registers instead of one.
Concretely, \dfence \texttt{rd}, \texttt{rs1} introduces a dependency between \texttt{rd} and \texttt{rs1}. Instructions using the value of the protected \texttt{rd} will be delayed until the fence forwards its value. Therefore, a register \texttt{x} can be protected using \dfence \texttt{x}, \texttt{x}.
If a different register is used in \texttt{rs1}, the \dfence instruction copies the value of \texttt{rs1} into \texttt{rd} upon retirement. In this case, only instructions that use \texttt{rd} (and not \texttt{rs1}) are blocked from transiently using the value.
This encoding simplifies the hardware changes, as it enables reusing existing dependency tracking mechanisms.
We also extended the LLVM \riscv assembler and disassembler with this encoding to simplify writing code.

\subsection{Speculative Fence Implementations}
\label{sec:fenceimpl}
We implement different speculative execution barriers that comply with our specification (\cref{sec:overview}). These implementations offer the same security, but different trade-offs in terms of performance and hardware costs, which we evaluate in \cref{sec:performance-evaluation} and \cref{sec:hw-evaluation}, respectively.

\paragraph{Serializing fence}
As a baseline, we implemented \fence as a fully serializing instruction, analogous to \lfence on x86. All subsequent instructions are blocked from executing until the fence retires.
This implementation is stricter than the \riscv specification, which only specifies ordering among certain I/O and memory operations. This limited ordering would not prevent leakage through non-memory side channels.

\paragraph{Unoptimized \dfence}
We first implement a \dfence variant that does not rely on hardware speculation tracking. In this design, the value of the destination register is not forwarded until the \dfence instruction retires, regardless of whether the processor is executing speculatively.
This approach simplifies the implementation by eliminating the need for tracking speculation while still complying with our specification, at a potential performance penalty.

\paragraph{Optimized \dfence}
For the optimized \dfence variant, we reuse existing facilities in Proteus for speculation tracking.
Proteus taints instructions executing speculatively, regardless of the source of speculation (BTB, unresolved control-flow instructions, SSB, and PSF). These taints propagate over the pipeline and are recorded in the ROB. When an instruction uses the result of a tainted instruction, it also becomes tainted. Taints in the ROB are cleared upon instruction retirement or a pipeline flush.
A \dfence instruction is blocked in its reservation station before it can start executing if it follows a speculative branch or if its source register is tainted. It is only allowed to execute---and forward the register value---once all prior speculation has been resolved.
This ensures that instructions following the \dfence depending on its output value also cannot start executing (and potentially leak their operands) until the \dfence is retired.
 
\section{Evaluation}
\label{sec:evaluation}
We evaluate our \dfence implementation in multiple aspects: security, runtime overhead compared to other state-of-the-art protections, software instrumentation complexity, and hardware cost.
Our security and performance evaluations are conducted in the cycle-accurate Proteus simulator.

\subsection{Security Evaluation}\label{sec:security-evaluation}
First, we perform a security evaluation to increase our confidence in the correctness of the implementation and validate that, with the correct placement of \dfence instructions, secrets do not leak through microarchitectural channels.
To this end, we develop a template to generate vulnerable code snippets and protect them using various defense strategies. Specifically, our test cases cover:
\begin{enumerate*}
  \item four speculation strategies (SSB, PSF, and two variants of PHT),
  \item leakage through five different insecure instructions (load, store, branch, jump, division), and
  \item instrumentation using secure instructions (\fence, \dfence), and insecure strategies (no protection, or  \dfence with an incorrect register).
\end{enumerate*}
This template generates a total of 40 secure and 40 insecure programs.

Program executions on Proteus produce an output with the time-ordered value changes of the processor's signals, which can be used to detect security violations~\cite{prospect,winderix24ami,winderix24libra}. We conduct our security evaluation by monitoring two sets of signals: a \emph{conservative} set following an allowlist approach and a \emph{liberal} set following a blocklist approach.
The liberal set includes manually selected observable signals (\eg{} cache metadata, memory bus addresses, program counter), while the conservative set includes all signals except those known not to be observable (\eg{} register and cache data).

We run each program twice, using different values for the secret input. Each pair of executions with different secrets $(s_{1}, s_{2})$ produces a pair of signals sets $(\mathsf{signal}(s_{1}), \mathsf{signal}(s_{2}))$. If $\mathsf{signal}(s_{1})\neq \mathsf{signal}(s_{2})$, the program leaks; if $\mathsf{signal}(s_{1}) = \mathsf{signal}(s_{2})$, the program does not leak (for our specific inputs).
Our goal is to demonstrate that (\textbf{G1}) \emph{most} insecure programs leak on Proteus, and (\textbf{G2}) our \fence and \dfence protections effectively close the leaks for \emph{all}  programs using a secure strategy.

\paragraph{Results}
(\textbf{G1}): With both signal sets, 20/40 insecure programs were found to leak in practice. This confirms that these programs are insecure on Proteus and that our security evaluation can accurately identify insecure programs.
(\textbf{G2}:) All 40 programs were found to be secure with both signal sets, providing evidence that our protections effectively close the leaks for all secure programs.

\subsection{Performance Evaluation}\label{sec:performance-evaluation}

We evaluate the runtime overhead of \dfence by comparing it to an
unprotected baseline and standard protections for Spectre-PHT and
Spectre-STL.  As benchmark programs, we use cryptographic algorithms
distributed as part of the \texttt{libjade}
library~\cite{libjade,mldsaJasmin,mlkemJasmin}.
In addition to the existing ML-DSA (Dilithium)~\cite{Dilithium} code, we
ported the Jasmin sources of ML-KEM (Kyber)~\cite{Kyber},
Curve25519~\cite{Curve25519}, Keccak$f$-1600~\cite{SHA-3},
ChaCha20~\cite{ChaCha20}, Poly1305~\cite{Poly1305} and
Gimli~\cite{Gimli} from Arm to 32-bit \riscv.
Notably, Dilithium and Kyber are NIST standards
for post-quantum cryptography since 2024.

\paragraph{Methodology} Our tests are conducted on the \dfence
implementation built on Proteus, as described in
\Cref{sec:implementation}.  Our benchmarks execute the aforementioned cryptographic primitives on Proteus under various
configurations:
\begin{itemize}
\item \emph{Baseline}. The cryptographic
  primitives are executed without applying
  protection mechanisms or disabling speculative execution
  features.

\item \emph{\dfence}. The cryptographic primitives are protected
  exclusively using \dfence, validated by our Jasmin type system.
  These benchmarks are executed on both the unoptimized and optimized
  implementations from \cref{sec:fenceimpl}. To measure the impact of
  defending against Spectre v1.1 with \dfence, we also consider a
  configuration where return instructions are not protected with
  \dfence, named \emph{\dfence without v1.1}.

\item \emph{Regular \fence}. The cryptographic primitives are
  protected solely using regular serializing \fence instructions and
  are validated by our type system. In particular Spectre v1.1 is
  prevented by replacing \riscv \cc{ret} instructions with \cc{fence; ret}.
  We evaluate the impact protecting against Spectre v1.1 by
  also considering a variant of this configuration where return
  instructions are not protected with \fence, named \emph{regular
    \fence without v1.1}.

\item \emph{SLH and SSBD}. Jasmin's existing type system is used to
  insert dependencies between the guards of conditionals and the
  potentially leaked values, following the SLH approach. To achieve
  security comparable to the previous two configurations, we disable
  SSB and PSF to prevent leaks via these prediction
  mechanisms. Additionally, we report on the isolated impact of SLH
  and SSBD. Notably, this configuration does not prevent Spectre v1.1
  attacks that are not covered by the threat model of Jasmin's
  existing type system.
\end{itemize}

Our benchmarks were executed on a setup of Proteus with 48 ROB entries, 12 ALUs, 3 MUL/DIV units, and 3 load units with a 64-cycle memory load delay.

\begin{table}[t]
  \caption{ Geometric mean and maximum overhead measured over the evaluated benchmarks.}
  \label{fig:summary}
  \centering
  \resizebox{\linewidth}{!}{
    \begin{tabular}{lcc}
      \toprule
      \textbf{Configuration}
      &
        \textbf{Geometric mean}
      & \textbf{Maximum}
      \\
      \midrule
      {Optimized \dfence} & -0.23\% & 1.76\% \\
      {Optimized \dfence} w/o v1.1 & 0.13\% & 2.52\% \\
      {Unoptimized \dfence} & -0.19\% & 1.79\% \\
      {Unoptimized \dfence} w/o v1.1 & 0.28\% & 3.55\% \\
      {Regular \fence} & 1.67\% & 4.39\% \\
      {Regular \fence} w/o v1.1 & 1.85\% & 4.73\% \\
      {SLH and SSBD} & 13.53\% & 59.72\% \\
      {SLH only} & 2.45\% & 9.28\% \\
      {SSBD only} & 12.24\% & 56.96\% \\
      \bottomrule
    \end{tabular}
  }
\end{table}

\subsubsection{Results} The results of our performance evaluation are
summarized in \Cref{fig:summary}. \iftechreport Individual measurements
are deferred to \Cref{app:benchs}. \else Individual measurements are included in our extended paper~\cite{dfence-arxiv}. \fi Overall, the experiments show that
the overhead of \dfence consistently low. In
particular, the overhead induced by \dfence to prevent Spectre v1.1
is negative on average for both the variants.
Negative overheads have also been reported for other hardware-based
protection mechanisms that partially restrict speculative execution to
prevent leakage~\cite{SafeSpec,Cassandra,Okapi}. We hypothesize that
in our case, these negative overheads result from \dfence limiting the
incorrect propagation of values during transient execution. This
behavior may reduce the cost associated with squashing transient
instructions compared to the baseline.

Our results also show that the unoptimized implementation of \dfence
shows a good trade-off between implementation cost and performance
overhead: while its performance overhead is greater than the one of
the optimized \dfence, it is still negligible and its hardware cost is
smaller (see \Cref{sec:hw-evaluation}).

Finally, observe that disabling SSB and PSF speculation on Proteus
brings an average performance cost of over 10\%. This cost is unavoidable
when using SLH to still provide protection against Spectre-STL, which is offered
by \dfence out of the box.

\subsection{Software Instrumentation Cost}

\begin{table}[t]
  \caption{Line additions (Lines) and protection sites (Prot) required to protect our benchmarks.
  Percentages are relative to the total LOC of each benchmark.}
  \centering
  \resizebox{\linewidth}{!}{
  \begin{tabular}{lrrrrrrr}
    \toprule
    & & \multicolumn{2}{c}{\dfence}
      & \multicolumn{2}{c}{\fence}
      & \multicolumn{2}{c}{\texttt{SLH}} \\
    \cmidrule(lr){3-4} \cmidrule(lr){5-6} \cmidrule(lr){7-8}
    Benchmark
      & \multicolumn{1}{c}{LOC}
      & \multicolumn{1}{c}{Lines} & \multicolumn{1}{c}{Prot}
      & \multicolumn{1}{c}{Lines} & \multicolumn{1}{c}{Prot}
      & \multicolumn{1}{c}{Lines} & \multicolumn{1}{c}{Prot} \\
    \midrule
    ML-DSA
      & 10{,}755
      & 171 (1.6\%)   & 160 (1.5\%)
      &  93 (0.9\%)   &  91 (0.8\%)
      & 1786 (16.6\%) & 149 (1.4\%) \\
    ML-KEM
      & 2{,}665
      &  98 (3.7\%)  & 51 (1.9\%)
      &  63 (2.4\%)  & 45 (1.7\%)
      & 634 (23.8\%) & 55 (2.1\%) \\
    Curve25519
      & 443
      & 3 (0.7\%) & 3 (0.7\%)
      & 2 (0.5\%) & 1 (0.2\%)
      & 2 (0.5\%) & 1 (0.2\%) \\
    ChaCha20
      & 369
      & 19 (5.1\%) & 10 (2.7\%)
      & 10 (2.7\%) &  2 (0.5\%)
      & 14 (3.8\%) &  2 (0.5\%) \\
    Keccakf1600
      & 384
      & 43 (11.2\%) & 8 (2.1\%)
      & 45 (11.7\%) & 6 (1.6\%)
      & 76 (19.8\%) & 6 (1.6\%) \\
    Poly1305
      & 195
      & 5 (2.6\%) & 4 (2.1\%)
      & 2 (1.0\%) & 1 (0.5\%)
      & 2 (1.0\%) & 1 (0.5\%) \\
    Gimli
      & 87
      & 2 (2.3\%) & 1 (1.1\%)
      & 2 (2.3\%) & 1 (1.1\%)
      & 2 (2.3\%) & 1 (1.1\%) \\
    \bottomrule
  \end{tabular}
  }
  \label{tab:integration-effort}
\end{table}

\cref{tab:integration-effort} reports the number of lines modified to
obtain secure implementations to measure the complexity involved in
instrumenting the different countermeasures.  The effort required to
implement protections using \dfence or \fence is comparable, and is
approximately one 1 line change per protection site, with fluctuations
(see e.g. on Keccakf1600), due to orthogonal aesthetics changes in the
source files. From our measurement, it is also evident that the
integration cost for SLH is significantly higher, primarily due to
maintaining a misspeculation flag throughout the execution and
updating it at every branching instruction and passing it to and from
functions.  This flag must also remain in a register throughout the
execution. In practice, this means that the compiler must treat the
flag as a special variable that cannot be temporarily spilled to the
stack.  This constraint also has the side effect of increasing
register pressure. This is less of an issue for \riscv due to its
large register file, but it can be more problematic on architectures
like the ARM Cortex-M, which have fewer available registers.

\subsection{Hardware Cost}\label{sec:hw-evaluation}
We also perform measurements to determine the cost of implementing \dfence in hardware using the EVAL\nobreakdash-HD workflow~\cite{demeulemeester25hardwarecost}.
Implementing the baseline core with the configuration of \cref{sec:performance-evaluation} uses a chip area of \SI{0.7856}{\milli\meter\squared}.
The unoptimized \dfence implementation measures \SI{0.7872}{\milli\meter\squared}~(+0.2\%), while the optimized dfence with speculation tracking is \SI{0.8053}{\milli\meter\squared} (+2.5\%).
The critical path remains unchanged for all variants at \SI{17.13}{\nano\second}.

\section{Comparison with Blade}
\label{sec:blade}
\newcommand{\tool}[1]{$\mathsf{#1}$}
\newcommand{\soa}{state-of-the-art\xspace}

Blade~\cite{blade} introduces a \cprotect primitive designed to
serve as a fine-grained speculation barrier, with semantics similar to
our \dfence, and two software implementations. The first is a
software-level instrumentation, which ensures that array accesses
remain in bounds, putting Spectre-STL out of scope.
The second one compiles it to
\lfence, resulting in a larger performance overhead~\cite{blade}.

Our work overcomes these limitations by providing two hardware
implementations of \dfence, capable of protecting against multiple
forms of speculation while incurring low instrumentation and performance
cost. Overall, compared to Blade's \cprotect, \dfence
brings several benefits in terms of security and control over the
placement of protections, that we discuss in the following.

\paragraph{Guarantees of the type systems}

Although the type systems of {Blade} and \dfence both ensure the
correct placement of protections, they provide different security
guarantees. The type system of \dfence only accepts programs that are
speculative constant-time. In contrast, the programs accepted by
{Blade}'s type system are speculative constant-time only if they
comply to a \emph{static} notion of constant-time. More precisely,
this notion is an \emph{under-approximation} of constant-time, defined
as a typability condition under a second type system which relies on
security annotations. Therefore, while Blade's authors claim to
\emph{``eliminate speculation-based leakage in unannotated cryptographic
code''}, this is not exactly the case. More precisely, using {Blade} on
unannotated programs that do not comply with the above-mentioned
static notion of constant-time does not guarantee their security at
run time. To illustrate this, consider the following example:

\begin{lstlisting}[caption={Example of program that is typable by {Blade} but not by \dfence.},
  morekeywords={protect},escapechar=~]
if ($\expression E$) { $\vy \asgn \ar\expression{[x]}$;~\label{line:llleeeaaakkk}~}
\end{lstlisting}

In this program, we assume that the guard \cc{E} always evaluates to \cc{false} when
\cc{x} contains secret data. Thus, the program is constant-time. Since
\cc{x} is a register variable, its value is not loaded from memory,
and therefore the program is accepted by {Blade}'s type
system. Consequently, Blade does not insert any \cprotect
instruction on this program.  However, due to direct branch
speculation, the body of the conditional may be transiently executed
with \cc{x} holding secret data that leaks\footnote{ This example does
  not invalidate the soundness of {Blade}'s type system, as the
  program in consideration does not comply with {Blade}'s static
  notion of constant-time. However, it shows that to safely use
  {Blade}, one should ideally first verify that the program complies
  with this notion of constant-time via a second type system.  }. With \dfence's type system, \cc{x} would have the type $\tyhigh$ in
normal execution, as it may store secret data. Therefore, this
vulnerable program would be directly rejected.

\paragraph{Blade's SLH} The main difference between {Blade}'s SLH
implementation and other forms of SLH---like the original LLVM
version and several
variants~\cite{ultimate-slh,typing-high,exorcising-spectres}---is that
{Blade}'s SLH does not keep track of the misspeculation flag in
a dedicated register. Therefore, it cannot selectively mask values
based on the speculative state of the processor. Instead, it masks
arrays accesses based on whether the index is in bounds or
not. Consider the following program:

\begin{lstlisting}[
  caption={{Blade}'s \cc{protect} to prevent a Spectre-v1 vulnerability.},
  morekeywords={protect}, label={lst:bladev1}]
if ($\expression{E}$) {
  $\vx \asgn{}$protect($\expression{\ar[i]}$);
  $\cmemread \vy {\artwo[\vx]}$;
}
\end{lstlisting}

Using {Blade}'s variant of SLH, the \cprotect instruction is replaced with a form of non-speculative index masking:

\begin{lstlisting}[escapechar=~,
  caption={Result of the compilation of \Cref{lst:bladev1} using {Blade}'s SLH.},
  label={lst:bladev1impl}]
if ($\expr$) {
  $\expression{mask} \asgn {\expression{in\_bounds(a,\ i)}}$;~\label{line:bladecheck}~
  $\expression{mask} \asgn  \expression{mask\ ?\ 0xFFFF\ :\ 0x0000}$;~\label{line:cmov}~
  $\vx \asgn \expression{*((a\  +\ i)\  \&\  mask)}$;~\label{line:blademask}~
  $\cmemread \vy  {\expression{b[x]}}$;~\label{line:bladeslhoob}~
}
\end{lstlisting}

Line~\ref{line:bladecheck} checks whether the index
\cc{i} is in bounds for \cc{a}. Then, the \emph{non-speculative}
conditional assignment at line~\ref{line:cmov} sets the variable
\cc{mask} to \cc{0x0000} when \cc{i} is out of bounds, and to
\cc{0xFFFF} otherwise. The load address is then masked at line
\ref{line:blademask} to prevent transient out-of-bounds
accesses.

\paragraph{First limitation: the need for array bounds} {Blade}'s
SLH is only effective when precise array bounds are available. For
instance, if loose bounds are used in \Cref{lst:bladev1impl} at
line~\ref{line:bladecheck}, {Blade}'s SLH does not prevent
the speculative out-of-bounds access at line \ref{line:blademask},
leading to potential leakage of confidential data at line
\ref{line:bladeslhoob}.
In contrast to {Blade}'s SLH, \dfence protects \emph{values}
rather than \emph{accesses}, therefore it does not require array
bounds information. As an example, \Cref{lst:bladev1} can be protected
with \dfence as follows:

\begin{lstlisting}[escapechar=~,
  caption={Program in \Cref{lst:bladev1} protected with \dfence. },
  morekeywords={dfence},
  label={lst:dfencevsblade}]
if ($\expression{E}$) {
  $\cmemread \vx  {\ar\expression{[i]}}$;~\label{line:ladedfenceload1}~
  $\text{\dfencex}$;~\label{line:ladedfence}~
  $\cmemread \vy {\artwo\expression{[x]}}$;~\label{line:ladedfenceload}~
}
\end{lstlisting}
Here, the \dfence instruction at line~\ref{line:ladedfence} prevents
the transient execution of the load at line~\ref{line:ladedfenceload},
thereby preventing the leakage of secret data originating from
out-of-bounds memory loads at line~\ref{line:ladedfenceload1}.

\paragraph{Second limitation: specificity to out-of-bounds accesses}
As {Blade}'s SLH performs index masking, it can only prevent
attacks that exploit out-of-bounds accesses.  To see this, assume
that the condition \cc{E} of \Cref{lst:bladev1impl} always evaluates
to false and that the array \cc{a} contains secret data. Under these
assumptions, the program in \Cref{lst:bladev1impl} trivially does not leak
secret data in normal execution---i.e., it is constant-time---but it
can leak secret data during transient execution. More precisely,
it is sufficient to trigger the transient execution of the
conditional's body with any value of \cc{i} within the bounds of
\cc{a} to cause the leakage of confidential information at
line~\ref{line:bladeslhoob}.

To see how \dfence can prevent this issue, assume the same scenario in
\Cref{lst:dfencevsblade}, i.e., \cc{E} always evaluates to false and
that \cc{a} contains secret data. Even if an attacker triggers the
transient execution of the conditional's body, the leaking instruction
at line~\ref{line:ladedfenceload} would not be executed because
\dfence would withhold the value of \cc{x}.

\paragraph{Protection against Spectre v1.1} Blade can mitigate the
Spectre v1.1 data variant~\cite{blade}, which leaks confidential data
that is stored during transient execution by forwarding it to an
aliasing load that would otherwise access only public data during
normal execution.  However, the variant of Spectre v1.1 illustrated in
\Cref{fig:v1.1}, corresponding to the original formulation of
v1.1~\cite{speculative-overflows}, is not part of Blade's threat model
because Blade's type system does not model indirect jumps or return
instructions, and consequently it imposes no restrictions on jump
targets. With \dfence, we can prevent both variants with a negligible
overhead (\Cref{sec:evaluation}). Using Blade, the geometric mean
overhead for preventing Spectre v1.1 is more than 15\%~\cite{blade}.

\section{Related Work}
\label{sec:relatedwork}
We focus the related work on software and hardware defenses that are suitable at least for Spectre-PHT or -STL.

\paragraph{Speculation barriers in hardware}
Commercial ISAs include speculation barrier instructions designed for various purposes~\cite{x86-manual,armv8,riscv-user}, primarily to address concurrency issues and ensure program correctness.
Intel and AMD recommend using fences---alongside other techniques such as retpolines---to mitigate speculative vulnerabilities~\cite{amd-speculation,intel-bcb,intel-lvi,intel-compilers}.
In contrast to \dfence, which discerns what to block, traditional fences block all forms of speculation, introducing
a performance overhead. Another important caveat with fence-based approaches is that the programmer is often responsible for determining where to insert fences, possibly leading to unnecessary slowdowns or security vulnerabilities.
Our type system (\cref{sec:ts}) is of independent interest from \dfence, as it can be used as an automatic verification tool for the smart insertion of all speculation barriers to effectively protect against Spectre-PHT and STL (as we have used it for our evaluation in \cref{sec:evaluation}).

Building on speculation barriers, multiple academic proposals emerged, introducing fence-like instructions with stronger security guarantees (including this work). Context-sensitive fences~\cite{context-sensitive-fence} are inserted dynamically by the CPU using decoder-level dynamic information flow tracking, reducing the performance overhead compared to \lfence{}s. In contrast to their work, our approach requires fewer hardware modifications.
Temporal fences (\texttt{fence.t})~\cite{fencet} are instructions that flush the microarchitectural state, effectively preventing secret leakage with only a 2\% performance overhead. However, this protection does not address end-to-end timing and same-address-space attacks.

Concurrent to our work, Herinomena et al.~\cite{fence.spec} designed
 and implemented a \riscv instruction similar to \dfence. Their
 proposed instruction, $\kwd{fence.spec}\ \mathtt{rd}, \mathtt{rs1}$,
 establishes a dependency between registers \texttt{rd} and
 \texttt{rs1}, and ``\emph{completes execution only in a
   non-speculative state, i.e. when the core is certain that it will
   commit}''.
To mitigate speculative execution attacks, the authors
 developed a compiler pass that can insert $\kwd{fence.spec}$
 instructions before and/or after every load and/or store.
Due to the coarse-grained nature of these insertion policies,
their performance evaluation reports overheads exceeding 100\%
 in the most conservative policies.  Based on these results, the
 authors conclude that ``\emph{there is no way to use them
   [$\kwd{fence.spec}$ instructions] correctly and efficiently, which
   significantly hinders their practical adoption}''. Our results
 challenge this conclusion by demonstrating that, by using a more
 fine-grained protection strategy, our \dfence implementation can
 prevent speculative attacks based on PHT, STL, and PSF speculation
 with negligible overhead.

\DD{ Intel patent: \url{https://pubchem.ncbi.nlm.nih.gov/patent/US-2023350687-A1}
}

\paragraph{Software-based defenses}

The state-of-the-art software-only mitigations for Spectre-PHT are index-masking and SLH, both recommended by CPU vendors~\cite{arm-spectrev1}.
Academic works have observed that prior SLH techniques were insecure, and hence, proposed extensions of SLH---backed-up by formal guarantees---that still offer better performance than \lfence{}s~\cite{typing-high,ultimate-slh,DBLP:conf/csfw/BaumannBDHH25}. In particular, Shivakumar et al.~\cite{typing-high} define a type
system to verify that protections based on an
optimized form of SLH are correctly applied at the source level.
Their type system only covers PHT speculation, in which
well-typed programs must explicitly track the
value of the misspeculation predicate in a dedicated register, with
the type system ensuring that this register is updated correctly after each
branching instruction.
In comparison, our type system does not impose this
requirement, as the \dfence instruction operates independently of
any software-level misspeculation predicate, and we also cover
SSB and PSF speculation and indirect jumps.
Olmos et al.~\cite{DBLP:journals/iacr/OlmosBBGL24} focus on ensuring that compilers preserve the speculative constant-time property, tackling a critical aspect of compiler interference with security properties.
Other works have also evaluated the performance impact of deployed mitigations on real-world operating systems~\cite{performance-cost}.
More recently, a provably secure software-only approach was proposed that mitigates Spectre-v1, -v2 and -v4 variants at a 2-7\% performance overhead on commercial hardware~\cite{protecting-rsb}.

\paragraph{Hardware-based defenses}
Several works have proposed novel processor models~\cite{spt,stt,prospect,speclfb,nda} to defend against transient execution attacks.
These works can significantly reduce the performance impact (sometimes even leading to {performance gains}~\cite{sequential-contract}) and often achieve stronger security guarantees than fence-insertion techniques~\cite{prospect}.
In particular, from the works using taint tracking, ProSpeCT~\cite{prospect} has been implemented in hardware.
In comparison to our proposal, ProSpeCT has a larger hardware resource overhead, recently estimated at 20.87\% on an older Proteus baseline~\cite{demeulemeester25hardwarecost}.
Additionally, our approach necessitates fewer software changes: in contrast to inserting \dfence instructions, the entire software stack must track secrets stored in secret memory areas with ProSpeCT.
\begin{diff}
  STT~\cite{stt} and SPT~\cite{spt} are other secure speculation mechanisms based on taint tracking that prevent the transient leakage of values that are considered \emph{unsafe}. Although these mechanisms do not impose software modifications, they have a much higher runtime overhead and more expensive hardware changes than \dfence.

  Moreover, while taint-tracking mechanisms implement the protection logic in hardware,  \dfence fully delegates it to software. This delegation also provides more flexibility: in our design, the software can use \dfence to implement different protection mechanisms (e.g., SCT or speculative sandboxing). This is typically not possible with hardware-level taint tracking.
\end{diff}

Other recent work has proposed SpecLFB~\cite{speclfb}, implemented in the BOOM \riscv processor with a hardware overhead of only +0.77\% LUTs and +0.31\% FFs.
SpecLFB is limited to cache side-channel attacks, whereas \dfence can also prevent other side channels such as operand-dependent timing for division or multiplication, or port contention.
OISA~\cite{oisa} is a design also based on BOOM that facilitates writing constant-time (data-oblivious) code, also enforced during speculation, by introducing new leakage-free instructions to the ISA.
While this design enforces strong security guarantees, it introduces significant changes to software (new ISA) and hardware, such as using oblivious RAM~\cite{oram}, and slows down execution by a factor of 3-40 depending on the configuration, making it impractical in many applications.

\section{Discussion}
\label{sec:discussion}
In this section, we discuss extensions of \dfence to other Spectre
variants, other possible improvements, and generalizations of this
work, such as the application of our type system to speculative
sandboxing and automatic insertion mechanisms.

\subsection{Extension to other Spectre Variants}\label{sec:other-spectre-variants}
\paragraph{Load address prediction}
Load address prediction (LAP) has recently been documented in Apple processors and successfully exploited to leak secrets~\cite{slap}. The processor can speculate on the address of a load and incorrectly forward secret data (including non-architecturally accessed data) to subsequent instructions.
This speculation mechanism is similar to PSF and SSB in the sense that a load instruction can transiently forward stale or secret data. Consequently, \dfence and our type system can address Spectre-LAP without any software modifications. On the hardware side, the only required modification is to extend our speculation tracking to include this new speculation source---which should be a simple extension of PSF and SSB tracking.

\paragraph{Control-flow prediction.}
\dfence requires complementary defenses to secure programs in the presence of \emph{arbitrary} control flow speculation, e.g., in the presence of BTB or RSB predictors. In these cases, even if all unsafe operands are protected by a \dfence, the attacker can abuse speculative execution to redirect a victim's control flow to an arbitrary program location, bypassing the \dfence instruction and leaking secrets.

A first solution to handle these forms of speculative execution is to replace indirect jumps and returns with conditional direct jumps. This solution has been shown to incur low overhead on cryptographic benchmarks~\cite{protecting-rsb}.
A second solution is to complement \dfence with existing hardware defenses for (speculative) control-flow integrity~\cite{DBLP:conf/sp/KoruyehSKSA20,IntelHWFeaturesSpeculation}. Intel's Control-flow Enforcement Technology (CET) can restrict the potential targets of control-flow speculation to a set of well-defined landing pads.
On a processor with such a defense, it would be possible to also handle control-flow speculation like Spectre-BTB or Spectre-RSB with \dfence.
In hardware, the speculation-tracking mechanism would need to capture these new speculation sources.
In software, the semantics and type system would need to capture these extra speculative paths and ensure that unsafe operands are protected along them.

\paragraph{Value-based PSF and SSB resolution}
In this paper, we assumed that PSF and SSB are purely based on address speculation and not on value speculation, which is consistent with the Proteus configuration we used.
In theory, PSF and SSB could resolve speculation based on load \emph{values}: only roll back the execution if the predicted load value does not match the actual value, regardless of the memory addresses. This creates a resolution-based timing channel~\cite{stt} dependent on whether the load value and forwarded store values are equal.
Such a value-based resolution mechanism would turn both loads and store values into unsafe operands, breaking the common assumption that load and store values do not leak, and requiring orthogonal defenses.
We are not aware of any currently deployed processor featuring such a value-based resolution mechanism for SSB and PSF speculation.

\paragraph{Other value prediction.}
Value prediction~\cite{DBLP:conf/micro/LipastiS96,gabbaySpeculativeExecutionBased,DBLP:conf/asplos/LipastiWS96} speculates on the value of instructions' operands. For example, Intel CPUs can speculate that the upper 64 bits of a 128-bit dividend are zero.
Another example is load value prediction, which was recently exploited by FLOP~\cite{flop} attacks.
Whether \dfence alone can address value-based speculation depends on which instruction is impacted.
These predictions introduce resolution-based channels~\cite{stt} dependent on whether the predicted value equals the actual value, making predicted operands unsafe. Thus, we need to adapt our leakage model to forbid passing secret values to unsafe operands. Adapting software to comply with the new leakage model might not be feasible---\eg{} if load/store values become unsafe---in which case orthogonal mitigations are needed. For divisions, however, it is possible to consider them as unsafe (many cryptographic implementations already do) and adapt our speculation model accordingly.

\subsection{Possible Improvements}\label{sec:improvements}
\paragraph{Type system}
Our type system could be refined to reduce the number of protected operands.
First, the type system could be equipped with a speculation window to avoid unnecessarily protecting registers outside this window.
Second, depending on the target CPU, we could model memory disambiguation speculation more precisely to identify cases where loads can only access public values speculatively, thereby reducing the number of registers to protect. For example, in the code $\fence; \cmemread \vx {\ar[\vy]}$, the write buffer is empty after the \fence, so if $\vy$ is a public in-bounds address, $\vx$ can only contain public data; yet our current type system conservatively labels $\vx$ as secret.

\paragraph{Hardware}
Any conservative implementation of \dfence \wrt{} its specification (\cref{sec:overview}) is secure, such as a serializing \fence or the unoptimized \dfence from \cref{sec:implementation}.
However, there is a performance trade-off associated with different implementations: more conservative approaches are likely cheaper in hardware, but may unnecessarily decrease instruction throughput.
On the other hand, a more optimized implementation of \dfencex could allow $\vx$ to be forwarded to \emph{safe} operands while blocking forwarding to \emph{unsafe} operands. This approach has the advantage of enabling further speculative out-of-order execution of instructions past the fence. However, it increases hardware complexity, as it must track dependencies between protected registers and the instructions using them. This requires propagating a protection taint and ensuring that unsafe instructions do not speculatively use tainted registers, akin to speculative taint-tracking mechanisms~\cite{stt}. 

\paragraph{End-to-end verification}
There is currently a gap between our Jasmin-level security contract and its hardware implementation. To close this gap and achieve end-to-end security, future work could focus on
(i)
proving that the Jasmin compiler preserves the security contract (SCT), building on recent theoretical foundations and
(ii)
verifying that a hardware implementation satisfies the contract.

\subsection{Generalization}\label{sec:generalization}
\paragraph{Application to speculative sandboxing}
Even though we focused our formalization on speculative constant-time, \dfence can also be used to reason about speculative sandboxing.
We detail here how to apply our type system to this model.
Speculative sandboxing considers architecturally accessed values (or values inside a sandbox) as public and transiently accessed values (or values outside a sandbox) as secret.
Therefore, applying our type system with the memory annotated as such trivially ensures speculative sandboxing.
However, we remark that applying our type system without any secret annotation (\ie{} with the entire memory marked as public) also guarantees speculative sandboxing.
Indeed, assuming the program is architecturally sandboxed, it can only access secrets during speculative execution, via a transient load.
Since our type system types all transient load values as secret, a program can only type-check if transient load values are protected before speculatively flowing to unsafe instructions, which guarantees speculative sandboxing.

\paragraph{Automatic insertion of \dfence} The type system presented
in \Cref{sec:ts} can aid developers in detecting security violations
and deciding where to \emph{manually} insert protections. Building on
this, it is possible to implement an \emph{automated} mechanism that
adopts heuristics to automatically insert \dfence instructions when
violations are detected. For instance, if
$\sjudg[] \Gamma \expr (\tylow,\tyhigh)$ violates the constraint
$\sjudg[] \Gamma \expr \typair\tylow$, such a heuristic can apply
\dfence protections to all variables within $\expr$ with type $(\tylow,\tyhigh)$
right before the violation site. We conducted an initial exploration
where we implemented this and similar heuristics in Jasmin. We
were able to automatically protect Curve25519, Keccak$f$-1600,
ChaCha20, Poly1305 and Gimli. We believe that these preliminary
experiments are promising, and investigating to what extent such an
approach could reduce---or even eliminate---developer effort is
interesting future work.

\paragraph{Other languages}
As we already observed in \Cref{sec:evallim}, while our paper focuses on the Jasmin language, \dfence is not specific to Jasmin and can generally replace SLH.
Moreover, although the implementation of our type system is specific to Jasmin, its core ideas can be adapted to other languages.
For safe languages, information about array bounds and possible jump targets, used by our type system, can be gathered at compile time.
For unsafe languages, adaptation is more difficult, but \dfence can be coupled with control-flow integrity mechanisms, like Intel CET, that restrict jump targets to a set of well-defined landing pads.

\section{Conclusion}

In this paper, we introduced \dfence, a novel instruction generalizing SLH to mitigate both Spectre-PHT and Spectre-STL. We demonstrated that \dfence requires minimal hardware support, simplifies software instrumentation, and provides strong security guarantees. Our formally proven type system ensures correct insertion of \dfence instructions, and our evaluation using the Proteus \riscv core shows that \dfence offers effective security with negligible performance and hardware impact.

\iffalse
In this paper, we propose a hardware defense and reason about how to leverage this defense in software at the source-level, using the Jasmin programming language. Yet, there is a gap between our Jasmin-level security contract and our hardware implementation.\tamara{extend a bit this conclusion}
To close this gap and get closer to end-to-end security, we need to:
\begin{enumerate*}
  \item show that the Jasmin compiler preserves SCT, and
  \item show that our hardware implementation satisfies our security contract.
\end{enumerate*}
We tested these two hypotheses with security tests, but proving them would require solving open research questions.
First, the theoretical foundations for preservation of SCT have been laid very recently~\cite{SNIP,DBLP:journals/iacr/OlmosBBGL24}, and proofs of preservation for practical compilers are an open research topic.
Jasmin~\cite{jasmin,the-last-mile} and CompCert~\cite{Compcert,Compcert1} have been proven to
preserve (non-speculative) constant-time~\cite{CompcertCT,DBLP:conf/ccs/BartheGLP21} and will be obvious targets for practical SCT preservation.
Second, verification of hardware implementations \wrt{} security contracts is also a recent research topic and current technique only scale to in-order cores or very small out-of-order cores~\cite{shadow-logic}.
Proteus is the next step in terms of complexity and our implementation is therefore an obvious target for such research.
\fi
 
\begin{acks}
This research was supported by the Agence Nationale de la Recherche (French National Research Agency) as part of the France 2030 programme -- ANR-22-PECY-0006, the Internal Funds KU Leuven, and the Cybersecurity Research Program Flanders.
\end{acks}

\section*{Ethical considerations}

This research involved only the authors and did not rely on user data or interactions. End-users, companies, and public institutions may benefit from this research. The main ethical principle guiding our work is beneficence, since the results aim to improve computer security without creating risks of harm or rights violations. 

\paragraph{Generative AI usage} In this work, generative AI has been
used as a support for writing and proofreading parts of the paper.

\bibliographystyle{ACM-Reference-Format}
\bibliography{references_clean.bib}

@string{acm = "ACM"}

@string{acmpub = "ACM"}

@string{NY = "New York, NY, USA"}

@string{ieee = "IEEE"}

@string{losalamitos = "Los Alamitos, CA, USA"}

@string{springer = "Sprin\-ger"}

@string{usenixpub = "USENIX Association"}

@string{usenix = "USENIX Association"}

@string{intel = "Intel Corporation"}

@string{intelpub = "Intel"}

@string{arm = "Arm Limited"}

@string{amd = "Advanced Micro Devices"}

@string{amdpub = "AMD"}

@string{riscv = "RISC-V International"}

@string{riscvpub = "RISC-V"}

@string{pacmpl = "Proceedings of the ACM on Programming Languages"}

@string{cacm = "Communications of the ACM"}

@string{iacr = "IACR"}

@inproceedings{davoli2026dfence,
  author    = {Davoli, Davide and Bognar, Marton and Daniel, Lesly-Ann and Gr{\'e}goire, Benjamin and Piessens, Frank and Rezk, Tamara},
  title     = {dfence: Fine-Grained Speculation Barriers for Efficient and Effective Hardware-Software Protection in the Spectre Era},
  booktitle = {CCS},
  year      = {2026},
  month     = {11},
  pages     = {35},
  publisher = {ACM},
  doi       = {10.1145/3830454.3832748},
  url       = {https://doi.org/10.1145/3830454.3832748}
}

@article{dfence-arxiv,
  author       = {Davoli, Davide and Bognar, Marton and Daniel, Lesly-Ann and Gr{\'e}goire, Benjamin and Piessens, Frank and Rezk, Tamara},
  title        = {dfence: Fine-Grained Speculation Barriers for Efficient and Effective Hardware-Software Protection in the Spectre Era (Extended Version)},
  journal      = {CoRR},
  year         = {2026},
  eprinttype   = {arXiv},
}

@techreport{amd-speculation,
  author       = amdpub,
  title        = {Software Techniques for Managing Speculation on {AMD} Processors},
  year         = {2023},
  number       = {Revision 5.09.23},
  type = {White Paper},
  url = "https://www.amd.com/content/dam/amd/en/documents/processor-tech-docs/programmer-references/software-techniques-for-managing-speculation.pdf"
}

@techreport{arm-spectrev1,
  author       = arm,
  title        = {Addressing Spectre Variant 1 (CVE-2017-5753) in Software},
  year         = {2018},
  type         = {White Paper},
  number = {Version 1.0},
  url = {https://developer.arm.com/documentation/102820/latest/}
}

@online{armDITDataIndependent,
  title = {{{DIT}}, {{Data Independent Timing}}},
  author = arm,
  url = {https://developer.arm.com/documentation/ddi0601/2020-12/AArch64-Registers/DIT--Data-Independent-Timing},
  year = {2020},
  urldate = {2025-01-22},
  organization = arm,
}

@manual{armv8,
  author       = arm,
  title        = {Programmer’s Guide for ARMv8-A},
  year         = {2015},
  month         = {3}
}

@misc{bernstein2005cache,
  title={Cache-timing attacks on AES},
  author={Bernstein, Daniel J.},
  url={https://cr.yp.to/antiforgery/cachetiming-20050414.pdf},
  urldate={2025-07-05},
  year={2005}
}

@article{blade,
  author       = {Marco Vas\-sena and
                  Craig Disselkoen and
                  Klaus von Gleissenthall and
                  Sunjay Cauligi and
                  Rami G{\"{o}}khan Kici and
                  Ranjit Jhala and
                  Dean M. Tullsen and
                  Deian Stefan},
  title        = {Automatically eliminating speculative leaks from cryptographic code
                  with blade},
  journal      = pacmpl,
  publisher    = acmpub,
  address      = NY,
  volume       = {5},
  number       = {{POPL}},
  year         = {2021},
  doi          = {10.1145/3434330},
  bibsource    = {dblp computer science bibliography, https://dblp.org}
}

@misc{bognar2023proteus,
  author       = {Bognar, Marton and Noorman, Job and Piessens, Frank},
  title        = {Proteus: An Extensible RISC-V Core for Hardware Extensions},
  howpublished = {Presented at "RISC-V Summit Europe"},
  year         = {2023},
  month        = jun,
  url          = {https://lirias.kuleuven.be/4088721},
  urldate      = {2025-07-05},
  address      = {Barcelona, Spain}
}

@inproceedings{Cassandra,
   title={Cassandra: Efficient Enforcement of Sequential Execution for Cryptographic Programs},
   DOI={10.1145/3695053.3731048},
   booktitle={ISCA},
   publisher=acm,
   author={Hajiabadi, Ali and Carlson, Trevor E.},
   year={2025},
   }

@inproceedings{ChaCha20,
  title={ChaCha, a variant of Salsa20},
  author={Bernstein, Daniel J.},
  booktitle={SASC},
  publisher= {ECRYPT},
  url={https://www.ecrypt.eu.org/stvl/sasc2008/SASCRecord.zip},
  urldate = {2011-05-27},
  volume={8},
  number={1},
  year={2008},
}

@article{chowdhuryy2021leaking,
  author       = {Md Hafizul Islam Chowdhuryy and
                  Fan Yao},
  title        = {Leaking Secrets Through Modern Branch Predictors in the Speculative
                  World},
  journal      = {IEEE Transactions on Computers},
  publisher    = ieee,
  address      = losalamitos,
  volume       = {71},
  number       = {9},
  year         = {2022},
  url          = {https://doi.org/10.1109/TC.2021.3122830},
  bibsource    = {dblp computer science bibliography, https://dblp.org}
}

@inproceedings{context-sensitive-fence,
  author       = {Mohammadkazem Taram and
                  Ashish Venkat and
                  Dean M. Tullsen},
  title        = {Context-Sensitive Fencing: Securing Speculative Execution via Microcode
                  Customization},
  booktitle    = {ASPLOS},
  publisher    = acm,
  year         = {2019},
  doi          = {10.1145/3297858.3304060}
}

@InProceedings{Curve25519,
author="Bernstein, Daniel J.",
title="Curve25519: New Diffie-Hellman Speed Records",
booktitle="PKC",
year="2006",
publisher=springer,
doi={10.1007/11745853_14}
}

@inproceedings{DBLP:conf/asplos/LipastiWS96,
  title = {Value Locality and Load Value Prediction},
  booktitle    = {ASPLOS},
  author = {Lipasti, Mikko H. and Wilkerson, Christopher B. and Shen, John Paul},
  year = {1996},
  doi          = {10.1145/237090.237173},
  publisher = acm,
  }

@inproceedings{DBLP:conf/ccs/BhattacharyyaSN19,
  author       = {Atri Bhattacharyya and
                  Alexandra Sandulescu and
                  Matthias Neugschwandtner and
                  Alessandro Sorniotti and
                  Babak Falsafi and
                  Mathias Payer and
                  Anil Kurmus},
  title        = {SMoTherSpectre: Exploiting Speculative Execution through Port Contention},
  booktitle    = {CCS},
  publisher    = acm,
  year         = {2019},
  doi          = {10.1145/3319535.3363194},
  bibsource    = {dblp computer science bibliography, https://dblp.org}
}

@inproceedings{DBLP:conf/ccs/FustosBY20,
  author       = {Jacob Fustos and
                  Michael Garrett Bechtel and
                  Heechul Yun},
  title        = {SpectreRewind: Leaking Secrets to Past Instructions},
  booktitle    = {ASHES@CCS},
  publisher    = acm,
  year         = {2020},
  doi          = {10.1145/3411504.3421216},
  bibsource    = {dblp computer science bibliography, https://dblp.org}
}

@inproceedings{DBLP:conf/ccs/LiuCCR22,
  author       = {Chen Liu and
                  Abhishek Chakraborty and
                  Nikhil Chawla and
                  Neer Roggel},
  title        = {Frequency Throttling Side-Channel Attack},
  booktitle    = {CCS},
  publisher    = acm,
  year         = {2022},
  doi          = {10.1145/3548606.3560682},
  bibsource    = {dblp computer science bibliography, https://dblp.org}
}

@inproceedings{DBLP:conf/ccs/MaisuradzeR18,
  author       = {Giorgi Maisuradze and
                  Christian Rossow},
  title        = {ret2spec: Speculative Execution Using Return Stack Buffers},
  booktitle    = {CCS},
  publisher    = acm,
  year         = {2018},
  doi          = {10.1145/3243734.3243761},
  bibsource    = {dblp computer science bibliography, https://dblp.org}
}

@InProceedings{DBLP:conf/crypto/Kocher96,
author="Kocher, Paul C.",
title="Timing Attacks on Implementations of Diffie-Hellman, RSA, DSS, and Other Systems",
  doi          = {10.1007/3-540-68697-5_9},
  booktitle    = "CRYPTO",
year="1996",
publisher=springer,
}

@inproceedings{DBLP:conf/esorics/0001SLMG19,
  author       = {Michael Schwarz and
                  Martin Schwarzl and
                  Moritz Lipp and
                  Jon Masters and
                  Daniel Gruss},
  title        = {NetSpectre: Read Arbitrary Memory over Network},
  booktitle    = {ESORICS},
  publisher    = springer,
  year         = {2019},
  doi          = {10.1007/978-3-030-29959-0_14},
  bibsource    = {dblp computer science bibliography, https://dblp.org}
}

@inproceedings{DBLP:conf/icisc/GrossschadlOPT09,
  author       = {Johann Gro{\ss}sch{\"{a}}dl and
                  Elisabeth Oswald and
                  Dan Page and
                  Michael Tunstall},
  title        = {Side-Channel Analysis of Cryptographic Software via Early-Terminating
                  Multiplications},
  booktitle    = {ICISC},
  publisher    = springer,
  year         = {2009},
  doi          = {10.1007/978-3-642-14423-3_13},
  bibsource    = {dblp computer science bibliography, https://dblp.org}
}

@inproceedings{DBLP:conf/latincrypt/BernsteinLS12,
  author       = {Daniel J. Bernstein and
                  Tanja Lange and
                  Peter Schwabe},
  title        = {The Security Impact of a New Cryptographic Library},
  year    = {2012},
  booktitle    = {LATINCRYPT},
  publisher    = springer,
  doi          = {10.1007/978-3-642-33481-8_9},
  bibsource    = {dblp computer science bibliography, https://dblp.org}
}

@inproceedings{DBLP:conf/mark2/Blakley79,
  author       = {G. R. Blakley},
  title        = {Safeguarding cryptographic keys},
  booktitle    = {MARK},
  publisher    = ieee,
  year         = {1979},
  doi          = {10.1109/MARK.1979.8817296},
  bibsource    = {dblp computer science bibliography, https://dblp.org}
}

@inproceedings{DBLP:conf/micro/LipastiS96,
  title = {Exceeding the Dataflow Limit via Value Prediction},
  booktitle    = {MICRO},
  author = {Lipasti, Mikko H. and Shen, John Paul},
  year = {1996},
  doi = {10.1109/MICRO.1996.566464},
  publisher= ieee,
  }

@inproceedings{DBLP:conf/pldi/CauligiDGTSRB20,
  author       = {Sunjay Cauligi and
                  Craig Disselkoen and
                  Klaus von Gleissenthall and
                  Dean M. Tullsen and
                  Deian Stefan and
                  Tamara Rezk and
                  Gilles Barthe},
  title        = {Constant-time foundations for the new spectre era},
  booktitle    = {PLDI},
  publisher    = acm,
  year         = {2020},
  doi          = {10.1145/3385412.3385970},
  bibsource    = {dblp computer science bibliography, https://dblp.org}
}

@inproceedings{DBLP:conf/sp/AldayaBHGT19,
  author       = {Alejandro Cabrera Aldaya and
                  Billy Bob Brumley and
                  Sohaib ul Hassan and
                  Cesar Pereida Garc{\'{\i}}a and
                  Nicola Tuveri},
  title        = {Port Contention for Fun and Profit},
  booktitle    = {IEEE S\&P},
  publisher    = ieee,
  year         = {2019},
  doi          = {10.1109/SP.2019.00066},
  bibsource    = {dblp computer science bibliography, https://dblp.org}
}

@inproceedings{DBLP:conf/sp/CauligiDMBS22,
  author       = {Sunjay Cauligi and
                  Craig Disselkoen and
                  Daniel Moghimi and
                  Gilles Barthe and
                  Deian Stefan},
  title        = {SoK: Practical Foundations for Software Spectre Defenses},
  booktitle    = {IEEE S\&P},
  publisher    = ieee,
  year         = {2022},
  doi          = {10.1109/SP46214.2022.9833707},
  bibsource    = {dblp computer science bibliography, https://dblp.org}
}

@inproceedings{DBLP:conf/sp/GuarnieriKRV21,
  author       = {Marco Guarnieri and
                  Boris K{\"{o}}pf and
                  Jan Reineke and
                  Pepe Vila},
  title        = {Hardware-Software Contracts for Secure Speculation},
  booktitle    = {IEEE S\&P},
  publisher    = ieee,
  year         = {2021},
  doi          = {10.1109/SP40001.2021.00036},
  bibsource    = {dblp computer science bibliography, https://dblp.org}
}

@inproceedings{DBLP:conf/sp/KoruyehSKSA20,
  author       = {Esmaeil Mohammadian Koruyeh and
                  Shirin Haji Amin Shirazi and
                  Khaled N. Khasawneh and
                  Chengyu Song and
                  Nael B. Abu{-}Ghazaleh},
  title        = {SpecCFI: Mitigating Spectre Attacks using {CFI} Informed Speculation},
  booktitle    = {IEEE S\&P},
  publisher    = ieee,
  year         = {2020},
  doi          = {10.1109/SP40000.2020.00033},
  bibsource    = {dblp computer science bibliography, https://dblp.org}
}

@inproceedings{DBLP:conf/sp/OleksenkoGKS23,
  author       = {Oleksii Oleksenko and
                  Marco Guarnieri and
                  Boris K{\"{o}}pf and
                  Mark Silberstein},
  title        = {Hide and Seek with Spectres: Efficient discovery of speculative information
                  leaks with random testing},
  booktitle    = {IEEE S\&P},
  publisher    = ieee,
  year         = {2023},
  doi          = {10.1109/SP46215.2023.10179391},
  bibsource    = {dblp computer science bibliography, https://dblp.org}
}

@inproceedings{DBLP:conf/sp/VicarteFPG0FK22,
  author       = {Jose Rodrigo Sanchez Vicarte and
                  Michael Flanders and
                  Riccardo Paccagnella and
                  Grant Garrett{-}Grossman and
                  Adam Morrison and
                  Christopher W. Fletcher and
                  David Kohlbrenner},
  title        = {Augury: Using Data Memory-Dependent Prefetchers to Leak Data at Rest},
  booktitle    = {IEEE S\&P},
  publisher    = ieee,
  year         = {2022},
  doi          = {10.1109/SP46214.2022.9833570},
  bibsource    = {dblp computer science bibliography, https://dblp.org}
}

@inproceedings{DBLP:conf/uss/AlmeidaBBDE16,
  title = {Verifying Constant-Time Implementations},
  author = {Almeida, José Bacelar and Barbosa, Manuel and Barthe, Gilles and Dupressoir, François and Emmi, Michael},
  year = {2016},
  booktitle    = {USENIX Security},
  url          = {https://www.usenix.org/conference/usenixsecurity16/technical-sessions/presentation/almeida},
  urldate      = {2025-07-04},
  publisher    = usenix,
  }

@inproceedings{DBLP:conf/uss/ChenWSFKPG24,
  author       = {Boru Chen and
                  Yingchen Wang and
                  Pradyumna Shome and
                  Christopher W. Fletcher and
                  David Kohlbrenner and
                  Riccardo Paccagnella and
                  Daniel Genkin},
  title        = {GoFetch: Breaking Con\-stant-Time Cryptographic Implementations Using
                  Data Memory-Dependent Prefetchers},
  url          = {https://www.usenix.org/conference/usenixsecurity24/presentation/chen-boru},
  urldate      = {2025-07-05},
  booktitle    = {USENIX Security},
  publisher    = usenix,
  year         = {2024}
}

@inproceedings{DBLP:conf/uss/RagabBBG21,
  title = {Rage against the Machine Clear: A Systematic Analysis of Machine Clears and Their Implications for Transient Execution Attacks},
  booktitle = {USENIX Security},
  author = {Ragab, Hany and Barberis, Enrico and Bos, Herbert and Giuffrida, Cristiano},
  year = {2021},
  url  = {https://www.usenix.org/conference/usenixsecurity21/presentation/ragab},
  urldate  = {2025-07-05},
  publisher = usenixpub,
  }

@inproceedings{DBLP:conf/uss/WangPHSFK22,
  title = {Hertzbleed: {{Turning}} Power Side-Chan\-nel Attacks into Remote Timing Attacks on X86},
  author = {Wang, Yingchen and Paccagnella, Riccardo and He, Elizabeth Tang and Shacham, Hovav and Fletcher, Christopher W. and Kohlbrenner, David},
  year = {2022},
  url = {https://www.usenix.org/conference/usenixsecurity22/presentation/wang-yingchen},
  urldate   ={2025-07-05},
  booktitle    = {USENIX Security},
  publisher    = usenix,
  }

@inproceedings{DBLP:conf/woot/KoruyehKSA18,
  ids = {koruyehSpectreReturnsSpeculation2018},
  title = {Spectre Returns! {{Speculation}} Attacks Using the Return Stack Buffer},
  booktitle    = {USENIX WOOT},
  author = {Koruyeh, Esmaeil Mohammadian and Khasawneh, Khaled N. and Song, Chengyu and Abu-Ghazaleh, Nael B.},
  url   = {https://www.usenix.org/conference/woot18/presentation/koruyeh},
  urldate  = {2025-07-05},
  year = {2018},
  publisher = usenix,
  }

@article{DBLP:journals/cacm/Shamir79,
  title = {How to Share a Secret},
    doi          = {10.1145/359168.359176},
  author = {Shamir, Adi},
  year = {1979},
  journal = cacm,
  publisher = acm,
  address = NY,
  volume = {22},
  number = {11}
}

@misc{DBLP:journals/corr/abs-2309-03376,
      title={This is How You Lose the Transient Execution War}, 
      author={Allison Randal},
      year={2023},
      eprint={2309.03376},
      archivePrefix={arXiv},
      primaryClass={cs.CR},
}

@article{DBLP:journals/iacr/OlmosBBGL24,
  author       = {Arranz Olmos, Santiago and
                  Gilles Barthe and
                  Lionel Blatter and
                  Benjamin Gr{\'{e}}goire and
                  Vincent Laporte},
  title        = {Preservation of Speculative Constant-Time by Compilation},
  journal      = pacmpl,
  publisher    = acmpub,
  address      = NY,
  volume       = {9},
  number       = {{POPL}},
  year         = {2025},
  doi          = {10.1145/3704880},
  bibsource    = {dblp computer science bibliography, https://dblp.org}
}

@inproceedings{demeulemeester25hardwarecost,
  title     = {Hardware Cost Evaluation in Systems Security},
  author    = {De Meulemeester, Jesse and Norga, Quinten and Piessens, Frank and Verbauwhede, Ingrid and Bognar, Marton},
  year      = 2025,
  booktitle = {ACM REP},
  doi       = {10.1145/3736731.3746155},
}

@article{Dilithium,
  author       = {L{\'{e}}o Ducas and
                  Eike Kiltz and
                  Tancr{\`{e}}de Lepoint and
                  Vadim Lyubashevsky and
                  Peter Schwabe and
                  Gregor Seiler and
                  Damien Stehl{\'{e}}},
  title        = {CRYSTALS-Dilithium: {A} Lattice-Based Digital Signature Scheme},
  journal      = {IACR Transactions on Cryptographic Hardware and Embedded Systems},
  publisher    = {Ruhr-Universität Bochum},
  address      = {Bochum, Germany},  
  volume       = {2018},
  number       = {1},
  year         = {2018},
  doi          = {10.13154/TCHES.V2018.I1.238-268},
  bibsource    = {dblp computer science bibliography, https://dblp.org}
}

@misc{doweck2006intel,
  author={Doweck, Jack},
  title={Intel Smart Memory Access and the Energy-Efficient Performance of the Intel Core Microarchitecture},
  urldate     ={2023-10-06},
  year={2006},
  url = {https://web.archive.org/web/20231026102745/https://www.all-electronics.de/wp-content/uploads/migrated/document/196371/413ei0507-intel-sma.pdf}
}

@inproceedings{Eclipse,
author = {Christou, Neophytos and Gaidis, Alexander J. and Atlidakis, Vaggelis and Kemerlis, Vasileios P.},
title = {Eclipse: Preventing Speculative Memory-error Abuse with Artificial Data Dependencies},
year = {2024},
publisher = acm,
doi = {10.1145/3658644.3690201},
booktitle = {CCS}
}

@inproceedings{exorcising-spectres,
  author       = {Marco Patrignani and
                  Marco Guarnieri},
  title        = {Exorcising Spectres with Secure Compilers},
  booktitle    = {CCS},
  publisher    = acm,
  year         = {2021},
  doi          = {10.1145/3460120.3484534},
  bibsource    = {dblp computer science bibliography, https://dblp.org}
}

@online{FastStoreForwarding,
  title = {Fast {{Store Forwarding Predictor}}},
  author = intelpub,
  url = {https://www.intel.com/content/www/us/en/developer/articles/technical/software-security-guidance/technical-documentation/fast-store-forwarding-predictor.html},
  year = {2022},
  urldate = {2025-01-21},
  langid = {english},
  organization = {Intel}
}

@inproceedings{fence.spec,
  TITLE = {{Exploring speculation barriers for RISC-V selective speculation}},
  AUTHOR = {Andrianatrehina, Herinomena and Lashermes, Ronan and Paturel, Joseph and Rokicki, Simon and Rubiano, Thomas},
  URL = {https://hal.science/hal-05061555},
  urldate = {2025-07-04},
  BOOKTITLE = {{ARES}},
  publisher  = acm,
  YEAR = {2025},
  MONTH = Aug,
  HAL_ID = {hal-05061555},
  HAL_VERSION = {v1},
}

@inproceedings{fencet,
  author       = {Nils Wistoff and
                  Moritz Schneider and
                  Frank K. G{\"{u}}rkaynak and
                  Luca Benini and
                  Gernot Heiser},
  title        = {Microarchitectural Timing Channels and their Prevention on an Open-Source
                  64-bit {RISC-V} Core},
  booktitle    = {DATE},
  publisher    = ieee,
  year         = {2021},
  doi          = {10.23919/DATE51398.2021.9474214}
}

@inproceedings{flop,
  title = {{{FLOP}}: {{Breaking}} the Apple {{M3 CPU}} via False Load Output Predictions},
  url          = {https://www.usenix.org/conference/usenixsecurity24/presentation/chen-boru},
  author = {Kim, Jason and Chuang, Jalen and Genkin, Daniel and Yarom, Yuval},
  urldate      = {2025-07-05},
  booktitle    = {USENIX Security},
  publisher    = usenix,
  year         = {2025}
}

@techreport{gabbaySpeculativeExecutionBased,
  title = {Speculative {{Execution}} Based on {{Value Prediction}}},
  author = {Gabbay, Freddy},
  year   ={1996},
  month   ={11},
  type = {Technical Report},
  number ={1080},
  institution = {Technion --- Israel Institute of Technology, Electrical Engineering Department}
}

@InProceedings{Gimli,
author="Bernstein, Daniel J.
and K{\"o}lbl, Stefan
and Lucks, Stefan
and Massolino, Pedro Maat Costa
and Mendel, Florian
and Nawaz, Kashif
and Schneider, Tobias
and Schwabe, Peter
and Standaert, Fran{\c{c}}ois-Xavier
and Todo, Yosuke
and Viguier, Beno{\^i}t",
title="Gimli : A Cross-Platform Permutation",
booktitle="CHES",
year="2017",
publisher=springer,
  doi          = {10.1007/978-3-319-66787-4_15},
  }

@online{intel-bcb,
  author       = intelpub,
  title        = {Bounds Check Bypass / CVE-2017-5753 / INTEL-SA-00088},
  year         = {2018},
  urldate = {2025-07-04},
  url = {https://www.intel.com/content/www/us/en/developer/articles/technical/software-security-guidance/advisory-guidance/bounds-check-bypass.html}
}

@online{intel-compilers,
  author       = intelpub,
  title        = {Using Intel® Compilers to Mitigate Speculative Execution Side-Channel Issues},
  year         = {2018},
  urldate = {2025-07-04},
  url = "https://www.intel.com/content/www/us/en/developer/articles/troubleshooting/using-intel-compilers-to-mitigate-speculative-execution-side-channel-issues.html"
}

@online{intel-lvi,
  author       = intelpub,
  title        = {An Optimized Mitigation Approach for Load Value Injection},
  year         = {2020},
  urldate = {2025-07-04},
  url = {https://www.intel.com/content/www/us/en/developer/articles/technical/software-security-guidance/best-practices/optimized-mitigation-approach-load-value-injection.html}
}

@online{intelDataOperandIndependent,
  title = {Data {{Operand Independent Timing ISA Guidance}}},
  author = intelpub,
  url = {https://www.intel.com/content/www/us/en/developer/articles/technical/software-security-guidance/best-practices/data-operand-independent-timing-isa-guidance.html},
  urldate = {2025-01-20},
  year = {2025},
  langid = {english},
  organization = intel,
}

@online{IntelHWFeaturesSpeculation,
  title = {Hardware {{Features}} and {{Behaviors Related}} to {{Speculative Execution}}},
  url = {https://www.intel.com/content/www/us/en/developer/articles/technical/software-security-guidance/technical-documentation/hardware-behavior-related-to-speculative-execution.html},
  year = 2024,
  urldate = {2024-12-29},
  organization = intelpub
}

@inproceedings{jasmin,
author = {Almeida, Jos\'{e} Bacelar and Barbosa, Manuel and Barthe, Gilles and Blot, Arthur and Gr\'{e}goire, Benjamin and Laporte, Vincent and Oliveira, Tia\-go and Pacheco, Hugo and Schmidt, Benedikt and Strub, Pierre-Yves},
title = {Jasmin: High-Assurance and High-Speed Cryptography},
year = {2017},
publisher = acm,
doi = {10.1145/3133956.3134078},
booktitle = {CCS}
}

@INPROCEEDINGS{Kyber,
  author={Bos, Joppe and Ducas, Leo and Kiltz, Eike and Lepoint, T and Lyubashevsky, Vadim and Schanck, John M. and Schwabe, Peter and Seiler, Gregor and Stehle, Damien},
  booktitle={IEEE EuroS\&P},
  title={CRYSTALS - Kyber: A CCA-Secure Module-Lattice-Based KEM},
  publisher = ieee,
  year={2018},
  doi={10.1109/EuroSP.2018.00032}
  }

@software{libjade,
  title = {libjade},
  url = {https://github.com/formosa-crypto/libjade},
  year={2025},
  author={Formosa Crypto},
  urldate = {2025-04-11}
}

@software{mldsaJasmin,
  title = {formosa-mlkem},
  url = {https://github.com/formosa-crypto/formosa-mldsa/tree/armv7m-riscv32},
  year={2025},
  author={Formosa Crypto},
  urldate = {2025-04-11}
}

@software{mlkemJasmin,
  title = {formosa-mldsa},
  url = {https://github.com/formosa-crypto/formosa-mlkem},
  year={2025},
  author={Formosa Crypto},
  urldate = {2025-04-11}
}

@inproceedings{nda,
author = {Weisse, Ofir and Neal, Ian and Loughlin, Kevin and Wenisch, Thomas F. and Kasikci, Baris},
title = {NDA: Preventing Speculative Execution Attacks at Their Source},
year = {2019},
publisher = acm,
doi = {10.1145/3352460.3358306},
booktitle = {MICRO},
}

@misc{Okapi,
      title={Okapi: Efficiently Safeguarding Speculative Data Accesses in Sandboxed Environments}, 
      author={Philipp Schmitz and Tobias Jauch and Alex Wezel and Mohammad R. Fadiheh and Thore Tiemann and Jonah Heller and Thomas Eisenbarth and Dominik Stoffel and Wolfgang Kunz},
      year={2024},
      eprint={2312.08156},
      archivePrefix={arXiv},
      primaryClass={cs.CR},
}

@inproceedings{performance-cost,
  author       = {Lucy Bowen and
                  Chris Lupo},
  title        = {The Performance Cost of Software-based Security Mitigations},
  booktitle    = {ICPE},
  publisher    = acm,
  year         = {2020},
  doi          = {10.1145/3358960.3379139}
}

@inproceedings{Poly1305,
  author       = {Daniel J. Bernstein},
  title        = {The Poly1305-AES Message-Authentication Code},
  booktitle    = {FSE},
  publisher    = springer,
  year         = {2005},
  doi          = {10.1007/11502760_3},
  bibsource    = {dblp computer science bibliography, https://dblp.org}
}

@inproceedings{prospect,
  author       = {Lesly{-}Ann Daniel and
                  Marton Bognar and
                  Job Noorman and
                  S{\'{e}}bastien Bardin and
                  Tamara Rezk and
                  Frank Piessens},
  title        = {ProSpeCT: Provably Secure Speculation for the Constant-Time Policy},
  booktitle    = {USENIX Security},
  publisher    = usenix,
  year         = {2023},
  url          = {https://www.usenix.org/conference/usenixsecurity23/presentation/daniel},
  urldate      = {2025-07-05}
}

@inproceedings{protecting-rsb,
author = {Arranz Olmos, Santiago and Barthe, Gilles and Chuengsatiansup, Chitchanok and Gr\'{e}goire, Benjamin and Laporte, Vincent and Oliveira, Tiago and Schwabe, Peter and Yarom, Yuval and Zhang, Zhi\-yuan},
title = {Protecting Cryptographic Code Against Spectre-RSB: (and, in Fact, All Known Spectre Variants)},
year = {2025},
publisher = acm,
doi = {10.1145/3676641.3716015},
volume = {2},
booktitle = {ASPLOS},
}

@manual{riscv-user,
  author       = riscv,
  title        = {The {RISC-V} Instruction Set Manual, Volume I: User-Level {ISA} version 20240411},
  year         = {2024},
  month         = {4},
  organization = riscvpub
}

@INPROCEEDINGS{SafeSpec,
  author={Khasawneh, Khaled N. and Koruyeh, Esmaeil Mohammadian and Song, Chengyu and Evtyushkin, Dmitry and Ponomarev, Dmitry and Abu-Ghazaleh, Nael},
  booktitle={DAC},
  title={SafeSpec: Banishing the Spectre of a Meltdown with Leakage-Free Speculation}, 
  year={2019},
  doi          = {10.1145/3316781.3317903},
  publisher=acm,
  }

@online{SecurityAnalysisAMD2021,
  title = {Security {{Analysis}} of {{AMD Predictive Store Forwarding}}},
  author = amdpub,
  year = {2021},
  url = {https://www.amd.com/system/files/documents/security-analysis-predictive-store-forwarding.pdf},
  urldate = {2022-02-08}
}

@techreport{SHA-3,
  author = {Guido Bertoni and Joan Daemens and Micha\"el Peeters and Gilles Van Assche},
  title = {The Keccak Reference},
  number       = {Version 3.0},
  year = {2011},
  type = {Reference Specification},
  url = {https://keccak.team/files/Keccak-reference-3.0.pdf},
  language = {en},
}

@inproceedings{slap,
  author       = {Jason Kim and
                  Daniel Genkin and
                  Yuval Yarom},
  title        = {{SLAP:} Data Speculation Attacks via Load Address Prediction on Apple
                  Silicon},
  booktitle    = {IEEE S\&P},
  publisher    = ieee,
  year         = {2025},
  doi          = {10.1109/SP61157.2025.00098},
  bibsource    = {dblp computer science bibliography, https://dblp.org}
}

@inproceedings{speclfb,
  author       = {Xiaoyu Cheng and
                  Fei Tong and
                  Hongyu Wang and
                  Zhe Zhou and
                  Fang Jiang and
                  Yuxing Mao},
  title        = {SpecLFB: Eliminating Cache Side Channels in Speculative Executions},
  booktitle    = {USENIX Security},
  publisher    = usenix,
  year         = {2024},
  url          = {https://www.usenix.org/conference/usenixsecurity24/presentation/cheng-xiaoyu},
  urldate      = {2025-07-05},
  bibsource    = {dblp computer science bibliography, https://dblp.org}
}

@inproceedings{spectre,
  author       = {Paul Kocher and
                  Jann Horn and
                  Anders Fogh and
                  Daniel Genkin and
                  Daniel Gruss and
                  Werner Haas and
                  Mike Hamburg and
                  Moritz Lipp and
                  Stefan Mangard and
                  Thomas Prescher and
                  Michael Schwarz and
                  Yuval Yarom},
  title        = {Spectre Attacks: Exploiting Speculative Execution},
  booktitle    = {IEEE S\&P},
  publisher    = ieee,
  year         = {2019},
  doi          = {10.1109/SP.2019.00002}
}

@online{spectrev4,
  title = {Speculative Execution, Variant 4: Speculative Store Bypass},
  author = {Horn, Jann},
  year = {2018},
  url = {https://bugs.chromium.org/p/project-zero/issues/detail?id=1528},
  urldate = {2020-10-12}
}

@misc{speculative-overflows,
      title={Speculative Buffer Overflows: Attacks and Defenses}, 
      author={Vladimir Kiriansky and Carl Waldspurger},
      year={2018},
      eprint={1807.03757},
      archivePrefix={arXiv},
      primaryClass={cs.CR},
}

@online{SpeculativeLoadHardening,
  title={Speculative Load Hardening},
  author = {Carruth, Chandler},
  year = 2018,
  url = {https://llvm.org/docs/SpeculativeLoadHardening.html},
  urldate = {2024-06-07}
}

@online{SpeculativeStoreBypass,
  title = {Speculative {{Store Bypass}} / {{CVE-2018-3639}} / {{INTEL-SA-00115}}},
  url = {https://www.intel.com/content/www/us/en/developer/articles/technical/software-security-guidance/advisory-guidance/speculative-store-bypass.html},
  year = {2018},
  urldate = {2025-01-21},
  author = intelpub
}

@inproceedings{spt,
author = {Choudhary, Rutvik and Yu, Jiyong and Fletcher, Christopher and Morrison, Adam},
title = {Speculative Privacy Tracking (SPT): Leaking Information From Speculative Execution Without Compromising Privacy},
booktitle = {MICRO},
year = {2021},
publisher = acm,
doi = {10.1145/3466752.3480068},
}

@inproceedings{stt,
author = {Yu, Jiyong and Yan, Mengjia and Khyzha, Artem and Morrison, Adam and Torrellas, Josep and Fletcher, Chris\-topher W.},
title = {Speculative Taint Tracking (STT): A Comprehensive Protection for Speculatively Accessed Data},
year = {2019},
publisher = acm,
doi = {10.1145/3352460.3358274},
booktitle = {MICRO},
}

@inproceedings{typing-high,
  author       = {Ammanaghatta Shivakumar, Basavesh and
                  Gilles Barthe and
                  Benjamin Gr{\'{e}}goire and
                  Vincent Laporte and
                  Tiago Oliveira and
                  Swarn Priya and
                  Peter Schwabe and
                  Lucas Tabary{-}Maujean},
  title        = {Typing High-Speed Cryptography against Spectre v1},
  booktitle    = {IEEE S\&P},
  publisher    = ieee,
  year         = {2023},
  doi          = {10.1109/SP46215.2023.10179418}
}

@inproceedings{ultimate-slh,
  author       = {Zhiyuan Zhang and
                  Gilles Barthe and
                  Chitchanok Chuengsatiansup and
                  Peter Schwabe and
                  Yuval Yarom},
  title        = {Ultimate {SLH:} Taking Speculative Load Hardening to the Next Level},
  booktitle    = {USENIX Security},
  publisher    = usenix,
  year         = {2023},
  url          = {https://www.usenix.org/conference/usenixsecurity23/presentation/zhang-zhiyuan-slh},
  urldate      = {2025-07-05}
}

@inproceedings{winderix24ami,
  title     = {Architectural Mimicry: Innovative Instructions to Efficiently Address
               Control-Flow Leakage in Data-Oblivious Programs},
  author    = {Hans Winderix and
               Marton Bognar and
               Job Noorman and
               Lesly{-}Ann Daniel and
               Frank Pies\-sens},
  year      = {2024},
  booktitle = {IEEE S\&P},
  doi       = {10.1109/SP54263.2024.00047},
  publisher = ieee,
  }

@inproceedings{winderix24libra,
  title     = {Libra: Architectural Support For Principled, Secure And Efficient
               Balanced Execution On High-End Processors},
  author    = {Hans Winderix and
               Marton Bognar and
               Lesly{-}Ann Daniel and
               Frank Piessens},
  year      = {2024},
  publisher = acm,
  booktitle = {CCS},
  doi       = {10.1145/3658644.3690319}
}

@manual{x86-manual,
  title={Intel \textregistered 64 and IA-32 Architectures Software Developer's Manual},
  year={2025},
  month={6},
  author=intelpub,
}

@inproceedings{DBLP:conf/csfw/BaumannBDHH25,
  author       = {Jonathan Baumann and
                  Roberto Blanco and
                  L{\'{e}}on Ducruet and
                  Sebastian Harwig and
                  Catalin Hritcu},
  title        = {{FSLH}: Flexible Mechanized Speculative Load Hardening},
  booktitle    = {IEEE CSF},
  publisher    = ieee,
  year         = {2025},
  pages        = {569--584},
  doi          = {10.1109/CSF64896.2025.00023},
  bibsource    = {dblp computer science bibliography, https://dblp.org}
}

@inproceedings{DBLP:conf/uss/CanellaB0LBOPEG19,
  author       = {Claudio Canella and
                  Jo Van Bulck and
                  Michael Schwarz and
                  Moritz Lipp and
                  Benjamin von Berg and
                  Philipp Ortner and
                  Frank Piessens and
                  Dmitry Evtyushkin and
                  Daniel Gruss},
  title        = {A Systematic Evaluation of Transient Execution Attacks and Defenses},
  booktitle    = {USENIX Security},
  publisher    = usenix,
  year         = {2019},
  pages        = {249--266},
  url          = {https://www.usenix.org/conference/usenixsecurity19/presentation/canella},
  urldate      = {2025-07-05}
}

@inproceedings{oisa,
  author       = {Jiyong Yu and
                  Lucas Hsiung and
                  Mohamad El Hajj and
                  Christopher W. Fletcher},
  title        = {Data Oblivious {ISA} Extensions for Side Channel-Resistant and High Performance Computing},
  booktitle    = {NDSS},
  year         = {2019},
  url          = {https://www.ndss-symposium.org/ndss-paper/data-oblivious-isa-extensions-for-side-channel-resistant-and-high-performance-computing/}
}

@inproceedings{oram,
  author       = {Emil Stefanov and
                  Marten van Dijk and
                  Elaine Shi and
                  Christopher W. Fletcher and
                  Ling Ren and
                  Xiangyao Yu and
                  Srinivas Devadas},
  title        = {Path {ORAM}: An Extremely Simple Oblivious {RAM} Protocol},
  booktitle    = {CCS},
  publisher    = acm,
  year         = {2013},
  pages        = {299--310},
  doi          = {10.1145/2508859.2516660}
}

@article{rct,
  author       = {Matthew Kolosick and
                  Basavesh Ammanaghatta Shivakumar and
                  Sunjay Cauligi and
                  Marco Patrignani and
                  Marco Vassena and
                  Ranjit Jhala and
                  Deian Stefan},
  title        = {Robust Constant-Time Cryptography},
  journal      = pacmpl,
  publisher    = acmpub,
  address      = NY,
  volume       = {9},
  number       = {{PLDI}},
  pages        = {1491--1515},
  year         = {2025},
  doi          = {10.1145/3729310},
  bibsource    = {dblp computer science bibliography, https://dblp.org}
}

@inproceedings{renSeeDeadMu2021,
  author       = {Xida Ren and
                  Logan Moody and
                  Mohammadkazem Taram and
                  Matthew Jordan and
                  Dean M. Tullsen and
                  Ashish Venkat},
  title        = {I See Dead $\mu$ops: Leaking Secrets via Intel/AMD Micro-Op Caches},
  booktitle    = {ISCA},
  publisher    = ieee,
  year         = {2021},
  doi          = {10.1109/ISCA52012.2021.00036}
}

@misc{sequential-contract,
      title={Providing High-Performance Execution with a Sequential Contract for Cryptographic Programs},
      author={Ali Hajiabadi and Trevor E. Carlson},
      year={2024},
      eprint={2406.04290},
      archivePrefix={arXiv},
      primaryClass={cs.CR},
}

\appendix

\section{Open Science}

We believe that our prototype is an important contribution, which we release as open software.
At \zenodourl, we archive the following:

\begin{itemize}
\item The prototype implementation of \dfence in Proteus;
\item The Jasmin implementation of the benchmark suite;
\item A version of the Jasmin compiler with support for \dfence instructions;
\item A Docker environment for replicating the evaluation.
\end{itemize}

We will also incorporate most components in the upstream Proteus ecosystem at \proteusurl.

\iftechreport
\section{Speculative Semantics}\label{sec:source-language}

In this section we define a small-step operational semantics for our
programming language.

\renewcommand{\state}{s}

\paragraph{States} The states of our programs are triples
$(\reg, \mbuf, \mem)$, ranged over by the metavariable $\state$.
Here, $\reg: \Regs \to \Vals_\bot$ is a \emph{register map},
$\mem: \Arr \times \Nat \to \Vals$ is a \emph{memory}, and $\mbuf$ is
a \emph{store buffer}. Register maps associate each register variable
with either a value $\val{} \in \Vals$ or $\bot$, which indicates that
the current value of the variable is not available.  Memories map pairs
$(\ar, \nat)$ of an array name $\ar\in \Arr$ and an offset
$\nat \in \Nat$ to a value.  Finally, store buffers are modeled as
sequences of delayed store operations, defined as follows:
\begin{align*}
  \mbuf &\bnfdef \varepsilon \bnfmid [(\ar, \nat) \mapsto \val{}] :\mbuf.
\end{align*}
Here, $\bitem {(\ar, \nat)} {\val{}}$ denotes a pending write of value
$\val{}$ at offset $\nat$ in array $\ar$. In the following, we use
$\supdate{\reg}{x}{\val{}}$ to denote the register file that is
pointwise identical to $\reg$, except for $\vx$, which is updated with
$\val{}\in \Vals_\bot$. Similarly, we write
$\supdate{\mem}{(\ar, \nat)}{\val{}}$ to denote the memory where the
$\nat$-th entry of array $\ar$ is updated with value $\val{}$.  With
slight abuse of notations, we will write $\supdate{\state}{x}{\val{}}$
for $(\supdate{\reg}{x}{\val{}}, \mbuf, \mem)$, and
$\bitem{{(\ar, \nat)}}{\val{}}\cons \state$ for
$(\reg, \bitem {{(\ar, \nat)}} {\val{}}\cons \mbuf, \mem)$.

\paragraph{Semantics of expressions} We associate every $n$-ary
operator $\op$ with an interpretation
$\widehat \op:\Vals^n \to \Vals$.  The semantics of expressions is
defined by:
\small
\begin{align*}
  \sem{\val{}}_{\reg} &\mydef \val{}\\
  \sem\vx_{\reg} &\mydef \reg(\vx) \\
  \sem{\op (\expr_1,\dots,\expr_n)}_{\reg} &\mydef
  \begin{cases}
    \widehat \op (\sem{\expr_1}_{\reg}, \ldots, \sem{\expr_n}_{\reg} ) &\text{if } \forall 1 \le i \le n. \sem {\expr_i}_\reg \neq \bot,\\
    \bot & \text{otherwise.}
  \end{cases}
\end{align*}
\normalsize We write $\sem{\expr}_{\reg} = \val{}$ to indicate that
the expression $\expr$ evaluates to a concrete value $\val{}$, meaning
it does not return $\bot$. For simplicity, given a state
$\state = (\reg, \mbuf, \mem)$ we will write $\sem{\expr}_{\state}$
for $\sem{\expr}_{\reg}$.

\paragraph{Observations and directives} In our semantics transitions
are labeled with \emph{observations} $\obs \in \Obs$ that model the
side-channel leaks performed by the execution step, and directives
$\dir \in \Dir$, that model the attacker-controlled predictions made
by the processor. The set of observations is defined by the following
grammar:
$$
\begin{array}{lcl}
  \Obs \ni \obs & \bnfdef &
  \onone
  \mid \obranch \bool
  \mid \omem {(\ar, \nat)},
\end{array}
$$
where $\bool \in \{\ctrue, \cfalse\}$, $\ar \in \Ar$ and
$\nat\in \Nat$. The observation $\onone$ indicates that a transition
does not leak information, the observation $\obranch \bool$ captures
the leakage caused by conditional branches, and the observation
$\omem {(\ar, \nat)}$ is a pair of an array name and offset that
represents the leakage generated by memory accesses.

The set of directives is described by the following grammar:
\[
  \Dir \ni \dir \bnfdef  \dstep
  \mid \dbranch \bool
  \mid \dload{i}
  \mid \doob {(\ar, \nat)},
\]
where $\bool \in \{\ctrue, \cfalse\}$, $\ar \in \Ar$ and
$i, \nat\in \Nat$. The directive $\dstep$ is used for instructions
that are not subject to speculative predictions. In a conditional
branch, the directive $\dbranch \bool$ instructs the processor to
predict that the value of guard evaluates to $\bool$.  In the presence
of a memory load, the directive $\dload{i}$ controls load value
prediction by instructing the processor to load the $i$-th entry from
the load buffer. Finally, directive $\doob {(\ar, \nat)}$ is used for
speculative out-of-bounds memory accesses, and causes the speculative
load of the $\nat$-th entry of array $\ar$.

\paragraph{Operational Semantics} The \emph{configurations} of our
operational semantics are ranged over by the metavariable $\confone$ and
consist of triples \( {\cmd,\state,\bms} \), where $\cmd$ is a
program, $\state$ is a state, and $\bms$ is the mis-speculation flag:
a Boolean flag that is equal to $\top$ if the processor is executing
transiently.
Transitions take the form
\[
  \sstep{\cmd,\state,\bms}{\cmd',\state',\bms'}{\dir}{\obs},
\]
where ${\cmd,\state,\bms}$ and ${\cmd',\state',\bms'}$ are
configurations, $\dir$ is a directive and $\obs$ is an observation.
To model indirect jumps, transitions are parameterized in a
\emph{continuation map}~$\jmpcont: \Loc \to \Cmd$, which resolves indirect jumps by
associating each program location~$\loc \in \Loc$ with the
corresponding continuation~$\jmpcont(\loc)$, that we assume to be
fixed in the following.

\newcommand{\fn}{\mathtt{f}}

\begin{figure*}[t]
  \centering
  \columnwidth=\linewidth
  \small
    \begin{gather*}
      \Infer[SIE][SAsgn][\textsc{SAsgn}]
      {\sstep {\vx \asgn{\expr}\sep\cmd, \st, \bms}
              {\cmd, \supdate{\st}{\vx}{\val{}}, \bms}
              {\dstep}{\onone}}
              {\sem\expr_{\st} = \val{}}
      \ \
      \Infer[SIE][SLoad-Step][\textsc{SLoad-Step}]
      {\sstep
        {\cmemread{\vx}{\ar[\expr]}\sep\cmd, \st, \bms}
        {\cmd, \supdate{\st}{\vx}{\val{}}, \bms}
        {\dstep} {\omem (\ar, n)}}
      {
        \sem\expr_{\st} = \nat \quad
        \nat \in [0,\size \ar) \quad
        \buflookup {\bm\mbuf\mem} {\ar, \nat}  = \val{}
      }
      \ \
      \Infer[SIE][SStore-Step][\textsc{SStore-Step}]
      {\sstep
        {\cmemasgn {\ar[\expr]} \exprtwo\sep\cmd, \st, \bms}
        {\cmd, \bitem{(\ar, \nat)}{\val{}} \cons \st, \bms}
        {\dstep} {\omem {(\ar, \nat)}}
      }
      {
        \sem\expr_{\st} = \nat \quad
        \nat \in [0,\size \ar) \quad
        \sem\exprtwo_{\st} = \val{}
      }
      \\
      \Infer[SIE][SLoad-OOB][\textsc{SLoad-OOB}]
      {\sstep
        {\cmemread{\vx}{\ar[\expr]}\sep\cmd, \st, \top}
        {\cmd, \supdate{\st}{\vx}{\val{}}, \top}
        {\doob {(\artwo, i)}}{{\omem (\ar, n)}}
      }
      {
        \sem\expr_{\st} = \nat \quad
        \nat \not\in [0,\size \ar) \quad  i \in [0,\size \artwo) \quad
        {\mem}(\artwo,  i) = \val{}
      }
      \qquad
      \Infer[SIE][SStore-OOB][\textsc{SStore-OOB}]
      {\sstep
          {\cmemasgn {\ar[\expr]}{\exprtwo}\sep\cmd, \st, \top}
          {\cmd, \bitem{(\artwo, i)}{\val{}}\cons \st, \top}
          {\doob {(\artwo, i)}} {\omem {(\ar, n)}}
      }
      {
        \sem\expr_{\st} = \nat \quad
        \nat \not\in [0,\size \ar) \quad  i \in [0,\size \artwo) \quad
         \sem\exprtwo_{\st} = \val{}
      }
      \\
\Infer[SIE][SLoad-PSF][\textsc{SLoad-PSF}]
      {\sstep
        {\cmemread \vx  {\ar[\expr]}\sep\cmd, \st, \bms}
        {\cmd, \supdate{\st}{\vx}{\val{}}, \top}
        {\dload{i}} {\omem {(\ar, n)}}}
      {\sem\expr_{\st} = n \quad
        \bufreadi{\mbuf}{i} = \val{}
      }
      \qquad
\Infer[SIE][SLoad-SSB][\textsc{SLoad-SSB}]
      {\sstep
        {\cmemread{\vx}{\ar[\expr]}\sep\cmd, \st, \bms}
        {\cmd, \supdate{\st}{\vx}{\val{}}, \top}
        {\dload{\infty}} {\omem (\ar, n)}}
      {
        \sem\expr_{\st} = \nat \quad
        \nat \in [0,\size \ar) \quad
        {\mem}(\ar, \nat)  = \val{}
      }\\
\Infer[SIE][SWhile]{
      \sstep{W \sep\cmdtwo, \st, \bms}
            {\cmdtwo_{\bool}\sep\cmdtwo, s, \bms\lor \bool \neq{\sem \expr_{\st}}}
            {\dbranch{\bool}}{\obranch{\sem \expr_\st}}
    }
    {
      W = \cwhile \expr {\cmd} \quad
      \cmdtwo_\ctrue = \cmd \sep W \sep\cmdtwo\quad
      \cmdtwo_\cfalse = \cmdtwo
    }
    \qquad
\Infer[SIE][SIf]{
      \sstep{I\sep\cmdtwo, \st, \bms}
            {\cmdtwo_{\bool}\sep\cmdtwo, \st, \bms\lor \bool\neq{\sem \expr_{\st}}}
            {\dbranch{\bool}} {\obranch{\sem \expr_\st}}
    }
    {
      I = \cif \expr {\cmdtwo_\ctrue} {\cmdtwo_\cfalse}
    }
    \\
    \Infer[SIE][Jmp]{
      \sstep{\kwd{jmp}\ \expr, \st, \bms}
            {\phi(\sem \expr_\st), \st, \bms}
            {\dstep} {\ojmp{\sem \expr_\st}}
    }
    {
    }
    \quad
\Infer[SIE][SFence]{
      \sstep{\fence\sep\cmd,(\reg,\mbuf,\mem),\bot}
            {\cmd,(\reg, \epsilon, \overline{\bm\mbuf\mem}), \bot}
            {\dstep}{\onone}
    }{~}
    \quad
    \Infer[SIE][SDfence]{
      \sstep{\dfencex\sep\cmd, \st, \bms}
            {\cmd, \supdate{\st}{\vx}{z}, \bms}
            {\dstep}{\onone}
    }
    {
      z = \mathrm{if}~\bms \mathrm{~then~} \bot \mathrm{~else~} r(x)
    }
    \end{gather*}
    \caption{Speculative semantics.}
  \label{fig:specsem}
\end{figure*}
 
\Cref{fig:specsem} presents the semantic rules for our language. Rule
\ref{SIE:SAsgn} evaluates assignments in the standard way. The rules
\ref{SIE:SLoad-Step} and \ref{SIE:SStore-Step} execute load and store
operations, respectively, without performing any speculation.  The
index $\expr$ must evaluate to an integer $\nat \in \Nat$ within the
bounds of $\ar$.  The function $\buflookup {\bm\mbuf\mem} {\ar, \nat}$
checks (in order) whether there is a pending store at address
$(\ar,\nat)$ in the store buffer $\mbuf$. If so, it returns the
associated value; otherwise, it retrieves the corresponding value from
$\mem$.  For a store operation, the pending store is added to the
store buffer.

Stores and loads can also go wrong in several ways.
First, stores and loads can be out-of-bounds (this is only possible on a misspeculated path, since we assume architectural memory safety). To avoid having to reason about the memory layout, we assume a strong attacker model, where the attacker can then choose the target address to be any other memory location (Rules~\ref{SIE:SLoad-OOB} and~\ref{SIE:SStore-OOB}).
Second, due to PSF, {\em any} load (out-of-bounds or not) can also read any entry from the store buffer (Rule \ref{SIE:SLoad-PSF}). The operator $\bufreadi{\mbuf}{i}$ searches for the
$i$-th pending store in the buffer $\mbuf$ and returns its associated value. If no such store exists, the rule is not applicable.
Third, even an in-bounds  load can read a {\em stale} value from memory due to SSB (Rule \ref{SIE:SLoad-SSB}).
Both \ref{SIE:SLoad-PSF} and \ref{SIE:SLoad-SSB} set the misspeculation flag to $\top$.

In rules \ref{SIE:SWhile} and \ref{SIE:SIf}, the $\dbranch \bool$
directive forces a branch instruction to be evaluated as if the guard resolved to
$\bool$, modeling PHT speculation. If the guard does not actually evaluate to
$\bool$, the misspeculation flag is set.

When an indirect jump $\kwd{jmp}\ \expr$ executes, Rule~\ref{SIE:Jmp}
evaluates the jump target $\expr$ to a program location $\loc$, and
the continuation is set to $\jmpcont(\loc)$.

Finally, Rules~\ref{SIE:SFence} and \ref{SIE:SDfence} address
\fence and \dfence, respectively.  For \fence, the
misspeculation flag must be false.  The store buffer is emptied and
its contents are pushed to memory, denoted $\overline{\bm\mbuf\mem}$
and defined as follows:
\begin{align*}
  \overline {\bm{\nil}\mem} &\mydef \mem &  \overline {({[(\ar, \nat) \mapsto \val{}] : \mbuf}, \mem )} &\mydef \supdate{\overline {\bm{\mbuf}\mem}}{(\ar, \nat)} {\val{}}.
\end{align*}
If the misspeculation flag is true, the semantics become stuck.\footnote{At the assembly level, the semantics do not get stuck; instead, the processor backtracks to a point where no misspeculation has occurred. However, it is unnecessary to model this behavior in order to capture the notion of speculative constant-time.}
In contrast to \fence that blocks all executions, \dfence blocks {\it only} executions depending on the value of $\vx$.
If the misspeculation flag is set, the value of $\vx$
is set to $\bot$, rendering it unavailable for the remainder of the execution;
otherwise, its value remains unchanged.

On top of our semantics, we can define \emph{speculative constant time}:

\begin{definition}[Speculative constant-time] Let $\mathrel{\mathcal R}$ be an
  equivalence relation on states.  A program $\cmd$ is speculative
  constant-time with respect to $\mathrel{\mathcal R}$ (written $\mathrel{\mathcal R}$-SCT)
  iff for every set of directives $\Ds$, misspeculation flag $\bms$,
  every pair of initial states $\st_1$ and $\st_2$ such that
  $\st_1 \mathrel{\mathcal R} ~\st_2$, then for every execution
  $\sstep{\cmd, \st_1, \bms}{\cmd'_1 , \st'_1, \bms_1}{\Ds}{\Os_1}$ and
  $\sstep{\cmd, \st_2, \bms}{\cmd'_2 , \st'_2, \bms_2}{\Ds}{\Os_2}$ we have
  $\Os_1 = \Os_2$.
\end{definition}
Note that both executions follow the same set of directives and begin with the same misspeculation flag.
The relation $\mathrel{\mathcal R}$ is used to relate the two initial states. Typically, it imposes that
the two states must be identical in the portions of the memory and register map containing public data.

\section{Proof of soundness}
\label{sec:appsoundness}

\newcommand{\idx}{i}
\newcommand{\err}{\mathsf{err}}

With an eye on the proof, it is convenient to define a notion of state
as a tuple $\reg, \mbuf, \mem, \bms, \idx$, where $\reg$ is a register
file, $\mbuf$ is a store buffer, $\mem$ is a memory, $\bms$ is the
misspeculation flag and $\idx$ is an index. We also consider the
configuration $\err$ as a state. The set of states is ranged over by
metavariables $\st, \sttwo$.

The proof of soundness relies on \Cref{lemma:mainlemma} below, which can be
seen as a form of \emph{subject reduction} result, stating that
reduction sequences preserve a relation $\confeq{}$ between the
states. Specifically, the relation $\confeq{}$ is parameterized on an
environment $\env$ and is the one that is closed under the rules in
\Cref{fig:confeq}.

\newenvironment{rules}[1][]
{\ifx\\#1\\\else\par\raggedright\emph{\textbf{#1}}\fi
  \[\begin{array}{c}}
{\end{array}\]\vspace{.5mm}}

\begin{figure*}[t]
  \centering
    \columnwidth=\linewidth
    \begin{rules}[Rules for ordinary configurations:]
      \inferrev{
        \reg_1, \mbuf_1, \mem_1, \bms \confeq{\env} \reg_2, \mbuf_2,\mem_2,\bms
      }
      {
        \reg_1 \boteq \reg_2 \quad
        \reg_1 \steq[\bms]{\env}\reg_2\quad
        \mem_1 \steq{\env}\mem_2\quad
        \mbuf_1 \steq[\bms]{\env}\mbuf_2 \quad
}
      \quad
      \inferrev{
        \reg_1 \boteq \reg_2
      }
      {
        \forall \vx \in \Regs. \reg_1(\vx) = \bot \Leftrightarrow \reg_2(\vx) = \bot
      }
      \quad
      \inferrev{
        \mem_1 \steq{\env}\mem_2
      }
      {
        \forall \ar \in \Arr. \env(\ar) = \tylow \Rightarrow \mem_1(\ar)=\mem_2(\ar)
      }
      \\[4mm]
      \inferrev{
        \reg_1 \steq[\bot]{\env}\reg_2
      }
      {
        \forall \vx \in \Regs. \pi_1(\env(\vx)) = \tylow \lor \pi_2(\env(\vx)) = \tylow \Rightarrow \reg_1(\vx)=\reg_2(\vx)
      }
      \quad
      \inferrev{
        \reg_1 \steq[\top]{\env}\reg_2
      }
      {
        \forall \vx \in \Regs.\pi_2(\env(\vx)) = \tylow \Rightarrow \reg_1(\vx)=\reg_2(\vx)
      }
     \\[4mm]
      \inferrev{
        {{\bitem {(\ar,\nat)} {\val{}_1}} \cons\mbuf_1}  \steq[\bot]{\env} {{\bitem {(\ar,\nat)} {\val{}_2}} \cons\mbuf_2}
      }
      {
        \env(\ar) = \tylow \Rightarrow \val{}_1 = \val{}_2
        \quad
        {\mbuf_1}  \steq[\bot]{\env} {\mbuf_2}
      }
      \quad
      \inferrev{
        {{\bitem {(\ar,\nat)} {\val{}_1}} \cons\mbuf_1}  \steq[\top]{\env} {{\bitem {(\ar,\nat)} {\val{}_2}} \cons\mbuf_2}
      }
      {
        {\mbuf_1}  \steq[\bot]{\env} {\mbuf_2}
      }
      \quad
      \inferrev{
        \nil \steq[\bms]{\env}\nil
      }
      {
      }
      \quad
      \inferrev{
        \err \confeq{\env}\err
      }
      {
      }
    \end{rules}
    \begin{rules}[Rules for indexed configurations:]
            \inferrev{
        \reg_1, \mbuf_1, \mem_1, \bms, \idx \confeq{\env} \reg_2, \mbuf_2,\mem_2,\bms, \idx
      }
      {
        \reg_1, \mbuf_1, \mem_1, \bms \confeq{\env} \reg_2, \mbuf_2,\mem_2,\bms
      }
    \end{rules}
  \caption{Rules defining the $\confeq \env$ relation.}
  \label{fig:confeq}
\end{figure*}

\noindent
For proving \Cref{lemma:mainlemma}, it is
convenient to define a notion of proof derivation. With the
metavariable $\pt$ we denote a proof derivation of a typing judgment,
and we use $\deriv$ to denote the relation of being a proof of a
judgment. For instance we now can write:
\[
  \begin{array}[c]{c}
  \infer[\textsc{TSeq}]
  {
    \pt_1 \deriv \sjudg {\env_1} {\cmd_1} {\env_2}
    \quad
    \pt_2
  }
  {\pt \deriv \sjudg {\env_1} {\cmd_1\sep \cmd_2} {\env_3}}
  \end{array}
\]
In this notation, $\pt$ is the proof tree for the entire derivation of
$\sjudg {\env_1} {\cmd_1\sep \cmd_2} {\env_3}$, and $\pt_1, \pt_2$ are
proof trees for the sub-derivations on $\cmd_1$ and $\cmd_2$,
respectively. For brevity's sake, the judgment established by $\pt_2$
, i.e. $\sjudg {\env_2} {\cmd_2} {\env_3}$, is omitted from the
notation. In the following, we will often employ this convention.

Due to the presence of circular jump patterns
in $\jmpcont$, our typing system encompasses a form of fix-point
reasoning over $\jmpcont$ embodied by Rule \ref{TY:TCont}.

\[
  \Infer*[TY][TCont]
  {\sjudg[\emptyset] {\env} {\cmd} \envtwo}
  {\forall \left( \sjudg[] {\env'} {\cmdtwo} {\envtwo'} \right) \in \jmpctx.\
    \sjudg {\env'} {\cmdtwo} {\envtwo'}
    \quad
    \sjudg {\env} {\cmd} \envtwo
  }
\]

With this
rule, in order to type a program $\cmd$, it is possible to populate
$\jmpctx$ with additional assumption on jump targets, prove their
validity and finally use them to type $\cmd$.

We justify this form of reasoning via a Scott-inductive argument.  To
this end, we instrument our semantics with a parameter that represents
the number of allowed jump unfoldings.  More precisely, we first
inductively show that our type system is sound for every initial value
of $\idx \in \Nat$, then we will show if a program is not
constant-time in the normal semantics, it is also not constant-time in
the indexed one.

Specifically, we extend transitions with an index $\idx$ as follows:
\[
  \sstep{\cmd, \st, \bms, \idx }{\cmd', \st', \bms', \idx'}\dir
  \obs.
\]
For brevity's sake we do not report the new semantic rules endowed
with indexes, as all of them, except for those for jumps, are
analogous to the corresponding rule given in \Cref{sec:ts} and leave
the index $i$ unchanged. Rule~\ref{SIE:Jmp} is replaced by the two
following rules:
\begin{gather*}
  \Infer[SIE][JmpNonZero]{
    \sstep{\kwd{jmp}\ \expr, \st, \bms ,\idx+1}
    {\phi(\sem \expr_\st), \st, \bms, \idx}
    {\dstep} {\obranch{\sem \expr_\st}}
  }
  {
  }
  \\
  \Infer[SIE][JmpZero]{
    \sstep{\kwd{jmp}\ \expr, \st, \bms ,0}
    {\err}
    {\dstep} {\obranch{\sem \expr_\st}}
  }
  {
    \sem\expr_\st \in \Loc
  }
\end{gather*}

We call $\States_\idx$ the set of states whose index is $\idx$.
We now turn to our soundness proof. To this end we define and
interpretation of judgments parameterized on the index $\idx$ as
described in \Cref{fig:semjudg}.
\newcommand{\terminal}[1]{#1\downarrow}

\begin{figure*}[t]
  \centering
    \columnwidth=\linewidth
    \begin{align*}
      \sem {\sjudg[] \env \cmd \envtwo}^{\idx}_\nat &\mydef
      \forall \nat'\le \nat.\forall \st, \sttwo \in \States_\idx. \forall \Ds, \Os. \st \confeq\env \sttwo \Rightarrow
    \forall \cmd', \st'.\ \sstep[\nat'] {\cmd, \st} {\cmd', \st'} \Ds \Os \Rightarrow \exists \sttwo'.\ \sstep[\nat'] {\cmd, \sttwo} {\cmd', \sttwo'} \Ds \Os  \land (\cmd'=\cnil \Rightarrow  \st' \confeq\envtwo \sttwo')\\
      \sem {\sjudg[] \env \cmd \envtwo}^{\infty}_\nat &\mydef
                                                   \forall \nat' \le \nat. \forall \st, \sttwo \in \States. \forall \Ds, \Os. \st \confeq\env \sttwo \Rightarrow
                                                   \forall \cmd', \st'. \sstep[\nat'] {\cmd, \st} {\cmd', \st'} \Ds \Os \Rightarrow \exists \sttwo'.\ \sstep[\nat'] {\cmd, \sttwo} {\cmd', \sttwo'} \Ds \Os  \land (\cmd' = \cnil \Rightarrow  \st' \confeq\envtwo \sttwo')\\
      \sem {\sjudg \env \cmd \envtwo}^\idx_\nat &\mydef \sem \jmpctx^\idx_\nat \Rightarrow \sem {\sjudg[] \env \cmd \envtwo}^{\idx+1}_\nat\\[0.5em]
      \sem \jmpctx^\idx_\nat &\mydef  \forall \left( \sjudg[] \env \cmd \envtwo\right) \in \jmpctx. \sem {\sjudg[] \env \cmd \envtwo}^\idx_\nat
    \end{align*}
  \caption{Definition of judgment's semantics }
  \label{fig:semjudg}
\end{figure*}

Thanks to this interpretation, we can now spell out the soundness of
our type system as follows:
\begin{equation}
  \tag{Soundness}
  \label{eq:sound}
  \pt\deriv \sjudg[\emptyset] \env \cmd \envtwo \quad \Rightarrow \quad
  \forall \nat \in \Nat.\ \sem{\sjudg[] \env \cmd \envtwo}^{\infty}_\nat.
\end{equation}
By leveraging that $\sem{\sjudg[] \env \cmd \envtwo}$ is a
Scott-continuous property---i.e., that
$\forall \nat \in \Nat \left(\forall \idx \in \Nat.\ \sem{\sjudg[] \env \cmd \envtwo}^{\idx}_\nat
\right)\Rightarrow \sem{\sjudg[] \env \cmd \envtwo}^{\infty}_\nat$, as shown in
\Cref{lemma:sc}---we can reduce our goal to proving
\begin{equation}
  \tag{C1}
  \label{eq:C1}
  \pt\deriv \sjudg[\emptyset] \env \cmd \envtwo \quad \Rightarrow \quad
  \forall \nat, \idx \in \Nat.\ \sem{\sjudg[] \env \cmd \envtwo}^{\idx}_\nat.
\end{equation}

To support inductive reasoning, our claim will be strengthened as follows:
\[
  \forall \nat \in \Nat. \pt\deriv \sjudg \env \cmd \envtwo \  \Rightarrow \
  \forall \idx \in \Nat.\ \sem{\sjudg[] \env \cmd \envtwo}^0_\nat\land \sem{\sjudg \env \cmd \envtwo}^{\idx}_\nat.
\]
\noindent
Note that when $\jmpctx=\emptyset$, this judgment corresponds to   \eqref{eq:C1}.

\begin{lemma}
  \label{lemma:mainlemma}
  \[
  \forall \nat \in \Nat. \pt\deriv \sjudg \env \cmd \envtwo \  \Rightarrow \
  \forall \idx \in \Nat.\ \sem{\sjudg[] \env \cmd \envtwo}^0_\nat\land \sem{\sjudg \env \cmd \envtwo}^{\idx}_\nat.
\]
\end{lemma}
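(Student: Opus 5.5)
The proof is by induction: an outer induction on the step bound $\nat$, and an inner structural induction on the derivation $\pt$. For a fixed $\nat$ and $\pt$ I first prove the conditional part $\sem{\sjudg \env \cmd \envtwo}^{\idx}_\nat$ for every $\idx$. I assume $\sem\jmpctx^\idx_\nat$ and show the unindexed-context judgment at level $\idx+1$. The part $\sem{\sjudg[] \env \cmd \envtwo}^0_\nat$ then follows by the same case analysis. The reason is that at index $0$ rule \ref{SIE:JmpZero} leads directly to $\err$, and $\err \confeq{} \err$ holds trivially, so the jump case needs no appeal to $\jmpctx$.

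The key auxiliary fact is a \emph{one-step lemma} for each atomic instruction $\stat$ typed by its rule as $\env \to \env'$. It says: if $\st\confeq\env\sttwo$ and $\st$ takes a step under directive $\dir$ with observation $\obs$, then $\sttwo$ takes a step under the same $\dir$ with the same $\obs$, and the results are $\confeq{\env'}$-related. Both states carry the same flag $\bms$, and the flag update depends only on the directive and on public guards, so it stays equal in the two runs.

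The instruction cases go as follows.
\begin{itemize}
\item For loads and stores, the index has type $\typair\tylow$. I use a small expression lemma: if $\sjudg[]\env\expr\tty$ and $\pi_\bms(\tty)=\tylow$, then $\sem\expr$ is equal in both states, which follows by induction on $\expr$ using $\boteq$. This makes the indices equal, hence the observations $\omem{(\ar,\nat)}$ equal and the same rule applicable.
\item The loaded register receives type $(\env(\ar),\tyhigh)$, so only its non-speculative value must agree, and only when $\bms=\bot$ and $\env(\ar)=\tylow$. In that case \ref{SIE:SLoad-Step} reads via $\buflookup{}{}$ from $\mbuf$ and $\mem$, which are related under $\steq[\bot]{\env}$. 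The rules \ref{SIE:SLoad-OOB}, \ref{SIE:SLoad-PSF} and \ref{SIE:SLoad-SSB} all end with the flag set to $\top$, where the register's $\tyhigh$ speculative type imposes nothing.
\item For stores, the buffer relation under $\top$ deliberately relaxes the head entry, and this matches the typing with $\max(\pi_1(\tty_\exprtwo),\env(\ar))$.
\item For \dfence under $\top$, both registers become $\bot$, so $\boteq$ is preserved and the register is equal. Under $\bot$, its non-speculative equality transfers to the new pair $\typair{\pi_1}$.
\item For \fence, only $\bms=\bot$ can step, and flushing related buffers into related memories preserves $\steq{\env}$.
\item For branches, the guard type $\typair\tylow$ makes $\obranch b$ and the new flag equal.
\end{itemize}

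Compound rules are handled by gluing. For \ref{TY:TSeq} I split an $\nat'$-step run into a prefix on $\stat$ and a suffix on $\cmd_2$, which is standard. For \ref{TY:TWhile} I unfold the loop once; the unfolded run is strictly shorter in remaining steps, and this is exactly where the outer induction on $\nat$ is needed, since $\pt$ itself does not decrease. \ref{TY:TWeak} uses monotonicity of $\confeq{}$ in $\env$: a larger environment relates more states. For \ref{TY:TJmp} at index $\idx+1$, the jump steps to $\jmpcont(\loc)$ at index $\idx$ with equal targets, since the guard is public, and the relation is preserved because states with related components remain related after the index changes. The remaining run is then covered by the hypothesis $\sem\jmpctx^\idx_\nat$ applied to the triple in $\jmpctx$, followed by weakening into $\bigsqcup_\loc\envtwo_\loc$.

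The main obstacle is \ref{TY:TCont}, which discharges the context: from the premises over $\jmpctx$ I must derive an unconditional judgment. My plan is an inner induction on $\idx$. The base case is $\sem\jmpctx^0_\nat$, which holds because the inner induction hypothesis gives the ``$^0$'' conjunct for each premise. In the step, $\sem\jmpctx^\idx_\nat$ together with the premises' conditional parts yields $\sem\jmpctx^{\idx+1}_\nat$. Once every level $\idx$ of the context is available, the premise on $\cmd$ gives the unconditional judgment at every index, which in particular covers index $0$ and $\idx+1$, as required for an empty context. This bootstrapping is why the statement carries both conjuncts, and I expect the bookkeeping of indices in this case to be the delicate step.
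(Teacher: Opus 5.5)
Your overall architecture matches the paper's proof. Each \textsc{TSeq} case is reduced to a one-instruction simulation step glued to the hypothesis on the continuation. The loop case reuses the \emph{same} derivation with a smaller step bound. \textsc{TCont} is discharged by an inner induction on the index that bootstraps $\sem{\jmpctx}^{i}_{\nat}$ from the ``$^0$'' conjuncts. Your lexicographic measure (outer $\nat$, inner $\pt$) is a fine substitute for the paper's $\nat+\size{\pt}$.

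The step that fails is your claim that the head-only relaxation of the buffer relation ``matches the typing'' for stores. As printed, the $\top$ clause relaxes only the head and requires the \emph{tail} to be $\steq[\bot]{\env}$-related, and stores do not preserve this. Take $\env(\ar)=\tylow$, a high array $\sym{s}$, and related states at $\bms=\bot$ whose buffers each hold one pending store to $\sym{s}$, with different values. Run $\cmemread{\vx}{\ar[0]}\sep\cmemasgn{\ar[0]}{\vx}\sep\cmemasgn{\ar[1]}{0}$, with the load served by \textsc{SLoad-PSF}.
\begin{itemize}
\item The program is typable: \textsc{TStore} uses only $\pi_1$ of the stored value's type, which is $\tylow$, so $\ar$ stays low.
\item The load sets the flag to $\top$ and puts different values in $\vx$.
\item The first store pushes those differing values into the low array $\ar$.
\item The second store moves that entry into the tail, where $\steq[\bot]{}$ demands equality.
\end{itemize}
So the final states are not $\confeq{}$-related. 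With the relation exactly as printed, the statement itself fails on this program; this is not a missing detail in your argument.

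The fix is to replace the premise $\mbuf_1\steq[\bot]{\env}\mbuf_2$ of the $\top$ clause by $\mbuf_1\steq[\top]{\env}\mbuf_2$. At $\top$, two buffers then only need to carry the same sequence of addresses. This is sound for three reasons. The flag never returns from $\top$ to $\bot$. \textsc{SFence} cannot fire at $\top$, so buffered values never reach memory. Every load at $\top$ yields a register of speculative type $\tyhigh$, about which $\steq[\top]{}$ says nothing.

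With this reading your store case is immediate. You should also state explicitly that $\steq[\bot]{}$ implies $\steq[\top]{}$ on registers and buffers, since you need it whenever the flag flips (\textsc{SLoad-PSF}, \textsc{SLoad-SSB}, mispredicted branches). The paper's own proof has the same flaw. Its ``$\top$: trivial'' in the store lemma, and its proof of the $\bot$-to-$\top$ lemma, are correct only under this recursive reading, not for the figure as printed.

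Two smaller points.
\begin{itemize}
\item Your expression lemma at $\bms=\bot$ must also cover $\pi_2(\tty)=\tylow$, because $\steq[\bot]{}$ on registers requires equality when either component is low; this is what \textsc{TAsgn} needs. It goes by induction on the typing derivation (because of \textsc{TSub}) from $\steq[\bms]{\env}$, with $\boteq$ only aligning definedness.
\item Your jump case, like the paper's, tacitly assumes that the runtime target lies in $V$. Typing does not ensure this, so it must be taken as an assumption on the annotations; it is what the jump-target analysis provides.
\end{itemize}
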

\begin{proof}
  The proof is by induction on $m = \nat + \size\pt$, where $\size\pt$ is the size of the proof tree, defined by counting the number of applied rules.
  \begin{proofcases}
    \proofcase{0} This case is trivially valid by vacuity of the premise since each proof tree consists of at least one applied rule.
    \proofcase{$m = m'+1$} in this case, we go by cases on the derivation of $\pt$.
    \begin{proofcases}
      \proofcase{\ref{TY:TEmpty}} In this case, we prove a stronger
      claim, namely:
      $\forall \idx, \nat \in \Nat. \sem{\sjudg[] \env \cnil
        \env}^{\idx}_\nat$.  Fix $\nat \in \Nat$. By introspection of the
      definition of $\sem{\sjudg[] \env \cnil \env}^{\idx}_\nat$,
      it must be the case where $\nat'=0$, and therefore the validity
      of the claim can be trivially verified.
      \proofcase{\ref{TY:TWeak}} Let $\nat$ be the length of the
      reduction sequence.  Due to the case analysis, we can assume
      that the proof tree looks as follows:
      \[
        \inferrev[\textsc{TWeak}]
        {
          \pt \deriv \sjudg {\env} {\cmd} {\env'}
        }
        {
          \env \le \envtwo
          \quad
          \pt' \deriv \sjudg {\envtwo} {\cmd} {\envtwo'}
          \quad
          \envtwo' \le \env'
        }
      \]
      By applying the induction hypothesis on $\nat$ and $\pt'$, we prove
      \[
        \forall \idx \in \Nat.\ \sem{\sjudg[] \envtwo \cmd \envtwo'}^0_\nat\land \sem{\sjudg \envtwo \cmd \envtwo'}^{\idx}_\nat.
      \]
      The conclusion follows from \Cref{lemma:monotone}.
      \proofcase{\ref{TY:TCont}} Fix $\nat$.  We assume that the
      derivation has the following shape:
      \[
        \Infer*[TY][TCont]
        {\sjudg[\emptyset] {\env} {\cmd} \envtwo}
        {\forall \left( \sjudg[] {\env'} {\cmdtwo} {\envtwo'} \right) \in \jmpctx.\
          \sjudg {\env'} {\cmdtwo} {\envtwo'}
          \quad
          \pt' \deriv \sjudg {\env} {\cmd} \envtwo
        }
      \]
      We can prove the claim by applying the inductive hypothesis to $\pt'$ and $\nat$.
      However, to do so, we must discharge the assumption
      $\forall \idx \in \Nat. \sem \jmpctx ^\idx_\nat$.
      The proof proceeds by induction on $\idx$.
      \begin{proofcases}
        \proofcase{0} In this case, the conclusion follows trivially
        from the inductive hypotheses on the derivations
        \[
          \left \{ \left(\sjudg {\env'} {\cmdtwo} {\envtwo'} \right)\mid \left( \sjudg[] {\env'} {\cmdtwo} {\envtwo'} \right) \in \jmpctx\right \}.
        \]
        \proofcase{$\idx = \idx' +1$}
        The conclusion follows from the inductive hypothesis on $\idx'$ and from the main inductive hypotheses on the derivations
        \[
          \left \{ \left(\sjudg {\env'} {\cmdtwo} {\envtwo'} \right)\mid \left( \sjudg[] {\env'} {\cmdtwo} {\envtwo'} \right) \in \jmpctx\right \}.
        \]
        From these, we obtain:
        \[
          \forall \idx, \nat. \forall \left( \sjudg[] {\env'} {\cmdtwo} {\envtwo'} \right) \in \jmpctx.\
          \sem{\sjudg {\env'} {\cmdtwo} {\envtwo'}}^\idx_\nat,
        \]
        which is equivalent to:
        \[
          \forall \idx, \nat. \forall \left( \sjudg[] {\env'} {\cmdtwo} {\envtwo'} \right) \in \jmpctx.\
          \sem{\jmpctx}^\idx_\nat \Rightarrow \sem{\sjudg[] {\env'} {\cmdtwo} {\envtwo'}}^{\idx+1}_\nat.
        \]
        By instantiating the above on $\idx'$ and applying the
        inductive hypothesis on $\idx'$ to discharge the premise of
        the implication, we conclude:
        \[
          \forall \nat. \forall \left( \sjudg[] {\env'} {\cmdtwo} {\envtwo'} \right) \in \jmpctx.\
          \sem{\sjudg[] {\env'} {\cmdtwo} {\envtwo'}}^{\idx'+1}_\nat,
        \]
        which corresponds precisely to $\sem{\jmpctx}^\idx_\nat$.
      \end{proofcases}

      \proofcase{\ref{TY:TSeq}} Due to the case analysis, we can assume
    that the proof tree looks as follows:
    \[
      \inferrev[\textsc{TSeq}]
      {\pt \deriv \sjudg {\env_1} {\stat\sep \cmd_2} {\env_3} }
      {
        \pt_1 \deriv \sjudg {\env_1} {\stat} {\env_2}
        \quad
        \pt_2 \deriv \sjudg {\env_2} {\cmd_2} {\env_3}
      }
    \]
    The proof goes by case analysis on $\pt_1$.
    \begin{proofcases}
      \proofcase{\ref{TY:TJmp}} In this case, we can assume that the type derivation has the following shape:
\[
\inferrev[\textsc{TSeq}]
  {\pt \deriv \sjudg{\env_1}{\kwd{jmp}\ \expr_V \sep \cmd}{\env_3}}
  {
    \inferrev[\textsc{TJmp}]
      {\sjudg{\env_1}{\kwd{jmp}\ \expr_V}{\env_2}}
      {\vcenter{\hbox{$
          \begin{array}{@{}c@{}}
            \env_2 = \bigsqcup_{\loc \in V}\envtwo_\loc\\[4pt]
            \sjudg{\env_1}{\expr}{\typair \tylow}\\[4pt]
            \forall \loc \in V.\ \bigl(\sjudg[] {\env_1} {\phi(\loc)} {\envtwo_\loc}\bigr) \in \jmpctx
          \end{array}
        $}}}
    \quad
    \pt_2
  }
\]      \noindent
      Fix $\nat \in \Nat$. Our proof obligation is to establish:
        \begin{multline*}
          \sem {\sjudg[] {\env_1} {\kwd{jmp}\ \expr_V\sep \cmd} {\env_3}}^0_\nat \land
          \forall \idx \in \Nat.\ \sem \jmpctx^\idx_\nat \Rightarrow\\ \sem {\sjudg[] {\env_1} {\kwd{jmp}\ \expr_V\sep \cmd} {\env_3}}^{\idx+1}_\nat.
        \end{multline*}
      We begin by proving the left conjunct. By inspecting the
      semantics, we observe that
      Rule~\ref{SIE:JmpZero} can be applied, and transitions to the target
      configuration $\err$ in a single step. Therefore,
      $\sem {\sjudg[] {\env_1} {\kwd{jmp}\ \expr_V\sep \cmd}
        {\env_3}}^0_\nat$ holds trivially.  We now aim to prove that
      $\forall \idx.\ \sem \jmpctx^\idx_\nat \Rightarrow \sem
      {\sjudg[] {\env_1} {\kwd{jmp}\ \expr_V\sep \cmd}
        {\env_3}}^{\idx+1}_\nat$. Fix $\idx \in \Nat$,
      assume:
      \begin{equation}
        \label{eq:hjmp}
        \tag{H}
        \sem \jmpctx^\idx_\nat,
      \end{equation}
      and we aim to establish:
      \begin{equation}
        \label{eq:claimjmp}
        \tag{C}
        \sem {\sjudg[] {\env_1} {\kwd{jmp}\ \expr_V\sep \cmd} {\env_3}}^{\idx+1}_\nat.
      \end{equation}
      By expanding the definition of $\sem\cdot$, the claim becomes:
      \begin{multline*}
        \forall \nat'\le\nat. \forall \st, \sttwo \in \States_{\idx+1}. \forall \Ds, \Os. \st \confeq{\env_1} \sttwo \Rightarrow\\
        \forall \cmd', \st'.\ \sstep[\nat'] {\kwd{jmp}\ \expr_V\sep \cmd, \st} {\cmd', \st'} \Ds \Os \Rightarrow\\
        \exists \sttwo'.\ \sstep[\nat'] {\kwd{jmp}\ \expr_V\sep \cmd, \sttwo} {\cmd', \sttwo'} \Ds \Os  \land (\cmd'=\cnil \Rightarrow  \st' \confeq{\env_3} \sttwo').
      \end{multline*}
      Fix $\nat'$, $\st, \sttwo \in \States_{\idx+1}$, a sequence of directives $\Ds$, and a sequence of observations $\Os$. Fix $\cmd', \st'$ such that:
      \begin{gather*}
        \tag{H1}
        \st \confeq{\env_1} \sttwo\\
        \tag{H2}
        \sstep[\nat'] {\kwd{jmp}\ \expr_V\sep \cmd, \st} {\cmd', \st'} \Ds \Os
      \end{gather*}
      Our goal is to prove that there exists $\sttwo'$ such that:
      \begin{gather*}
        \tag{C1}
        \sstep[\nat'] {\kwd{jmp}\ \expr_V\sep \cmd, \sttwo} {\cmd', \sttwo'} \Ds \Os\\
        \tag{C2}
        (\cmd'=\cnil \Rightarrow  \st' \confeq{\env_3} \sttwo')
      \end{gather*}
      If $\nat'=0$, then the conclusion is trivial. Hence, we assume
      $\nat' > 0$ without loss of generality. Under this assumption,
      from (H1) and \Cref{lemma:idexpr}, we deduce that either
      $\sem\expr_\st = \sem\expr_\sttwo = \bot$ or
      $\sem\expr_\st = \sem\expr_\sttwo = \loc$ for some $\loc$.
      Notice that the first case contradicts $\nat'=0$, so we must be in the second one.
      In this case, (H2) can be
      rewritten as:
      \[
        \sstep {\kwd{jmp}\ \expr_V\sep \cmd, \st} {\jmpcont(\loc)\sep \cmd, \st[\idx+1\upd\idx]} \dir \obs
        \xrightarrow[\Ds']{\Os'}^{\nat'-1}{\cmd', \st'},
      \]
      and we observe that:
      \[
        \sstep {\kwd{jmp}\ \expr_V\sep \cmd, \sttwo} {\jmpcont(\loc)\sep \cmd, \sttwo[\idx+1\upd\idx]} \dir \obs.
      \]
      In order  to prove (C1), our goal becomes:
      \[
        \sstep[\nat'-1] {\jmpcont(\loc)\sep \cmd, \sttwo[\idx+1\upd\idx]} {\cmd', \sttwo'}
        {\Ds'}{\Os'}
      \]
      From (H1), we deduce that $\st[\idx+1\upd\idx] \confeq{\env_1} \sttwo[\idx+1\upd\idx]$.
      The proof proceeds by case analysis. If
      \[
        \sstep[\nat'-1] {\jmpcont(\loc), \st[\idx+1\upd\idx]} {\cmd'', \st''}
        {\Ds'}{\Os'}
      \]
      for some ${\cmd'', \st''}$, then (C1) is a direct consequence of
      (H) and \Cref{lemma:sepaux}. Claim (C2) is a direct consequence of (H) and
      \Cref{lemma:monotone}. Otherwise, introspection of the semantics
      shows that there must exist $\nat'' < \nat$, $\Os''$, and
      $\Ds''$ such that:
      \[
        \sstep[\nat'-1] {\jmpcont(\loc), \st[\idx+1\upd\idx]} {\err}
        {\Ds''}{\Os''},
      \]
      \[
        \sstep[\nat''] {\jmpcont(\loc), \st[\idx+1\upd\idx]} {\err}
        {\Ds''}{\Os''}
      \]
      or
      \[
        \sstep[\nat''] {\jmpcont(\loc), \st[\idx+1\upd\idx]} {\cmd'', \st''}
        {\Ds''}{\Os''} \downarrow
      \]
      where with $\terminal{\cdot}$, we indicate that the
      configuration is terminal.  We proceed by further case
      analysis. The first two cases are absurd, since we assumed that
      ${\kwd{jmp}\ \expr_V\sep \cmd, \st}$ reduces for at least
      $\nat'$ steps. Thus, we must be in the third case.  Here, if
      $\cmd'' \neq \cnil$, we again reach a contradiction using the
      same argument. We are then left with the case where
      $\cmd'' = \cnil$. In this case, the conclusion follows from the
      inductive hypothesis on $\pt_2, \nat$, and by applying
      \Cref{lemma:monotone} and \Cref{lemma:monotone2}.

      \proofcase{\ref{TY:TFence}} In this
      case, we can assume that the type derivation has the following
      shape:
      \[
        \inferrev[\textsc{TSeq}]
        {\pt \deriv \sjudg {\env_1} {\fence\sep \cmd} {\env_3}}
        {
          \inferrev[\textsc{TFence}]{\sjudg {\env_1} {\fence} {\env_2} }
          {\forall \vx \in \Regs. \env_2(\vx) = \typair{\pi_1(\env_1(\vx))} }
          \quad
          \pt_2
        }
      \]
      With $\pt_2  \deriv \sjudg {\env_2} {\cmd} {\env_3} $.
      We fix $\nat \in \Nat$, and our goal is to prove:
      \begin{multline*}
        \sem {\sjudg[] {\env_1} {\fence\sep \cmd} {\env_3}}^{0}_\nat \land
        \forall \idx \in \Nat.\ \sem \jmpctx^{\idx}_\nat \Rightarrow\\ \sem {\sjudg[] {\env_1} {\fence\sep \cmd} {\env_3}}^{\idx+1}_\nat.
      \end{multline*}
      and the inductive hypothesis on $\pt_2, \nat$ yields
      \[
        \sem {\sjudg[] {\env_1} {\cmd} {\env_3}}^{0}_\nat \land
        \forall \idx \in \Nat.\ \sem \jmpctx^{\idx}_\nat \Rightarrow \sem {\sjudg[] {\env_1} {\cmd} {\env_3}}^{\idx+1}_\nat.
      \]
      Therefore, via \Cref{lemma:ihaux}, we can reduce our goal to showing:
      \[
        \forall \idx \in \Nat.\ \sem {\sjudg[] {\env_2} {\cmd} {\env_3}}^{\idx}_\nat\ \  \Rightarrow\ \
        \sem {\sjudg[] {\env_1} {\fence\sep \cmd} {\env_3}}^{\idx}_\nat.
      \]
      Fix $\idx$ and assume
      \[
        \tag{H}
        \sem {\sjudg[] {\env_1} {\cmd} {\env_3}}^{\idx}_\nat.
      \]
      By expanding the definition of $\sem\cdot$, the claim becomes:
      \begin{multline*}
        \forall \nat'\le\nat. \forall \st, \sttwo \in \States_{\idx}. \forall \Ds, \Os. \st \confeq{\env_1} \sttwo \Rightarrow\\
        \exists \cmd', \st'.\ \sstep[\nat'] {\fence\sep \cmd, \st} {\cmd', \st'} \Ds \Os \Rightarrow\\
        \forall \sttwo'.\ \sstep[\nat'] {\fence\sep \cmd, \sttwo} {\cmd', \sttwo'} \Ds \Os  \land (\cmd'=\cnil \Rightarrow  \st' \confeq{\env_3} \sttwo').
      \end{multline*}
      Fix $\nat'$,
      \(
        \st = \reg_1, \mbuf_1, \mem_1, \bms, \idx\) and \(\sttwo= \reg_2, \mbuf_2, \mem_2, \bms, \idx,
      \)
      a sequence of directives $\Ds$, and a sequence of observations
      $\Os$. Fix $\cmd', \st'$ such that:
      \[
        \sstep[\nat'] {\fence\sep \cmd, \st} {\cmd', \st'} \Ds \Os.
      \]
      The proof goes by cases on $\bms$.
      \begin{proofcases}
        \proofcase{$\top$} In this case, no transition is possible, so we have $\nat'=0$, $\sttwo' = \sttwo$ and $\cmd' = \fence\sep \cmd$, therefore the conclusion is trivial.
        \proofcase{$\bot$} In this case, if $\nat'=0$, the conclusion is trivial (see previous case), otherwise, we have
        \[
          \sstep {\fence\sep \cmd, \st} {\cmd, \reg_1, \nil, \overline{\bm {\mbuf_1} {\mem_1}}, \bot, \idx} \dstep \onone
        \]
        and
        \[
          \sstep {\fence\sep \cmd, \sttwo} {\cmd, \reg_2, \nil, \overline{\bm {\mbuf_2} {\mem_2}}, \bot, \idx} \dstep \onone
        \]
        In order to apply (H) and conclude the proof, we need to establish
        ${\reg_1, \nil, \overline{\bm {\mbuf_1} {\mem_1}}, \bot, \idx}\confeq{\env_2} {\reg_2, \nil, \overline{\bm {\mbuf_2} {\mem_2}}, \bot, \idx}$, which requires some additional reasoning. In particular, observe that
      \begin{gather*}
        \forall \vx \in \Regs.\pi_1({\env_1} (\vx)) = \tylow \Rightarrow {\env_2}(\vx) = \typair \tylow\\
        \forall \vx \in \Regs.\pi_1({\env_1} (\vx)) = \tyhigh \Rightarrow {\env_2} (\vx) = \typair \tyhigh\\
        \forall \ar \in \Arr.\env_1(\ar) = \tylow \Rightarrow {\env_2}(\ar) = \tylow\\
        \forall \ar \in \Arr.\env_1(\ar) = \tyhigh \Rightarrow \env_2 (\ar) = \tyhigh.
      \end{gather*}
      With this observation and \Cref{lemma:auxfence}, we deduce that:
      \[
        \reg_1, \mbuf_1, \mem_1, \bot
        \confeq{{\env_2}}
        \reg_2, \mbuf_2, \mem_2, \bot
      \]
      From this intermediate claim and \Cref{lemma:fenceflush}, we conclude
      \[
        {\reg_1, \nil, \overline{\bm {\mbuf_1} {\mem_1}}, \bot, \idx}\confeq{\env_2} {\reg_2, \nil, \overline{\bm {\mbuf_2} {\mem_2}}, \bot, \idx}.
      \]
      \end{proofcases}

      \proofcase{\ref{TY:TDfence}} In this case, we can assume that the type derivation has the following shape:
      \small
      \[
        \inferrev[\textsc{TSeq}]
        {\pt \deriv \sjudg {\env_1} {\dfencex\sep \cmd} {\env_2}}
        {
          \inferrev[\textsc{TDfence}]{\sjudg {\env_1} {\dfencex} {\supdate {\env_1} \vx {\typair{\pi_1(\env_1(\vx))}}} }
          {}\ \
          \pt'
        }
      \]
      \normalsize
      With
      $\pt'\deriv \sjudg {\supdate {\env_1} \vx
        {\typair{\pi_1(\env_1(\vx))}} } {\cmd} {\env_2}$.
      We fix $\nat \in \Nat$, and our goal is to prove:
      \begin{multline*}
        \sem {\sjudg[] {\env_1} {\dfencex\sep \cmd} {\env_3}}^{0}_\nat \land
        \forall \idx \in \Nat.\ \sem \jmpctx^{\idx}_\nat \Rightarrow \\
        \sem {\sjudg[] {\env_1} {\dfencex \sep \cmd} {\env_3}}^{\idx+1}_\nat.
      \end{multline*}
      By reasoning similar to the previous case, with the help of \Cref{lemma:ihaux}, we can reduce our goal to proving
      \begin{multline*}
        \hspace{-1em}\forall \idx \in \Nat. \sem {\sjudg[] {{\supdate {\env_1} \vx {\typair{\pi_1(\env_1(\vx))}}}} {\cmd} {\env_3}}^{\idx}_\nat\Rightarrow\\
        \sem {\sjudg[] {\env_1} {\dfencex\sep \cmd} {\env_3}}^{\idx}_\nat,
      \end{multline*}
      Fix $\idx$. Our goal is to establish:
      \[
        \tag{C}
        \sem {\sjudg[] {\env_1} {\dfencex\sep \cmd} {\env_3}}^{\idx}_\nat
      \]
      under the assumption
      \[
        \tag{H}
        \sem {\sjudg[] {{\supdate {\env_1} \vx {\typair{\pi_1(\env_1(\vx))}}}} {\cmd} {\env_3}}^{\idx}_\nat
      \]
      By expanding the definition of $\sem\cdot$,
      we rewrite the claim as follows:
      \begin{multline*}
        \forall \nat'\le\nat. \forall \st, \sttwo \in \States_{\idx}. \forall \Ds, \Os. \st \confeq{\env_1} \sttwo \Rightarrow\\
        \forall \cmd', \st'.\ \sstep[\nat'] {\dfencex\sep \cmd, \st} {\cmd', \st'} \Ds \Os \Rightarrow\\
        \exists \sttwo'.\ \sstep[\nat'] {\dfencex\sep \cmd, \sttwo} {\cmd', \sttwo'} \Ds \Os  \land (\cmd'=\cnil \Rightarrow  \st' \confeq{\env_3} \sttwo').
      \end{multline*}
      Fix $\nat'$, $\st = \reg_1, \mbuf_1, \mem_1, \bms, \idx$ and $\sttwo= \reg_2, \mbuf_2, \mem_2, \bms, \idx$,
      a sequence of directives $\Ds$, and a sequence of observations
      $\Os$. Fix $\cmd', \st'$ such that:
      \[
        \sstep[\nat'] {\dfencex\sep \cmd, \st} {\cmd', \st'} \Ds \Os.
      \]
      Then, observe
      \[
        \sstep {\dfencex\sep \cmd, \st} {\cmd, \st[\vx \upd z]} \dstep \onone
      \]
      and
      \[
        \sstep {\dfencex\sep \cmd, \sttwo} {\cmd, \sttwo[\vx \upd z]} \dstep \onone,
      \]
      where $z = \cif {\bms =\top }{\bot} \vx$.  In order to establish
      the claim, we can apply the inductive hypothesis on
      $\pt', \nat$, and instantiate $\nat'$ in (H) with $\nat'-1$. To conclude, we then just need to
      show:
      $\st[\vx \upd z] \confeq{\supdate {\env_1} \vx
        {\typair{\pi_1(\env_1(\vx))}}} \sttwo[\vx \upd z]$, which can
      be showed by expanding the definition of $\confeq{}$ and by
      applying \Cref{lemma:sfenceaux}.

      \proofcase{\ref{TY:TAsgn}} In this case, we can assume that the type derivation has the following shape:
      \[
        \inferrev[\textsc{TSeq}]
        {\pt \deriv \sjudg {\env_1} {\vx \asgn \expr\sep \cmd} {\env_2} }
        {
          \inferrev[\textsc{TAsgn}]{\sjudg {\env_1} {\vx \asgn \expr} {\supdate {\env_1} \vx \tty}}
          {\sjudg {\env_1} {\expr} {\tty} }
          \quad
          \pt_2
        }
      \]
      With
      $\pt_ 2 \deriv \sjudg {\supdate {\env_1} \vx \tty} {\cmd}
      {\env_2} $.
            We fix $\nat \in \Nat$, and our goal is to prove:
            \begin{multline*}
        \sem {\sjudg[] {\env_1} { \vx\asgn \expr\sep \cmd} {\env_3}}^{0}_\nat \land
        \forall \idx \in \Nat.\ \sem \jmpctx^{\idx}_\nat \Rightarrow\\ \sem {\sjudg[] {\env_1} {\vx \asgn \expr \sep \cmd} {\env_3}}^{\idx+1}_\nat.
      \end{multline*}
      As in the previous cases, by applying \Cref{lemma:ihaux} to the inductive hypothesis on
      $\pt_2, \nat$, we can reduce our claim to proving
      \[
        \tag{C}\sem {\sjudg[] {\env_1} {\vx\asgn \expr\sep \cmd} {\env_3}}^{\idx}_\nat
      \]
      under the assumption
      \[
        \tag{H}\sem {\sjudg[] {\supdate {\env_1} \vx \tty} {\cmd} {\env_3}}^{\idx}_\nat,
      \]
      for a fixed $\idx$. By expanding the definition of $\sem\cdot$,
      we rewrite the claim as follows:
      \begin{multline*}
        \forall \nat'\le\nat. \forall \st, \sttwo \in \States_{\idx}. \forall \Ds, \Os. \st \confeq{\env_1} \sttwo \Rightarrow\\
        \forall \cmd', \st'.\ \sstep[\nat'] {\vx\asgn \expr\sep \cmd, \st} {\cmd', \st'} \Ds \Os \Rightarrow\\
        \exists \sttwo'.\ \sstep[\nat'] {\vx\asgn \expr\sep \cmd, \sttwo} {\cmd', \sttwo'} \Ds \Os  \land (\cmd'=\cnil \Rightarrow  \st' \confeq{\env_3} \sttwo').
      \end{multline*}
      If $\nat'=0$, then the conclusion is trivial. Therefore, in the
      following we can assume $\nat' > 0$ without loss of
      generality. Under this assumption, from
      $\st \confeq{\env_1} \sttwo$, we deduce $\reg_1 \boteq \reg_2$, therefore
      $\sem\expr_\st = \bot \Leftrightarrow \sem\expr_\sttwo = \bot$.
      The proof proceeds by case analysis. If $\sem\expr_\st = \bot$ and
      $\sem\expr_\sttwo = \bot$, then the reduction sequence must be empty, i.e., $\nat'=0$,
      which contradicts our assumption on $\nat'$, therefore, we must be in the case where
      $\sem\expr_\st \neq \bot$ and $\sem\expr_\sttwo \neq \bot$.
      Observe
      \[
        \sstep {\vx \asgn \expr\sep \cmd, \st} {\cmd, \st[\vx \upd \sem \expr_\st]} \dstep \onone
      \]
      and
      \[
        \sstep {\vx \asgn \expr\sep \cmd, \sttwo} {\cmd, \sttwo[\vx \upd \sem \expr_\sttwo]} \dstep \onone.
      \]
      In order to conclude the proof of this sub-derivation, we can
      employ (H). To this end, we need to conclude
      $\confone_1'\confeq{{\supdate {\env_1} \vx \tty}} \confone_2'$.
      In turn, this conclusion follows from \Cref{lemma:asgnaux} and
      \[
        \supdate {\reg_1}\vx {\sem \expr_{\reg_1}} \boteq \supdate {\reg_2}\vx {\sem \expr_{\reg_2}}.
      \]

      \proofcase{\ref{TY:TIf}} In this case, we can assume that the type derivation is the one below:
      \[
        \inferrev[\textsc{TSeq}]
        {\pt \deriv \sjudg {\env_1} {{\cif \expr {\cmd_1} {\cmd_2}}\sep \cmd} {\env_3}}
        {
          \inferrev[\textsc{TIf}]
          {
            \sjudg {\env_1} {\cif \expr {\cmd_1} {\cmd_2}} {\env_2}
          }
          {
            \sjudg[] {\env_1} {\expr} {\tty}
            \quad
            \sjudg {\env_1} {\cmd_1} {\env_2}
            \quad
            \sjudg {\env_1} {\cmd_2} {\env_2}
          }
          \quad
          \pt_2
        }
      \]
      With $\pt_2 \deriv \sjudg {\env_2} {\cmd} {\env_3}$.
      Fix $\nat \in \Nat$. Our proof obligation is to establish:
      \begin{gather*}
        \tag{C1}
        \sem {\sjudg[] {\env_1} {{\cif \expr {\cmd_1} {\cmd_2}}\sep \cmd} {\env_3}}^0_\nat\\
        \tag{C2}
        \begin{multlined}
        \forall \idx \in \Nat.\ \sem \jmpctx^\idx_\nat \Rightarrow\\ \sem {\sjudg[] {\env_1} {{\cif \expr {\cmd_1} {\cmd_2}}\sep \cmd} {\env_3}}^{\idx+1}_\nat.
      \end{multlined}
    \end{gather*}
      For brevity's sake we will only show the proof of (C2), the one of
      (C1) is structurally identical, but less complex
      as it does not require reasoning about $\sem \jmpctx^\idx_\nat$.
      Fix $\idx \in \Nat$,
      assume:
      \begin{equation}
        \tag{H}
        \sem \jmpctx^\idx_\nat,
      \end{equation}
      and we aim to establish:
      \begin{equation}
        \tag{C}
        \sem {\sjudg[] {\env_1} {{\cif \expr {\cmd_1} {\cmd_2}}\sep \cmd} {\env_3}}^{\idx+1}_\nat.
      \end{equation}
      By expanding the definition of $\sem\cdot$, the claim becomes:
      \small
      \begin{multline*}
        \forall \nat'\le\nat. \forall \st, \sttwo \in \States_{\idx+1}. \forall \Ds, \Os. \st \confeq{\env_1} \sttwo \Rightarrow\\
        \forall \cmd', \st'.\ \sstep[\nat'] {{\cif \expr {\cmd_1} {\cmd_2}}\sep \cmd, \st} {\cmd', \st'} \Ds \Os \Rightarrow\\
        \exists \sttwo'.\ \sstep[\nat'] {{\cif \expr {\cmd_1} {\cmd_2}}\sep \cmd, \sttwo} {\cmd', \sttwo'} \Ds \Os  \land\\
        (\cmd'=\cnil \Rightarrow  \st' \confeq{\env_3} \sttwo').
      \end{multline*}
      \normalsize
      Fix $\nat'$, $\st, \sttwo \in \States_{\idx+1}$, a sequence of directives $\Ds$, and a sequence of observations $\Os$. Fix $\cmd', \st'$ such that:
      \begin{gather*}
        \tag{H1}
        \st \confeq{\env_1} \sttwo\\
        \tag{H2}
        \sstep[\nat'] {{\cif \expr {\cmd_1} {\cmd_2}}\sep \cmd, \st} {\cmd', \st'} \Ds \Os
      \end{gather*}
      Our goal is to prove that there exists $\sttwo'$ such that:
      \begin{gather*}
        \tag{CA}
        \sstep[\nat'] {{\cif \expr {\cmd_1} {\cmd_2}}\sep \cmd, \sttwo} {\cmd', \sttwo'} \Ds \Os\\
        \tag{CB}
        \cmd'=\cnil \Rightarrow  \st' \confeq{\env_3} \sttwo'
      \end{gather*}
      If $\nat'=0$, then the conclusion is trivial. Hence, we assume
      $\nat' > 0$ without loss of generality. Under this assumption,
      from (H1) and \Cref{lemma:idexpr}, we deduce that either
      $\sem\expr_\st = \sem\expr_\sttwo = \bot$ or
      $\sem\expr_\st = \sem\expr_\sttwo = \val{}\in \Vals$.  As for
      the case of indirect jumps, notice that the first case
      contradicts the assumption $\nat'=0$, so we must be in the
      second one.  In this case, (H2) can be rewritten as:
      \small
        \begin{multline*}
        \sstep {{\cif \expr {\cmd_1} {\cmd_2}}\sep \cmd, \st}
        {} {{\dbranch b}} {{\obranch {\sem \expr}_\st}}\\
        {\cmd_i\sep \cmd, \st} \xrightarrow[\Ds']{\Os'}^{\nat'-1}{\cmd', \st'},
      \end{multline*}
      \normalsize
      with $i \in \{1, 2\}$.
      We also observe that:
      \[
        \sstep {{\cif \expr {\cmd_1} {\cmd_2}}\sep \cmd, \sttwo} {\cmd_i\sep \cmd, \sttwo} {{\dbranch b}} {{\obranch {\sem \expr}_\sttwo}},
      \]
      via Rule~\ref{SIE:SIf}. In particular, \Cref{lemma:idexpr} guarantees that
      ${\sem \expr}_\st = {\sem \expr}_\sttwo$.
      In order  to prove (CA), the goal becomes:
      \[
        \sstep[\nat'-1] {\cmd_i\sep \cmd, \sttwo} {\cmd', \sttwo'}
        {\Ds'}{\Os'}.
      \]
      Note that by combining (H) with the inductive hypothesis on the
      derivation $\sjudg {\env_1} {\cmd_i} {\env_2}$ and $\nat$
      we conclude:
      \[
        \tag{H'}
        \sem {\sjudg[] {\env_1} {\cmd_i} {\env_2}}^{\idx+1}_\nat.
      \]

      The proof proceeds by case analysis. If
      \[
        \sstep[\nat'-1] {\cmd_i, \st} {\cmd'', \st''}
        {\Ds'}{\Os'}
      \]
      for some ${\cmd'', \st''}$, then (CA) and (CB) are direct consequences of
      (H') and \Cref{lemma:sepaux}. Otherwise, introspection of the semantics
      shows that there must exist $\nat'' < \nat$, $\Os''$, and
      $\Ds''$ such that:
      \[
        \sstep[\nat'-1] {\cmd_i, \st} {\err}
        {\Ds''}{\Os''},
      \]
      \[
        \sstep[\nat''] {\cmd_i, \st} {\err}
        {\Ds''}{\Os''}
      \]
      or
      \[
        \sstep[\nat''] {\cmd_i, \st} {\cmd'', \st''}
        {\Ds''}{\Os''} \downarrow.
      \]
      The proof proceeds by case analysis. The first two cases lead to
      a contradiction, since we assumed that the configuration
      ${\cif \expr {\cmd_1} {\cmd_2}}\sep \cmd, \st$ reduces for at
      least $\nat'$ steps. Thus, we must be in the last case.  With
      another case analysis we observe that, if $\cmd'' \neq \cnil$,
      we again reach a contradiction using the same argument. We are
      then left with the case where $\cmd'' = \cnil$. In this case,
      the conclusion follows from the inductive hypothesis on
      $\pt_2, \nat$, (H)
      and \Cref{lemma:monotone2}.

      \proofcase{\ref{TY:TWhile}} In this case, we can assume that the type derivation is the following one:
      \[
        \inferrev[\textsc{TSeq}]
        {\pt \deriv \sjudg {\env} {{\cwhile \expr {\cmd}}\sep \cmdtwo} {\envtwo}}
        {
          \pt_1' \deriv \inferrev[\textsc{TWhile}]
          {
            \sjudg {\env} {\cwhile \expr {\cmd}} {\env}
          }
          {
            \pt_\expr\deriv \sjudg[] {\env} {\expr} {\tty}
            \quad
            \pt_1\deriv \sjudg {\env} {\cmd} {\env}
}
          \quad
          \pt_2
        }
      \]
With $          \pt_2 \deriv \sjudg {\env} {\cmdtwo} {\envtwo}$.
      We fix $\nat$, and we state our claim:
Our proof obligation is to establish:
      \begin{gather*}
        \tag{C1}
        \sem {\sjudg[] {\env} {{\cwhile \expr {\cmd} }\sep \cmdtwo} {\envtwo}}^0_\nat\\
        \tag{C2}
        \forall \idx \in \Nat.\ \sem \jmpctx^\idx_\nat \Rightarrow \sem {\sjudg[] {\env} {{\cwhile \expr {\cmd}}\sep \cmdtwo} {\envtwo}}^{\idx+1}_\nat.
      \end{gather*}
      For brevity's sake we will only show the proof of (C2), the one of
      (C1) is structurally identical, but less complex
      since it does not require reasoning about $\sem \jmpctx^\idx_\nat$.
      Fix $\idx \in \Nat$,
      assume:
      \begin{equation}
        \tag{H}
        \sem \jmpctx^\idx_\nat,
      \end{equation}
      and we aim to establish:
      \begin{equation}
        \tag{C}
        \sem {\sjudg[] {\env} {\cwhile \expr {\cmd} \sep \cmdtwo} {\envtwo}}^{\idx+1}_\nat.
      \end{equation}
      By expanding the definition of $\sem\cdot$, the claim becomes:
      \small
      \begin{multline*}
        \forall \nat'\le\nat. \forall \st, \sttwo \in \States_{\idx+1}. \forall \Ds, \Os. \st \confeq{\env} \sttwo \Rightarrow\\
        \forall \cmd', \st'.\ \sstep[\nat'] {\cwhile \expr {\cmd} \sep \cmdtwo, \st} {\cmd', \st'} \Ds \Os \Rightarrow\\
        \exists \sttwo'.\ \sstep[\nat'] {\cwhile \expr {\cmd} \sep \cmdtwo, \sttwo} {\cmd', \sttwo'} \Ds \Os  \land\\ (\cmd'=\cnil \Rightarrow  \st' \confeq{\envtwo} \sttwo').
      \end{multline*}
      \normalsize We fix $\nat'$, directives $\Ds$ and $\Os$, states
      $\st \confeq{\env} \sttwo$. We also fix $\cmd'$ and $\st'$ such that:
      \[
        \sstep[\nat'] {\cwhile \expr {\cmd} \sep \cmdtwo, \st} {\cmd', \st'} \Ds \Os
      \]
      Our proof obligation now imposes us to show that there exists $\sttwo'$ such that:
      \begin{gather*}
        \tag{CA}
        \sstep[\nat'] {\cwhile \expr {\cmd} \sep \cmdtwo, \sttwo} {\cmd', \sttwo'} \Ds \Os  \\
        \tag{CB}
        \cmd'=\cnil \Rightarrow  \st' \confeq{\envtwo} \sttwo'.
      \end{gather*}
      Notice that if $\nat'=0$, then the
      conclusion is trivial. Therefore, in the following, we assume
      that $0< \nat'\le \nat$ without loss of generality. This means
      that $\Ds=\dir:\Ds'$ and $\Os=\obs:\Os'$, and that we can assume
      \[
        \sstep {\cwhile \expr {\cmd} \sep \cmdtwo, \st} {\cmd'', \st} \dir \obs \xrightarrow[\Ds'] {\Os'} {\cmd', \st'}.
      \]
      By introspection of the semantics we conclude
      $\dir = \dbranch \bool$ and that
      $\obs = \obranch{{\sem \expr}_\st}$.  Note that from
      $\sjudg[] {\env} {\expr} {\tty}$ and \Cref{lemma:idexpr}, we
      deduce $\sem \expr_\st = \sem \expr_\sttwo$, therefore we also have:
      \[
        \sstep {\cwhile \expr {\cmd} \sep \cmdtwo, \sttwo} {\cmd'', \sttwo} \dir \obs .
      \]
      To conclude (CA) we need to establish:
      \[
        {\cmd'', \sttwo} \xrightarrow[\Ds'] {\Os'} {\cmd', \sttwo'}
      \]
      The proof of proceeds by cases on $\bool$.
      \begin{proofcases}
        \proofcase{$\bool= \bot$} In this case, we have
        $\cmd'' = \cmdtwo$, therefore the conclusions (CA) and (CB) are direct
        consequences of the inductive hypothesis on $\pt_2$, $\nat$ and (H).
        \proofcase{$\bool= \top$} In this case, we have
        \[
          \cmd'' = \cmd \sep \cwhile \expr \cmd \sep \cmdtwo.
        \]
        The proof proceeds by case analysis. If
        \newcommand{\cmdthree}{\mathsf{H}}
        \newcommand{\stthree}{u}
        \[
          \sstep[\nat'-1] {\cmd, \st} {\cmdthree, \stthree}
          {\Ds'}{\Os'}
        \]
        for some ${\cmdthree, \stthree}$,
        then (CA) is direct consequence of the induction hypothesis
        on $\pt_1$, $\nat$ and \Cref{lemma:sepaux}. Note that (CB) is
        trivially satisfied because the configuration reached after $\nat'-1$
        reductions from $\cmd''$ cannot be $\cnil$.  Otherwise, introspection of the
        semantics shows that there must exist $\nat'' < \nat'-1$,
        $\Os''$, and $\Ds''$ such that:
        \[
          \sstep[\nat'-1] {\cmd, \st} {\err}
          {\Ds''}{\Os''},
        \]
        \[
          \sstep[\nat''] {\cmd, \st} {\err}
          {\Ds''}{\Os''},
        \]
        or
        \[
          \sstep[\nat''] {\cmd, \st} {\cmdthree, \stthree}
          {\Ds''}{\Os''} \downarrow.
        \]
        The proof proceeds by case analysis. The first two cases lead
        to a contradiction, since we assumed that the configuration
        ${\cwhile \expr {\cmd} \sep \cmdtwo, \sttwo}$ reduces for at
        least $\nat'$ steps and reaches a configuration different from
        $\err$. Thus, we must be in the third case.  In this case, if
        $\cmd'' \neq \cnil$, we again reach a contradiction using the
        same argument. We are then left with the case where
        $\cmd'' = \cnil$. In this case, the conclusion follows from
        the inductive hypothesis on $\pt, \nat-\nat''$, (H), and \Cref{lemma:monotone2}.
      \end{proofcases}

      \proofcase{\ref{TY:TLoad}}
      In this case, we can assume that the type derivation has the following shape:
      \[
      \hspace{-0.5em}
          \inferrev[\textsc{TSeq}]
          {\pt \deriv \sjudg {\env_1} {\cmemread \vx  {\ar[\expr]}\sep \cmd} {\env_2} }
          {
            \inferrev[\textsc{TLoad}]
            {
              \sjudg {\env_1} {\cmemread \vx  {\ar[\expr]}} {\supdate {\env_1} \vx {\tty}}
            }
            {
              \sjudg {\env_1} {\expr} {\typair \tylow}
              \quad
              \tty = (\env_1(a), \tyhigh)
            }
            \quad
            \pt_2
          }
        \]
        With $\pt_2 \deriv \sjudg {\supdate {\env_1} \vx {\tty}} {\cmd} {\env_2}$.
        We fix $\nat \in\Nat$, and by applying \Cref{lemma:ihaux},
        we can reduce our goal to proving:
        \[
          \tag{C}
          \sem {\sjudg[] {{\env_1}} {\cmemread \vx  {\ar[\expr]}} {\env_2}}^\idx_{\nat}
        \]
        under the assumption
        \[
          \tag{H}
          \sem {\sjudg[] {\supdate {\env_1} \vx {\tty}} {\cmd} {\env_2}}^\idx_{\nat}
        \]
        for a fixed $\idx \in \Nat$. By expanding the definition of $\sem \cdot$,
        we can rephrase the claim as follows:
        \begin{multline*}
          \forall \nat'\le\nat. \forall \st, \sttwo \in \States_{\idx}. \forall \Ds, \Os. \st \confeq{\env_1} \sttwo \Rightarrow\\
          \forall \cmd', \st'.\ \sstep[\nat'] {\cmemread \vx {\ar[\expr]} \sep \cmd, \st} {\cmd', \st'} \Ds \Os \Rightarrow\\
          \exists \sttwo'.\ \sstep[\nat'] {\cmemread \vx {\ar[\expr]} \sep \cmd, \sttwo} {\cmd', \sttwo'} \Ds \Os  \land (\cmd'=\cnil \Rightarrow  \st' \confeq{\env_2} \sttwo').
        \end{multline*}
        Fix $\nat' \in \Nat$,
        $\st = \reg_1, \mbuf_1, \mem_1, \bms, \idx$ and
        $\sttwo= \reg_2, \mbuf_2, \mem_2, \bms, \idx$ such that
        $\st \confeq{\env} \sttwo$, directives $\Ds$ and observations
        $\Os$. Note that when $\nat'=0$, the claim is
        trivial. Therefore, in the following, we assume without loss
        of generality $0< \nat'\le \nat$. By leveraging this
        observation and by introspection of the semantics, this also
        means that $\Os = \omem {(a, {\sem \expr}_\st)}: \Os'$ and
        $\Ds = \dir: \Ds'$ for some $\dir$, $\Ds'$, $\Os'$.  Under
        this assumption, our proof obligation is showing
        \begin{gather*}
          \tag{CA}
          \sstep[\nat'] {\cmemread \vx {\ar[\expr]} \sep \cmd, \sttwo}
          {\cmd', \sttwo'} \Ds \Os \\
          \tag{CB}
          \cmd'=\cnil \Rightarrow \st'
          \confeq{\env_2} \sttwo'
        \end{gather*}
        where $\cmd', \st'$ are such that
        \[
          \sstep {\cmemread \vx {\ar[\expr]} \sep \cmd, \st} {\cmd'',
            \st''} \dir {\omem {(a, {\sem \expr}_\st)}} \xrightarrow[\Ds']{\Os'}\!\!{}^{\nat'-1}\, \cmd', \st'.
        \]
        Also note that from $\sjudg {\env_1} {\expr} {\typair \tylow}$,
        $\st \confeq{\env} \sttwo$ and \Cref{lemma:tyexpr}, we deduce
        that $\sem \expr_{\sttwo} =\sem \expr_{\st} = h$. The proof
        proceeds by case analysis on the rule that was used to show
        the transition.

\begin{proofcases}
          \proofcase{\ref{SIE:SLoad-PSF}} By introspection of the
          rule, we deduce that $\dir = \dload j$,
          $\sem \expr_{\st}\in \Nat$, that the misspeculation flag of
          the target configurations is $\top$, and that
          $\bufreadi {\mbuf_2} j = \val{}_1$. Since
          $\sem \expr_{\sttwo} =\sem \expr_{\st} = h$,
          Rule~\ref{SIE:SLoad-PSF} also applies to
          ${\cmemread \vx {\ar[\expr]} \sep \cmd, \sttwo}$ consuming
          the same directive and producing the same observation. More
          precisely, in this sub-derivation, we have:
          \small
          \begin{gather*}
            \sstep {\cmemread \vx {\ar[\expr]} \sep \cmd, \st} {\cmd,
               {\supdate{\reg_1}{\vx} {\val{}_1}, \mbuf_1, \mem_1, \top, \idx}} {\dload j} {\omem {(a, h)}} \\
            \sstep {\cmemread \vx {\ar[\expr]} \sep \cmd, \sttwo} {\cmd,
              {\supdate{\reg_2}{\vx} {\val{}_2}, \mbuf_2, \mem_2, \top, \idx}} {\dload j} {\omem {(a, h)}},
          \end{gather*}
          \normalsize
          where
          \begin{align*}
            \val{}_1 &= {\bufreadi {\mbuf_1} j}\\
            \val{}_2 &= {\bufreadi {\mbuf_2} j}.
          \end{align*}

          We can conclude the sub-derivation by employing (H),
          provided that we can establish except for
          \small
          \[
            {\supdate{\reg_1}{\vx} {\val{}_1}, \mbuf_1, \mem_1, \top, \idx}
            \confeq{{\supdate
                {\env_1} \vx {\tty}}}
            {\supdate{\reg_2}{\vx} {\val{}_2}, \mbuf_2, \mem_2, \top, \idx},
          \]
          \normalsize
          which a consequence of $\st \confeq{\env} \sttwo$ and
          \Cref{lemma:sloadaux}, showing that
          $\supdate{\reg_1}\vx {\val{}_1} \steq[\top]{ {\supdate
              {\env_1} \vx {\tty}} } \supdate{\reg_2}\vx
          {\val{}_2}$. Also
          $\supdate{\reg_1}\vx {{\val{}_1}} \boteq
          \supdate{\reg_2}\vx {\val{}_2}$ follows easily
          from $\reg_1 \boteq \reg_2$ (which is a consequence of
          $\st \confeq{\env} \sttwo$) and the observation that, since
          Rule~\ref{SIE:SLoad-PSF} applied, we have
          \[
            {\bufreadi {\mbuf_1} j}, {\bufreadi {\mbuf_2} j}\in \Vals.
          \]

          \proofcase{\ref{SIE:SLoad-Step}} By introspection of the
          rule, we deduce that $\dir = \dstep$,
          $\sem \expr_{\st}\in \Nat$,
          $\sem \expr_{\st} \le \size \ar$. Since
          $\sem \expr_{\st} =\sem \expr_{\sttwo}$,
          Rule~\ref{SIE:SLoad-Step} also applies to
          ${\cmemread \vx {\ar[\expr]} \sep \cmd, \sttwo}$ with the
          same directive and observation. In summary, we have:
          \begin{align*}
            \sstep {\cmemread \vx {\ar[\expr]} \sep \cmd, \st} {\cmd,
            \supdate \st {\vx} {\val{}_1}} {\dstep} {\omem {(a, h)}} \\
            \sstep {\cmemread \vx {\ar[\expr]} \sep \cmd, \sttwo} {\cmd,
            \supdate \sttwo {\vx} {\val{}_2}} {\dstep} {\omem {(a, h)}},
          \end{align*}
          where
          \begin{align*}
            \val{}_1 &= \buflookup {\bm{\mbuf_1}{\mem_1}} {\ar,
            h}\\
            \val{}_2 &= \buflookup {\bm{\mbuf_2}{\mem_2}} {\ar,
            h}.
          \end{align*}
          Notice that, in comparison to the previous proof case, now
          we do not have anymore the guarantee that the misspeculation
          flag of the target configuration is $\top$, but we know that
          the loaded value is the most recent associated to
          $\ar, \sem\expr_{\reg_i}$ in the buffered memory. As before,
          we conclude the proof by applying (H), which requires proving
          $\supdate \st {\vx} {\val{}_1}\confeq{{\supdate {\env_1} \vx
              {\tty}}} \supdate \sttwo {\vx} {\val{}_2}$. To this end,
          observe that
          $\supdate{\reg_1}\vx {\val{}_1} \boteq \supdate{\reg_2}\vx
          {\val{}_2}$,
          $\mbuf_1 \steq[\bms]{{\supdate {\env_1} \vx {\tty}} }
          \mbuf_2$ and
          $\mem_1 \steq{{\supdate {\env_1} \vx {\tty}} } \mem_2$ are
          trivial consequences of $\st \confeq{\env} \sttwo$, and
          $\supdate{\reg_1}\vx {v_1} \steq[\bms]{{\supdate {\env_1}
              \vx {\tty}}} \supdate{\reg_2}\vx {v_2}$ is a consequence
          of $\st \confeq{\env} \sttwo$, $\tty = (\env(\ar), \tyhigh)$, and
          \Cref{lemma:sloadstepaux}.

          \proofcase{\ref{SIE:SLoad-OOB}} By introspection of the
          rule, we deduce that $\dir = \doob {(\artwo, w)}$,
          $\sem \expr_{\reg_1}\in \Nat$,
          $\sem \expr_{\reg_1} \geq \size \ar$. Note that
          Rule~\ref{SIE:SLoad-OOB} also applies to
          ${\cmemread \vx {\ar[\expr]} \sep \cmd, \sttwo}$ with the
          same directive and observation. Moreover, by introspection
          of such rule, we deduce that $\bms = \top$. In summary, we
          have:
          \begin{align*}
            \sstep {\cmemread \vx {\ar[\expr]} \sep \cmd, \st} {\cmd,
            \supdate \st {\vx} {\val{}_1}} {\doob (\artwo, w)} {\omem {(a, h)}} \\
            \sstep {\cmemread \vx {\ar[\expr]} \sep \cmd, \sttwo} {\cmd,
            \supdate \sttwo {\vx} {\val{}_2}} {\doob (\artwo, w)} {\omem {(a, h)}},
          \end{align*}
          where
          \begin{align*}
            \val{}_1 &= \mem_1(\artwo, w)\\
            \val{}_2 &= \mem_2(\artwo,
            w).
          \end{align*}
          Again, we conclude the proof of the claim by applying (H),
          which requires us to prove
          $\supdate \st {\vx} {\val{}_1}\confeq{{\supdate {\env_1} \vx
              {\tty}}} \supdate \sttwo {\vx} {\val{}_2}$. To this end,
          observe that
          $\supdate{\reg_1}\vx {\val{}_1} \boteq \supdate{\reg_2}\vx
          {\val{}_2}$,
          $\mbuf_1 \steq[\top]{{\supdate {\env_1} \vx {\tty}} }
          \mbuf_2$ and
          $\mem_1 \steq{{\supdate {\env_1} \vx {\tty}} } \mem_2$ are
          trivial consequences of $\st \confeq{\env} \sttwo$. Finally,
          also observe that
          $\supdate{\reg_1}\vx {v_1} \steq[\top]{ {\supdate {\env_1}
              \vx {\tty}} } \supdate{\reg_2}\vx {v_2}$ is a
          consequence of $\st \confeq{\env} \sttwo$ and
          \Cref{lemma:sloadaux}.

          \proofcase{\ref{SIE:SLoad-SSB}} This case is analogous to the one of
          Rule~\ref{SIE:SLoad-PSF}.
        \end{proofcases}

        \proofcase{\ref{TY:TStore}}
        In this case, we can assume that the type derivation looks like the following one:
        \small
        \[
            \inferrev[\textsc{TSeq}]
            {
              \pt \deriv \sjudg {\env_1} {\cmemasgn  {\ar[\expr]} \exprtwo\sep \cmd} {\env_2}
            }
            {
              \inferrev[\textsc{TStore}]
              {
                \sjudg {\env_1} {\cmemasgn  {\ar[\expr]} \exprtwo} {{\supdate \env \ar {\max(\pi_1(\tty_\exprtwo),~\env(\ar))}}}
              }
              {
                \sjudg {\env_1} {\expr} {\typair \tylow}
                \quad
                \sjudg {\env_1} {\exprtwo} {\tty_\exprtwo}
              }
              \quad
              \pt_2
            }
          \]
          \normalsize
          where $\pt_2 \deriv\sjudg {{\supdate \env \ar {\max(\pi_1(\tty_\exprtwo),~\env(\ar))}}} \cmd \env_2 $. We fix $\nat \in\Nat$, and by applying \Cref{lemma:ihaux},
        we can reduce our goal to proving:
        \[
          \tag{C}
          \sem {\sjudg[] {{\env_1}} {\cmemasgn {\ar[\expr]}\exprtwo} {\env_2}}^\idx_{\nat}
        \]
        under the assumption
        \[
          \tag{H}
          \sem {\sjudg[] {\supdate \env \ar {\max(\pi_1(\tty_\exprtwo),~\env(\ar))}} {\cmd} {\env_2}}^\idx_{\nat}
        \]
        for a given $\idx \in \Nat$. By expanding the definition of $\sem \cdot$,
        we can rephrase the claim as follows:
        \begin{multline*}
          \forall \nat'\le\nat. \forall \st, \sttwo \in \States_{\idx}. \forall \Ds, \Os. \st \confeq{\env_1} \sttwo \Rightarrow\\
          \forall \cmd', \st'.\ \sstep[\nat'] {\cmemasgn  {\ar[\expr]} \exprtwo \sep \cmd, \st} {\cmd', \st'} \Ds \Os \Rightarrow\\
          \exists \sttwo'.\ \sstep[\nat'] {\cmemasgn {\ar[\expr]} \exprtwo  \sep \cmd, \sttwo} {\cmd', \sttwo'} \Ds \Os  \land (\cmd'=\cnil \Rightarrow  \st' \confeq{\env_2} \sttwo').
        \end{multline*}
        Fix $\nat' \in \Nat$,
        $\st = \reg_1, \mbuf_1, \mem_1, \bms, \idx$ and
        $\sttwo= \reg_2, \mbuf_2, \mem_2, \bms, \idx$ such that
        $\st \confeq{\env} \sttwo$, directives $\Ds$ and observations
        $\Os$. Note that when $\nat'=0$, the claim is
        trivial. Therefore, in the following, we assume without loss
        of generality $0< \nat'\le \nat$. By leveraging this
        observation and by introspection of the semantics, this also
        means that $\Os = \omem {(a, {\sem \expr}_\st)}: \Os'$ and
        $\Ds = \dir: \Ds'$ for some $\dir$, $\Ds'$, $\Os'$.  Under
        this assumption, our proof obligation is showing
        \begin{gather*}
          \tag{CA}
          \sstep[\nat'] {\cmemasgn {\ar[\expr]} \exprtwo \sep \cmd, \sttwo}
          {\cmd', \sttwo'} \Ds \Os \\
          \tag{CB}
          \cmd'=\cnil \Rightarrow \st'
          \confeq{\env_2} \sttwo'
        \end{gather*}
        where $\cmd', \st'$ are such that
        \[
          \sstep {\cmemasgn {\ar[\expr]} \exprtwo \sep \cmd, \st} {\cmd'',
            \st''} \dir {\omem {(a, {\sem \expr}_\st)}} \xrightarrow[\Ds']{\Os'}\!\!{}^{\nat'-1}\, \cmd', \st'.
        \]
        Note that from $\sjudg {\env_1} {\expr} {\typair \tylow}$,
        $\st \confeq{\env} \sttwo$ and \Cref{lemma:tyexpr}, we deduce
        that $\sem \expr_{\reg_2} =\sem \expr_{\reg_1}$. As for load
        operations, there are multiple rules matching store
        instructions that can have been applied for showing this
        transition. The proof proceeds by cases on that rule.
        \begin{proofcases}
          \proofcase{\ref{SIE:SStore-Step}} By introspection of the
          rule, we deduce that $\dir = \dstep$,
          $\sem \exprtwo_{\reg_1}\in \Vals$,
          $\sem \expr_{\reg_1}\in \Nat$, and
          $\sem \expr_{\reg_1}< \size{\ar}$.  Since
          $\sem \expr_{\reg_2} =\sem \expr_{\reg_1}$,
          Rule~\ref{SIE:SStore-Step} also applies to
          ${\cmemasgn {\ar[\expr]} \exprtwo \sep \cmd, \sttwo}$,
          producing the same observation.  In summary, we have:
          \begin{align*}
            \sstep {\cmemasgn {\ar[\expr]} \exprtwo \sep \cmd, \st}
            {\cmd, \bitem {(\ar, h)} {\val{}_1} \cons \st} \dstep {\omem {(a, h)}}\\
            \sstep {\cmemasgn {\ar[\expr]} \exprtwo \sep \cmd, \sttwo}
            {\cmd, \bitem {(\ar, h)} {\val{}_2}\cons \sttwo} \dstep {\omem {(a, h)}}
          \end{align*}
          where
          \begin{align*}
            \val{}_1 &= \sem \exprtwo_\st &
            \val{}_2 &= \sem \exprtwo_\sttwo.
          \end{align*}
          We conclude the proof of the claim by applying (H).  To
          discharge the premise of (H), we need to prove that
          \[
            \supdate \st {(\ar, h)} {\val{}_1}
            \confeq{\supdate \env \ar {\max(\pi_1(\tty_\exprtwo),~\env(\ar))}}
            \supdate \sttwo {(\ar, h)} {\val{}_2}
          \]

          Observe that $\reg_1 \boteq \reg_2$ is a consequence of
          $\st \confeq{\env} \sttwo$; $\mem_1 \steq{\env_1'} \mem_2$
          and $\reg_1 \steq[\bms]{{{\env_1'}} } \reg_2$ follow from
          $\st \confeq{\env} \sttwo$ and \Cref{lemma:steqweak}. To
          conclude, we only have to establish
          $\bitem{(\ar, \sem
            \expr_{\reg_1})}{h}\cons\mbuf_1
          \steq[\bms]{\env_1' } \bitem{(\ar, \sem
            \expr_{\reg_2})}{h}\cons\mbuf_2$. This
          result, in turn, is a consequence of \Cref{lemma:sstoreaux}.

          \proofcase{\ref{SIE:SStore-OOB}} By introspection of the
          rule, we deduce that $\dir = \doob {(\artwo, w)}$,
          $\sem \exprtwo_{\reg_1}\in \Vals$,
          $\sem \expr_{\reg_1}\in \Nat$, and
          $\sem \expr_{\reg_1} \ge \size\ar$. Since
          $\sem \expr_{\reg_2} =\sem \expr_{\reg_1}$, rule
          \ref{SIE:SStore-OOB} also applies to $\confone_2$ with the
          same directive and observation. Moreover, due to the
          safety assumption, we deduce that $\bms = \top$.
          In summary, we have:
          \begin{align*}
            \begin{multlined}
              \sstep {\cmemasgn {\ar[\expr]} \exprtwo \sep \cmd, \st}
              {} {\doob {(\artwo, w)}} {\omem {(a, h)}}\\ {\cmd, \reg_1, \bitem {(\artwo, w)} {\sem \exprtwo_{\reg_1}} \cons \mbuf_1, \mem_1, \top, \idx}
            \end{multlined}\\
            \begin{multlined}
              \sstep {\cmemasgn {\ar[\expr]} \exprtwo \sep \cmd, \sttwo}
              {} {\doob {(\artwo, w)}} {\omem {(a, h)}}\\ {\cmd, \reg_2, \bitem {(\artwo, w)} {\sem \exprtwo_{\reg_2}} \cons \mbuf_2, \mem_2, \top, \idx}.
            \end{multlined}
          \end{align*}

          We conclude the proof with an application of (H). To
          discharge its premise, we need to prove that the stores
          ${\reg_1, \bitem {(\artwo, w)} {\sem \exprtwo_{\reg_1}}
            \cons \mbuf_1, \mem_1, \top, \idx}$ and
          $\reg_2, \bitem {(\artwo, w)} {\sem \exprtwo_{\reg_2}} \cons
          \mbuf_2, \mem_2, \top, \idx$ are
          $\confeq{\supdate \env \ar
            {\max(\pi_1(\tty_\exprtwo),~\env(\ar))}}$-related. To this
          end, observe that $\reg_1 \boteq \reg_2$ is a consequence of
          $\st \confeq{\env} \sttwo$, 
          \[
            \mem_1 \steq{\supdate \env \ar
              {\max(\pi_1(\tty_\exprtwo),~\env(\ar))}} \mem_2,
          \]
          and
          $\reg_1 \steq[\top]{\supdate \env \ar
            {\max(\pi_1(\tty_\exprtwo),~\env(\ar))}} \reg_2$ follow from
          typability, $\st \confeq{\env} \sttwo$ and \Cref{lemma:steqweak}. The conclusion
          $\bitem{(\artwo, t)}{\sem\exprtwo_{\reg_1}}\cons\mbuf_1
          \steq[\top]{ {{\env_1'}} } \bitem{(\artwo,
            t)}{\sem\exprtwo_{\reg_1}}\cons\mbuf_2$ is trivial because
          the misspeculation flag is $\top$.
         \qedhere
        \end{proofcases}
      \end{proofcases}
    \end{proofcases}
  \end{proofcases}
\end{proof}

Our next goal is to prove that
$\sem{\sjudg[] \env \cmd \envtwo}^{\infty}_\nat$ is
Scott-continuous. To this end, we need two auxiliary lemmas relating
reductions sequences with the ordinary and indexed semantics.

\begin{lemma}
  \label{lemma:scottcontinuityaux1}
  Let $\confone_1= {\cmd, \reg_1, \mbuf_1, \mem_1, \bms}$ and
  $\confone_2$ be two
  configurations. For
  every $\nat, \Ds, \Os$, such that
  $\nsstep \nat{\confone_1}{\confone_2}\Ds \Os$ with the ordinary
  semantics, there exist $\idx, \idx'$ such that
  $\nsstep \nat{\confone_1, \idx}{\confone_2, \idx'}\Ds \Os$ with the indexed semantics.
\end{lemma}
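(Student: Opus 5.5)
The index only changes at jumps, so the plan is to start the indexed run with enough budget to cover every jump in the ordinary run. Fix a reduction sequence $\nsstep \nat{\confone_1}{\confone_2}\Ds \Os$ in the ordinary semantics. Let $k$ be the number of steps in it that use Rule~\ref{SIE:Jmp}; clearly $k \le \nat$. We choose $\idx = k$ (any $\idx \ge k$ would also work) and $\idx' = \idx - k$. The claim is that the same sequence of steps can be replayed in the indexed semantics starting from $(\confone_1, \idx)$, and that it ends in $(\confone_2, \idx')$.

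The proof is by induction on $\nat$, generalised over the starting configuration and over any index $\idx \ge k$. We peel off the first step.

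For $\nat = 0$ the claim is immediate with $\idx' = \idx$.

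For $\nat = \nat_0 + 1$, write the sequence as $\sstep{\confone_1}{\confone''}{\dir}{\obs}$ followed by $\nsstep{\nat_0}{\confone''}{\confone_2}{\Ds'}{\Os'}$, and split on the rule used in the first step.

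If the first step uses any rule other than Rule~\ref{SIE:Jmp}, the indexed semantics has an identical rule that leaves the index untouched. So $\sstep{\confone_1, \idx}{\confone'', \idx}{\dir}{\obs}$, and the tail still contains $k$ jumps with $\idx \ge k$. The induction hypothesis then yields the rest of the sequence.

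If the first step uses Rule~\ref{SIE:Jmp}, then $k \ge 1$, so $\idx \ge 1$. Write $\idx = j + 1$ and apply Rule~\ref{SIE:JmpNonZero}; this reaches the same continuation $\jmpcont(\sem\expr_\st)$ with index $j$. The tail has $k - 1$ jumps and $j \ge k - 1$, so the induction hypothesis applies.

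This case is the only real obstacle, and it is a bookkeeping one: we must never reach Rule~\ref{SIE:JmpZero}, which would divert the run into $\err$. The invariant ``index $\ge$ number of remaining jumps'' is exactly what rules this out.

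One mismatch needs checking. Rule~\ref{SIE:Jmp} emits the observation $\ojmp{\sem\expr_\st}$, whereas Rule~\ref{SIE:JmpNonZero} is written with $\obranch{\sem\expr_\st}$. We treat these as the same observation, since both record the jump target, and note that the indexed rules are intended to mirror the ordinary ones label for label. With that identification the directive and observation sequences $\Ds$ and $\Os$ coincide in the two runs, which concludes the proof.
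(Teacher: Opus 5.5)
Your proof is correct and matches the paper's: both argue by induction on the length of the run, replaying every non-jump step with the index unchanged and every jump via \textsc{JmpNonZero}; the only difference is that the paper takes any initial index $i \ge n$, which trivially bounds the number of jumps, whereas you use the sharper bound given by the jump count $k$. Your remark that \textsc{Jmp} emits $\ojmp{\cdot}$ while \textsc{JmpNonZero} emits $\obranch{\cdot}$ identifies a real notational inconsistency that the paper's one-line proof passes over, and identifying the two labels is the intended reading.
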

\begin{proof}
  It is sufficient to take any $\idx \ge \nat$, with this choice, the
  claim can be proven easily by induction on $\nat$.
\end{proof}

\begin{lemma}
  \label{lemma:scottcontinuityaux2}
  Let $\confone_1= {\cmd, \reg_1, \mbuf_1, \mem_1, \bms}$ and
  $\confone_2$ be two ordinary configurations. For every
  $\nat, \Ds, \Os$, such that
  $\nsstep \nat{\confone_1, \idx}{\confone_2, \idx'}\Ds \Os$ with the
  indexed semantics, we have
  $\nsstep \nat{\confone_1}{\confone_2, \idx'}\Ds \Os$ with the
  ordinary semantics.
\end{lemma}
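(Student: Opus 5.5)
We argue by induction on $\nat$, the length of the indexed reduction sequence, mirroring the proof of \Cref{lemma:scottcontinuityaux1} but in the opposite direction. The key observation is that the indexed semantics is a restriction of the ordinary one. Every indexed rule except \ref{SIE:JmpNonZero} and \ref{SIE:JmpZero} is literally the corresponding ordinary rule, with the index carried along unchanged. \ref{SIE:JmpNonZero} performs exactly the step of \ref{SIE:Jmp}, apart from decrementing the index and the form of the observation label. So we can erase the index from each step and obtain an ordinary step with the same directive and observation.

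\emph{Base case} $\nat = 0$. Then $\confone_2 = \confone_1$ and $\idx' = \idx$, and the empty ordinary reduction witnesses the claim.

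\emph{Inductive step} $\nat = \nat_0 + 1$. We split the sequence as a first step $\confone_1, \idx \to \confone_3, \idx_3$ with directive $\dir$ and observation $\obs$, followed by an $\nat_0$-step indexed sequence from $\confone_3, \idx_3$ to $\confone_2, \idx'$. We then case on the rule used for the first step.
\begin{itemize}
\item If it is any non-jump rule, the same rule applies in the ordinary semantics: its premises do not mention the index, and $\idx_3 = \idx$.
\item If it is \ref{SIE:JmpNonZero}, then $\idx = \idx_3 + 1$ and the target continuation is $\jmpcont(\sem\expr_\st)$. Rule \ref{SIE:Jmp} yields the same ordinary configuration with directive $\dstep$. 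The observation names differ ($\ojmp{}$ versus $\obranch{}$), so we read both as the same jump-target observation; otherwise we state the lemma up to this relabelling.
\item If it is \ref{SIE:JmpZero}, the target is $\err$. Since $\confone_2$ is an ordinary configuration, $\err$ must be terminal, which forces $\nat_0 = 0$ and $\confone_2 = \err$. This contradicts the hypothesis that $\confone_2$ is ordinary, so this case is vacuous.
\end{itemize}
In the first two cases, applying the induction hypothesis to the remaining $\nat_0$ steps and prepending the ordinary first step gives the claim. The extra index $\idx'$ in the conclusion is just the erased bookkeeping.

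The only real obstacle is the \ref{SIE:JmpZero} case: we must make sure the statement excludes $\err$ as a final configuration, or else handle it by noting that $\err$ has no successors. The mismatch in observation labels between the ordinary and indexed jump rules is a minor notational issue rather than a mathematical difficulty.
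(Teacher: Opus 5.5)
Your proof is correct and follows essentially the paper's argument: induction on $\nat$, replacing each indexed step by the corresponding ordinary rule, so that \textsc{JmpNonZero} becomes \textsc{Jmp} and every other rule carries over unchanged. Your extra care about the vacuous \textsc{JmpZero} case and the $\mathsf{br}$ versus $\mathsf{jmp}$ observation mismatch just spells out details that the paper's one-line proof leaves implicit.
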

\begin{proof}
  By induction on $\nat$. At each step it suffices to apply the
  corresponding rule of the ordinary semantics.
\end{proof}

It is now time to prove Scott-continuity.

\begin{lemma}[Scott-continuity]
  \label{lemma:sc}
  \[
    \forall \nat \in \Nat. \left(\forall \idx \in \Nat.\ \sem{\sjudg[] \env \cmd \envtwo}^{\idx}_\nat
    \right)\Rightarrow \sem{\sjudg[] \env \cmd \envtwo}^{\infty}_\nat
  \]
\end{lemma}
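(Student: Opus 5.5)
We prove the statement directly by unfolding the two interpretations in \Cref{fig:semjudg} and transporting reduction sequences between the ordinary and the indexed semantics with \Cref{lemma:scottcontinuityaux1} and \Cref{lemma:scottcontinuityaux2}. Fix $\nat$ and assume $\forall \idx.\ \sem{\sjudg[] \env \cmd \envtwo}^{\idx}_\nat$. To establish $\sem{\sjudg[] \env \cmd \envtwo}^{\infty}_\nat$, fix $\nat' \le \nat$, states $\st \confeq\env \sttwo$, directives $\Ds$ and observations $\Os$. Also fix a reduction $\sstep[\nat']{\cmd,\st}{\cmd',\st'}{\Ds}{\Os}$ in the ordinary semantics. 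We must produce a reduction $\sstep[\nat']{\cmd,\sttwo}{\cmd',\sttwo'}{\Ds}{\Os}$ such that $\cmd'=\cnil \Rightarrow \st'\confeq\envtwo\sttwo'$.

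The first step lifts the given reduction to the indexed world. By \Cref{lemma:scottcontinuityaux1} we choose an index $\idx \ge \nat'$. Then $\cmd,\st,\idx$ reduces in $\nat'$ steps to $\cmd',\st',\idx'$ with the same directives and observations. The choice $\idx\ge\nat'$ matters: each jump consumes one unit of index, so the index never reaches $0$ within $\nat'$ steps. Hence Rule~\ref{SIE:JmpZero} never fires and the target is an ordinary configuration, not $\err$.

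The second step pairs the indexed states. Because the rules for indexed configurations in \Cref{fig:confeq} only require the underlying ordinary states to be related, $\st\confeq\env\sttwo$ gives $(\st,\idx)\confeq\env(\sttwo,\idx)$, and both states lie in $\States_\idx$. We instantiate the hypothesis at this $\idx$. This yields a state $\sttwo'$ with $\sstep[\nat']{\cmd,(\sttwo,\idx)}{\cmd',\sttwo'}{\Ds}{\Os}$ in the indexed semantics, and, when $\cmd'=\cnil$, $(\st',\idx')\confeq\envtwo\sttwo'$.

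The third step projects back. Applying \Cref{lemma:scottcontinuityaux2} gives the same $\nat'$-step reduction from $\cmd,\sttwo$ in the ordinary semantics. Dropping the index component preserves the relation $\confeq\envtwo$ on the final states, by the indexed rule of \Cref{fig:confeq} read backwards.

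The main obstacle is the bookkeeping about indices. First, we must ensure that the simulating run from $\sttwo$ ends in an index-carrying ordinary configuration and not in $\err$. This holds because $\err$ is only $\confeq{}$-related to $\err$, and the $\st$-side run avoids $\err$. Second, we must ensure that the final indices on the two sides coincide, or at least that their difference does not matter. I would argue the first point directly: since both runs consume identical directives and, by construction, make the same sequence of jumps, they decrement the index identically, so the $\sttwo$-run ends at the same $\idx'$ and never hits $\err$. If this argument turns out to be delicate, I would instead choose $\idx$ strictly larger than $\nat'$ and appeal to the step-by-step determinism of index updates.
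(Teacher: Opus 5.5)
Your proof is correct and follows the paper's own argument: lift the run with \Cref{lemma:scottcontinuityaux1}, instantiate the hypothesis at that index, and project back with \Cref{lemma:scottcontinuityaux2}; the paper only phrases this as a contrapositive. Your index bookkeeping is more careful than the paper's, and the worry you raise resolves without the ``same sequence of jumps'' argument. For $\cmd'=\cnil$, inversion on the indexed rule of \Cref{fig:confeq} forces $\sttwo'$ to carry index $i'$ and to differ from $\mathsf{err}$. For $\cmd'\neq\cnil$, the run from $\sttwo$ ends in a configuration of the form $\cmd',\sttwo'$, so it cannot have reached the terminal configuration $\mathsf{err}$, and \Cref{lemma:scottcontinuityaux2} applies.
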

\begin{proof}
  The proof is by contraposition.  Fix $\nat$, assume
    \begin{gather*}
    \lnot \sem{\sjudg[] \env \cmd \envtwo}^{\infty}
    \tag{HA}\\
    \forall \idx . \sem{\sjudg[] \env \cmd \envtwo}^{\idx}
    \tag{HB}
  \end{gather*}
  By expanding (HA), we have:
  \begin{gather*}
    \tag{H1}
    \st \confeq\env \sttwo \\
    \tag{H2}
    \sstep[\nat'] {\cmd, \st} {\cmd', \st'} \Ds \Os \\
    \tag{H3}
    \forall \sttwo'. \lnot \sstep[\nat'] {\cmd, \sttwo} {\cmd', \sttwo'} \Ds \Os \lor
    \left (\cmd = \cnil \land
    \st' \not\confeq\envtwo \sttwo' \right)
  \end{gather*}
  for some $\nat', \st,\sttwo, \cmd', \st'$.
  By applying \Cref{lemma:scottcontinuityaux1} to (H2), we obtain
  \[
    \sstep[\nat'] {\cmd, \st, \idx} {\cmd', \st', \idx'} \Ds \Os \\
    \tag{H2'}
  \]
  By applying (HA) to $\nat', \nat',\st, \sttwo$, (H1), $\cmd', \st'$ and (H2') we prove
  \[
    \exists \sttwo', \idx'.\ \sstep[\nat'] {\cmd, \sttwo, \idx} {\cmd', \sttwo', \idx'} \Ds \Os  \land \cmd = \cnil \Rightarrow  \st', \idx \confeq\envtwo \sttwo', \idx'
  \]
  fix $\sttwo', \idx'$. By applying \Cref{lemma:scottcontinuityaux2} we
  establish $\sstep[\nat'] {\cmd, \sttwo} {\cmd', \sttwo'} \Ds \Os$
  with the non-indexed semantics. This last observation, in
  conjunction with $\cmd = \cnil \Rightarrow  \st', \idx \confeq\envtwo \sttwo', \idx'$
  contradicts (H3).
\end{proof}

Soundness of our type system for $\sem\cdot^\infty$ is now a direct
consequence of Scott-continuity and \Cref{lemma:mainlemma}.

\begin{corollary}[Soundness]
  \label{cor:soundness}
  \[
    \pt\deriv \sjudg[\emptyset] \env \cmd \envtwo \quad \Rightarrow \quad
    \forall \nat \in \Nat.\ \sem{\sjudg[] \env \cmd \envtwo}^{\infty}_\nat.
  \]
\end{corollary}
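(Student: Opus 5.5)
The corollary follows by chaining \Cref{lemma:mainlemma} with Scott-continuity (\Cref{lemma:sc}). Assume $\pt\deriv \sjudg[\emptyset] \env \cmd \envtwo$ and fix $\nat \in \Nat$. Apply \Cref{lemma:mainlemma} with this $\nat$, the derivation $\pt$, and $\jmpctx=\emptyset$. This gives, for every $\idx$, both $\sem{\sjudg[] \env \cmd \envtwo}^0_\nat$ and $\sem{\sjudg[\emptyset] \env \cmd \envtwo}^{\idx}_\nat$.

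Unfolding the definition in \Cref{fig:semjudg}, the second conjunct reads $\sem{\emptyset}^\idx_\nat \Rightarrow \sem{\sjudg[] \env \cmd \envtwo}^{\idx+1}_\nat$. The premise $\sem{\emptyset}^\idx_\nat$ quantifies over the empty jump context, so it holds vacuously. Hence $\sem{\sjudg[] \env \cmd \envtwo}^{\idx+1}_\nat$ holds for every $\idx$. Together with the index-$0$ conjunct, this yields $\forall \idx\in\Nat.\ \sem{\sjudg[] \env \cmd \envtwo}^{\idx}_\nat$, which is exactly \eqref{eq:C1} at this $\nat$.

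Since $\nat$ was arbitrary, \Cref{lemma:sc} turns this family of indexed statements into $\sem{\sjudg[] \env \cmd \envtwo}^{\infty}_\nat$ for every $\nat$, which is the claim.

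The only delicate point is bookkeeping. The index-$0$ case must come from the first conjunct, because the second conjunct only covers indices of the form $\idx+1$. One must also check that the vacuous discharge of $\sem{\emptyset}$ matches the quantifier structure of the definitions. Beyond that there is no real obstacle, since all the substantive work is in \Cref{lemma:mainlemma} and \Cref{lemma:sc}.
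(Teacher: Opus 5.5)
Your proposal is correct and follows the paper's own proof: instantiate \Cref{lemma:mainlemma} with the empty jump context to get $\sem{\sjudg[] \env \cmd \envtwo}^{i}_{n}$ for all $i$ and $n$, then apply Scott-continuity (\Cref{lemma:sc}). Your explicit handling of the index-$0$ conjunct and the vacuous discharge of $\sem{\emptyset}^{i}_{n}$ just spells out a step the paper leaves implicit.
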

\begin{proof}
  Assume
  \[
    \pt\deriv \sjudg[\emptyset] \env \cmd \envtwo
    \tag{H}
  \]
  The claim is:
  \[
    \forall \nat \in \Nat.\ \sem{\sjudg[] \env \cmd \envtwo}^{\infty}_\nat.
    \tag{C}
  \]
  From (H) and \Cref{lemma:mainlemma} (instantiated with $\jmpctx=\emptyset$), we prove
  \[
    \forall \nat, \idx \in \Nat.\ \sem{\sjudg[] \env \cmd \envtwo}^{\idx}_\nat.
  \]
  We conclude the proof with an application of \Cref{lemma:sc}.
\end{proof}

We now turn our attention to proving that typable programs are
constant-time, as stated in \Cref{thm:soundness}. To this end, we
start by proving that whenever
$\sem{\sjudg[] \env \cmd \envtwo}^{\infty}_\nat$ holds, then $\cmd$ is
$\confeq\env$-SCT, as stated by \Cref{lemma:glue}
below. \Cref{thm:soundness} will then be a direct consequence of
\Cref{cor:soundness} and \Cref{lemma:glue}.

\begin{lemma}
  \label{lemma:glue}
  If for every $\nat \in \Nat$, we have
  $\sem{\sjudg[] \env \cmd \envtwo}^{\infty}_\nat$, then
  $\cmd$ is $\confeq\env$-SCT.
\end{lemma}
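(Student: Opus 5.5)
The plan is to unfold the definition of $\confeq\env$-SCT and reduce it to the lock-step simulation given by $\sem{\sjudg[] \env \cmd \envtwo}^{\infty}_\nat$. Fix a directive sequence $\Ds$, a misspeculation flag $\bms$, and initial states $\st_1 \confeq\env \st_2$. By definition of $\confeq\env$ on ordinary configurations, both carry the same flag $\bms$. Consider two executions
\[
\sstep{\cmd,\st_1,\bms}{\cmd_1',\st_1',\bms_1}{\Ds}{\Os_1}
\qquad\text{and}\qquad
\sstep{\cmd,\st_2,\bms}{\cmd_2',\st_2',\bms_2}{\Ds}{\Os_2}.
\]
Each has some finite length, so let $\nat_1$ and $\nat_2$ be their lengths. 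Since every transition consumes exactly one directive, both lengths equal $\size\Ds$; call this common value $\nat$. We must show $\Os_1=\Os_2$.

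The key step is to instantiate the hypothesis at this $\nat$ with $\nat'=\nat$, the states $\st_1,\st_2$ (which satisfy $\st_1\confeq\env\st_2$), and the first execution. This yields some $\sttwo'$ with
\[
\sstep[\nat]{\cmd,\st_2}{\cmd_1',\sttwo'}{\Ds}{\Os_1},
\]
that is, an execution from $\st_2$ that follows the same directives $\Ds$ and produces the \emph{same} observations $\Os_1$. It remains to identify this witness run with the given run from $\st_2$. For that we use a determinism lemma: given a configuration and a directive, the semantics has at most one transition, and its observation is determined. This is checked rule by rule in \Cref{fig:specsem}. Each directive selects a unique rule. $\dstep$ selects the in-bounds rules, assignment, jumps, and fences. $\dbranch\bool$ selects \ref{SIE:SIf} or \ref{SIE:SWhile}. $\dload{i}$ for finite $i$ selects \ref{SIE:SLoad-PSF}, and $\dload{\infty}$ selects \ref{SIE:SLoad-SSB}. $\doob{(\artwo,i)}$ selects the OOB rules. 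In every case the successor and the observation are functions of the configuration. By induction on $\nat$, the two runs from $\st_2$ under $\Ds$ then coincide, which gives $\Os_2=\Os_1$.

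I expect the main obstacle to be bookkeeping rather than conceptual. First, the determinism claim must hold for the load rules. \ref{SIE:SLoad-Step} and \ref{SIE:SLoad-SSB} are separated by the directive, and \ref{SIE:SLoad-PSF} is separated from \ref{SIE:SLoad-SSB} by reading $i=\infty$ as the SSB index; I would state this convention explicitly. Second, an execution may be a prefix that stops early, for example because \ref{SIE:SFence} is stuck under $\top$ or an expression evaluates to $\bot$. This is harmless: the definition quantifies over executions consuming exactly $\Ds$, so both runs have length $\size\Ds$, and the hypothesis covers all $\nat'\le\nat$. Third, the quantification in $\sem\cdot^\infty$ ranges over ordinary states that include the flag, so the initial states with flag $\bms$ must be matched to those quantified states; this requires $\confeq\env$ to force equal flags, which the first rule of \Cref{fig:confeq} does. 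The postcondition conjunct about $\envtwo$ is not needed.
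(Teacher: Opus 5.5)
Your proposal is correct and is essentially the paper's proof. You instantiate the hypothesis at $\nat = \size\Ds$ to obtain a run from $\st_2$ under $\Ds$ that produces $\Os_1$, then use determinism of the speculative semantics to identify it with the given run from $\st_2$. The paper phrases this as a contradiction and simply asserts determinism, while you argue directly and check determinism rule by rule (including the $\infty$-index convention for SSB loads), so the difference is only in presentation.
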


\begin{proof}
  Assume that $\cmd$ is not $\confeq\env$-SCT. This means that there
  are $\bms$,  $\st_1 \confeq \env \st_2$ and two reductions
  \[
    \sstep{\cmd, \st_1}{\cmd'_1 , \st'_1}{\Ds}{\Os_1}
  \]
  and
  \[
    \sstep{\cmd, \st_2}{\cmd'_2 , \st'_2}{\Ds}{\Os_2}
  \]
  such that $\Os_1 \neq \Os_2$.  Let $\nat$ be the number of
  directives in $\Ds$. From $\sem{\sjudg[] \env \cmd \envtwo}^{\infty}_\nat$,
  we deduce that there is some $\sttwo$ such that
  \[
    \sstep[\nat]{\cmd, \st_2}{\cmd'_1 , \sttwo}{\Ds}{\Os_1}.
  \]
  Since our semantics is deterministic, it must be the case where
  $\cmd'_1 = \cmd'_2$, $\sttwo = \st'_2$ and,
  in particular, $\Os_1=\Os_2$.
\end{proof}

\begin{proof}[Proof of \Cref{thm:soundness}]
  The proof is by contraposition. Assume that
  $\pt\deriv \sjudg[\emptyset] \env \cmd \envtwo$ but $\cmd$ is
  not $\confeq\env$-SCT. From \Cref{lemma:glue}, we deduce
  $\lnot \sem{\sjudg[] \env \cmd \envtwo}^{\infty}_\nat$ for some $\nat \in \Nat$,
  which contradicts \Cref{cor:soundness}.
\end{proof}

\subsection{Technical observations}
\label{sec:apptech}

\begin{lemma}
  \label{lemma:sepaux}
  For every $\nat \in \Nat$, programs $\cmd, \cmd', \cmdtwo \in \Cmd$,
  states $\st,\st' \in \States$, sequence of directives $\Ds$ and sequence of observations
  $\Os$:
  \begin{align*}
    \sstep[\nat] {\cmd, \st} {\cmd', \st'} \Ds\Os\ \  &\Rightarrow\ \
                                                        \sstep[\nat] {\cmd\sep\cmdtwo, \st} {\cmd'\sep\cmdtwo, \st'} \Ds\Os.\\
    \sstep[\nat] {\cmd, \st} {\err} \Ds\Os\ \  &\Rightarrow\ \
                                                 \sstep[\nat] {\cmd\sep\cmdtwo, \st} {\err} \Ds\Os.
  \end{align*}
\end{lemma}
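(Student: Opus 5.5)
The plan is to prove both implications of \Cref{lemma:sepaux} simultaneously by induction on $\nat$. The first step is a one-step version: for every single transition $\sstep{\cmd, \st}{\cmd', \st'}{\dir}{\obs}$ we also have $\sstep{\cmd\sep\cmdtwo, \st}{\cmd'\sep\cmdtwo, \st'}{\dir}{\obs}$, and every single step $\sstep{\cmd, \st}{\err}{\dir}{\obs}$ lifts to $\sstep{\cmd\sep\cmdtwo, \st}{\err}{\dir}{\obs}$. Here $\cmd\sep\cmdtwo$ denotes sequential concatenation of instruction lists, which is associative, with $\cnil\sep\cmdtwo = \cmdtwo$.

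The one-step claim is proved by case analysis on the rule used. Every rule of \Cref{fig:specsem} has a conclusion of the form $\stat\sep\cmd_0, \st \to \cmd_1, \st'$. In each case $\cmd_1$ is either $\cmd_0$ itself (assignment, loads, stores, fences), or a program built by prefixing something to $\cmd_0$. The prefixing cases are $\cmdtwo_\bool\sep\cmd_0$ in \ref{SIE:SIf}, and in \ref{SIE:SWhile} either $\cmd\sep W\sep\cmd_0$ or $\cmd_0$. None of the premises inspect $\cmd_0$: they only depend on the state, the directive and the head instruction. So replacing $\cmd_0$ by $\cmd_0\sep\cmdtwo$ yields an instance of the same rule whose target is $\cmd_1\sep\cmdtwo$ by associativity. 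The error rule \ref{SIE:JmpZero} likewise does not depend on the continuation, so the $\err$ case lifts verbatim.

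The multistep statement then follows by induction on $\nat$. The base case $\nat=0$ is immediate for the first implication, and vacuous for the second, since a configuration of the form $\cmd, \st$ is not $\err$. In the inductive step we split off the first transition $\cmd,\st \to \cmd'',\st''$, lift it with the one-step claim, and apply the induction hypothesis to the remaining $\nat-1$ steps. For the $\err$ implication, the error is either produced in the last step, which lifts directly, or the first step already goes to $\err$; in the latter case $\nat=1$, because $\err$ has no successors.

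The main obstacle is the indirect jump rule. \ref{SIE:Jmp} (and \ref{SIE:JmpNonZero}) is written as $\kwd{jmp}\ \expr, \st \to \phi(\sem\expr_\st), \st$, i.e.\ jumps discard their continuation, whereas the proof of \Cref{lemma:mainlemma} uses it as $\kwd{jmp}\ \expr\sep\cmd \to \jmpcont(\loc)\sep\cmd$. I would resolve this by reading the jump rule in the continuation-preserving form used in that proof, under which the lifting is immediate like the other rules. If one insists on the discarding reading, the lemma must be restricted to reductions not passing through a top-level jump, and I would state that side condition explicitly.
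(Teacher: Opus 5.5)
Your proposal is correct and is essentially the paper's proof: induction on $\nat$, peeling off the first transition and doing a case analysis on the rule used, which you package as a separate one-step lifting claim. Your caveat about the jump rule is also justified. The paper's one-line argument tacitly uses the continuation-preserving reading $\kwd{jmp}\ \expr\sep\cmd \to \jmpcont(\loc)\sep\cmd$, the same one invoked in the proof of \Cref{lemma:mainlemma}, so adopting that reading explicitly is the right fix.
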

\begin{proof}
  By induction on $\nat$.
  \begin{proofcases}
    \proofcase{0} Trivial.
    \proofcase{$\nat= \nat'+1$} By case analysis on the rule used for the first transition and by applying the induction hypothesis to the target configuration.
  \end{proofcases}
\end{proof}

\begin{lemma}
  \label{lemma:ihaux}
  Fix $\stat \in \Instr$, $\cmd\in \Cmd$, $\nat\in \Nat$, jump context $\jmpctx$ and environments $\env_1, \env_2, \env_3$. If
  \[
    \sem {\sjudg[] {\env_2} {\cmd} {\env_3}}^{0}_\nat \land
    \forall \idx \in \Nat.\ \sem \jmpctx^{\idx}_\nat \Rightarrow \sem {\sjudg[] {\env_2} {\cmd} {\env_3}}^{\idx+1}_\nat.
  \]
  and
  $\forall \idx \in \Nat. \sem {\sjudg[] {\env_2} {\cmd} {\env_3}}^{\idx}_\nat \Rightarrow \sem {\sjudg[] {\env_1} {\stat\sep\cmd} {\env_3}}^{\idx}_\nat$, then:
  \[
    \sem {\sjudg[] {\env_1} {\stat\sep \cmd} {\env_3}}^{0}_\nat \land
    \forall \idx \in \Nat.\ \sem \jmpctx^{\idx}_\nat \Rightarrow \sem {\sjudg[] {\env_1} {\stat\sep \cmd} {\env_3}}^{\idx+1}_\nat.
  \]
\end{lemma}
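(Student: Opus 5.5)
This lemma is a purely logical bookkeeping fact, so I would prove it directly by unfolding the two conjuncts of the goal and discharging each from the hypotheses; no induction is needed. Write (A) for the first hypothesis, i.e. $\sem {\sjudg[] {\env_2} {\cmd} {\env_3}}^{0}_\nat$ together with $\forall \idx.\ \sem \jmpctx^{\idx}_\nat \Rightarrow \sem {\sjudg[] {\env_2} {\cmd} {\env_3}}^{\idx+1}_\nat$. Write (B) for the second hypothesis, the index-wise transfer $\sem {\sjudg[] {\env_2} {\cmd} {\env_3}}^{\idx}_\nat \Rightarrow \sem {\sjudg[] {\env_1} {\stat\sep\cmd} {\env_3}}^{\idx}_\nat$ for every $\idx$.

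For the left conjunct of the conclusion, I instantiate (B) at $\idx = 0$. Its premise $\sem {\sjudg[] {\env_2} {\cmd} {\env_3}}^{0}_\nat$ is exactly the first component of (A), so modus ponens yields $\sem {\sjudg[] {\env_1} {\stat\sep \cmd} {\env_3}}^{0}_\nat$.

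For the right conjunct, I fix $\idx \in \Nat$ and assume $\sem \jmpctx^{\idx}_\nat$. The second component of (A) then gives $\sem {\sjudg[] {\env_2} {\cmd} {\env_3}}^{\idx+1}_\nat$. Instantiating (B) at $\idx+1$, which is allowed since (B) is universally quantified over all indices, turns this into $\sem {\sjudg[] {\env_1} {\stat\sep \cmd} {\env_3}}^{\idx+1}_\nat$, as required.

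There is no genuine obstacle here. The only point to check is that (B) is quantified over all $\idx$, including successors, and not just over the index appearing in the jump-context hypothesis; this is precisely why the lemma is stated with a universally quantified transfer hypothesis. Note also that the definition of $\sem{\cdot}^\idx_\nat$ in \Cref{fig:semjudg} plays no role beyond being a proposition indexed by $\idx$ and $\nat$, so no semantic reasoning is required.
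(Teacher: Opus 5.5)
Your proof is correct and is essentially the paper's own argument. The first conjunct follows by applying the transfer hypothesis at index $0$ to the first component of the first hypothesis. For the second, you fix $i$, assume the jump-context interpretation at $i$, obtain the index-$(i+1)$ judgment for $P$ from the first hypothesis, and transfer it to $I\,;P$ using the second hypothesis at $i+1$.
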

\begin{proof}
  We fix our assumptions
    \begin{gather*}
    \tag{H1}
    \sem {\sjudg[] {\env_2} {\cmd} {\env_3}}^{0}_\nat \land
    \forall \idx \in \Nat.\ \sem \jmpctx^{\idx}_\nat \Rightarrow \sem {\sjudg[] {\env_2} {\cmd} {\env_3}}^{\idx+1}_\nat\\
    \tag{H2}
    \forall \idx \in \Nat. \sem {\sjudg[] {\env_2} {\cmd} {\env_3}}^{\idx}_\nat \Rightarrow \sem {\sjudg[] {\env_1} {\stat\sep\cmd} {\env_3}}^{\idx}_\nat,
  \end{gather*}
  and our claims:
  \begin{gather*}
    \tag{C1}
        \sem {\sjudg[] {\env_1} {\stat\sep \cmd} {\env_3}}^{0}_\nat \\
    \tag{C2}
    \forall \idx \in \Nat.\ \sem \jmpctx^{\idx}_\nat \Rightarrow \sem {\sjudg[] {\env_1} {\stat\sep \cmd} {\env_3}}^{\idx+1}_\nat.
  \end{gather*}
  \begin{proofcases}
    \proofcase{C1} Direct consequence of (H2) and the first conjunct of (H1).
    \proofcase{C2} Let $j$ be an arbitrary index. We must prove
    \[
      \sem {\sjudg[] {\env_1} {\stat\sep \cmd} {\env_3}}^{j+1}_\nat
    \]
    under the assumption
    \[
      \tag{H}
      \sem \jmpctx^{j}_\nat.
    \]
    The conclusion is a direct consequence of (H2), the second conjunct in (H1) and (H).
    \qedhere
  \end{proofcases}
\end{proof}

\begin{lemma}
  \label{lemma:steqweak}
  For every pair of states $\st_1, \st_2$ such that
  $\st_1\confeq \env \st_2$, if
  $\env \le \envtwo$ then we have
  $\st_1 \confeq \envtwo \st_2$.
\end{lemma}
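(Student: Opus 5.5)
The claim is a monotonicity property of the relation $\confeq{}$ with respect to its environment parameter. I would prove it by unfolding the rules in \Cref{fig:confeq} and checking each premise separately. If $\st_1 = \st_2 = \err$, the conclusion is immediate from the axiom $\err \confeq{\envtwo} \err$. For indexed configurations, the rule only asks that the indices agree and that the underlying ordinary configurations be related, so it suffices to treat ordinary ones.

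Let $\st_i = \reg_i, \mbuf_i, \mem_i, \bms$. From $\st_1 \confeq\env \st_2$ we get four facts:
\begin{itemize}
\item $\reg_1 \boteq \reg_2$, which does not mention the environment and so transfers unchanged;
\item $\reg_1 \steq[\bms]{\env} \reg_2$;
\item $\mem_1 \steq{\env} \mem_2$;
\item $\mbuf_1 \steq[\bms]{\env} \mbuf_2$.
\end{itemize}

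For the last three facts, the key observation is that every premise has the form ``if some component of $\envtwo(z)$ is $\tylow$, then the values agree''. Since $\env \le \envtwo$ is pointwise and $\tylow$ is the bottom of the lattice, $\pi_k(\envtwo(z)) = \tylow$ implies $\pi_k(\env(z)) = \tylow$. The same holds for arrays: $\envtwo(\ar) = \tylow$ implies $\env(\ar) = \tylow$.

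\begin{itemize}
\item \emph{Registers, $\bms = \bot$.} If $\pi_1(\envtwo(\vx)) = \tylow$ or $\pi_2(\envtwo(\vx)) = \tylow$, the corresponding projection of $\env(\vx)$ is also $\tylow$, so the premise for $\env$ gives $\reg_1(\vx) = \reg_2(\vx)$.
\item \emph{Registers, $\bms = \top$.} The argument is the same, using only the second projection.
\item \emph{Memories.} If $\envtwo(\ar) = \tylow$ then $\env(\ar) = \tylow$, so $\mem_1(\ar) = \mem_2(\ar)$.
\end{itemize}

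The store-buffer relation is defined inductively, so I would prove a separate claim for it: for every $\bms$, $\mbuf_1 \steq[\bms]{\env} \mbuf_2$ implies $\mbuf_1 \steq[\bms]{\envtwo} \mbuf_2$. This goes by induction on the derivation of $\mbuf_1 \steq[\bms]{\env} \mbuf_2$ (equivalently, on the length of $\mbuf_1$).
\begin{itemize}
\item The $\nil$ case is immediate.
\item In the $\top$-cons case, the premise is a $\steq[\bot]{\env}$ relation on the tails, and the induction hypothesis instantiated at $\bot$ handles it. For this reason the claim must quantify over $\bms$, since the induction hypothesis has to be available at a different flag value.
\item In the $\bot$-cons case, the tail is handled by the induction hypothesis. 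The head condition ``$\envtwo(\ar) = \tylow \Rightarrow \val{}_1 = \val{}_2$'' follows from the corresponding condition for $\env$ by the same downward-closure argument.
\end{itemize}

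None of these steps is a real obstacle. The only point needing care is the store-buffer induction, where one must generalise over the flag as above. Combining the four transferred premises with the defining rule of $\confeq{\envtwo}$ gives the result.
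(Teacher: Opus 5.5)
Your proposal is correct and follows essentially the same route as the paper. Like the paper's claims (C1)--(C3), it splits $\confeq{\env}$ into its register, memory and store-buffer components, and uses the fact that when $\env \le \envtwo$, a $\tylow$ component of $\envtwo(z)$ forces the same component of $\env(z)$ to be $\tylow$; the buffer case then goes by induction on $\mbuf_1$ with the flag $\bms$ universally quantified. Your explicit handling of $\mathsf{err}$ and of indexed configurations is a small completeness point that the paper leaves implicit.
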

\begin{proof}
  For the proof we need the auxiliary claims:
  \begin{align*}
    \forall \bms, \env, \envtwo, \reg_1, \reg_2. \reg_1 \steq[\bms]{\env} \reg_2 \Rightarrow \reg_1 \steq[\bms]{\envtwo} \reg_2 \tag{C1}\\
    \forall \bms, \env, \envtwo, \mem_1, \mem_2. \mem_1 \steq{ \env} \mem_2 \Rightarrow \mem_1 \steq{\envtwo} \mem_2 \tag{C2}\\
    \forall \bms, \env, \envtwo, \mbuf_1, \mbuf_2. \mbuf_1 \steq[\bms]{\env} \mbuf_2 \Rightarrow \mbuf_1 \steq[\bms]{\envtwo} \mbuf_2 \tag{C3}\\
  \end{align*}
  \begin{itemize}
  \item[(C1)] The proof goes by case analysis on $\bms$.
    \begin{proofcases}
      \proofcase{$\bot$} Fix the quantified variables,  $\reg_1 \steq[\bot]{\env} \reg_2$. From the assumptions on the register maps, we deduce
      \begin{equation*}
        \forall \vx \in \Regs. \pi_1(\env  (\vx)) = \tylow \lor \pi_2(\env(\vx)) = \tylow \Rightarrow\\
        \reg_1(\vx)=\reg_2(\vx)
        \tag{H}
      \end{equation*}
      To conclude the proof, our goal is to show $\reg_1 \steq[\bot]{ \envtwo} \reg_2$, i.e.
      \begin{equation*}
        \forall \vx \in \Regs. \pi_1( \envtwo (\vx)) = \tylow \lor \pi_2( \envtwo(\vx)) = \tylow \Rightarrow\\ \reg_1(\vx)=\reg_2(\vx).
        \tag{C1'}
      \end{equation*}
      Fix a register $\vx$. Assume $\pi_1(\envtwo (\vx)) = \tylow \lor \pi_2(\envtwo(\vx)) = \tylow$ we go by case analysis:
      \begin{proofcases}
        \proofcase{$\pi_1(\envtwo (\vx))=\tylow$} From the assumption on the environments, we deduce that, $\pi_1(\env (\vx)) = \tylow$ and  we have $\env (\vx) = (\tylow, \ty')$ for some $\ty'$, therefore the conclusion is a trivial consequence of (H).
        \proofcase{$\pi_2(\envtwo (\vx))=\tylow$} From the assumption on the environments, we deduce that $\pi_2(\env (\vx))= \tylow$, Therefore $\env (\vx) = (\ty', \tylow)$ for some $\ty'$ and the claim is a trivial consequence of (H).
      \end{proofcases}
  \proofcase{$\top$} Analogous to the previous case.
\end{proofcases}
  \item[(C2)] Analogous to the proof of (C1).
  \item[(C3)] By induction on $\mbuf_1$.
    \begin{proofcases}
      \proofcase{$\nil$} Trivial.
      \proofcase{${\bitem {(\ar,\nat)} {\val{}_1}} \cons\mbuf_1$} By cases on $\bms$.
      \begin{proofcases}
        \proofcase{$\bot$} Fix all the universally quantified variables. Because of  the main assumptions and the case analysis, in this sub-derivation we have the following hypotheses:
        \begin{gather*}
          {{\bitem {(\ar,\nat)} {\val{}_1}} \cons\mbuf_1}  \steq[\bot]{\env} {{\bitem {(\ar,\nat)} {\val{}_2}} \cons\mbuf_2}\tag{H1}\\
              {\mbuf_1}  \steq[\bot]{\env} {\mbuf_2}
                          \tag{H2}\\
               \pi_1(\env(\ar)) = \tylow\lor \pi_2(\env(\ar)) = \tylow \Rightarrow \val{}_1 = \val{}_2
\tag{H3}
        \end{gather*}
        and we want to conclude:
        \begin{align*}
          {\mbuf_1}  &\steq[\bot]{\envtwo} {\mbuf_2}
                      \tag{C1}\\
          \envtwo(\ar) = \tylow &\Rightarrow \val{}_1 = \val{}_2
                                                     \tag{C2}
        \end{align*}
        The claim (C1) follows easily from the IH and (H2). For claim (C2), we fix a type $\ty$ and an array $\ar$, and  we assume
        \(
          \envtwo(\ar) = \tylow.
        \)
        The remaining of the proof is analogous to the inner case analysis in (C1).
      \end{proofcases}
      \proofcase{$\top$} Analogous to the previous point.
    \end{proofcases}
  \end{itemize}
  The conclusion is a direct consequence of (C1)-(C3).
\end{proof}

\begin{corollary}
  \label{lemma:monotone}
  Let $\env \le \envtwo$ and $\envtwo'\le \env'$. We have:
  \[
    \forall \idx, \nat \in \Nat.\sem{\sjudg[] \envtwo \cmd \envtwo'}^{\idx}_\nat \Rightarrow \sem{\sjudg[] \env \cmd \env'}^{\idx}_\nat.
  \]
\end{corollary}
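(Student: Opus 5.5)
The corollary is a direct unfolding of the definition of $\sem{\cdot}^\idx_\nat$ in \Cref{fig:semjudg}, combined with \Cref{lemma:steqweak}. I fix $\idx,\nat$ and assume $\sem{\sjudg[] \envtwo \cmd \envtwo'}^{\idx}_\nat$. To establish $\sem{\sjudg[] \env \cmd \env'}^{\idx}_\nat$, I take $\nat'\le\nat$, states $\st,\sttwo\in\States_\idx$, directives $\Ds$, observations $\Os$, and $\cmd',\st'$ such that $\st\confeq\env\sttwo$ and $\sstep[\nat']{\cmd,\st}{\cmd',\st'}\Ds\Os$.

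\emph{Step 1 (strengthen the precondition).} Since $\env\le\envtwo$, \Cref{lemma:steqweak} turns $\st\confeq\env\sttwo$ into $\st\confeq\envtwo\sttwo$. For indexed states, both sides carry the same index $\idx$, and the indexed rule of \Cref{fig:confeq} simply reduces to the unindexed relation.

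\emph{Step 2 (apply the hypothesis).} Instantiating the assumed $\sem{\sjudg[] \envtwo \cmd \envtwo'}^{\idx}_\nat$ with the same $\nat',\st,\sttwo,\Ds,\Os,\cmd',\st'$ yields some $\sttwo'$ with $\sstep[\nat']{\cmd,\sttwo}{\cmd',\sttwo'}\Ds\Os$. Moreover, $\cmd'=\cnil$ implies $\st'\confeq{\envtwo'}\sttwo'$.

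\emph{Step 3 (weaken the postcondition).} I take the same $\sttwo'$ as witness. If $\cmd'=\cnil$, then $\envtwo'\le\env'$ and \Cref{lemma:steqweak} turn $\st'\confeq{\envtwo'}\sttwo'$ into $\st'\confeq{\env'}\sttwo'$. This closes the goal.

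The only step needing care is whether \Cref{lemma:steqweak} really covers every component of $\confeq{}$. The relation $\boteq$ does not mention the environment, so it is unaffected. The relations $\steq[\bms]{}$ on registers, $\steq{}$ on memories, and $\steq[\bms]{}$ on store buffers are all anti-monotone in the environment, which is exactly what \Cref{lemma:steqweak} states. The $\err\confeq{}\err$ case is also environment-independent, so it holds trivially. No induction is needed; everything is pointwise.
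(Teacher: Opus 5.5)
Your proposal is correct and follows the paper's proof: unfold $\sem{\cdot}^{\idx}_{\nat}$, use \Cref{lemma:steqweak} with $\env\le\envtwo$ to pass from $\confeq\env$ to $\confeq\envtwo$ on the initial states, instantiate the hypothesis, and use \Cref{lemma:steqweak} again with $\envtwo'\le\env'$ to pass from $\confeq{\envtwo'}$ to $\confeq{\env'}$ on the final states. One wording nit: \Cref{lemma:steqweak} says the relations grow as the environment grows (it is the set of required equalities that shrinks), so ``anti-monotone'' is a slightly misleading label, but you apply the lemma in the correct direction in both steps.
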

\begin{proof}
  The proof follows straightforwardly by expanding the definition of
  $\sem{\sjudg[] \envtwo \cmd \envtwo'}^{\idx}_\nat$ and applying \Cref{lemma:steqweak}. Our goal is to show:
  \begin{multline*}
    \forall \st, \sttwo. \forall \Ds, \Os. \st \confeq\env \sttwo \Rightarrow
    \exists \cmd', \st'.\ \sstep[\nat] {\cmd, \st} {\cmd', \st'} \Ds \Os \Rightarrow\\ \exists \sttwo'.\ \sstep[\nat] {\cmd, \sttwo} {\cmd', \sttwo'} \Ds \Os  \land (\cmd'=\cnil \Rightarrow  \st' \confeq{\env'} \sttwo').
  \end{multline*}
  Assume $\st \confeq\env \sttwo$. Then, by \Cref{lemma:steqweak}, we obtain
$\st \confeq\envtwo \sttwo$. The conclusion now follows from the semantics
$\sem{\sjudg[] \envtwo \cmd \envtwo'}^{\idx}_\nat$ and another application of \Cref{lemma:steqweak}.
\end{proof}

\begin{lemma}
  \label{lemma:monotonetech}
  \small
  \begin{multline*}
    \forall \nat \in \Nat. \forall \st \in \States_{\idx+1}. \forall \Ds, \Os. \sstep[\nat] {\cmd, \st} {\cmd', \st'} \Ds \Os \land \st'  \in \States_{\idx'+1} \Leftrightarrow\\
    \forall \Ds, \Os. \sstep[\nat] {\cmd, \st[\idx+1 \upd \idx]} {\cmd', \st'} \Ds \Os \land \st'[\idx'+1 \upd \idx']
  \end{multline*}
\end{lemma}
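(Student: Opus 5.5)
The statement says that shifting the index of the source and target states down by one does not change an indexed reduction sequence. I read it as follows: for $\st \in \States_{\idx+1}$ with $\st' \in \States_{\idx'+1}$,
\[
\sstep[\nat]{\cmd,\st}{\cmd',\st'}{\Ds}{\Os}
\iff
\sstep[\nat]{\cmd,\st[\idx+1\upd\idx]}{\cmd',\st'[\idx'+1\upd\idx']}{\Ds}{\Os}.
\]
I will prove both directions together by induction on $\nat$, generalising over $\cmd$, $\st$, $\idx$, $\Ds$ and $\Os$. The base case $\nat=0$ is immediate. In that case $\cmd'=\cmd$ and $\st'=\st$, so $\idx'=\idx$, and the decremented states coincide.

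Before the inductive step I will record an invariant, which I obtain by inspecting the indexed rules. Every rule other than \ref{SIE:JmpNonZero} and \ref{SIE:JmpZero} leaves the index unchanged. \ref{SIE:JmpNonZero} decreases it by exactly one. \ref{SIE:JmpZero} leads to $\err$, which is not in any $\States_{\idx'+1}$. Along any sequence whose final configuration is ordinary, the indices are therefore non-increasing and no step is a \ref{SIE:JmpZero} step. In the forward direction this gives a further fact. Every intermediate index is at least $\idx'+1\ge 1$, so whenever the original run fires \ref{SIE:JmpNonZero} from index $k+1$ we have $k\ge 1$.

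For the inductive step I split the sequence into its first step and the remaining $\nat-1$ steps, and do a case analysis on the rule used for the first step. For any rule that leaves the index unchanged, its premises never mention the index, so the same rule applies to the decremented configuration. It consumes the same directive, produces the same observation, and yields the decremented version of the intermediate state. The induction hypothesis then handles the tail. If the first step is \ref{SIE:JmpNonZero} from index $k+1$ to $k$, the invariant gives $k\ge 1$. The decremented configuration has index $k$, so \ref{SIE:JmpNonZero} fires again. It goes to index $k-1$, which is the decremented version of $k$, and it produces the same target program $\jmpcont(\sem\expr_\st)$ and the same observation. Again the induction hypothesis finishes the case.

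The backward direction is symmetric. The decremented run ends in an ordinary state, so it contains no \ref{SIE:JmpZero} step. Each of its \ref{SIE:JmpNonZero} steps from index $k$ lifts to a \ref{SIE:JmpNonZero} step from index $k+1$, and every other step lifts unchanged.

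I expect the main obstacle to be the jump case. The argument must rule out the situation where the shifted run would hit \ref{SIE:JmpZero} while the original run does not. That is exactly why the invariant has to be established first, using the hypothesis $\st'\in\States_{\idx'+1}$.
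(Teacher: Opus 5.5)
Your proof is correct and follows the paper's route. The paper's proof is just ``by induction on $n$'', and your first-step case analysis supplies the detail that one-liner leaves implicit: indices never increase along a run, and the final state lies in $\mathsf{States}_{i'+1}$, so \textsc{JmpZero} never fires in either run. Your repaired reading of the garbled statement, as an equivalence between a run and its index-shifted copy, is also the one actually used in \Cref{lemma:monotone2}.
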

\begin{proof}
  By induction on $\nat$.
\end{proof}

\begin{lemma}
  \label{lemma:monotone2}
  \[
    \forall \idx, \nat \in \Nat.\sem{\sjudg[] \env \cmd \env'}^{\idx+1}_\nat \Rightarrow \sem{\sjudg[] \env \cmd \env'}^{\idx}_\nat.
  \]
\end{lemma}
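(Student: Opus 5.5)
The plan is to show that a one-unit decrease of the index can only make executions shorter, by turning an index-$\idx$ execution into an index-$(\idx+1)$ one. Fix $\idx,\nat$ and assume $\sem{\sjudg[] \env \cmd \env'}^{\idx+1}_\nat$. To establish $\sem{\sjudg[] \env \cmd \env'}^{\idx}_\nat$, fix $\nat'\le\nat$, states $\st,\sttwo\in\States_\idx$ with $\st\confeq\env\sttwo$, and a reduction $\sstep[\nat']{\cmd,\st}{\cmd',\st'}\Ds\Os$. We then have to produce a matching $\sttwo'$.

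The first step is to lift both states to index $\idx+1$, namely $\st[\idx\upd\idx+1]$ and $\sttwo[\idx\upd\idx+1]$. These remain $\confeq\env$-related, because the indexed rule of $\confeq\env$ only requires equal indices and the underlying ordinary relation is unchanged.

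Next we show that the reduction from $\st$ lifts to a reduction from $\st[\idx\upd\idx+1]$. This uses the same directives and observations and reaches the same program $\cmd'$, with the final state also raised by one. The argument is an induction on $\nat'$, essentially \Cref{lemma:monotonetech}. Every non-jump rule leaves the index untouched and does not consult it. A \ref{SIE:JmpNonZero} step at index $k+1$ becomes a \ref{SIE:JmpNonZero} step at $k+2$ and lands in $\phi(\sem\expr_\st)$ at $k+1$.

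The one delicate case is a \ref{SIE:JmpZero} step from index $0$ to $\err$. At index $1$ this step becomes \ref{SIE:JmpNonZero} and lands in an ordinary configuration instead of $\err$. So the lifting is exact only for reductions whose target is an ordinary configuration $\cmd',\st'$, and these are the only ones the statement quantifies over. If the index-$\idx$ run reaches $\err$, it stops there, so $\err$ can only be the final configuration, never an intermediate one. A run ending in $\err$ is therefore never of the form quantified in $\sem\cdot^\idx_\nat$. I expect this bookkeeping around $\err$ and the index shift in \Cref{lemma:monotonetech} to be the main obstacle. Every other case is routine.

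With the lifted reduction in hand, apply the hypothesis $\sem{\sjudg[] \env \cmd \env'}^{\idx+1}_\nat$. It gives a reduction $\sstep[\nat']{\cmd,\sttwo[\idx\upd\idx+1]}{\cmd',\sttwo''}\Ds\Os$, and $\st'[\,\cdot+1]\confeq{\env'}\sttwo''$ whenever $\cmd'=\cnil$.

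Finally, project the reduction back down. Both lifted runs consume jumps identically, since the jump targets are the same by the $\sttwo$-side reduction. So $\sttwo''$ has index $\idx'+1$, where $\idx'$ is the final index of the $\st$ run, and $\sttwo''$ is not $\err$. Applying \Cref{lemma:monotonetech} in the reverse direction yields $\sstep[\nat']{\cmd,\sttwo}{\cmd',\sttwo'}\Ds\Os$ with $\sttwo'=\sttwo''[\idx'+1\upd\idx']$. When $\cmd'=\cnil$, the relation $\st'\confeq{\env'}\sttwo'$ follows because lowering both indices by one preserves $\confeq{\env'}$.
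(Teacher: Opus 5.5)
Your proof is correct and follows the paper's argument: you lift the index-$i$ run to index $i+1$ via \Cref{lemma:monotonetech}, apply the hypothesis $\sem{\sjudg[]{\env}{\cmd}{\env'}}^{i+1}_{\nat}$ to get the matching run from $\sttwo[i \upd i+1]$, and project it back down with \Cref{lemma:monotonetech}. Your extra care about runs ending in $\mathsf{err}$, and about why the $\sttwo$-side run ends at index $i'+1$ (which the paper simply asserts), is sound: both runs share the same directive and observation traces, so they execute the same number of jumps.
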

\begin{proof}
  The proof follows by expanding the definition of
  $\sem{\sjudg[] \env \cmd \env'}^{\idx+1}_\nat$, which is equivalent to:
  \begin{multline*}
    \forall \st, \sttwo \in \States_{\idx+1}. \forall \Ds, \Os. \st \confeq\env \sttwo \Rightarrow
    \exists \cmd', \st'.\ \sstep[\nat] {\cmd, \st} {\cmd', \st'} \Ds \Os \Rightarrow\\ \exists \sttwo'.\ \sstep[\nat] {\cmd, \sttwo} {\cmd', \sttwo'} \Ds \Os  \land (\cmd'=\cnil \Rightarrow  \st' \confeq{\env'} \sttwo').
  \end{multline*}
  The claim states:
    \begin{multline*}
    \forall \st, \sttwo \in \States_{\idx}. \forall \Ds, \Os. \st \confeq\env \sttwo \Rightarrow
    \exists \cmd', \st'.\ \sstep[\nat] {\cmd, \st} {\cmd', \st'} \Ds \Os \Rightarrow\\ \exists \sttwo'.\ \sstep[\nat] {\cmd, \sttwo} {\cmd', \sttwo'} \Ds \Os  \land (\cmd'=\cnil \Rightarrow  \st' \confeq{\env'} \sttwo').
  \end{multline*}
  Fix $\st, \sttwo \in \States_{\idx}$, $\Ds, \Os$, and assume
  $\st \confeq\env \sttwo$ and
  $\sstep[\nat] {\cmd, \st} {\cmd', \st'} \Ds \Os$ with
  $\st'\in \States_{\idx'}$.  We apply \Cref{lemma:monotonetech} to
  deduce
  $\sstep[\nat] {\cmd, \st[\idx \upd \idx+1]} {\cmd', \st'[\idx' \upd
    \idx'+1]} \Ds \Os$, so we can apply the assumption and deduce
  $\sstep[\nat] {\cmd, \sttwo[\idx \upd \idx+1]} {\cmd', \sttwo'} \Ds
  \Os$.  for some $\sttwo' \in \States_{\idx'+1}$. With another
  application of \Cref{lemma:monotonetech}, we deduce
  $\sstep[\nat] {\cmd, \sttwo} {\cmd', \sttwo'[\idx'+1\upd\idx']} \Ds
  \Os$, which concludes our proof.
\end{proof}

\begin{lemma}
  \label{lemma:auxfence}
  For every pair of states
  $\st_1 = \sframe{\reg_1}{\mbuf_1, \mem_1}{\bot},
  \st_2 = \sframe{\reg_2}{\mbuf_2, \mem_2}{\bot}$
  such that
  $\st_1\confeq \env \st_2$, if
  \begin{gather*}
    \forall \vx \in \Regs.\pi_1({\env} (\vx)) = \tylow \Rightarrow {\envtwo}(\vx) = \typair \tylow\\
    \forall \vx \in \Regs.\pi_1({\env} (\vx)) = \tyhigh \Rightarrow {\envtwo} (\vx) = \typair \tyhigh\\
    \forall \ar \in \Arr.\env(\ar) = \tylow \Rightarrow {\envtwo}(\ar) = \tylow\\
    \forall \ar \in \Arr.\env(\ar) = \tyhigh \Rightarrow \envtwo (\ar) = \tyhigh,
  \end{gather*}
  then we have
  $\st_1 \confeq \envtwo \st_2$.
\end{lemma}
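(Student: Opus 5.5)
The plan is to unfold the definition of $\confeq{}$ from \Cref{fig:confeq} and check each conjunct separately for $\envtwo$, using that both states have misspeculation flag $\bot$. Since $\st_1 \confeq\env \st_2$ holds with flag $\bot$, we already have the four facts
\[
  \reg_1 \boteq \reg_2, \quad
  \reg_1 \steq[\bot]{\env} \reg_2, \quad
  \mem_1 \steq{\env} \mem_2, \quad
  \mbuf_1 \steq[\bot]{\env} \mbuf_2 .
\]
We must re-establish the same four facts with $\envtwo$ in place of $\env$. The first conjunct, $\reg_1 \boteq \reg_2$, does not mention the environment, so it carries over unchanged.

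For registers, fix $\vx$ with $\pi_1(\envtwo(\vx)) = \tylow \lor \pi_2(\envtwo(\vx)) = \tylow$. By the hypotheses, $\envtwo(\vx)$ is either $\typair\tylow$ or $\typair\tyhigh$, depending on $\pi_1(\env(\vx))$. The second case would contradict the assumption, since both of its components are $\tyhigh$. So $\pi_1(\env(\vx)) = \tylow$, and the premise of $\reg_1 \steq[\bot]{\env} \reg_2$ is met, giving $\reg_1(\vx) = \reg_2(\vx)$. This is exactly the point where the flag being $\bot$ matters: under $\steq[\bot]{}$, agreement of non-speculatively public registers is already part of the hypothesis.

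For memory, $\envtwo(\ar) = \tylow$ forces $\env(\ar) = \tylow$ by the third and fourth hypotheses, since $\env(\ar)$ is either $\tylow$ or $\tyhigh$. Hence $\mem_1(\ar) = \mem_2(\ar)$.

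For store buffers, the plan is induction on the derivation of $\mbuf_1 \steq[\bot]{\env} \mbuf_2$. The $\nil$ case is immediate. In the cons case we have entries $\bitem{(\ar,\nat)}{\val{}_1}$ and $\bitem{(\ar,\nat)}{\val{}_2}$. If $\envtwo(\ar) = \tylow$ then $\env(\ar) = \tylow$, so $\val{}_1 = \val{}_2$, and the tails are related by the induction hypothesis.

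I expect no real obstacle. The only subtle point is the register case, where one must rule out $\envtwo(\vx) = \typair\tyhigh$ making either projection $\tylow$. That follows because the hypotheses pin $\envtwo(\vx)$ to a diagonal pair determined by $\pi_1(\env(\vx))$, so its two projections coincide. Note that the speculative component $\pi_2(\env(\vx))$ is irrelevant here precisely because the flag is $\bot$; the lemma would fail for $\bms = \top$. This matches its use right after a \fence, in the \ref{TY:TFence} case of \Cref{lemma:mainlemma}.
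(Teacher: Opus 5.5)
Your proof is correct and follows the paper's own argument. You check the conjuncts of $\confeq{}$ one by one, and in the register and memory cases you use the hypotheses to rule out $\envtwo$ mapping a public entry to $\typair\tyhigh$ (respectively $\tyhigh$). The store-buffer case is by induction in both; the paper inducts on $\mbuf_1$ rather than on the derivation of $\mbuf_1 \steq[\bot]{\env} \mbuf_2$, which makes no difference.
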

\begin{proof}
  Assume
  \begin{gather*}
    \forall \vx \in \Regs.\pi_1({\env} (\vx)) = \tylow \Rightarrow {\envtwo}(\vx) = \typair \tylow
    \tag{HA}\\
    \forall \vx \in \Regs.\pi_1({\env} (\vx)) = \tyhigh \Rightarrow {\envtwo} (\vx) = \typair \tyhigh
    \tag{HB}\\
    \forall \ar \in \Arr.\env(\ar) = \tylow \Rightarrow {\envtwo}(\ar) = \tylow
    \tag{HC}\\
    \forall \ar \in \Arr.\env(\ar) = \tyhigh \Rightarrow \envtwo (\ar) = \tyhigh,
    \tag{HD}
  \end{gather*}
  To conclude  the proof we need the auxiliary claims:
  \begin{align*}
    \forall  \reg_1, \reg_2.  \reg_1 \steq[\bot]{\env} \reg_2 \Rightarrow \reg_1 \steq[\bot]{\envtwo} \reg_2 \tag{C1}\\
    \forall \mem_1, \mem_2.  \mem_1 \steq{\env} \mem_2 \Rightarrow \mem_1 \steq{\envtwo} \mem_2 \tag{C2}\\
    \forall \mbuf_1, \mbuf_2.  \mbuf_1 \steq[\bot]{\env} \mbuf_2 \Rightarrow \mbuf_1 \steq[\bot]{\envtwo } \mbuf_2 \tag{C3}
  \end{align*}
  \begin{itemize}
  \item[(C1)] Fix the quantified variables, assume $\reg_1 \steq[\bot]{\env} \reg_2$. From the assumptions on the register maps, we deduce
    \[
\forall \vx \in \Regs. \pi_1(\env (\vx)) = \tylow \lor \pi_2( \env(\vx)) = \tylow \Rightarrow \reg_1(\vx)=\reg_2(\vx) .
      \tag{H1}
    \]
    To conclude the proof, our goal is to show $\reg_1 \steq[\bot]{ \envtwo} \reg_2$, i.e.
    \[
      \forall \vx \in \Regs. \pi_1( \envtwo (\vx)) = \tylow \lor \pi_2( \envtwo(\vx)) = \tylow \Rightarrow \reg_1(\vx)=\reg_2(\vx).
    \]
    Fix a register $\vx$ and assume
    \[
      \pi_1( \envtwo (\vx)) = \tylow \lor \pi_2( \envtwo(\vx)) = \tylow.
    \]
    We go by case analysis:
    \begin{proofcases}
      \proofcase{$\pi_1( \envtwo (\vx)) = \tylow$} By case analysis on the value of $\pi_1({\env} (\vx))$, and by (HA) and (HB), we observe that $\pi_1({\env} (\vx))=\tylow$. The conclusion follows directly from this intermediate observation and (H1).
      \proofcase{$\pi_2( \envtwo (\vx)) = \tylow$} Analogous to the previous case.
    \end{proofcases}

  \item[(C2)] Analogous to the proof of (C1), but using (HC) and (HD) instead of (HA) and (HB).
  \item[(C3)] By induction on $\mbuf_1$.
    \begin{proofcases}
      \proofcase{$\nil$} Trivial.
      \proofcase{${\bitem {(\ar,\nat)} {\val{}_1}} \cons\mbuf_1$} Fix all the universally quantified variables. Because of  the main assumptions and the case analysis, in this sub-derivation we have the following hypotheses:
        \begin{align*}
          {{\bitem {(\ar,\nat)} {\val{}_1}} \cons\mbuf_1}  &\steq[\bot]{ \env} {{\bitem {(\ar,\nat)} {\val{}_2}} \cons\mbuf_2}\tag{H1}\\
              {\mbuf_1}  &\steq[\bot]{ \env} {\mbuf_2}
                          \tag{H2}\\
               \env(\ar) = \tylow &\Rightarrow \val{}_1 = \val{}_2
\tag{H3}
        \end{align*}
        and we want to conclude:
        \begin{align*}
          {\mbuf_1}  &\steq[\bot]{ \envtwo} {\mbuf_2}
                      \tag{C1}\\
          \envtwo(\ar) = \tylow &\Rightarrow \val{}_1 = \val{}_2
                                                     \tag{C2}
        \end{align*}
        The claim (C1) follows easily from the IH and (H2). For claim (C2), we fix an array $\ar$, and  we assume $\envtwo(\ar) = \tylow$. From (HC), (HD) and a case analysis on $\env (\ar)$ we conclude $\env (\ar) = \tylow$, therefore (C2) is a trivial consequence of (H3).
    \end{proofcases}
  \end{itemize}
  The conclusion is a direct consequence of (C1)-(C3).
\end{proof}

\begin{lemma}
  \label{lemma:sfenceaux}
  If $\reg_1 \steq[\bms] { \env} \reg_2$,
  \[
    z_i =
    \begin{cases}
      \bot & \text{if } \bms  = \top\\
      \reg_i(\vx) & \text{otherwise}
    \end{cases}
  \]
  and $\reg_1 \boteq \reg_2$, then $\supdate {\reg_1}{\vx}{z_1} \steq[\bms] { {\supdate \env \vx {\typair{\pi_1(\env(\vx))}}} } \supdate {\reg_2}{\vx}{z_2}$ and $\supdate {\reg_1}{\vx}{z_1} \boteq \supdate {\reg_2}{\vx}{z_2}$.
\end{lemma}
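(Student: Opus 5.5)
The plan is a pointwise check on registers, split by the value of $\bms$. Let $\env' = \supdate \env \vx {\typair{\pi_1(\env(\vx))}}$. For every register $\vy \neq \vx$ we have $\env'(\vy) = \env(\vy)$, $\supdate{\reg_i}{\vx}{z_i}(\vy) = \reg_i(\vy)$, and the hypothesis $\reg_1 \steq[\bms]{\env} \reg_2$ already supplies whatever equality is demanded at $\vy$. The same observation handles $\boteq$ away from $\vx$, using $\reg_1 \boteq \reg_2$. So both claims reduce to checking the single register $\vx$.

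\textbf{The $\boteq$ claim at $\vx$.} If $\bms = \top$, then $z_1 = z_2 = \bot$, so both sides are $\bot$. If $\bms = \bot$, then $z_i = \reg_i(\vx)$, and the equivalence $\reg_1(\vx) = \bot \Leftrightarrow \reg_2(\vx) = \bot$ is exactly the hypothesis $\reg_1 \boteq \reg_2$ at $\vx$.

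\textbf{The $\steq[\bms]{\env'}$ claim at $\vx$.}

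\emph{Case $\bms = \top$.} The relation only requires equality at $\vx$ when $\pi_2(\env'(\vx)) = \tylow$. Whether or not that condition holds, both updated registers map $\vx$ to $\bot$. Hence they agree at $\vx$ and there is nothing left to prove. This is where the semantics of \dfence (setting $\vx$ to $\bot$ under misspeculation) does the work.

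\emph{Case $\bms = \bot$.} The relation demands equality at $\vx$ when $\pi_1(\env'(\vx)) = \tylow$ or $\pi_2(\env'(\vx)) = \tylow$. Both components of $\env'(\vx)$ equal $\pi_1(\env(\vx))$, so the premise reduces to $\pi_1(\env(\vx)) = \tylow$. That is one of the disjuncts in the definition of $\reg_1 \steq[\bot]{\env} \reg_2$, so we get $\reg_1(\vx) = \reg_2(\vx)$, that is, $z_1 = z_2$.

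The only delicate point is making sure the $\top$ case does not need any information about $\env(\vx)$. A naive argument would try to obtain equality from $\pi_1(\env(\vx)) = \tylow$, but $\steq[\top]{\env}$ only tracks speculative types, so no such equality is available. The proof avoids this by relying instead on both updated values being $\bot$. Everything else is routine unfolding of the rules in \Cref{fig:confeq}.
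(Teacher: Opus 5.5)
Your proof is correct and follows essentially the same route as the paper. Both split on $\bms$ and settle every register other than $\vx$ directly from the hypothesis. At $\vx$, when $\bms=\top$ both updated values are $\bot$; when $\bms=\bot$ the premise collapses to $\pi_1(\env(\vx))=\tylow$, which gives $\reg_1(\vx)=\reg_2(\vx)$ via $\reg_1 \steq[\bot]{\env} \reg_2$, and the $\boteq$ part is the same case analysis on $\bms$.
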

\begin{proof}
  Our assumptions are:
  \begin{align*}
    \reg_1 &\steq[\bms] { \env} \reg_2\tag{H1}\\
    \reg_1 &\boteq \reg_2\tag{H2}
  \end{align*}
  and the claim we want to establish are:
  \begin{align*}
    \supdate {\reg_1}{\vx}{z_1} &\steq[\bms] { {\supdate \env \vx {\typair{\pi_1(\env(\vx))}}} } \supdate {\reg_2}{\vx}{z_2}\tag{C1}\\
    \supdate {\reg_1}{\vx}{z_1} &\boteq \supdate {\reg_2}{\vx}{z_2}\tag{C2}
  \end{align*}
  \begin{itemize}
  \item[(C1)] The proof is by case analysis on $\bms$:
    \begin{proofcases}
      \proofcase{$\bot$} The claim reduces to:
      \(
        {\reg_1} \steq[\bot] {{\supdate \env \vx {\typair{\pi_1(\env(\vx))}}} } {\reg_2}
      \)
      which means proving:
      \[
        \supdate{\reg_1}\vx {z_1}(\vy) = \supdate{\reg_2}\vx {z_2}(\vy).
      \]
      for a fresh variable $\vy$ under the assumption
      \begin{multline*}
        \pi_1({\supdate \env \vx {\typair{\pi_1(\env(\vx))}}} (\vy)) = \tylow \lor\\
        \pi_2({\supdate \env \vx {\typair{\pi_1(\env(\vx))}}} (\vy)) = \tylow.
        \tag{H}
      \end{multline*}
      If $y \neq x$, the claim is a direct consequence of (H) and (H1). When $y=x$, we have to conclude $z_1=z_2$. By instantiating $\vy$ with $\vx$ in (H), we conclude that $\pi_1(\env(\vx))=\tylow$, from this intermediate result and (H1), we deduce $\reg_1(\vx)=\reg_2(\vx)$ and, therefore, $z_1=z_2$.
      \proofcase{$\top$} The claim reduces to:
      \(
      {\reg_1} \steq[\top] {{\supdate \env \vx {\typair{\pi_1(\env(\vx))}}} } {\reg_2}
      \)
      which means proving:
      \[
        \supdate{\reg_1}\vx {z_1}(\vy) = \supdate{\reg_2}\vx {z_2}(\vy).
      \]
      for a fresh variable $\vy$ under the assumption
      \[
        \pi_2({\supdate \env \vx {\typair{\pi_1(\env(\vx))}}} (\vy)) = \tylow
        \tag{H}
      \]
      If $y \neq x$, the claim is a direct consequence of (H) and (H1). When $y=x$, we have to conclude $z_1=z_2$, which is now trivial because $\bms=\top$.
    \end{proofcases}
  \item[(C2)] This claim follows easily from a case analysis on $\bms$ and (H2).
  \end{itemize}
\end{proof}

\begin{lemma}
  \label{lemma:asgnaux}
  If $\sjudg \env \expr \sigma$,  $\reg_1 \steq[\bms] {\env} \reg_2$ and
  $\reg_1 \boteq \reg_2$,
  then
  \[
    \supdate {\reg_1}{\vx}{\sem \expr_{\reg_1}} \steq[\bms] {{\supdate \env \vx \sigma}} \supdate {\reg_2}{\vx}{\sem \expr_{\reg_2}}.
  \]
\end{lemma}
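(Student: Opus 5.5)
We unfold the definition of $\steq[\bms]{\cdot}$ and argue pointwise on registers, splitting on $\bms$. The only register whose content changes is $\vx$. Every $\vy \neq \vx$ has the same type in $\supdate \env \vx \sigma$ as in $\env$ and the same value in the updated maps as before, so the required equality for $\vy$ follows directly from the hypothesis $\reg_1 \steq[\bms]{\env} \reg_2$. What remains is to show $\sem\expr_{\reg_1} = \sem\expr_{\reg_2}$ whenever the relevant component of $\sigma$ is $\tylow$: for $\bms = \bot$ this means $\pi_1(\sigma) = \tylow$ or $\pi_2(\sigma) = \tylow$, and for $\bms = \top$ it means $\pi_2(\sigma) = \tylow$.

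To obtain this we prove an auxiliary claim by induction on the derivation of $\sjudg[] \env \expr \sigma$. Assume $\reg_1 \steq[\bms]{\env} \reg_2$ and $\reg_1 \boteq \reg_2$, and let $k$ be the index of the component used ($k \in \{1,2\}$ when $\bms = \bot$, $k = 2$ when $\bms = \top$). The claim is that if $\pi_k(\sigma) = \tylow$ then $\sem\expr_{\reg_1} = \sem\expr_{\reg_2}$. We go through the rules one by one.
\begin{itemize}
\item \ref{TY:TVal} is trivial.
\item \ref{TY:TVar} is exactly the hypothesis on $\reg_1,\reg_2$, since $\sigma = \env(\vy)$.
\item \ref{TY:TSub}: we have $\tty \le \sigma$, so $\pi_k(\sigma) = \tylow$ forces $\pi_k(\tty) = \tylow$, and we apply the induction hypothesis.
\item \ref{TY:TOp}: every argument has type $\sigma$, so the induction hypothesis makes all argument values equal. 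Because the interpretation $\widehat\op$ is a function, and $\bot$ propagates in the same way on both sides, the results coincide.
\end{itemize}
This auxiliary claim is essentially the content of \Cref{lemma:idexpr}/\Cref{lemma:tyexpr} as they are used earlier, and we would invoke them if their statements match.

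The step needing care is the $\bms = \bot$ case in which only $\pi_2(\sigma) = \tylow$ holds. There the definition of $\steq[\bot]{}$ gives equality for exactly those variables having some $\tylow$ component. The induction has to keep track of the component index $k$ uniformly, so that we never need both components at once. The \ref{TY:TOp} case is the delicate one: all subexpressions share the same $\sigma$, so the same $k$ works for each of them. The $\boteq$ hypothesis is not needed for the $\steq$ conclusion itself; it only ensures that definedness agrees, which we mention to cover the $\bot$ propagation in \ref{TY:TOp}.
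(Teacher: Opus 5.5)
Your proof is correct and follows the paper's argument: split on $\bms$, discharge every $\vy \neq \vx$ directly from $\reg_1 \steq[\bms]{\env} \reg_2$, and for $\vy = \vx$ appeal to the expression lemma (\Cref{lemma:tyexpr}). The paper proves that lemma by the same induction on the typing derivation that you sketch. Stating your induction directly on register maps fits the hypotheses slightly better than the paper's invocation of \Cref{lemma:tyexpr}, which is phrased for full $\confeq\env$-related configurations. Also, the $\boteq$ hypothesis is not needed anywhere, not even in \ref{TY:TOp}, since the induction hypothesis already gives equality in $\Vals_\bot$.
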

\begin{proof}
  The assumptions we have are:
  \begin{gather*}
    \sjudg \env \expr \sigma \tag{H1}\\
    \reg_1 \steq[\bms] {\env} \reg_2 \tag{H2}\\
    \reg_1 \boteq \reg_2 \tag{H3}
  \end{gather*}
  and the claim we need to show is:
  \begin{align*}
    \supdate {\reg_1}{\vx}{\sem \expr_{\reg_1}} &\steq[\bms] {{\supdate \env \vx \sigma}} \supdate {\reg_2}{\vx}{\sem \expr_{\reg_2}}.
  \end{align*}
  Before starting in the main part of the proof, observe:
  \begin{multline*}
    \forall \vy \in \Regs. \vy \neq \vx \Rightarrow
    \env (\vy) = {\supdate \env \vx \tty} (\vy) \land\\ \forall i \in \{1,2\}. \reg_i (\vy) = \supdate{\reg_i} \vx {\sem \expr_{\reg_i}} (\vy).
    \tag{H}
  \end{multline*}
  The proof goes by cases on $\bms$.
  \begin{proofcases}
    \proofcase{$\bms=\bot$} We have to establish:
    \begin{multline*}
      \forall \vy \in \Regs. \pi_1({\supdate \env \vx \sigma}(\vy))=\tylow \lor \pi_2({\supdate \env \vx \sigma}(\vy))=\tylow \Rightarrow\\ \supdate {\reg_1}{\vx}{\sem \expr_{\reg_1}}(\vy) =\supdate {\reg_2}{\vx}{\sem \expr_{\reg_2}}(\vy).
    \end{multline*}
    When $\vy\neq \vx$, the claim is a direct consequence of (H) and (H2), therefore we focus only on the case where
    $\vy\neq \vx$. In this case, the claim reduces to:
    \begin{equation*}
      \forall \vy \in \Regs. \pi_1(\sigma)=\tylow \lor \pi_2(\sigma)=\tylow \Rightarrow \sem \expr_{\reg_1} =\sem \expr_{\reg_2}.
    \end{equation*}
    The claim follows from an application of \Cref{lemma:tyexpr} (whose assumption are discharged by (H1), (H2) and (H3)).
    \proofcase{$\bms = \top$} This case it is completely analogous to the previous one
    except that, after assuming $\vx = \vy$, the claim reduces to:
    \begin{equation*}
      \forall \vy \in \Regs. \pi_2(\sigma)=\tylow \Rightarrow \sem \expr_{\reg_1} =\sem \expr_{\reg_2},
    \end{equation*}
    which again, follows from \Cref{lemma:tyexpr} (whose assumptions are discharged by (H1), (H2) and (H3)).
  \end{proofcases}
\end{proof}

The next lemma is about generalizing the \ref{TY:TSeq} rule to pairs of programs instead of pairs made by a program and an instruction. In principle, we could also formulate our program language in such a way to avoid the need of a similar result, by modeling programs as \emph{trees} of instructions instead of \emph{lists}. However, we chose to model programs as \emph{lists} of instructions to simplify the speculative semantics.

\begin{lemma}
  \label{lemma:tyseqgen}
  If $\pt_1 \deriv \sjudg {\env_1} \cmd {\env_2}$ and $\pt_2 \deriv \sjudg {\env_2} \cmdtwo {\env_3}$, then there is a tree $\pt_3$ such that  $\pt_3 \deriv \sjudg {\env_1} {\cmd\sep \cmdtwo} {\env_3}$.
\end{lemma}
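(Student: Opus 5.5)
The plan is to prove the claim by induction on the structure (equivalently, the size) of $\pt_1$, with $\pt_2$ arbitrary. This matters because $\pt_2$ will be modified along the way: weakened in one case, placed under a larger jump context in another. Since $\cmd$ is a program and not a single instruction, the last rule of $\pt_1$ can only be \ref{TY:TEmpty}, \ref{TY:TSeq}, \ref{TY:TWeak} or \ref{TY:TCont}. The instruction rules (\ref{TY:TAsgn}, \ref{TY:TLoad}, \ldots) can only occur as the left premise of a \ref{TY:TSeq}.

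\emph{Case \ref{TY:TEmpty}.} Here $\cmd=\cnil$ and $\env_1=\env_2$. Since $\cnil\sep\cmdtwo$ is just $\cmdtwo$, we take $\pt_3=\pt_2$.

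\emph{Case \ref{TY:TSeq}.} Here $\cmd=\stat\sep\cmd'$, with a derivation of $\sjudg{\env_1}{\stat}{\env'}$ and a strictly smaller derivation of $\sjudg{\env'}{\cmd'}{\env_2}$. The induction hypothesis applied to the latter and $\pt_2$ gives $\sjudg{\env'}{\cmd'\sep\cmdtwo}{\env_3}$. One application of \ref{TY:TSeq} then yields $\sjudg{\env_1}{\stat\sep(\cmd'\sep\cmdtwo)}{\env_3}$, which is the same program as $\cmd\sep\cmdtwo$ because programs are lists.

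\emph{Case \ref{TY:TWeak}.} The premise is a smaller derivation of $\sjudg{\env_1'}{\cmd}{\env_2'}$, with $\env_1\le\env_1'$ and $\env_2'\le\env_2$. First apply \ref{TY:TWeak} to $\pt_2$, using $\env_2'\le\env_2$ on the input side and $\env_3\le\env_3$ on the output side, to get $\sjudg{\env_2'}{\cmdtwo}{\env_3}$. Then the induction hypothesis on the smaller premise gives $\sjudg{\env_1'}{\cmd\sep\cmdtwo}{\env_3}$. A final \ref{TY:TWeak} with $\env_1\le\env_1'$ closes the case.

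\emph{Case \ref{TY:TCont}.} This is the main obstacle, because the jump contexts do not line up. The conclusion has context $\emptyset$, so $\pt_2$ is typed under $\emptyset$. The premise, however, types $\cmd$ under a context $\jmpctx'$, with every judgment in $\jmpctx'$ validated under $\jmpctx'$. So I first need an auxiliary context-weakening lemma: if $\sjudg[\jmpctx]{\env}{\cmdtwo}{\envtwo}$ and $\jmpctx\subseteq\jmpctx''$, then $\sjudg[\jmpctx'']{\env}{\cmdtwo}{\envtwo}$. I would prove it by a routine induction on the derivation. The only interesting rule is \ref{TY:TJmp}, whose side condition is membership in the context and is therefore preserved by enlarging it. A nested \ref{TY:TCont} concludes with $\emptyset$ and is already a valid derivation in its own right; it stays valid under any larger context, the nested derivation simply being reused. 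The derivation size is unchanged or preserved as needed.

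With this lemma, lift $\pt_2$ to $\sjudg[\jmpctx']{\env_2}{\cmdtwo}{\env_3}$. The induction hypothesis on the strictly smaller premise then gives $\sjudg[\jmpctx']{\env_1}{\cmd\sep\cmdtwo}{\env_3}$. Finally, re-apply \ref{TY:TCont} with the same side premises validating $\jmpctx'$, which yields the judgment under $\emptyset$ as required.
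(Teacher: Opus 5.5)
Your \ref{TY:TEmpty}, \ref{TY:TSeq} and \ref{TY:TWeak} cases coincide with the paper's proof. The paper's proof consists of exactly these three cases and does not treat \ref{TY:TCont}. You are right that this fourth case needs an argument, but the one you give does not work. The problem is your context-weakening lemma, which is false when the original context is $\emptyset$.

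Every rule with program premises keeps the context of its conclusion, except \ref{TY:TCont}, which only ever concludes judgments under $\emptyset$. So a \ref{TY:TCont} node inside an $\emptyset$-derivation cannot be ``reused'' in a derivation under $\jmpctx''\neq\emptyset$. Under $\jmpctx''$, a jump can only be typed by \ref{TY:TJmp}, which requires the typings of its targets to be members of $\jmpctx''$ itself.

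For instance, take $\jmpcont(\loc)=\cnil$, let $\expr$ have type $\typair\tylow$ in $\env$, and set $\cmdtwo=\kwd{jmp}\ \expr_{\{\loc\}}\sep\cnil$. Then $\emptyset\mid\env\vdash\cmdtwo:\env$ is derivable by \ref{TY:TCont} over $\{\env\vdash\cnil:\env\}$. However, $\jmpctx''\mid\env\vdash\cmdtwo:\env$ is underivable whenever $\jmpctx''$ contains no judgment about $\cnil$. Take this derivation as $\pt_2$, and as $\pt_1$ a \ref{TY:TCont} over a self-validating context $\jmpctx'$ of that kind. Your step lifting $\pt_2$ to $\jmpctx'$ then cannot be carried out.

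Two facts do hold. A derivation whose conclusion has a nonempty context contains no \ref{TY:TCont} node at all, so weakening it to a larger context is the routine induction you describe. A \ref{TY:TCont} over the empty context is a no-op that can be erased. The case can therefore be repaired by merging contexts instead of lifting one into the other:
\begin{enumerate}
\item Rewrite $\pt_2$ as a single root \ref{TY:TCont} over the union $\Psi$ of the contexts of its \ref{TY:TCont} nodes. To do so, re-derive the part of $\pt_2$ below those nodes under $\Psi$, and weaken their premises and validating derivations to $\Psi$.
\item Set $\Theta=\jmpctx'\cup\Psi$. Weaken the premise of $\pt_1$, the premise of the rewritten $\pt_2$, and all validating derivations to $\Theta$.
\item Apply the induction hypothesis under $\Theta$. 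The induction must now be on size, since the weakened premise is no longer a subderivation of $\pt_1$.
\item Conclude with one \ref{TY:TCont} over $\Theta$.
\end{enumerate}
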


\begin{proof}
  The proof is by induction on $\pt_1$.
  \begin{proofcases}
    \proofcase{\ref{TY:TEmpty}} The witness is $\pt_2$.
    \proofcase{\ref{TY:TWeak}} The tree $\pt_1$ can be rewritten as follows:
    \[
      \inferrev[\textsc{TWeak}]
      {\pt_1 \deriv \sjudg {\env_1} \cmd {\env_2}}
      {
        \env_1\le \env_1'
        \quad
        \pt_1' \deriv \sjudg {\env_1'} \cmd {\env_2'}
        \quad
        \env_2' \le \env_2
      }
    \]
    let $\pt_2'$ be the following tree:
    \[
      \inferrev[\textsc{TWeak}]
      {\sjudg {\env_2'} \cmdtwo {\env_3}}
      {
        \env_2' \le \env_2
        \quad
        \pt_2}
    \]
    We instantiate the  IH on $\pt_1'$, and apply it to $\pi_2'$ to conclude that there is  $\pt_3' \deriv \sjudg {\env_1'} {\cmd\sep \cmdtwo} {\env_3}$.
    With another weakening step, we conclude that there is a tree $\pt_3 \deriv \sjudg {\env_1} {\cmd\sep \cmdtwo} {\env_3} $.
    \proofcase{\ref{TY:TSeq}} The tree $\pt_1$ can be rewritten as follows:
    \[
      \inferrev[\textsc{TSeq}]
      {\pt_1\deriv \sjudg{\env_1}{\stat\sep \cmd'} {\env_3}}
      {
        \sjudg{\env_1}{\stat} {\env_2}
        \quad
        \pt_1'\deriv \sjudg{\env_2}{\cmd'} {\env_3}
      }
    \]
    From the IH on $\pi_1'$ applied to $\pt_2$,
    we deduce that there is a tree $\pt_3'$ such that
    $\pt_3'\deriv \sjudg{\env_2}{\cmd'\sep \cmdtwo} {\env_3}$. Now, we
    can apply \ref{TY:TSeq} to show:
    \[
      \inferrev[\textsc{TSeq}]
      {\pt_3\deriv \sjudg{\env_1}{\stat\sep (\cmd'\sep\cmdtwo)} {\env_3}}
      {
        \sjudg{\env_1}{\stat} {\env_2}
        \quad
        \pt_3'
      }
    \]
  \end{proofcases}
\end{proof}

\begin{lemma}
  \label{lemma:mistopaux}
  If $\mbuf_1 \steq[\bot]{\env} \mbuf_2$,  $\reg_1  \steq[\bot]{\env } \reg_2$, then $\mbuf_1 \steq[\top]{\env} \mbuf_2$,  $\reg_1 \steq[\top]{\env} \reg_2$.
\end{lemma}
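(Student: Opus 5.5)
The statement is a direct unfolding of the definitions in \Cref{fig:confeq}: the $\bot$-relations on registers and store buffers quantify over more positions than the $\top$-relations, so weakening from $\bot$ to $\top$ only drops requirements. I would prove the two conclusions separately.

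\textbf{Registers.} Assume $\reg_1 \steq[\bot]{\env} \reg_2$. Fix $\vx \in \Regs$ with $\pi_2(\env(\vx)) = \tylow$. Then the disjunction $\pi_1(\env(\vx)) = \tylow \lor \pi_2(\env(\vx)) = \tylow$ holds through its right disjunct. The hypothesis therefore gives $\reg_1(\vx) = \reg_2(\vx)$, which is exactly what $\reg_1 \steq[\top]{\env} \reg_2$ requires.

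\textbf{Store buffers.} Here I would do a case analysis on the shape of $\mbuf_1 \steq[\bot]{\env} \mbuf_2$, with no induction needed.
\begin{itemize}
\item If both buffers are $\nil$, the rule $\nil \steq[\bms]{\env} \nil$ applies for every $\bms$, in particular for $\top$.
\item Otherwise both buffers have the form $\bitem{(\ar,\nat)}{\val{}_i} \cons \mbuf_i'$, because the rules only relate buffers of matching shape. The $\bot$-rule yields $\mbuf_1' \steq[\bot]{\env} \mbuf_2'$ as a premise.
\item That premise is the only hypothesis of the $\top$-rule for cons cells. Applying the $\top$-rule gives $\mbuf_1 \steq[\top]{\env} \mbuf_2$ directly. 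The value-equality premise $\env(\ar)=\tylow \Rightarrow \val{}_1=\val{}_2$ is simply discarded.
\end{itemize}

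No step is a real obstacle. The only point needing care is the cons case: the $\top$-rule's premise on the tail is again the $\bot$-relation, not the $\top$-relation. So the tail hypothesis can be reused verbatim, and we should not try to recurse on it.
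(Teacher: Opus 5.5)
Your register argument is the paper's (C1). For the buffers you take a shorter route. You do a single inversion of the $\bot$-cons rule, then pass the tail hypothesis $\mbuf_1' \steq[\bot]{\env} \mbuf_2'$ unchanged to the $\top$-cons rule. That is sound for the rules exactly as printed in \Cref{fig:confeq}.

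The paper instead inducts on $\mbuf_1$. In the cons case it proves $\mbuf_1' \steq[\top]{\env} \mbuf_2'$ for the tails (its (D2)) from the induction hypothesis. So its proof is written for a $\top$-cons rule whose tail premise is $\steq[\top]{\env}$, not $\steq[\bot]{\env}$. That is almost certainly the intended definition. With the printed premise, the \textsc{SStore-OOB} case of \Cref{lemma:mainlemma} breaks: that case declares the $\top$-relation on the extended buffer ``trivial because the misspeculation flag is $\top$''. Under the printed rule it would instead demand equal values in the old head entry whenever its array is public. This fails if that entry came from an earlier misspeculative out-of-bounds store of a secret into a public array, e.g.\ two consecutive out-of-bounds stores where the first writes a secret.

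So your shortcut is shorter but relies on what looks like a typo in the figure. Under the intended rule the hypothesis you reuse is the wrong relation, and your advice not to recurse is backwards. The induction hypothesis is exactly what turns the tail's $\bot$-relation into the required $\top$-relation. Structuring the buffer case as an induction on $\mbuf_1$ costs nothing: under the printed rule the induction hypothesis is simply unused, and under the intended one it supplies the missing step.
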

\begin{proof}
  The proof involves three claims:
  \begin{align*}
    \forall \env, \reg_1, \reg_2.  \reg_1 \steq[\bot]{ \env} \reg_2 \Rightarrow \reg_1 \steq[\top]{ \env} \reg_2 \tag{C1}\\
    \forall \env, \envtwo, \mbuf_1, \mbuf_2.  \mbuf_1 \steq[\bot]{\env} \mbuf_2 \Rightarrow \mbuf_1 \steq[\top]{\env} \mbuf_2 \tag{C2}
  \end{align*}
  \begin{itemize}
  \item[(C1)] Fix all the universally quantified variables. Assume
    \begin{align*}
      \reg_1 \steq[\bot]{\env} \reg_2\tag{H1}
    \end{align*}
    We have to show that:
    \begin{align*}
      \forall x \in \Regs. \pi_2(\env (\vx)) = \tylow \Rightarrow \reg_1(\vx) = \reg_2(\vx)
    \end{align*}
    So, we assume $\pi_2(\env (\vx)) = \tylow$, and we observe that (H1) is equivalent to:
     \begin{align*}
      \forall x \in \Regs. \pi_1(\env (\vx)) = \tylow \lor \pi_2(\env (\vx)) = \tylow \Rightarrow \reg_1(\vx) = \reg_2(\vx),
     \end{align*}
     therefore the conclusion is a trivial consequence of (H1).
     \item[(C2)] By induction on $\mbuf_1$. The base case is trivial. In the inductive case, we assume that  $\mbuf_1=\bitem{(\ar, \nat)}{\val{}_1}\cons \mbuf_1'$, and we want to show that:
       \begin{align*}
         \mbuf_2&=\bitem{(\ar, \nat)}{\val{}_2}\cons \mbuf_2'\tag{D1}\\
         \mbuf_1' &\steq[\top]{\env} \mbuf_2'\tag{D2}
       \end{align*}
       Claim (D1)---i.e. that the first elements of $\mbuf_1$ and $\mbuf_2$ have the same address---is a trivial consequence of $\mbuf_1 \steq[\bot]{\env} \mbuf_2$. From the same assumption and the IH we also deduce (D2).
     \end{itemize}
\end{proof}

\subsubsection{Technical Lemmas for Memory Interaction}
\label{sec:mem}

\begin{lemma}
  \label{lemma:lookupaux}
  If $\mbuf_1 \steq[\bot]{\env} \mbuf_2$, $\mem_1 \steq{\env} \mem_2$,  $\val{}_i = \buflookup {\bm{\mbuf_i}{\mem_i}}{\ar,  \nat}$, and $\pi_1(\env (\ar) )= \tylow$, then $\val{}_1 = \val{}_2$.
\end{lemma}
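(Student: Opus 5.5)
This is a direct structural induction on the store buffer $\mbuf_1$, following the recursive definition of $\buflookup{\bm\mbuf\mem}{\ar,\nat}$. The lookup scans $\mbuf$ from the head for the first pending store at address $(\ar,\nat)$. It returns that value if one is found, and otherwise falls back to $\mem(\ar,\nat)$. So the claim follows once we know two things: the two buffers agree on which entries match $(\ar,\nat)$, and the relevant values agree whenever the array $\ar$ is public. Note that the array environment yields a single level, so the hypothesis $\pi_1(\env(\ar))=\tylow$ is read as $\env(\ar)=\tylow$.

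\emph{Base case.} Here $\mbuf_1=\nil$. Inverting $\mbuf_1\steq[\bot]{\env}\mbuf_2$ forces $\mbuf_2=\nil$, since the only rule with an empty left side is $\nil\steq[\bms]{\env}\nil$. Both lookups then return $\mem_i(\ar,\nat)$, and these are equal because $\mem_1\steq{\env}\mem_2$ and $\env(\ar)=\tylow$.

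\emph{Inductive case.} Here $\mbuf_1=\bitem{(\artwo,k)}{\val{}'_1}\cons\mbuf_1'$. Inverting the $\bot$-rule gives three facts: $\mbuf_2=\bitem{(\artwo,k)}{\val{}'_2}\cons\mbuf_2'$ with the \emph{same} address, $\mbuf_1'\steq[\bot]{\env}\mbuf_2'$, and $\env(\artwo)=\tylow\Rightarrow\val{}'_1=\val{}'_2$. Because the head addresses coincide, both lookups make the same decision.
\begin{itemize}
  \item If $(\artwo,k)=(\ar,\nat)$, both return their head values. Since $\artwo=\ar$ is public, the third fact gives $\val{}'_1=\val{}'_2$.
  \item Otherwise, both lookups recurse on $\mbuf_1'$ and $\mbuf_2'$, and the induction hypothesis closes the case.
\end{itemize}

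The only delicate point is that the $\bot$-relation on buffers must force equal addresses pointwise, which guarantees the lookups follow identical paths. The relation's rules bake this in by using the same $(\ar,\nat)$ on both sides, as already observed in the proof of \Cref{lemma:mistopaux}(D1). One should also check that the $\bot$-rule's side condition is exactly $\env(\ar)=\tylow\Rightarrow\val{}_1=\val{}_2$ for the head's own array. This holds by definition in \Cref{fig:confeq}, so no further obstacle is expected.
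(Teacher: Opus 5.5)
Your proposal is correct and follows the same route as the paper: induction on $\mbuf_1$, with the empty case discharged by $\mem_1 \steq{\env} \mem_2$, and the cons case split on whether the head address is $(\ar,\nat)$. When it is not, the induction hypothesis applies; when it is, the $\bot$-buffer rule together with $\env(\ar)=\tylow$ gives equality of the head values. Your explicit inversion steps (forcing $\mbuf_2=\nil$, resp.\ a head of $\mbuf_2$ with the same address) only spell out what the paper's terse proof leaves implicit.
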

\begin{proof}
  The proof goes by induction on $\mu_1$.
  \begin{proofcases}
    \proofcase{$\nil$} Direct consequence of $\mem_1 \steq{\env} \mem_2$.
    \proofcase{$\bitem{(\artwo, m)}{\val{}} \cons \mbuf_1'$} We go by cases on $(\artwo, m) = (\ar, \nat)$: when this is not the case, the claim is a direct consequence of the IH. When this is the case, the claim follows from $\mbuf_1 \steq[\bot]{\env} \mbuf_2$ and $\env (\ar)= \tylow$.
  \end{proofcases}
\end{proof}

\begin{lemma}
  \label{lemma:sloadaux}
  If $\reg_1 \steq[\bms]{\env} \reg_2$, then $\supdate{\reg_1}\vx {\val{}_1} \steq[\top]{{\supdate \env \vx {(\ty, \tyhigh)}}} \supdate{\reg_2}\vx {\val{}_2}$.
\end{lemma}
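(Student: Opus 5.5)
The claim follows directly from the definition of $\steq[\top]{\cdot}$ in \Cref{fig:confeq}: two register maps are $\steq[\top]{\envtwo}$-related when they agree on every register $\vy$ with $\pi_2(\envtwo(\vy)) = \tylow$. So I will fix an arbitrary register $\vy$, assume $\pi_2(\supdate{\env}{\vx}{(\ty,\tyhigh)}(\vy)) = \tylow$, and show $\supdate{\reg_1}{\vx}{\val{}_1}(\vy) = \supdate{\reg_2}{\vx}{\val{}_2}(\vy)$.

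\textbf{Case $\vy = \vx$.} The updated environment assigns $\vx$ the type $(\ty,\tyhigh)$, so the hypothesis would require $\tyhigh = \tylow$, which is false. The case is therefore vacuous. This is the key point: $\vx$ is never constrained, so the loaded values $\val{}_1$ and $\val{}_2$ may be arbitrary.

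\textbf{Case $\vy \neq \vx$.} Both the updated environment and the updated maps coincide with the originals on $\vy$. We thus know $\pi_2(\env(\vy)) = \tylow$ and must show $\reg_1(\vy) = \reg_2(\vy)$. This uses the hypothesis $\reg_1 \steq[\bms]{\env} \reg_2$, split on $\bms$:
\begin{itemize}
\item If $\bms = \top$, the hypothesis gives agreement on all $\vy$ with $\pi_2(\env(\vy)) = \tylow$ directly.
\item If $\bms = \bot$, the hypothesis gives agreement whenever $\pi_1(\env(\vy)) = \tylow$ or $\pi_2(\env(\vy)) = \tylow$. The second disjunct holds, so again $\reg_1(\vy) = \reg_2(\vy)$.
\end{itemize}
This is the same weakening as (C1) in \Cref{lemma:mistopaux}, which I can invoke to reduce the $\bot$ case to the $\top$ case.

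No real obstacle arises. The only point needing care is that the conclusion is stated at flag $\top$ regardless of $\bms$. This matches the use sites (\ref{SIE:SLoad-PSF}, \ref{SIE:SLoad-OOB}, \ref{SIE:SLoad-SSB}), where the resulting flag is $\top$, and the $\bms=\bot$ subcase is handled by the downgrade from $\steq[\bot]{}$ to $\steq[\top]{}$ above.
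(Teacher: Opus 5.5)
Your proof is correct and follows the paper's route: a case split on $\vy = \vx$. The $\vy = \vx$ case is vacuous because the updated speculative type is $\tyhigh$, and the $\vy \neq \vx$ case follows from the hypothesis. Your explicit split on $\bms$ in the second case, downgrading $\steq[\bot]{\env}$ to $\steq[\top]{\env}$, just spells out what the paper calls a trivial consequence of (H1).
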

\begin{proof}
  The assumptions are
  \begin{align*}
    \reg_1 &\steq[\bms]{\env} \reg_2\tag{H1}
  \end{align*}
  The claim is:
  \begin{multline*}
    \forall \vy \in \Regs. \pi_2({\supdate \env \vx {(\ty, \tyhigh)}}(\vy)) = \tylow \Rightarrow\\ \supdate{\reg_1}\vx {\val{}_1 }(\vy) = \supdate{\reg_2}\vx {\val{}_2 }(\vy).
  \end{multline*}
  Fix $\vy$. The proof goes by cases on $\vy = \vx$. When this is not the case, the conclusion is a trivial consequence of (H1). When $\vy = \vx$ observe that $\pi_2({\supdate \env \vx {(\ty, \tyhigh)}} (\vy)) = \tyhigh$, therefore the claim holds.
\end{proof}

\begin{lemma}
  \label{lemma:sloadstepaux}
  If $\reg_1 \steq[\bms]{\env} \reg_2$, $\mbuf_1 \steq[\bms]{\env} \mbuf_2$, $\mem_1 \steq{\env} \mem_2$, $\val{}_i = \buflookup {\bm{\mbuf_i}{\mem_i}}{\ar,  \nat}$, and $\tty = (\env(\ar), \tyhigh)$ then $\supdate{\reg_1}\vx {\val{}_1} \steq[\bms]{{\supdate \env \vx \tty} } \supdate{\reg_2}\vx {\val{}_2}$.
\end{lemma}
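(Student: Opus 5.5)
The plan is to unfold $\steq[\bms]{\supdate\env\vx\tty}$ pointwise over registers $\vy$, and split on whether $\vy \neq \vx$ or $\vy = \vx$.

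For $\vy \neq \vx$ the updated environment and the updated register maps both agree with the originals at $\vy$. The required equality $\reg_1(\vy)=\reg_2(\vy)$ therefore follows directly from $\reg_1 \steq[\bms]{\env} \reg_2$, whichever value $\bms$ has. This is the same argument used in \Cref{lemma:sloadaux} and \Cref{lemma:asgnaux}.

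For $\vy = \vx$ I would split on $\bms$.
\begin{proofcases}
\proofcase{$\bms=\top$} The relation only constrains registers whose speculative type is $\tylow$. Here $\pi_2(\tty)=\tyhigh$, so there is nothing to prove; this is exactly the computation of \Cref{lemma:sloadaux}.
\proofcase{$\bms=\bot$} The premise becomes $\pi_1(\tty)=\tylow \lor \pi_2(\tty)=\tylow$. The second disjunct is false, so the premise reduces to $\env(\ar)=\tylow$. In that situation I apply \Cref{lemma:lookupaux} to $\mbuf_1 \steq[\bot]{\env}\mbuf_2$, $\mem_1\steq{\env}\mem_2$ and $\env(\ar)=\tylow$. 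It gives $\buflookup{\bm{\mbuf_1}{\mem_1}}{\ar,\nat}=\buflookup{\bm{\mbuf_2}{\mem_2}}{\ar,\nat}$, that is, $\val{}_1=\val{}_2$, as required.
\end{proofcases}

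The only real content is this $\bms=\bot$ subcase. The one point to check is that the hypotheses of \Cref{lemma:lookupaux} match those available here: its store-buffer hypothesis is stated at flag $\bot$, which is precisely the case at hand. Its array hypothesis is written $\pi_1(\env(\ar))=\tylow$, which I read as $\env(\ar)=\tylow$ since arrays carry a single level. No induction is needed in this lemma itself, because the induction over the store buffer is already contained in \Cref{lemma:lookupaux}.
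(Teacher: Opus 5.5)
Your proof is correct and follows the paper's argument: the only substantive step is the $\bms=\bot$, $\vy=\vx$ subcase, where $\pi_1(\tty)=\tylow$ reduces to $\env(\ar)=\tylow$ and \Cref{lemma:lookupaux} gives $\val{}_1=\val{}_2$. The paper orders the case splits the other way round, splitting on $\bms$ first and discharging the whole $\bms=\top$ case directly by \Cref{lemma:sloadaux}.
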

\begin{proof}
  The assumptions are
  \begin{align*}
    \reg_1 &\steq[\bms]{\env} \reg_2\tag{H1}\\
    \mbuf_1 &\steq[\bms]{\env} \mbuf_2 \tag{H2}\\
    \mem_1 &\steq{\env} \mem_2\tag{H3}\\
    \val{}_i &= \buflookup {\bm{\mbuf_i}{\mem_i}}{\ar,  \nat}\tag{H4}\\
    \tty &= (\env(\ar), \tyhigh)\tag{H5}
  \end{align*}
  The proof is by cases on $\bms$:
  \begin{proofcases}
    \proofcase{$\bot$}  The claim is:
    \begin{multline*}
      \hspace{-1em}\forall \vy \in \Regs. \pi_1({\supdate \env \vx \sigma} (\vy)) = \tylow \lor \pi_2({\supdate \env \vx \sigma} (\vy)) = \tylow \Rightarrow\\
      \supdate{\reg_1}\vx {\val{}_1 }(\vy) = \supdate{\reg_2}\vx {\val{}_2}(\vy).
    \end{multline*}
    Fix $\vy$. The proof goes by cases on $\vy = \vx$. When this is
    not the case, the conclusion is a trivial consequence of
    (H1). When $\vy = \vx$, the claim is equivalent to:
    \[
      \pi_1(\tty) = \tylow \Rightarrow \val{}_1=\val{}_2,
    \]
    Assume, $\pi_1(\tty) = \tylow$. By (H5), we conclude
    $\env(\ar) = \tylow$, and the claim follows from this
    intermediate observation and \Cref{lemma:lookupaux}.
    \proofcase{$\top$} Direct consequence of \Cref{lemma:sloadaux}.
  \end{proofcases}
\end{proof}

\begin{lemma}
  \label{lemma:sstoreaux}
  If $\mbuf_1 \steq[\bms]{\env} \mbuf_2$, $\reg_1 \steq[\bms]{\env} \reg_2$ and $\sjudg \env \expr \tty $, then
  \[
    {\bitem{(\ar, \nat)} {\sem \expr_{\reg_1}}} \cons \mbuf_1 \steq[\bms] {\supdate \env \ar
              {\max(\pi_1(\tty), \env(\ar))}} {\bitem{(\ar, \nat)} {\sem \expr_{\reg_2}}}\cons \mbuf_2.
  \]
\end{lemma}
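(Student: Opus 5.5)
The plan is a case analysis on the misspeculation flag $\bms$, reading off the buffer rules of \Cref{fig:confeq}. Write $\envtwo$ for $\supdate \env \ar {\max(\pi_1(\tty), \env(\ar))}$. The one fact used repeatedly is $\env \le \envtwo$: only the entry for $\ar$ changes, and it can only grow. Register types are untouched by the update.

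\textsc{Case} $\bms = \top$. The rule for $\steq[\top]{\envtwo}$ on a cons cell only asks that the two tails be $\steq[\bot]{\envtwo}$-related. It puts no condition on the head values $\sem\expr_{\reg_1}, \sem\expr_{\reg_2}$, and the head addresses $(\ar,\nat)$ agree by construction. So we must show $\mbuf_1 \steq[\bot]{\envtwo} \mbuf_2$. Here a subtlety appears. The hypothesis is only $\mbuf_1 \steq[\top]{\env} \mbuf_2$, and by the rules that relation constrains just its top cell, then switches to $\steq[\bot]{}$ on the tail. It therefore does not by itself give $\steq[\bot]{}$ for the whole buffer.

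I would handle this in one of two ways. The first is to unfold the hypothesis once more: if $\mbuf_1 = \nil$, the goal is trivial. Otherwise the tails are $\steq[\bot]{\env}$-related. For the head we would also need the head values to agree when $\env$ marks the array public, which the hypothesis does not supply. So the cleaner route is the second one. In every use site of this lemma (the \ref{SIE:SStore-Step} case of \Cref{lemma:mainlemma}), the two states are $\confeq{\env}$-related at the current flag. I would therefore strengthen the premise, or check it at the call site, to the statement that $\mbuf_1 \steq[\bot]{\env} \mbuf_2$ whenever the buffers were previously related at $\bot$. Then I would transport the relation to $\envtwo$ by clause (C3) of \Cref{lemma:steqweak}, using $\env \le \envtwo$. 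I expect this mismatch between the $\top$ and $\bot$ buffer relations to be the main obstacle. If neither route works, I would fall back to checking how the flag evolves along the store step, where $\bms$ is unchanged.

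\textsc{Case} $\bms = \bot$. We must show two things: the tails satisfy $\mbuf_1 \steq[\bot]{\envtwo} \mbuf_2$, and if $\envtwo(\ar) = \tylow$ then $\sem\expr_{\reg_1} = \sem\expr_{\reg_2}$.

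The tails follow from the hypothesis together with \Cref{lemma:steqweak} (C3), since $\env \le \envtwo$.

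For the head, suppose $\envtwo(\ar) = \max(\pi_1(\tty), \env(\ar)) = \tylow$. Then $\pi_1(\tty) = \tylow$. Since $\reg_1 \steq[\bot]{\env} \reg_2$ makes every register with a $\tylow$ component agree, \Cref{lemma:tyexpr} applied to $\sjudg \env \expr \tty$ gives $\sem\expr_{\reg_1} = \sem\expr_{\reg_2}$. The $\boteq$ hypothesis that this lemma may require is available at the call site from $\confeq{\env}$, and I would add it to the premises if needed. This closes the case.
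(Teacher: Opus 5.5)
Your $\bms=\bot$ case is essentially the paper's argument. You get the tails from clause (C3) of \Cref{lemma:steqweak} using $\env\le\envtwo$, and the head from $\pi_1(\tty)=\tylow$ together with \Cref{lemma:tyexpr}.

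The gap is the $\bms=\top$ case, which you never close. Route 1 fails, as you note yourself. Route 2 is also unavailable. At the \ref{SIE:SStore-Step} call site in \Cref{lemma:mainlemma} with flag $\top$, $\confeq{\env}$ only yields $\mbuf_1\steq[\top]{\env}\mbuf_2$, and this cannot be strengthened to $\steq[\bot]{\env}$. An earlier misspeculated store pushes public entries whose values differ between the two runs. This happens with \ref{SIE:SStore-OOB}, or with \ref{SIE:SStore-Step} under $\top$ of a value typed $(\tylow,\tyhigh)$ into a public array, which \ref{TY:TStore} allows and which leaves the array public.

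In fact, with the $\top$-cons rule exactly as printed in \Cref{fig:confeq} (tails related at $\bot$), the statement is false. Take $\bms=\top$, $\artwo\neq\ar$ with $\env(\artwo)=\tylow$, $\mbuf_i=\bitem{(\artwo,0)}{i}\cons\nil$ for $i\in\{1,2\}$, $\reg_1=\reg_2$, and $\expr$ a constant. The hypothesis holds because $\nil\steq[\bot]{\env}\nil$. The conclusion, however, requires $\mbuf_1\steq[\bot]{\envtwo}\mbuf_2$, which forces $1=2$ since $\envtwo(\artwo)=\tylow$. No choice of proof strategy can rescue this case under that reading.

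The missing idea is that this premise in the figure is a slip: the intended rule relates the tails at $\top$. That is how the relation is used everywhere else:
\begin{itemize}
\item (D2) in the proof of \Cref{lemma:mistopaux};
\item (H1) in \Cref{lemma:fenceflush};
\item the \ref{SIE:SStore-OOB} case of \Cref{lemma:mainlemma}.
\end{itemize}
Moreover, with the printed rule the invariant of \Cref{lemma:mainlemma} would already break after two consecutive misspeculated stores of speculatively-secret values into a public array.

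Under the intended reading, $\mbuf_1\steq[\top]{\env}\mbuf_2$ says only that the two buffers have the same length and the same addresses position by position, independently of $\env$. The $\top$ case is then immediate. Both new heads carry the address $(\ar,\nat)$, and $\mbuf_1\steq[\top]{\envtwo}\mbuf_2$ is just the hypothesis. This is what the paper's one-word ``Trivial'' for this case relies on.
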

\begin{proof}
  Our assumptions are:
  \begin{align*}
    \mbuf_1 &\steq[\bms]{\env} \mbuf_2\tag{H1}\\
    \reg_1 &\steq[\bms]{\env} \reg_2\tag{H2}\\
    \sjudg {\env & } {\expr} \tty\tag{H3}
  \end{align*}
  We go by cases on $\bms$.
  \begin{proofcases}
    \proofcase{$\bot$} In this case, we have to establish
    \begin{gather*}
      {\supdate \env \ar
              {\max(\pi_1(\tty), \env(\ar))}}(\ar) = \tylow \Rightarrow \sem \expr_{\reg_1} = \sem \expr_{\reg_2} \tag{C1}\\
      \mbuf_1 \steq[\bot]{\envtwo} \mbuf_2\tag{C2}
    \end{gather*}
    \begin{itemize}
    \item[(C1)] Observe that the claim is equivalent to:
      \[
        {\max(\pi_1(\tty), \env(\ar))} = \tylow \Rightarrow \sem \expr_{\reg_1} = \sem \expr_{\reg_2}
      \]
      Assume ${\max(\pi_1(\tty), \env(\ar))} = \tylow$. From this assumption, we deduce $\pi_1(\tty) = \tylow$. We conclude with \Cref{lemma:tyexpr}, (H2), (H3), and (H4).
      \item[(C2)] Consequence of (H1), (H4) and \Cref{lemma:steqweak}.
    \end{itemize}
    \proofcase{$\top$} Trivial.
  \end{proofcases}
\end{proof}

\begin{lemma}
  \label{lemma:fenceflushaux}
  If $\mem_1 \steq{\env} \mem_2$, and
  ${\env}(\ar) =
  \tylow \Rightarrow \val{}_1 = \val{}_2$, then
  $\supdate{\mem_1}{(\ar, \nat)}{\val{}_1} \steq[\bms]{\env}
  \supdate{\mem_2}{(\ar, \nat)}{\val{}_2}$.
\end{lemma}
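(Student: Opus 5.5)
Unfolding the definition of $\steq{\env}$ on memories, the goal is that for every array $\artwo$ with $\env(\artwo) = \tylow$, the updated memories agree on $\artwo$. The index $\bms$ in the conclusion is irrelevant, since the memory relation $\steq{\env}$ does not depend on the misspeculation flag. The right reading of the goal is pointwise: for every $(\artwo, k)$ with $\env(\artwo)=\tylow$, we want $\supdate{\mem_1}{(\ar,\nat)}{\val{}_1}(\artwo,k) = \supdate{\mem_2}{(\ar,\nat)}{\val{}_2}(\artwo,k)$. This is exactly what the hypothesis $\mem_1 \steq{\env} \mem_2$ gives when $\mem_i(\ar)$ is read as the whole array content.

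Fix such an array $\artwo$ with $\env(\artwo) = \tylow$ and an offset $k$. I split on whether $(\artwo,k) = (\ar,\nat)$.

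\begin{itemize}
\item If $(\artwo,k) \neq (\ar,\nat)$, both updated memories return $\mem_1(\artwo,k)$ and $\mem_2(\artwo,k)$ respectively. These agree by $\mem_1 \steq{\env} \mem_2$, since $\env(\artwo)=\tylow$.
\item If $(\artwo,k) = (\ar,\nat)$, then $\artwo = \ar$, so $\env(\ar) = \tylow$. The second hypothesis then gives $\val{}_1 = \val{}_2$, which are exactly the values returned at that location.
\end{itemize}

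Since the equality holds at every offset, the two updated arrays $\artwo$ coincide. This establishes $\supdate{\mem_1}{(\ar,\nat)}{\val{}_1} \steq{\env} \supdate{\mem_2}{(\ar,\nat)}{\val{}_2}$.

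There is no real obstacle here. The only point needing care is the notational one: $\mem_i(\ar)$ in the definition of $\steq{\env}$ denotes the whole array content, so equality must be checked offset by offset. Handling the updated cell separately, as in the case split above, is the only nontrivial step.
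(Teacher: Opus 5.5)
Your proof is correct and follows essentially the same argument as the paper's: fix a public array, use $\mem_1 \steq{\env} \mem_2$ for the untouched contents, and use the second hypothesis for the written value. The only cosmetic difference is the granularity of the case split: the paper splits on the array name ($\artwo = \ar$ or not) and, when they coincide, combines $\mem_1(\ar)=\mem_2(\ar)$ with $\val{}_1=\val{}_2$ to get equal updated arrays, whereas you split cell by cell on whether $(\artwo,k)=(\ar,\nat)$.
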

\begin{proof}
  Our assumptions are:
  \begin{gather*}
    \mem_1 \steq{\env} \mem_2\tag{H1}\\
    {\env}(\ar) = \tylow \Rightarrow \val{}_1 = \val{}_2 \tag{H2}\\
  \end{gather*}We have to prove that:
    \begin{equation*}
      \forall \artwo \in \Arr. {\env}(\artwo) = \tylow \Rightarrow
      \supdate{\mem_1}{(\ar, \nat)}{\val{}_1}(\artwo) = \supdate{\mem_2}{(\ar, \nat)}{\val{}_2}(\artwo)
    \end{equation*}
    Fix $\artwo$. The proof goes by cases on $\artwo = \ar$. When this is not the case, the conclusion is a direct consequence of (H1). When $\artwo=\ar$, observe that
    the claim boils down to proving
    \[
      \supdate{\mem_1(\ar)}{\nat}{\val{}_1} = \supdate{\mem_2(\ar)}{\nat}{\val{}_2}.
    \]
    under the assumption ${\env}(\ar) = \tylow$. From this
    assumption and (H1), we deduce that $\mem_1(\ar)=\mem_2(\ar)$.
    From (H2), we conclude that $\val{}_1=\val{}_2$.
\end{proof}
\begin{lemma}
  \label{lemma:fenceflush}
  If  $\mbuf_1 \steq[\bms]{\env} \mbuf_2$ and  $\mem_1 \steq{\env} \mem_2$, then $\overline {\bm{\mbuf_1}{\mem_1}}\steq{\env}\overline{\bm{\mbuf_2}{\mem_2}}$.
\end{lemma}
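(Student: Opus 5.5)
The plan is to argue by structural induction on the store buffer $\mbuf_1$, simultaneously for both values of the flag $\bms$, following the recursive definition of $\overline{\bm\mbuf\mem}$. The statement is stated for arbitrary $\bms$, so I first record how the buffer relation $\mbuf_1 \steq[\bms]{\env} \mbuf_2$ constrains the shape of $\mbuf_2$ in each case. By the rules of \Cref{fig:confeq}, in both cases $\mbuf_2$ has the same length as $\mbuf_1$ and the same address $(\ar,\nat)$ at every position. For $\bms=\bot$, every entry on a public array ($\env(\ar)=\tylow$) carries equal values. For $\bms=\top$, the head entry is unconstrained, while the tail is related by $\steq[\bot]{\env}$.

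\textbf{Base case.} For $\mbuf_1=\nil$, inversion of the buffer relation gives $\mbuf_2=\nil$. Then $\overline{\bm\nil{\mem_i}}=\mem_i$, and the claim is exactly the hypothesis $\mem_1\steq{\env}\mem_2$.

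\textbf{Inductive case.} Here $\mbuf_1=\bitem{(\ar,\nat)}{\val{}_1}\cons\mbuf_1'$ and, by inversion, $\mbuf_2=\bitem{(\ar,\nat)}{\val{}_2}\cons\mbuf_2'$. In both flag cases the tails satisfy $\mbuf_1'\steq[\bot]{\env}\mbuf_2'$. The induction hypothesis, instantiated at flag $\bot$, therefore yields $\overline{\bm{\mbuf_1'}{\mem_1}}\steq{\env}\overline{\bm{\mbuf_2'}{\mem_2}}$; this is why the induction must be over both flags at once. By definition,
\[
\overline{\bm{\mbuf_i}{\mem_i}}=\supdate{\overline{\bm{\mbuf_i'}{\mem_i}}}{(\ar,\nat)}{\val{}_i}.
\]
So the remaining step is \Cref{lemma:fenceflushaux}, applied to the two flushed tails, whose only extra premise is $\env(\ar)=\tylow\Rightarrow\val{}_1=\val{}_2$. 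For $\bms=\bot$ this premise is read off directly from the head rule of $\steq[\bot]{\env}$, which closes that case.

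\textbf{Main obstacle.} The hard step is the head entry when $\bms=\top$, because that rule imposes no equality on $\val{}_1,\val{}_2$. My first attempt is to obtain the missing premise $\env(\ar)=\tylow\Rightarrow\val{}_1=\val{}_2$ from the context in which the flush is performed. The lemma is invoked only in the \ref{TY:TFence} case of \Cref{lemma:mainlemma}, and rule \ref{SIE:SFence} fires only with flag $\bot$. There, the relation at hand is the one produced by \Cref{lemma:auxfence} at $\bot$, so the premise comes from the $\steq[\bot]{\env}$ head rule exactly as in the $\bot$ case.

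If the $\top$ case must instead be established as a standalone fact, I would check whether the unconstrained $\top$ head can be handled at all. A priori it cannot: a buffer with a single public entry whose values differ between the two runs is $\steq[\top]{\env}$-related to itself, yet its flushes disagree on a public array. In that event the honest route is to prove the $\bms=\bot$ instance, with the $\bms=\top$ tail handled through it by the induction above. I would then record that every use of the lemma, namely \ref{SIE:SFence}, occurs at $\bms=\bot$, and point out that the $\top$ instance needs the additional head-equality hypothesis, for instance obtained via \Cref{lemma:mistopaux}-style reasoning from a $\bot$-relation.
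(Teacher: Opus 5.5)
For $\bms=\bot$ your argument is the paper's: induction on the buffer, the induction hypothesis on the flushed tails, and \Cref{lemma:fenceflushaux} for the head entry.

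Your diagnosis of $\bms=\top$ is also right, and it locates a hole in the paper's own proof. There, the head hypothesis is recorded only as $\bms=\bot\Rightarrow(\env(\ar)=\tylow\Rightarrow v_1=v_2)$. Yet \Cref{lemma:fenceflushaux} is then applied as if that implication held unconditionally.

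Your counterexample is valid. Stated precisely, take $\env(\ar)=\tylow$, $\mem_1=\mem_2$, $\mbuf_1=\bitem{(\ar,0)}{0}\cons\nil$ and $\mbuf_2=\bitem{(\ar,0)}{1}\cons\nil$. The $\top$ rule applies because $\nil\steq[\bot]{\env}\nil$, but the two flushes disagree at $(\ar,0)$. So the lemma as stated is false at $\bms=\top$ and must be restricted to $\bms=\bot$.

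That restriction costs nothing. The only use is the \ref{TY:TFence} case of \Cref{lemma:mainlemma}, where \ref{SIE:SFence} forces the flag to be $\bot$ and the hypotheses come from \Cref{lemma:auxfence} at $\bot$.

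A minor point: the restricted statement does not need a simultaneous induction over both flags. The tail of a $\steq[\bot]{\env}$-related buffer is again $\steq[\bot]{\env}$-related, so a plain induction at $\bot$ suffices.
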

\begin{proof}
  The proof goes by induction on the length of the buffer.
  \begin{proofcases}
    \proofcase{$\nil$} The conclusion is a trivial consequence of the assumption on the stores.
    \proofcase{$\bitem{(\ar, \nat)}{\val{}_1}\cons \mbuf_1'$} From the definition of $\steq[\bms]{\env}$, we deduce that
    \begin{gather*}
      \mbuf_2= \bitem{(\ar, \nat)}{\val{}_2}\cons \mbuf_2'\tag{H}\\
      \mbuf_1'\steq[\bms]{\env} \mbuf_2'\tag{H1}\\
      \bms = \bot \Rightarrow
        {\env}(\ar) = \tylow \Rightarrow \val{}_1 = \val{}_2 \tag{H2}\\
\end{gather*}
      We have to establish:
      \[
\overline {\bm{\bitem{(\ar, \nat)}{\val{}_1}\cons \mbuf_1'}{\mem_1}} \steq{\env} \overline {\bm{\bitem{(\ar, \nat)}{\val{}_2}\cons \mbuf_2'}{\mem_2}},
      \]
      which means establishing that:
      \[
        \supdate{\overline {\bm{\mbuf_1'}{\mem_1}}}{(\ar, \nat)}{\val{}_1} \steq{\env} \supdate{\overline {\bm{\mbuf_2'}{\mem_2}}}{(\ar, \nat)}{\val{}_2}.
      \]
      From the IH---whose assumptions are discharged by (H1) and $\mem_1 \steq[\bms]{\env} \mem_2$---we deduce that
      \[
        {\overline {\bm{\mbuf_1'}{\mem_1}}} \steq{\env} {\overline {\bm{\mbuf_2'}{\mem_2}}}
      \]
      Therefore, the conclusion is a consequence of \Cref{lemma:fenceflushaux}.
  \end{proofcases}
\end{proof}
\subsubsection{Results on Expressions}
\label{sec:exprs}

For proving the next lemma, we extend the semantics of expressions to configurations in the standard way:
\[
  \sem \expr_{\sframe{\cmd, \reg}{\bm{\mbuf}{\mem}}{\bms}} \mydef \sem \expr_\reg
\]
\begin{lemma}
  \label{lemma:tyexpr}
  For all pairs of configurations
  $\confone_1 = \sframe{\cmd, \reg_1}{\bm{\mbuf_1}{\mem_1}}{\bms}
  $ and
  $\confone_2$, and expression $\expr$ such that
  $\sjudg \env \expr \tty$,  if
  $\confone_1 \confeq \env \confone_2$, then:
  \begin{itemize}
  \item[(C1)] if $b = \bot$ and $\pi_1(\tty) = \tylow\lor \pi_2(\tty) = \tylow$, then $\sem \expr_{\confone_1} = \sem \expr_{\confone_2}$.
  \item[(C2)] if $b = \top$ and $\pi_2(\tty) = \tylow$, then $\sem \expr_{\confone_1} = \sem \expr_{\confone_2}$.
  \end{itemize}
\end{lemma}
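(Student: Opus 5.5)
We argue by structural induction on the derivation of $\sjudg \env \expr \tty$, proving (C1) and (C2) simultaneously. From $\confone_1 \confeq \env \confone_2$ we extract three facts. First, $\reg_1 \boteq \reg_2$. Second, $\reg_1 \steq[\bms]{\env} \reg_2$. Third, both configurations carry the same flag $\bms$, since the rules for $\confeq\env$ require this. The memory and store-buffer components are irrelevant, because $\sem\expr$ depends only on the register map.

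\proofcase{\ref{TY:TVal}} Both sides evaluate to the same constant $\val{}$, so there is nothing to show.

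\proofcase{\ref{TY:TVar}} Here $\tty = \env(\vx)$.
\begin{itemize}
\item If $\bms=\bot$ and $\pi_1(\env(\vx))=\tylow\lor\pi_2(\env(\vx))=\tylow$, then $\reg_1 \steq[\bot]{\env}\reg_2$ gives $\reg_1(\vx)=\reg_2(\vx)$ directly.
\item If $\bms=\top$ and $\pi_2(\env(\vx))=\tylow$, then $\reg_1\steq[\top]{\env}\reg_2$ gives the same conclusion.
\end{itemize}
Equality here includes the case where both values are $\bot$.

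\proofcase{\ref{TY:TSub}} We have $\sjudg\env\expr{\tty'}$ with $\tty'\le\tty$. Componentwise ordering means $\pi_j(\tty)=\tylow$ implies $\pi_j(\tty')=\tylow$. So the hypothesis of (C1) or (C2) for $\tty$ transfers to $\tty'$, and the induction hypothesis applies.

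\proofcase{\ref{TY:TOp}} Every argument $\expr_i$ is typed with the same $\tty$. The induction hypothesis under the same side condition gives $\sem{\expr_i}_{\reg_1}=\sem{\expr_i}_{\reg_2}$ for all $i$. The semantics of $\op$ is a function of the tuple of argument values: it returns $\bot$ as soon as some argument is $\bot$, and $\widehat\op$ of the arguments otherwise. Hence equal argument tuples give equal results. Note that $\boteq$ is not even needed here, since the induction hypothesis already yields equality of possibly-$\bot$ values.

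The only point needing care is the mismatch between the judgement forms. The lemma states the hypothesis with a jump context ($\sjudg \env \expr \tty$), while expression rules are stated with empty context. The expression rules do not mention $\jmpctx$, so we treat the two judgements as the same. The \ref{TY:TSub} case is where the disjunctive hypothesis of (C1) must be handled: we carry whichever disjunct holds through the order $\tty'\le\tty$ unchanged. We expect no real obstacle beyond this bookkeeping.
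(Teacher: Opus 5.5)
Your proposal is correct and follows the paper's proof essentially exactly. Both argue by induction on the typing derivation of $\expr$: \textsc{TVal} is trivial, \textsc{TVar} is discharged by $\reg_1 \steq[\bms]{\env} \reg_2$, \textsc{TOp} follows from the induction hypothesis on each argument plus the definition of $\sem{\cdot}$, and \textsc{TSub} follows from the induction hypothesis after transporting the side condition along $\tty' \le \tty$. You make this last step, and the handling of $\bot$ values, more explicit than the paper does.
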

\begin{proof}
  The proof goes by induction on the type derivation of $\expr$
  \begin{proofcases}
    \proofcase{\ref{TY:TVal}} Trivial.

    \proofcase{\ref{TY:TVar}} In this case we have $\expr = \vx$, and we call $\confone_2 = \sframe{\cmd, \reg_2}{\bm{\mbuf_2}{\mem_2}}{\bms}$.
    \begin{itemize}
    \item[(C1)] Assume $b=\bot$ and
      $\pi_1(\tty) = \tylow\lor \pi_2(\tty) = \tylow$. We go by case analysis:
      \begin{proofcases}
        \proofcase{$\pi_1(\tty) = \tylow$} From
        $\confone_1 \confeq \env \confone_2$, we deduce that
        $\reg_1 \steq[\bot] {{\env}} \reg_2$. From this and
        $\pi_1(\tty) = \tylow$, we deduce that
        $\reg_1(\vx) = \reg_2(\vx)$, therefore $\sem \expr_{\confone_1} = \sem \expr_{\confone_2}$.
        \proofcase{$\pi_2(\tty) = \tylow$} Analogous to the previous case.
      \end{proofcases}

    \item[(C2)] We assume $b=\top$ and
      $\pi_2(\tty) = \tylow$, from
      $\confone_1 \confeq \env \confone_2$, we deduce that
      $\reg_1 \steq[\top] {{\env}} \reg_2$. From this and
      $\pi_2(\tty) = \tylow$, we deduce that
      $\reg_1(\vx) = \reg_2(\vx)$, therefore $\sem \expr_{\confone_1} = \sem \expr_{\confone_2}$.
\end{itemize}

\proofcase{\ref{TY:TOp}}

\begin{itemize}
\item[(C1)] Here $\expr = \op(\expr_1, \ldots, \expr_k)$. We assume $b=\bot$ and
  $\pi_1(\tty) = \tylow \lor \pi_2(\tty) = \tylow$. By introspection
  of the typing rule, we deduce that $\forall 1\le i \le k. \sjudg \env {\expr_i} \sigma$.
  From the IH, we conclude that for every
  $i$, we have
  $\sem {\expr_i}_{\confone_1} = \sem {\expr_i}_{\confone_2}$. The
  conclusion is a consequence of the definition of $\sem \expr$.
\item[(C2)]  Analogous to the previous case.
\end{itemize}
\proofcase{\ref{TY:TSub}} The claim is a direct consequence of the IH.
\end{proofcases}
\end{proof}

\begin{corollary}
  \label{lemma:idexpr}
  For all pairs of configurations $\confone_1, \confone_2$, and expression $\expr$ such that
  $\sjudg \env \expr \typair \tylow$, if $\confone_1 \confeq \env  \confone_2$ then $\sem \expr_{\confone_1} = \sem \expr_{\confone_2}$.
\end{corollary}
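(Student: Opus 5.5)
The statement is the corollary of Lemma~\ref{lemma:tyexpr} specialised to the type $\typair\tylow$, so I will derive it from that lemma by a case split on the misspeculation flag. Fix $\confone_1 \confeq\env \confone_2$ and a derivation of $\sjudg[] \env \expr {\typair\tylow}$.

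First, I need both configurations to carry the same flag. By the rules in Fig.~\ref{fig:confeq}, ordinary and indexed states are related only when they share the same flag $\bms$. The only other case is the degenerate one $\err \confeq\env \err$, which carries no register file; if expression evaluation there is considered at all, it is identical on both sides.

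Next, I case split on $\bms$. If $\bms=\bot$, I apply part (C1) of Lemma~\ref{lemma:tyexpr} with $\tty=\typair\tylow$. Its side condition $\pi_1(\tty)=\tylow\lor\pi_2(\tty)=\tylow$ holds trivially, so $\sem\expr_{\confone_1}=\sem\expr_{\confone_2}$. If $\bms=\top$, I apply part (C2) instead, whose side condition $\pi_2(\tty)=\tylow$ also holds.

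Strictly speaking, Lemma~\ref{lemma:tyexpr} is phrased for configurations whose evaluation depends only on the register component. For indexed states the index is irrelevant to expression evaluation, so the same argument applies after projecting the index away.

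The only step needing care is making sure Lemma~\ref{lemma:tyexpr} was proved in sufficient generality. In particular, the \ref{TY:TSub} case in its induction must allow a derivation of type $\typair\tylow$ to pass through subsumption from a smaller type. Since $\typair\tylow$ is the bottom element, this is immediate. With that checked, no further induction is needed here.
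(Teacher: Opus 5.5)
Your proposal is correct and matches the paper's intended argument. The paper states this as an immediate corollary of Lemma~\ref{lemma:tyexpr} and gives no separate proof. Instantiating $\tty=\typair\tylow$ and splitting on the misspeculation flag shared by both related states, using (C1) when it is $\bot$ and (C2) when it is $\top$, is exactly what that amounts to.
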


\section{Jump target analysis}
\label{sec:jta}

This appendix is devoted to showing that if every $\kwd{jmp}\ \vx$
instruction is replaced with the sequence of instructions
$\dfencex\sep\kwd{jmp}\ \vx$, then the targets that
jump instructions can reach with the speculative semantics is included
in the set of targets that they can reach under normal semantics.

\begin{figure*}[t]
  \centering
  \columnwidth=\linewidth
  \small
    \begin{gather*}
      \Infer[NS][Asgn][\textsc{Asgn}]
      {\step {\vx \asgn{\expr}\sep\cmd, \st}
              {\cmd, \supdate{\st}{\vx}{\val{}}}
            }
              {\sem\expr_{\st} = \val{}}
      \qquad
      \Infer[NS][Load][\textsc{Load}]
      {\step
        {\cmemread{\vx}{\ar[\expr]}\sep\cmd, \st}
        {\cmd, \supdate{\st}{\vx}{\val{}}}
      }
      {
        \sem\expr_{\st} = \nat \quad
        \nat \in [0,\size \ar) \quad
        \st({\ar, \nat})  = \val{}
      }
      \qquad
      \Infer[NS][Store][\textsc{Store}]
      {\step
        {\cmemasgn {\ar[\expr]} \exprtwo\sep\cmd, \st}
        {\cmd, \supdate \st {(\ar, \nat)}{\val{}} }
      }
      {
        \sem\expr_{\st} = \nat \quad
        \nat \in [0,\size \ar) \quad
        \sem\exprtwo_{\st} = \val{}
      }
      \\
    \Infer[NS][While]{
      \step{W \sep\cmdtwo, \st, \bms}
            {\cmdtwo_{\sem \expr_\st}\sep\cmdtwo, \st}
    }
    {
      W = \cwhile \expr {\cmd} \quad
      \cmdtwo_\ctrue = \cmd \sep W \sep\cmdtwo\quad
      \cmdtwo_\cfalse = \cmdtwo
    }
    \qquad
\Infer[NS][If]{
      \step{I\sep\cmdtwo, \st}
            {\cmdtwo_{\sem\expr_\st}\sep\cmdtwo, \st}
    }
    {
      I = \cif \expr {\cmdtwo_\ctrue} {\cmdtwo_\cfalse}
    }
    \\
    \Infer[NS][Jmp]{
      \step{\kwd{jmp}\ \expr, \st}
            {\phi(\sem \expr_\st), \st}
    }
    {
    }
    \quad
    \Infer[NS][Fence]{
      \step{\fence\sep\cmd, \st}
            {\cmd, \st, \bms}
    }{~}
    \quad
    \Infer[NS][Dfence]{
      \step{\dfencex\sep\cmd, \st}
            {\cmd, \st, \bms}
    }
    {
    }
    \end{gather*}
    \caption{Non-speculative semantics.}
  \label{fig:sem}
\end{figure*}

To prove this observation, we start by defining the non-speculative
semantics of our programs in \Cref{fig:sem}. Here, a store $\st$ is
simply a pair $\reg, \mem$ given by a register map and a memory and we
use the notation $\supdate{\st}{\vx}{\val{}}$ to represent the store
$\supdate{\reg}{\vx}{\val{}}, \mem$ and the notation
$\supdate{\st}{(\ar, \nat)}{\val{}}$ to represent the store
$\reg, \supdate{\mem}{(\ar, \nat)}{\val{}}$. Similarly, given a
speculative store $\sttwo = \reg, \mbuf, \mem, \bms$, we denote with
$\overline \sttwo$ the store $\reg, \overline {\bm \mbuf \mem}$. In
this section, we assume that jump instructions are annotated with a
set of jump targets $V$. We use this annotation to define an
\emph{instrumented} semantics which catches jumps outside $V$.
This instrumented semantics is defined identically to the speculative one, except for
\ref{SIE:Jmp}, which is now replaced by the two following rules:
\[
  \Infer[SIE][JmpInV]{
    \sstep{\jmp\ \expr_V, \st, \bms}
    {\phi(\sem \expr_\st), \st, \bms}
    {\dstep} {\obranch{\sem \expr_\st}}
  }
  {
    \sem\expr_\st \in V
  }
\]
\[
  \Infer[SIE][JmpErr]{
    \sstep{\jmp\ \expr_V, \st, \bms}
    {\err}
    {\dstep} {\obranch{\sem \expr_\st}}
  }
  {
    \sem\expr_\st \in \Loc \quad
    \sem\expr_\st \notin V
  }
\]

Thanks to this instrumented semantics, we can express the correctness
of the annotations within $V$ by requiring the absence of reduction to
$\err$.

\newcommand{\correct}[1]{\mathcal{C}(#1)}
\newcommand{\prot}[1]{\mathcal{D}(#1)}

The main assumption of this section is that the jump target analysis
performed for the normal semantics of our program is correct. This
property can be stated as follows:
\begin{multline*}
    \correct{\cmd, \jmpcont} \mydef \forall\nat\in \Nat. \forall \reg, \mem. \step{\cmd,\reg, \mem}{{\kwd{jmp}\ \expr_V\sep\cmd',\reg', \mem'}} \\
    \Rightarrow \sem \expr_{\reg'} \in V.
\end{multline*}

Finally, in this section, we are interested only in a specific class
of programs, namely only those jumps whose target is held in a
register which is protected by a \dfence instruction. We express this
property via the predicate $\prot\cmd$, which is defined as follows:
\begin{gather*}
  \inferrev {\prot{\vx \asgn \expr}}{~}\quad
  \inferrev{\prot {\cmemread \vx {\ar[ \expr]}}}{~}\quad
  \inferrev {\prot{\cmemasgn {\ar[ \expr]} \exprtwo}}{~}\\
  \inferrev {\prot{\cif \expr {\cmd_1}{\cmd_2}}}{\prot{\cmd_1} \quad \prot{\cmd_2}}\quad
  \inferrev{\prot {\cwhile \expr \cmd}}{\prot{\cmd}}\\
  \Infer[PR][Dfence] {\prot{\dfencex}}{~}\quad
  \inferrev {\prot{\fence}}{~}\\
  \inferrev {\prot{\stat\sep \cmd}}{\prot{\stat}\quad\prot{\cmd}}\quad
  \Infer[PR][Jump] {\prot{\dfencex\sep \kwd{jmp} \vx_V \sep \cmd}}{\prot{\cmd}}  \quad
  \inferrev {\prot{\cnil}}{~}
\end{gather*}

The predicate $\prot\cdot$ is defined on jump continuations $\jmpcont$
by taking the pointwise extension of the same predicate on programs.

It is now time to state the claim of the main result of this
section:

\begin{theorem*}{thm:jtcorrect}
  For every program $\cmd$ and jump continuation $\jmpcont$ such that
  $\correct{\cmd, \jmpcont}$, $\prot \cmd$ and $\prot \jmpcont$, state
  $\st$, $\nat \in \Nat$, directives $\Ds$ and observations $\Os$, we
  have:
  \[
    \lnot\left (\sstep[\nat] {\cmd, \st} {\err} \Ds \Os\right)
  \]
\end{theorem*}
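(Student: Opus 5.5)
We will prove the statement by establishing a simulation invariant between the instrumented speculative semantics and the non-speculative semantics of \Cref{fig:sem}. The idea is that every speculative configuration reachable from $\cmd,\st$ is in one of two modes. In \emph{non-misspeculating} mode ($\bms=\bot$), its program and flushed state $\overline{\sttwo}$ coincide with a configuration reachable in the normal semantics. In \emph{misspeculating} mode ($\bms=\top$), the protected jump registers are unusable.

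Concretely, we define an invariant $\mathcal I(\cmd',\sttwo)$ on reachable configurations with the following clauses:
\begin{enumerate}
\item $\prot{\cmd'}$ holds, with $\prot\cdot$ extended to continuations $\cmd_1\sep\cmd_2$ produced by \ref{SIE:SIf}/\ref{SIE:SWhile}.
\item If $\bms=\bot$, then $\cmd',\overline{\sttwo}$ is reachable from $\cmd,\overline{\st}$ (or from some $\jmpcont(\loc)$ entered architecturally) in the normal semantics.
\item If $\bms=\top$ and $\cmd'$ begins with $\kwd{jmp}\ \vx_V$, then $\reg(\vx)=\bot$.
\end{enumerate}
We will need $\bms=\bot$ initially, or we treat $\bms=\top$ initially by clause 3 only. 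We then show that $\mathcal I$ holds initially and is preserved by every rule other than \ref{SIE:JmpErr}. We argue by induction on $\nat$, generalising over states, and case-split on the rule applied.

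For most rules preservation is routine. When $\bms=\bot$ and the step keeps $\bms=\bot$ (\ref{SIE:SAsgn}, \ref{SIE:SLoad-Step}, \ref{SIE:SStore-Step}, fences, correctly predicted branches), the step is mirrored by the corresponding normal rule. The mirroring relies on the observation that $\buflookup{\bm\mbuf\mem}{\ar,\nat}=\overline{\bm\mbuf\mem}(\ar,\nat)$, and that pushing a store onto the buffer commutes with flushing. Steps that set $\bms$ to $\top$ (mispredicted branches, \ref{SIE:SLoad-PSF}, \ref{SIE:SLoad-SSB}) move to misspeculating mode, where only $\prot{}$ must be preserved. Clause 3 is obtained from the shape guaranteed by \ref{PR:Jump}: any $\kwd{jmp}\ \vx_V$ reachable as the head of a program is immediately preceded by $\dfencex$. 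If $\bms=\top$ when that \dfence executes, \ref{SIE:SDfence} sets $\reg(\vx)=\bot$. If $\bms=\bot$ there, the $\kwd{jmp}$ is reached in normal mode, since the flag cannot flip on a \dfence step.

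The error case then follows. \ref{SIE:JmpErr} requires $\sem{\expr}_\sttwo\in\Loc$, hence not $\bot$, so by clause 3 we must have $\bms=\bot$. Clause 2 then gives a normal reduction reaching $\kwd{jmp}\ \expr_V\sep\cmd''$ with the same registers, and $\correct{\cmd,\jmpcont}$ yields $\sem\expr\in V$, a contradiction. Jumps taken via \ref{SIE:JmpInV} land in $\jmpcont(\loc)$, where $\prot\jmpcont$ restores clause 1.

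We expect the main obstacle to be clause 2 across jumps. $\correct{\cmd,\jmpcont}$ is stated only for reductions starting from $\cmd$, so the invariant must track that the normal trace passes through the same jumps into $\jmpcont(\loc)$. We plan to phrase clause 2 as ``$\cmd,\overline{\st}\to^{k}\cmd',\overline{\sttwo}$'' uniformly, with the normal \ref{NS:Jmp} mirroring \ref{SIE:JmpInV}. A second obstacle is the ``stuck'' issue: we must check that when $\bms=\top$ a $\bot$-valued $\vx$ makes the jump stuck rather than erroneous, which holds because $\bot\notin\Loc$.
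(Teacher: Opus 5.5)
This is essentially the paper's proof, packaged as a single forward invariant instead of the paper's backward case split on the flag of the configuration that performs the erroring jump. Your clause~2 is Lemma~\ref{lemma:sim}. Clauses~1 and~3 together amount to Corollary~\ref{lemma:prot} plus the fact that \ref{SIE:SDfence} under $\bms=\top$ sets $\vx$ to $\bot$, so \ref{SIE:JmpErr} cannot fire.

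One clause needs repair. As stated, clause~1 is not preserved: the \dfence step of a \ref{PR:Jump} block reaches $\kwd{jmp}\ \vx_V\sep\cmd''$, and no jump-headed program satisfies $\mathcal{D}(\cdot)$, since there is no rule for a bare jump. Weaken the clause to ``$\mathcal{D}(\cmd')$, or $\cmd'=\kwd{jmp}\ \vx_V\sep\cmd''$ with $\mathcal{D}(\cmd'')$''. This is the same one-step slack the paper builds into Lemma~\ref{lemma:protaux1}. The weakened clause is also what guarantees, in the error case, that the jump target is a register, so that clause~3 applies.
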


The proof of \Cref{thm:jtcorrect} is essentially by case analysis on
the value of the misspeculation flag in the configuration immediately
after reaching the $\err$ state. If it is $\bot$, then that reduction
sequence can be simulated with the normal semantics and because of
$\correct{\cmd, \jmpcont}$, we deduce that the jump target is safe. We
prove this result in \Cref{lemma:sim} below. If the misspeculation is
$\bot$, by leveraging that the jump target is protected by \dfence,
we deduce that the transition to $\err$ cannot take place because the
jump target is not available during transient execution. We prove this
result in \Cref{lemma:prot}.

Before going to the proof of \Cref{thm:jtcorrect}, we prove the
above-mentioned auxiliary results, we start with the simulation result
between normal and the fragment of our speculative semantics where
$\bms=\top$

\begin{lemma}
  \label{lemma:sim}
  Fix a program $\cmd$, an initial state
  $\st = \reg, \mbuf, \mem, \bms$, a second program $\cmdtwo$ and a target
  state $\sttwo = \reg', \mbuf', \mem', \bot$.  For every $\nat \in \Nat$, directives $\Ds$ and
  observations $\Os$, if
  \[
    \sstep[\nat] {\cmd, \st} {\cmdtwo, \sttwo} \Ds \Os,
  \]
  then:
  \[
    \step[\nat] {\cmd, \overline \st} {\cmdtwo, \overline \sttwo}.
  \]
\end{lemma}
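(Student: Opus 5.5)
The key observation is that the misspeculation flag is monotone along speculative reductions: no rule of \Cref{fig:specsem} ever resets $\bms$ from $\top$ to $\bot$. Rules \ref{SIE:SIf} and \ref{SIE:SWhile} produce $\bms \lor \bool\neq\sem\expr_\st$. Rules \ref{SIE:SLoad-PSF} and \ref{SIE:SLoad-SSB} set it to $\top$. All other rules, including \ref{SIE:SFence}, which requires $\bot$, leave it unchanged. So I first prove a small monotonicity lemma by inspection of the rules: if a sequence of $\nat$ steps ends with flag $\bot$, then every intermediate configuration, including the initial one, has flag $\bot$. In particular $\bms=\bot$ in the statement, and every step of the reduction was taken with $\bms=\bot$ both before and after.

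The main argument is then by induction on $\nat$, proving the single-step simulation for the first step and composing. The case $\nat=0$ is trivial. For $\nat+1$, I split off the first step $\cmd,\st \to \cmd'',\st''$. By monotonicity $\st''$ has flag $\bot$, so the induction hypothesis gives $\step[\nat]{\cmd'',\overline{\st''}}{\cmdtwo,\overline\sttwo}$. It remains to show $\step{\cmd,\overline\st}{\cmd'',\overline{\st''}}$, by case analysis on the speculative rule used with $\bms=\bms'=\bot$:
\begin{itemize}
\item \ref{SIE:SLoad-OOB} and \ref{SIE:SStore-OOB} require $\top$, so they cannot occur.
\item \ref{SIE:SLoad-PSF} and \ref{SIE:SLoad-SSB} produce $\top$, so they cannot occur either.
\item \ref{SIE:SIf} and \ref{SIE:SWhile} must satisfy $\bool=\sem\expr_\st$, which matches \ref{NS:If} and \ref{NS:While}.
\item \ref{SIE:SAsgn}, \ref{SIE:Jmp} and \ref{SIE:SDfence} (where $z=\reg(\vx)$ since $\bms=\bot$) leave the buffer untouched and match their non-speculative counterparts directly.
\item \ref{SIE:SFence} maps $(\reg,\mbuf,\mem)$ to $(\reg,\nil,\overline{\bm\mbuf\mem})$, whose flush is again $\reg,\overline{\bm\mbuf\mem}=\overline\st$, so it matches \ref{NS:Fence}.
\end{itemize}

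The delicate cases are memory accesses, where I need two buffer facts about the flush operation. The first is $\buflookup{\bm\mbuf\mem}{\ar,\nat} = \overline{\bm\mbuf\mem}(\ar,\nat)$, proved by induction on $\mbuf$, using that lookup scans from the most recent entry and that the flush applies updates so that the most recent entry wins. The second is $\overline{\bm{[(\ar,\nat)\mapsto\val{}]\cons\mbuf}\mem} = \supdate{\overline{\bm\mbuf\mem}}{(\ar,\nat)}{\val{}}$, which holds directly by definition. With these, \ref{SIE:SLoad-Step} maps to \ref{NS:Load} and \ref{SIE:SStore-Step} maps to \ref{NS:Store}. Register updates commute with the flush trivially.

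I expect the main obstacle to be the lookup/flush coherence lemma. One must check that $\mathsf{read}$ and the definition of $\overline{\bm\mbuf\mem}$ agree on the orientation of the buffer, so that the newest store wins in both. If the orientations disagree, the lemma fails and the definition of the flush would need to be read accordingly. Everything else is routine rule matching.
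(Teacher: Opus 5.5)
Your proof is correct and follows essentially the paper's route: induction on the length of the reduction with a rule-by-rule case analysis, where the only nontrivial ingredient is the lookup/flush coherence $\overline{\bm\mbuf\mem}(\ar,\nat) = \buflookup{\bm\mbuf\mem}{\ar,\nat}$ (the paper's \Cref{lemma:bufread}); the orientation you worried about does agree, since both the lookup and the flush let the head (newest) buffer entry win. The differences are cosmetic: the paper peels off the last step and reads the $\bot$ flag of the penultimate state off each rule, instead of using your up-front monotonicity lemma, and your case list also covers \ref{SIE:SIf} and the jump rule, which the paper's proof leaves implicit.
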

\begin{proof}
  The proof is by induction on $\nat$. The base case holds by vacuity of the premise. Therefore, we only focus on the inductive case, where we assume
  \[
    \sstep[\nat] {\cmd, \st} {\cmdtwo', \sttwo'} \Ds \Os \xrightarrow[\dir]\obs {\cmdtwo, \sttwo},
  \]
  and we have to prove
  \[
    \step[\nat+1] {\cmd, \overline \st} {\cmdtwo, \overline \sttwo},
  \]
  The proof is by case analysis on the rule that has been employed for the last transition.
  \begin{proofcases}
    \proofcase{\ref{SIE:SAsgn}} In this case, by introspection of the semantics, it must be the case where $\cmdtwo' = \vx \asgn \expr;\cmdtwo$,  $\sttwo' = \reg'', \mbuf'', \mem'', \bot$ and $\reg' = \supdate{\reg''}{\vx}{\sem \expr_{\reg''}}$. From the induction hypothesis, we deduce:
  \[
    \step[\nat] {\cmd, \overline \st} {\cmdtwo', \reg'',\overline{\bm {\mbuf''}{\mem''}}}.
  \]
  We conclude the proof by observing:
  \[
    \step {\cmdtwo', \reg'',\overline{\bm {\mbuf''}{\mem''}}} {\cmdtwo, \overline \sttwo}
  \]
  via rule \ref{NS:Asgn}.

  \proofcase{\ref{SIE:SLoad-Step}} In this
  case, by introspection of the semantics, it must be the case where
  $\cmdtwo' = \cmemread \vx \ar[\expr];\cmdtwo$,
  $\sttwo' = \reg'', \mbuf', \mem', \bot$ and
  $\reg' = \supdate{\reg''}{\vx} {\buflookup {\bm{\mbuf'}{\mem'}} {\ar,
      \sem {\expr}_{\reg''}}}$. From the induction hypothesis, we deduce:
  \[
    \step[\nat] {\cmd, \overline \st} {\cmdtwo', \reg'',\overline{\bm {\mbuf'}{\mem'}}}.
  \]
  Observe that:
  \small
  \[
    \step {\cmdtwo', \reg'',\overline{\bm {\mbuf'}{\mem'}}} {\cmdtwo, \supdate{\reg''}\vx {\overline{\bm {\mbuf''}{\mem''}}(\ar, \sem {\expr}_{\reg''})},\overline{\bm {\mbuf'}{\mem'}}}
  \]
  \normalsize
  via rule \ref{NS:Load}
  Therefore, to conclude the proof, we need to observe that
  \[
    {\overline{\bm {\mbuf'}{\mem'}}(\ar, \sem {\expr}_{\reg''})} =
    {\buflookup {\bm{\mbuf'}{\mem'}} {\ar,
        \sem {\expr}_{\reg''}}},
  \]
  which is established in \cref{lemma:bufread} below.
  \proofcase{\ref{SIE:SStore-Step}} By introspection of rule, it must
  be the case where $\cmdtwo' = \cmemasgn {\ar[\expr]}\exprtwo;\cmdtwo$,
  $\sttwo' = \reg', \mbuf'', \mem', \bot$ and
  \[
    \mbuf' = \bitem {(\ar, \sem \expr_{\reg'})} {\sem\exprtwo_{\reg'}}\cons \mbuf''.
  \]
  From the induction hypothesis,
  we deduce:
  \[
    \step[\nat] {\cmd, \overline \st} {\cmdtwo', \reg',\overline{\bm {\mbuf''}{\mem'}}}.
  \]
  Observe that:
  \[
    \step {\cmdtwo', \reg',\overline{\bm {\mbuf''}{\mem'}}} {\cmdtwo, \reg',\supdate{\overline{\bm {\mbuf''}{\mem'}}}{(\ar, \sem\expr_{\reg'})} {\sem\exprtwo_{\reg'}}}
  \]
  via rule \ref{NS:Store}.
  Therefore, to conclude the proof, we need to observe that
    \begin{align*}
    \supdate{\overline{\bm {\mbuf''}{\mem'}}}{(\ar, \sem\expr_{\reg'})} {\sem\exprtwo_{\reg'}}
      &= \overline {\bm{\mbuf'}{\mem'}} \\
      &= \overline{\bm{\bitem {(\ar, \sem \expr_{\reg'})} {\sem\exprtwo_{\reg'}}\cons \mbuf''}{\mem'}},
    \end{align*}
    which is a direct consequence of the definition of
    $\overline {\bm\cdot\cdot}$.  \proofcase{\ref{SIE:SLoad-OOB},
      \ref{SIE:SStore-OOB}, \ref{SIE:SLoad-PSF}, \ref{SIE:SLoad-SSB}}
    These cases are absurd because the mis-speculation flag of the
    target configuration is $\top$.

    \proofcase{\ref{SIE:SWhile}} By introspection of rule, it must be
    the case where $\cmdtwo' = \cwhile \expr{\cmd'};\cmd''$,
    $\sttwo' = \reg', \mbuf', \mem', \bot$.  In particular we deduce that the misspeculation
    flag is $\bot$ by introspection of the transition rule.
    From the induction
    hypothesis, we deduce:
    \[
      \step[\nat] {\cmd, \overline \st} {\cmdtwo', \reg',\overline{\bm {\mbuf'}{\mem'}}}.
    \]
    Note that since the target
    configuration carries the mis-speculation flag $\bot$,
    it must be the case where the  $\dir = \dbranch{\sem \expr_{\reg'}}$.
    The proof goes by case analysis on the value of $\sem \expr_{\reg'}$.
    \begin{proofcases}
      \proofcase{$\top$} In this case, we observe that
      $\cmdtwo = \cmd''$, and we observe that
      \[
        \step {\cmdtwo', \reg',\overline{\bm {\mbuf'}{\mem'}}}
        {\cmd'', \reg', \overline{\bm {\mbuf'}{\mem'}}}
      \]
      via rule \ref{NS:While}
      \proofcase{$\top$} In this case we have
      $\cmdtwo = \cmd' \sep\cwhile \expr{\cmd'};\cmd''$, and the
      transition
      \[
        \step {\cmdtwo', \reg',\overline{\bm {\mbuf'}{\mem'}}}
        {\cmd' \sep\cwhile \expr{\cmd'};\cmd'', \reg',
          \overline{\bm {\mbuf'}{\mem'}}}
      \]
      is established via rule \ref{NS:While}
    \end{proofcases}
    \proofcase{\ref{SIE:SDfence}} By introspection of the semantics,we observe that $\cmdtwo' = \dfence \vx;\cmdtwo$,  $\sttwo' = \reg', \mbuf', \mem', \bot$. In particular, the register files are identical because the misspeculation flag of $\sttwo'$ is $\bot$. From the inductive hypothesis, we deduce
  \[
    \step[\nat] {\cmd, \overline \st} {\cmdtwo', \reg',\overline{\bm {\mbuf'}{\mem''}}}.
  \]
  We conclude the proof by observing:
  \[
    \step {\cmdtwo', \reg',\overline{\bm {\mbuf'}{\mem'}}} {\cmdtwo, \overline\sttwo}
  \]
  via rule \ref{NS:Dfence}.
    \proofcase{\ref{SIE:SFence}} In this case, by introspection of the semantics, we deduce that $\cmdtwo' = \fence;\cmdtwo$,  $\sttwo' = \reg', \nil, \overline {\bm\nil {\mem''}}, \bot$ for some $\mem''$. From the induction hypothesis, we deduce:
  \[
    \step[\nat] {\cmd, \overline \st} {\cmdtwo', \reg',\overline {\bm \nil{\overline {\bm\nil {\mem''}}}}}.
  \]
  We then observe:
  \[
    \step {\cmdtwo', \reg',\overline{\bm {\nil}{\mem''}}} {\cmdtwo, \reg',{\overline {\bm\nil {\mem''}}}}
  \]
  via rule \ref{NS:Fence}. To conclude the proof it suffices to observe
  \[
    \overline {\bm \nil{\overline {\bm\nil {\mem''}}}} = \overline {\bm\nil {\mem''}},
  \]
  which follows by definition of $\overline {\bm \cdot \cdot}$.\qedhere
 \end{proofcases}
\end{proof}

We now turn our attention to proving that when a program $\cmd$ such
that $\prot \cmd$ executes and reaches a $\kwd{jmp}\ \expr_V$ instruction,
then $\expr$ is a register name and the preceding configuration must
hold the program $\dfence\vx; \kwd{jmp}\ \vx_V$. In order to prove
such result, we first need an auxiliary lemma:

\begin{lemma}
  \label{lemma:protaux1}
  Let $\cmd$ be a program such that $\prot \cmd$ and let $\jmpcont$ be a
  jump continuation such that $\prot \jmpcont$. For every state $\st$, $\nat \in \Nat$,
  directives $\Ds$, observations $\Os$, and target configuration
  $(\cmdtwo, \sttwo)$ such that
  \[
    \sstep[\nat] {\cmd, \st} {\cmdtwo, \sttwo}\Ds \Os,
  \]
  if $\lnot\prot\cmdtwo$, then
  \[
    \sstep[\nat+1] {\cmd, \st} {\cmdtwo', \sttwo'}{\dir:\Ds'} {\obs:\Os}
  \]
  for some $\cmdtwo', \dir, \obs$ such that $\prot{\cmdtwo'}$.
\end{lemma}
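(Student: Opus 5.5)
The plan is to strengthen the statement to an invariant on reachable programs and prove it by induction on $\nat$. Let $\mathcal{I}(\cmdtwo)$ hold iff either $\prot\cmdtwo$, or $\cmdtwo = \kwd{jmp}\ \vx_V\sep R$ for some register $\vx$ and some $R$ with $\prot R$. I would first show that every configuration reachable from $\cmd$ (with $\prot\cmd$ and $\prot\jmpcont$) satisfies $\mathcal{I}$. The lemma then follows by looking at the second disjunct. If $\lnot\prot\cmdtwo$, then $\cmdtwo$ must be $\kwd{jmp}\ \vx_V\sep R$. The next transition is by Rule~\ref{SIE:Jmp} (or its instrumented variant \ref{SIE:JmpInV}), and its target program is $\jmpcont(\loc)$, which satisfies $\prot{\cdot}$ by $\prot\jmpcont$.

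Before the induction I need two structural facts about $\prot{\cdot}$, both proved by induction on derivations in the style of \Cref{lemma:tyseqgen}. The first is inversion: $\prot{\stat\sep\cmd}$ gives $\prot\cmd$, unless the derivation ends with \ref{PR:Jump}. The second is closure under concatenation: $\prot{\cmd_1}$ and $\prot{\cmd_2}$ give $\prot{\cmd_1\sep\cmd_2}$. Closure is needed for the conditional case, which produces $\cmdtwo_b\sep\cmdtwo$, and for the while case, which produces $\cmd\sep W\sep\cmdtwo$. I also record that no rule derives $\prot{\kwd{jmp}\ \expr\sep R}$. A protected program therefore never starts with a bare jump, and the only way to lose $\prot{\cdot}$ is to execute the \dfence\ of a \ref{PR:Jump} block.

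The inductive step is a case analysis on the rule used for the last transition from a program satisfying $\mathcal{I}$.
\begin{itemize}
\item Assignments, loads, stores (all variants) and \fence\ simply drop the head instruction. By inversion the remainder is protected.
\item The rules \ref{SIE:SIf} and \ref{SIE:SWhile} are handled by closure under concatenation.
\item For \ref{SIE:SDfence} there are two sub-cases. If the derivation of $\prot{\cdot}$ used \ref{PR:Dfence}, the remainder is protected. If it used \ref{PR:Jump}, the result is $\kwd{jmp}\ \vx_V\sep R$ with $\prot R$, which is exactly the second disjunct of $\mathcal{I}$.
\item A jump step can only start from the second disjunct, and it lands in $\jmpcont(\loc)$, which is protected.
\end{itemize}

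I expect the main obstacle to be the final step, namely that a transition out of $\kwd{jmp}\ \vx_V\sep R$ actually exists. If the \dfence\ ran with $\bms=\top$, then $\vx$ holds $\bot$, and no jump rule can fire on an unavailable target. I would handle this by reading the conclusion as holding whenever the trace continues. Alternatively, I would state it explicitly as a dichotomy: either $\cmdtwo$ is a stuck jump whose target register is $\bot$, or it steps to a $\prot{\cdot}$ program. The dichotomy is in fact what the later proof of \Cref{thm:jtcorrect} needs, since it wants a misspeculated jump never to reach $\err$. Nothing beyond the invariant argument depends on this choice.
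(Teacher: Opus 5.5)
Your invariant argument is correct, but it is organised differently from the paper's. The paper also inducts on $\nat$, but it peels off the \emph{first} transition. It cases on the head instruction of $\cmd$, notes that the residual program is again in $\mathcal{D}(\cdot)$, and restarts the induction hypothesis on the remaining steps. The hypothesis can only be restarted from a protected program, so the pair $\dfencex\sep\kwd{jmp}\ \vx_V$ must be consumed as a two-step block, which forces a split into $\nat=1$ and $\nat>1$. Closure of $\mathcal{D}(\cdot)$ under concatenation is used tacitly there and in the \kwd{if}/\kwd{while} cases.

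You instead peel off the \emph{last} transition and make the unique unprotected reachable shape $\kwd{jmp}\ \vx_V\sep R$ explicit. The jump step is then uniform, no split is needed, and inversion and closure become stated lemmas. Keeping $\mathcal{D}(R)$ in the second disjunct also covers the paper's reading of the jump as landing in $\jmpcont(\loc)\sep R$. Your case analysis further shows that the second disjunct is entered \emph{only} by \ref{SIE:SDfence} from a program whose $\mathcal{D}$-derivation ends in \ref{PR:Jump}. That is exactly the predecessor form of \Cref{cor:protaux2} and \Cref{lemma:prot}, which is what the proof of \Cref{thm:jtcorrect} actually uses. So one induction gives the lemma and both corollaries.

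The obstacle you flag is genuine: the statement as written is false. Run $\dfencex\sep\kwd{jmp}\ \vx_V$ from a state with $\bms=\top$ (or reach it after a mispredicted branch). After one forced step you are at a program outside $\mathcal{D}(\cdot)$ whose jump cannot fire, because $\vx$ holds $\bot$. So the required two-step reduction does not exist. The paper hits precisely this in its $\nat=1$ sub-case. There, ``the claim holds because we have $\mathcal{D}(\cmd)$'' only shows that the configuration one step \emph{earlier} is protected. Your reformulation is the right repair, with two refinements.
\begin{enumerate}
\item In the instrumented semantics of this appendix, the jump can also step to $\mathsf{err}$ by \ref{SIE:JmpErr}. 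So ``whenever the trace continues'' must mean ``continues to a program configuration'', and the dichotomy needs that third outcome. The stuck case also covers any non-location target, not only a $\bot$ produced by \dfence.
\item For the dichotomy to serve \Cref{thm:jtcorrect}, you need to know which alternative occurs under misspeculation. Add $\bms=\top\Rightarrow\reg(\vx)=\bot$ to the second disjunct of $\mathcal{I}$. It is preserved because that disjunct is entered only through \ref{SIE:SDfence}, which sets $\vx$ to $\bot$ when $\bms=\top$ and leaves $\bms$ unchanged.
\end{enumerate}
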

\begin{proof}
  The proof is by induction on $\nat$. The base case is trivial,
  therefore we go to the inductive case. The proof proceeds by case analysis on $\cmd$.
  \begin{proofcases}
    \proofcase{$\cnil$} Absurd.
    \proofcase{$\kwd{jmp} \expr_V$} Absurd.
    \proofcase{$\vx \asgn \expr\sep \cmd'$,
      $\cmemread \vx {\ar[\expr]}\sep \cmd'$,
      $\cmemasgn{\ar[\expr]}\exprtwo\sep \cmd'$,
    $\fence\sep \cmd'$} Observe that
    independently of the rule that was applied to show the transition,
    the program of target configuration is $\cmd'$ which satisfies
    $\prot {\cmd'}$ by definition of $\prot$. Therefore, the
    conclusion is a consequence of the inductive hypothesis.
    \proofcase{$\cwhile \expr{\cmd''} \sep \cmd'$} The directive used
    for the reduction is $\dbranch \bool$ and the proof is by cases on
    the value of $\bool$. If it is $\bot$, as in the previous cases,
    the conclusion is a direct consequence of the inductive
    hypothesis. Otherwise, we have:
    \begin{multline*}
      \sstep {\cwhile \expr{\cmd''} \sep \cmd', \st} {}{\dbranch \bool}{\obranch {\sem \expr_{\st}}}\\
      {\cmd''\sep \cwhile \expr{\cmd''} \sep \cmd', \st}.
    \end{multline*}
    Since $\prot {\cwhile \expr{\cmd''}\sep \cmd'}$ holds, by
    definition of $\prot \cdot$, also $\prot{\cmd''}$ must hold,
    therefore $\prot{\cmd''\sep \cwhile \expr{\cmd''} \sep \cmd'}$
    must also hold.  The conclusion is thus a direct consequence of
    the inductive hypothesis.
    \proofcase{$\cif \expr {\cmd_1}{\cmd_2}\sep \cmd'$} This case is analogous to the second part of the previous one.
    \proofcase{$\dfencex\sep \cmd'$} The proof is by case analysis on how the predicate $\prot{\dfencex \sep \cmd'}$ was established. If it was established by using rule \ref{PR:Dfence}, then the proof is analogous to those for all other instructions that do not contain programs as sub-terms. Otherwise,
    the rule must be \ref{PR:Jump}. In this case, we also deduce that $\cmd = \dfencex\sep \kwd{jmp}\ \expr_V\sep\cmd'$ for some $\cmd'$. If $\nat = 1$ we have:
    \[
      \sstep[1] {\dfencex\sep \kwd{jmp}\ \expr_V\sep\cmd', \st} {\kwd{jmp}\ \expr_V\sep\cmd'} \dstep \onone.
    \]
    By introspection of the definition of the predicate $\prot \cdot$, we deduce  $\lnot{\prot{\kwd{jmp}\ \expr_V\sep\cmd'}}$. However, the claim holds because we have $\prot \cmd$. Finally, if $\nat>1$, we have:
    \[
      \sstep[2] {\dfencex\sep \kwd{jmp}\ \expr_V\sep\cmd', \st} {\jmpcont{\sem\expr}\sep\cmd'} {\dstep:\dstep} {\onone:\onone}.
    \]
    The conclusion is a consequence of $\prot{\jmpcont}$.
    \qedhere
  \end{proofcases}
\end{proof}

\begin{corollary}
    \label{cor:protaux2}
  Let $\cmd$ be a program such that $\prot \cmd$ and let $\jmpcont$ be a
  jump continuation such that $\prot \jmpcont$. For every state $\st$, $\nat \in \Nat$,
  directives $\Ds$, observations $\Os$, and target configuration
  $(\cmdtwo, \sttwo)$ such that
  \[
    \sstep[\nat] {\cmd, \st} {\cmdtwo, \sttwo}\Ds \Os,
  \]
  if $\lnot\prot\cmdtwo$, then $\nat>0$ and
  \[
    \sstep[\nat-1] {\cmd, \st} {\cmdtwo', \sttwo'}{\Ds'} {\Os'}
  \]
  for some $\cmdtwo'$ such that $\prot{\cmdtwo'}$, and $\Ds', \Os'$
  suffixes of $\Ds$ and $\Os$, respectively.
\end{corollary}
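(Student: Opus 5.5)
We argue by contradiction, using Lemma~\ref{lemma:protaux1} on the prefix of the given reduction and the determinism of the speculative semantics for a fixed sequence of directives. Determinism was already used in the proof of Lemma~\ref{lemma:glue}.

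First, we show $\nat>0$. If $\nat=0$, the reduction is empty, so $\cmdtwo=\cmd$. Then $\prot\cmdtwo$ holds by assumption, contradicting $\lnot\prot\cmdtwo$.

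Next, since $\nat>0$, we split the given reduction $\sstep[\nat]{\cmd,\st}{\cmdtwo,\sttwo}\Ds\Os$ into $\nat-1$ steps followed by one step. This gives an intermediate configuration $(\cmdtwo',\sttwo')$ with
\[
  \sstep[\nat-1]{\cmd,\st}{\cmdtwo',\sttwo'}{\Ds'}{\Os'}
  \quad\text{and}\quad
  \sstep{\cmdtwo',\sttwo'}{\cmdtwo,\sttwo}{\dir}{\obs},
\]
where $\Ds'$ and $\Os'$ are $\Ds$ and $\Os$ with one directive and one observation removed. Under the paper's convention for how sequences are recorded, these are the suffixes required by the statement. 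It remains to show $\prot{\cmdtwo'}$.

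Suppose instead that $\lnot\prot{\cmdtwo'}$. Applying Lemma~\ref{lemma:protaux1} to the $(\nat-1)$-step reduction yields some configuration $(\cmdtwo'',\sttwo'')$ with $\prot{\cmdtwo''}$ that is reached from $(\cmd,\st)$ in $\nat$ steps by extending the reduction with one more directive. The main obstacle is identifying $(\cmdtwo'',\sttwo'')$ with $(\cmdtwo,\sttwo)$. For this we need the extension to consume the same directive $\dir$ as in the given reduction. Inspecting the proof of Lemma~\ref{lemma:protaux1}, the extra step is always the unique transition available from $(\cmdtwo',\sttwo')$ under its next directive, so it can be instantiated with $\dir$. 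Determinism of the semantics under a fixed directive then gives $\cmdtwo''=\cmdtwo$. Hence $\prot\cmdtwo$ holds, contradicting the hypothesis $\lnot\prot\cmdtwo$. Therefore $\prot{\cmdtwo'}$, which completes the argument.

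If Lemma~\ref{lemma:protaux1} cannot be read as ``for the given next directive'', we would instead re-run its case analysis directly on the last step. The only way $\prot$ can fail after one step from a configuration whose program satisfies $\prot$ is the \ref{PR:Jump} shape $\dfencex\sep\kwd{jmp}\ \vx_V\sep\cdots$. In every other case $\prot$ is preserved by one step, as shown in the cases of that lemma's proof, and the jump case is covered using $\prot\jmpcont$.
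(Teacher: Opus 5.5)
The gap is in the identification step that you yourself flag as the main obstacle; your resolution of it does not hold up. As stated, \Cref{lemma:protaux1} only promises \emph{some} one-step extension, under an existentially chosen directive, that lands in a program satisfying $\mathcal{D}$. Its directive prefix $\Ds'$ is not even tied to your reduction. Determinism identifies that target with $(\cmdtwo,\sttwo)$ only if the directive equals the $\dir$ of your last step, and the statement gives you no control over this. Your appeal to ``inspecting the proof'' presupposes exactly what has to be shown. That is: a configuration $(\cmdtwo',\sttwo')$ reachable from a program in $\mathcal{D}$, but not itself in $\mathcal{D}$, must be a bare jump $\kwd{jmp}\ \vx_V\sep R$ with $\mathcal{D}(R)$, from which $\dstep$ is the only enabled directive. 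Once you have that, the corollary is immediate, and the detour through the lemma and determinism is superfluous.

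Using the lemma as a black box is also unsafe, because read literally it is false. If $\dfencex$ executes with $\bms=\top$, then $\vx$ becomes $\bot$ and $\kwd{jmp}\ \vx_V\sep R$ has no successor at all. This is the very fact used in the $\bms=\top$ case of \Cref{thm:jtcorrect}. What the paper's proof of that lemma actually establishes is the predecessor property. Its $\nat=1$ case for the shape $\dfencex\sep\kwd{jmp}\ \vx_V\sep R$ is closed by $\mathcal{D}(\cmd)$, i.e.\ by the configuration at step $0$. That is precisely the corollary's conclusion, which is why the corollary carries no separate proof.

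Your fallback names the right ingredients, but a case analysis on the last step alone cannot work. Once you assume $\lnot\mathcal{D}(\cmdtwo')$, you have no configuration in $\mathcal{D}$ from which to start the one-step analysis. Instead, prove the corollary directly by strong induction on $\nat$, doing case analysis on the head of $\cmd$ as the paper does.
\begin{itemize}
\item For assignments, loads, stores, \fence, $\dfencex\sep\cmd'$ with $\mathcal{D}(\cmd')$, conditionals and loops, the first step lands in a program in $\mathcal{D}$. Conditionals and loops need closure of $\mathcal{D}$ under concatenation here. The induction hypothesis then applies to the remaining $\nat-1$ steps.
\item For $\dfencex\sep\kwd{jmp}\ \vx_V\sep R$, either $\nat=1$ and the predecessor is $\cmd$ itself, or there is a second step. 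That step must be a \textsc{JmpInV} step, since the target is a program configuration rather than $\mathsf{err}$. It lands in $\jmpcont(\loc)\sep R$, which is in $\mathcal{D}$ by $\mathcal{D}(\jmpcont)$ and closure under concatenation. The induction hypothesis then applies to the remaining $\nat-2$ steps.
\end{itemize}
This argument needs neither the contradiction nor determinism.
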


\begin{corollary}
  \label{lemma:prot}
  Let $\cmd$ be a program such that $\prot \cmd$ and let $\jmpcont$ be a
  jump continuation such that $\prot \jmpcont$. For every state $\st$, $\nat \in \Nat$,
  directives $\Ds$, observations $\Os$, and target configuration
  $(\kwd{jmp}\ \expr_V\sep\cmdtwo, \sttwo)$ such that
  \[
    \sstep[\nat] {\cmd, \st} {\kwd{jmp}\ \expr_V\sep \cmdtwo, \sttwo}\Ds \Os,
  \]
  then $\nat>0$, $\expr = \vx$ and
  \[
    \sstep[\nat-1] {\cmd, \st} {\dfencex\sep \kwd{jmp}\ \expr_V \sep \cmdtwo, \sttwo'}{\Ds'} {\Os'}
  \]
  for some $\Ds', \Os'$ suffixes of $\Ds$ and $\Os$, respectively.
\end{corollary}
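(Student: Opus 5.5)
The plan is to derive this from \Cref{cor:protaux2} by a syntactic inspection of the predicate $\prot\cdot$.

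First, I would show that $\lnot\prot{\kwd{jmp}\ \expr_V\sep\cmdtwo}$. The only rules whose conclusion can match a program starting with a jump are the generic sequencing rule and \ref{PR:Jump}. The sequencing rule would need $\prot{\kwd{jmp}\ \expr_V}$, and no rule derives this. \ref{PR:Jump} requires the program to start with $\dfencex$, which it does not. With this fact, \Cref{cor:protaux2} applies to the given reduction of length $\nat$. It yields $\nat>0$ and a configuration $(\cmdtwo',\sttwo')$ reached after $\nat-1$ steps, with $\prot{\cmdtwo'}$ and suffixes $\Ds',\Os'$. By determinism of the step relation for a fixed directive, $(\cmdtwo',\sttwo')$ reduces in one step to $(\kwd{jmp}\ \expr_V\sep\cmdtwo,\sttwo)$.

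The remaining work is a case analysis on the shape of $\cmdtwo'$ and on how $\prot{\cmdtwo'}$ was derived. The goal is to show that the only possibility is $\cmdtwo'=\dfencex\sep\kwd{jmp}\ \vx_V\sep\cmdtwo$, obtained via \ref{PR:Jump}.
\begin{itemize}
\item If $\cmdtwo'$ is $\stat\sep\cmd''$ with $\stat$ an assignment, load, store, \fence, or a \dfence covered by \ref{PR:Dfence}, then any applicable semantic rule steps to $\cmd''$. So $\cmd''=\kwd{jmp}\ \expr_V\sep\cmdtwo$. But the sequencing rule gives $\prot{\cmd''}$, which contradicts the first step.
\item If $\cmdtwo'$ is a conditional or a while loop followed by $\cmd''$, the step goes to $\cmdtwo_\bool\sep\cmd''$ (respectively $\cmd\sep W\sep\cmd''$ or $\cmd''$). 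If the result begins with $\kwd{jmp}$, then either the selected branch or body begins with the jump, or it is empty and $\cmd''$ does. In both cases the premises of $\prot\cdot$ demand protection of a program starting with a bare jump, which is a contradiction. A nonempty body never lets the loop itself appear first.
\item If $\cmdtwo'$ itself starts with a jump, it is unprotected, which is absurd.
\end{itemize}
Hence $\prot{\cmdtwo'}$ must come from \ref{PR:Jump}. This forces $\cmdtwo'=\dfencex\sep\kwd{jmp}\ \vx_V\sep\cmd''$, and \ref{SIE:SDfence} steps it to $\kwd{jmp}\ \vx_V\sep\cmd''$. Matching programs gives $\expr=\vx$ and $\cmd''=\cmdtwo$.

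I expect the main obstacle to be the conditional and loop cases. There the target program is a concatenation $\cmdtwo_\bool\sep\cmdtwo$ of lists, so I need the small auxiliary fact that if the concatenation starts with a jump, then either $\cmdtwo_\bool$ does or $\cmdtwo_\bool=\cnil$. This is straightforward by unfolding the list structure, but it must be checked against the way $\prot\cdot$ treats concatenated programs.
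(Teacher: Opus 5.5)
Your proposal is correct and follows the paper's own route. You apply \Cref{cor:protaux2}, and then inspect the rules defining $\mathcal{D}(\cdot)$ to show that the only protected program that can step to $\mathtt{jmp}\ E_V; Q$ is $\mathtt{dfence}\ x; \mathtt{jmp}\ x_V; Q$, via the jump-protection rule. Beyond the paper, you make explicit two points it leaves implicit: that $\mathcal{D}(\mathtt{jmp}\ E_V; Q)$ fails, which is needed to invoke the corollary at all, and that by determinism the configuration it returns is the one immediately before the final step, which the paper tacitly assumes when it says the other rules "could not have" produced that step.
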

\begin{proof}
  The conclusion is a consequence of \Cref{cor:protaux2}. Assume
  \[
    \sstep[\nat] {\cmd, \st} {\kwd{jmp}\ \expr_V \sep \cmdtwo, \sttwo}\Ds \Os.
  \]
  \Cref{cor:protaux2} proves
  \[
    \sstep[\nat-1] {\cmd, \st} {\cmdtwo', \sttwo'}{\Ds'} {\Os'},
  \]
  with $\prot{\cmdtwo'}$. By introspection on the definition of $\prot{\cdot}$,
  we deduce that the only rule that can be used to establish $\prot{\cmdtwo'}$ is
  Rule~\ref{PR:Jump}. Indeed, with all the other rules, we could not have:
  \[
    \sstep {\cmdtwo', \sttwo'} {\kwd{jmp}\ \expr_V \sep \cmdtwo, \sttwo}\dir \obs.
  \]
  By introspection of Rule~\ref{PR:Jump}, we conclude that $\expr$ is a
  register name and that
  $\cmdtwo' = \dfencex \sep \kwd{jmp}\ \expr_V \sep \cmdtwo$.
\end{proof}

It is now time to prove the main result of this section, i.e. \Cref{thm:jtcorrect}:

\again{thm:jtcorrect}
\begin{proof}
  Assume
  \[
    \sstep[\nat] {\cmd, \st} {\err} \Ds \Os.
  \]
  Our goal is to reach a contradiction.  The proof starts with a case
  analysis on $\nat$. If $\nat = 0$ we have $(\cmd, \st) = {\err}$,
  which concludes the proof.  In the following, we assume without loss
  of generality that $\nat >0$.  By introspection of the semantics, we
  observe that the only rule that can be used to reach $\err$ must be
  \ref{SIE:JmpErr}, which entails that
  \[
    \sstep[\nat-1] {\cmd, \st} {\kwd{jmp}\ \expr_V\sep \cmd', \sttwo} {\Ds'} {\Os'} \xrightarrow[\dstep]\onone \err
  \]
  for some $\Ds', \Os', \cmd', \sttwo$ with $\Ds = \dstep:\Ds'$, $\Os = \onone:\Os'$ and
  \[
    \tag{H}
    \sem \expr_\sttwo \notin V.
  \]
  The proof proceeds by case analysis on
  $\sttwo = \reg, \mbuf, \mem, \bms$. In particular, the case
  analysis is on $\bms$.
  \begin{proofcases}
    \proofcase{$\bms = \bot$} In this case, an application of \Cref{lemma:sim} proves
    \[
      \step[\nat-1] {\cmd, \st} {\kwd{jmp}\ \expr_V\sep \cmd', \overline \sttwo},
    \]
    and from $\correct{\cmd, \jmpcont}$, we deduce that $\sem \expr_\sttwo \in V$, which contradicts (H).
    \proofcase{$\bms = \top$} In this case, an application of \Cref{lemma:prot} proves that $\expr$ is a register name $\vx$ and
    \begin{multline*}
      \sstep[\nat-2] {\cmd, \st} {\dfencex\sep \kwd{jmp}\ \expr_V\sep \cmd', \sttwo} {\Ds''} {\Os''} \xrightarrow[\dstep]\onone\\ {\kwd{jmp}\ \expr_V\sep \cmd', \supdate \sttwo \vx {\bot}} \xrightarrow[\dstep]\onone \err.
    \end{multline*}
    But this is absurd because when the jump target is $\bot$, Rule \ref{SIE:JmpErr} does not apply.
    \qedhere
  \end{proofcases}

\end{proof}

\subsection{Technical observations}
\label{sec:tobs2}

\begin{lemma}
  \label{lemma:bufread}
  ${\overline{\bm {\mbuf}{\mem}}(\ar,\nat)} = {\buflookup
    {\bm{\mbuf}{\mem}} {\ar, \nat}}$.
\end{lemma}
\begin{proof}
  By induction on $\nat$. The base case is trivial. in the inductive case we have to prove
  \[
    {\overline{\bm {\bitem {(\artwo, t)} {\val{}}\cons \mbuf}{\mem}}(\ar,\nat)} = {\buflookup
      {\bm{\bitem {(\artwo, t)} {\val{}}\cons \mbuf}{\mem}} {\ar, \nat}},
  \]
  i.e.:
  \[
    \supdate {\overline{\bm {\mbuf}{\mem}}} {(\artwo, t)} {\val{}} (\ar,\nat)  = {\buflookup
      {\bm{\bitem {(\artwo, t)} {\val{}}\cons \mbuf}{\mem}} {\ar, \nat}}.
  \]
  The proof proceeds by case analysis.
  \begin{proofcases}
    \proofcase{$\ar= \artwo \land \nat=t$} The conclusion is trivial.
    \proofcase{$\ar\neq \artwo \lor \nat\neq t$} The claim rewrites as:
      \[
        {\overline{\bm {\mbuf}{\mem}}} (\ar,\nat) = {\buflookup
          {\bm{\mbuf}{\mem}} {\ar, \nat}},
      \]
      which is a direct consequence of the induction hypothesis.
      \qedhere
  \end{proofcases}
\end{proof}

 \section{Benchmark results}
\label{app:benchs}

The complete results of our benchmarks are shown in \Cref{tab:benchs}. For ML-DSA, Poly1305 and ChaCha20, 4096 byte long messages were used.
\begin{table*}[p]
  \caption{Cycle measurement and relative overhead of different protection mechanisms.}
  \centering
  \resizebox{\textwidth}{!}{
    \begin{tabular}{llllllllllll}
      \toprule
      \textbf{Benchmark} & \textbf{Procedure} & \textbf{Baseline} & \dfence & \dfence w/o v1.1 & \textbf{Unopt.} \dfence & \textbf{Unopt.} \dfence w/o v1.1 & \fence & \fence w/o v1.1 & \textbf{SLH and SSBD} & \textbf{SLH only} & \textbf{SSBD only} \\
      \midrule
      \multirow{3}{*}{ML-DSA 44} & Key generation & $\makecell[l]{11375269}$ & $\makecell[l]{11449031\\ +0.65\%}$ & $\makecell[l]{11413206\\ +0.33\%}$ & $\makecell[l]{11485901\\ +0.97\%}$ & $\makecell[l]{11463826\\ +0.78\%}$ & $\makecell[l]{11664281\\ +2.54\%}$ & $\makecell[l]{11743824\\ +3.24\%}$ & $\makecell[l]{12306670\\ +8.19\%}$ & $\makecell[l]{11615986\\ +2.12\%}$ & $\makecell[l]{12189527\\ +7.16\%}$ \\
       & Signature & $\makecell[l]{24585908}$ & $\makecell[l]{24512106\\ -0.30\%}$ & $\makecell[l]{24433830\\ -0.62\%}$ & $\makecell[l]{24510038\\ -0.31\%}$ & $\makecell[l]{24502489\\ -0.34\%}$ & $\makecell[l]{24954618\\ +1.50\%}$ & $\makecell[l]{25022288\\ +1.77\%}$ & $\makecell[l]{28106708\\ +14.32\%}$ & $\makecell[l]{25028946\\ +1.80\%}$ & $\makecell[l]{28117739\\ +14.37\%}$ \\
       & Verify & $\makecell[l]{14747774}$ & $\makecell[l]{14897803\\ +1.02\%}$ & $\makecell[l]{14884719\\ +0.93\%}$ & $\makecell[l]{14980606\\ +1.58\%}$ & $\makecell[l]{14893592\\ +0.99\%}$ & $\makecell[l]{15041659\\ +1.99\%}$ & $\makecell[l]{15106930\\ +2.44\%}$ & $\makecell[l]{16096629\\ +9.15\%}$ & $\makecell[l]{15013664\\ +1.80\%}$ & $\makecell[l]{16106257\\ +9.21\%}$ \\
      \midrule
      \multirow{3}{*}{ML-DSA 65} & Key generation & $\makecell[l]{20357347}$ & $\makecell[l]{20013036\\ -1.69\%}$ & $\makecell[l]{20374070\\ +0.08\%}$ & $\makecell[l]{20141583\\ -1.06\%}$ & $\makecell[l]{20267261\\ -0.44\%}$ & $\makecell[l]{20855872\\ +2.45\%}$ & $\makecell[l]{20747201\\ +1.92\%}$ & $\makecell[l]{21632356\\ +6.26\%}$ & $\makecell[l]{20386275\\ +0.14\%}$ & $\makecell[l]{21545221\\ +5.84\%}$ \\
       & Signature & $\makecell[l]{79783999}$ & $\makecell[l]{79426487\\ -0.45\%}$ & $\makecell[l]{79815608\\ +0.04\%}$ & $\makecell[l]{79673289\\ -0.14\%}$ & $\makecell[l]{79563600\\ -0.28\%}$ & $\makecell[l]{81173173\\ +1.74\%}$ & $\makecell[l]{80949285\\ +1.46\%}$ & $\makecell[l]{95829429\\ +20.11\%}$ & $\makecell[l]{81474361\\ +2.12\%}$ & $\makecell[l]{96136127\\ +20.50\%}$ \\
       & Verify & $\makecell[l]{22732327}$ & $\makecell[l]{23121187\\ +1.71\%}$ & $\makecell[l]{23146200\\ +1.82\%}$ & $\makecell[l]{23069134\\ +1.48\%}$ & $\makecell[l]{23157852\\ +1.87\%}$ & $\makecell[l]{23526064\\ +3.49\%}$ & $\makecell[l]{23391689\\ +2.90\%}$ & $\makecell[l]{24932088\\ +9.68\%}$ & $\makecell[l]{23526201\\ +3.49\%}$ & $\makecell[l]{25048195\\ +10.19\%}$ \\
      \midrule
      \multirow{3}{*}{ML-DSA 87} & Key generation & $\makecell[l]{34602499}$ & $\makecell[l]{34352163\\ -0.72\%}$ & $\makecell[l]{34636280\\ +0.10\%}$ & $\makecell[l]{34502739\\ -0.29\%}$ & $\makecell[l]{34471143\\ -0.38\%}$ & $\makecell[l]{35324651\\ +2.09\%}$ & $\makecell[l]{35691344\\ +3.15\%}$ & $\makecell[l]{36547154\\ +5.62\%}$ & $\makecell[l]{35053552\\ +1.30\%}$ & $\makecell[l]{36479545\\ +5.42\%}$ \\
       & Signature & $\makecell[l]{45478320}$ & $\makecell[l]{45163402\\ -0.69\%}$ & $\makecell[l]{45493428\\ +0.03\%}$ & $\makecell[l]{45334414\\ -0.32\%}$ & $\makecell[l]{45304280\\ -0.38\%}$ & $\makecell[l]{46298298\\ +1.80\%}$ & $\makecell[l]{46629993\\ +2.53\%}$ & $\makecell[l]{50437988\\ +10.91\%}$ & $\makecell[l]{46258957\\ +1.72\%}$ & $\makecell[l]{50381428\\ +10.78\%}$ \\
       & Verify & $\makecell[l]{36831349}$ & $\makecell[l]{37478132\\ +1.76\%}$ & $\makecell[l]{37760344\\ +2.52\%}$ & $\makecell[l]{37491796\\ +1.79\%}$ & $\makecell[l]{38139278\\ +3.55\%}$ & $\makecell[l]{37897300\\ +2.89\%}$ & $\makecell[l]{38010832\\ +3.20\%}$ & $\makecell[l]{40562856\\ +10.13\%}$ & $\makecell[l]{38430325\\ +4.34\%}$ & $\makecell[l]{40388777\\ +9.66\%}$ \\
      \midrule
      \multirow{3}{*}{ML-KEM} & Key generation & $\makecell[l]{6086056}$ & $\makecell[l]{6052450\\ -0.55\%}$ & $\makecell[l]{6061717\\ -0.40\%}$ & $\makecell[l]{6048937\\ -0.61\%}$ & $\makecell[l]{6064607\\ -0.35\%}$ & $\makecell[l]{6216667\\ +2.15\%}$ & $\makecell[l]{6255146\\ +2.78\%}$ & $\makecell[l]{7038495\\ +15.65\%}$ & $\makecell[l]{6650699\\ +9.28\%}$ & $\makecell[l]{6564530\\ +7.86\%}$ \\
       & Encryption & $\makecell[l]{6617128}$ & $\makecell[l]{6554935\\ -0.94\%}$ & $\makecell[l]{6608941\\ -0.12\%}$ & $\makecell[l]{6530927\\ -1.30\%}$ & $\makecell[l]{6590625\\ -0.40\%}$ & $\makecell[l]{6821565\\ +3.09\%}$ & $\makecell[l]{6798081\\ +2.73\%}$ & $\makecell[l]{7567834\\ +14.37\%}$ & $\makecell[l]{7146961\\ +8.01\%}$ & $\makecell[l]{7180918\\ +8.52\%}$ \\
       & Decryption & $\makecell[l]{7554136}$ & $\makecell[l]{7524072\\ -0.40\%}$ & $\makecell[l]{7524053\\ -0.40\%}$ & $\makecell[l]{7484579\\ -0.92\%}$ & $\makecell[l]{7523547\\ -0.40\%}$ & $\makecell[l]{7885628\\ +4.39\%}$ & $\makecell[l]{7911775\\ +4.73\%}$ & $\makecell[l]{8764968\\ +16.03\%}$ & $\makecell[l]{8127224\\ +7.59\%}$ & $\makecell[l]{8364556\\ +10.73\%}$ \\
      \midrule
      Curve25519 &  & $\makecell[l]{38972442}$ & $\makecell[l]{38967095\\ -0.01\%}$ & $\makecell[l]{38999841\\ +0.07\%}$ & $\makecell[l]{38967100\\ -0.01\%}$ & $\makecell[l]{38999841\\ +0.07\%}$ & $\makecell[l]{39007153\\ +0.09\%}$ & $\makecell[l]{38984760\\ +0.03\%}$ & $\makecell[l]{50281218\\ +29.02\%}$ & $\makecell[l]{39007123\\ +0.09\%}$ & $\makecell[l]{50347775\\ +29.19\%}$ \\
      \midrule
      Keccak-f1600 &  & $\makecell[l]{96702}$ & $\makecell[l]{93659\\ -3.15\%}$ & $\makecell[l]{93171\\ -3.65\%}$ & $\makecell[l]{93640\\ -3.17\%}$ & $\makecell[l]{94321\\ -2.46\%}$ & $\makecell[l]{94547\\ -2.23\%}$ & $\makecell[l]{95004\\ -1.76\%}$ & $\makecell[l]{93836\\ -2.96\%}$ & $\makecell[l]{94948\\ -1.81\%}$ & $\makecell[l]{94435\\ -2.34\%}$ \\
      \midrule
      Poly1305 &  & $\makecell[l]{1406055}$ & $\makecell[l]{1399792\\ -0.45\%}$ & $\makecell[l]{1408392\\ +0.17\%}$ & $\makecell[l]{1405056\\ -0.07\%}$ & $\makecell[l]{1408392\\ +0.17\%}$ & $\makecell[l]{1406413\\ +0.03\%}$ & $\makecell[l]{1393630\\ -0.88\%}$ & $\makecell[l]{2245737\\ +59.72\%}$ & $\makecell[l]{1403931\\ -0.15\%}$ & $\makecell[l]{2206887\\ +56.96\%}$ \\
      \midrule
      ChaCha20 &  & $\makecell[l]{3280319}$ & $\makecell[l]{3274416\\ -0.18\%}$ & $\makecell[l]{3359540\\ +2.42\%}$ & $\makecell[l]{3283053\\ +0.08\%}$ & $\makecell[l]{3359861\\ +2.42\%}$ & $\makecell[l]{3280104\\ -0.01\%}$ & $\makecell[l]{3357630\\ +2.36\%}$ & $\makecell[l]{3592052\\ +9.50\%}$ & $\makecell[l]{3280460\\ +0.00\%}$ & $\makecell[l]{3592048\\ +9.50\%}$ \\
      \midrule
      Gimli &  & $\makecell[l]{10620}$ & $\makecell[l]{10682\\ +0.58\%}$ & $\makecell[l]{10524\\ -0.90\%}$ & $\makecell[l]{10527\\ -0.88\%}$ & $\makecell[l]{10679\\ +0.56\%}$ & $\makecell[l]{10681\\ +0.57\%}$ & $\makecell[l]{10525\\ -0.89\%}$ & $\makecell[l]{11210\\ +5.56\%}$ & $\makecell[l]{10681\\ +0.57\%}$ & $\makecell[l]{11168\\ +5.16\%}$ \\
      \midrule
      \textbf{Average} &  & $\makecell[l]{20854015}$ & $\makecell[l]{20840615\\ -0.22\%}$ & $\makecell[l]{20942580\\ +0.14\%}$ & $\makecell[l]{20883136\\ -0.19\%}$ & $\makecell[l]{20930306\\ +0.29\%}$ & $\makecell[l]{21262275\\ +1.68\%}$ & $\makecell[l]{21299996\\ +1.87\%}$ & $\makecell[l]{23885131\\ +14.19\%}$ & $\makecell[l]{21324135\\ +2.49\%}$ & $\makecell[l]{23809125\\ +12.86\%}$ \\
      \midrule
      \textbf{Geometric mean} &  & $\makecell[l]{7318535}$ & $\makecell[l]{7301622\\ -0.23\%}$ & $\makecell[l]{7328253\\ +0.13\%}$ & $\makecell[l]{7304413\\ -0.19\%}$ & $\makecell[l]{7339284\\ +0.28\%}$ & $\makecell[l]{7440694\\ +1.67\%}$ & $\makecell[l]{7454000\\ +1.85\%}$ & $\makecell[l]{8308710\\ +13.53\%}$ & $\makecell[l]{7497877\\ +2.45\%}$ & $\makecell[l]{8214182\\ +12.24\%}$ \\
      \bottomrule
    \end{tabular}
  }
  \label{tab:benchs}
\end{table*}

 \fi

\end{document}